\DeclareMathOperator*{\argmax}{arg\,max}
\DeclareMathOperator*{\argmin}{arg\,min}
\tikzset{small dot blue/.style={fill=blue!20,circle,scale=1.5}}
\tikzset{other node/.style={circle,fill=blue!20,draw,minimum size=1.5cm,inner sep=0pt},}
\tikzset{other node 1/.style={circle,fill=blue!15,draw,minimum size=1.5cm,inner sep=0pt},}
\tikzset{other node 2/.style={circle,fill=blue!30,draw,minimum size=1.5cm,inner sep=0pt},}
\tikzset{other node 3/.style={circle,fill=blue!50,draw,minimum size=1.5cm,inner sep=0pt},}
\tikzset{other node 4/.style={circle,fill=blue!70,draw,minimum size=1.5cm,inner sep=0pt},}
\tikzset{last node/.style={circle,fill=blue!20,draw,minimum size=1cm,inner sep=0pt},}
\tikzset{main node/.style={circle,fill=blue!20,draw,minimum size=2cm,inner sep=0pt},}
\tikzset{small dot/.style={fill=black,circle,scale=1}}
\tikzset{small dot new/.style={fill=blue!20,circle,scale=1.9}}
\tikzset{small dot 3/.style={fill=blue!20,circle,scale=1.25}}
\newcommand*{\prob}{\mathbb{P}}
\newcommand{\vect}[1]{\boldsymbol{#1}}
\newtheorem{prop}{Proposition}
\numberwithin{prop}{chapter}
\newtheorem{coro}{Corollary}
\numberwithin{coro}{chapter}
\newtheorem{rem}{Remark}
\numberwithin{rem}{chapter}
\newtheorem{defi}{Definition}
\numberwithin{defi}{chapter}
\newtheorem{lemma}{Lemma}
\numberwithin{lemma}{chapter}
\newtheorem{theo}{Theorem}
\numberwithin{theo}{chapter}
\DeclareMathOperator\supp{supp}
\newcommand\twoheaduparrow{\mathrel{\rotatebox[origin=c]{90}{$\twoheadrightarrow$}}}
\newcommand\twoheaddownarrow{\mathrel{\rotatebox[origin=c]{270}{$\twoheadrightarrow$}}}
\pgfplotsset{compat=1.17}
\begin{document}

\begin{titlepage}



\begin{center}
\includegraphics[scale=0.3]{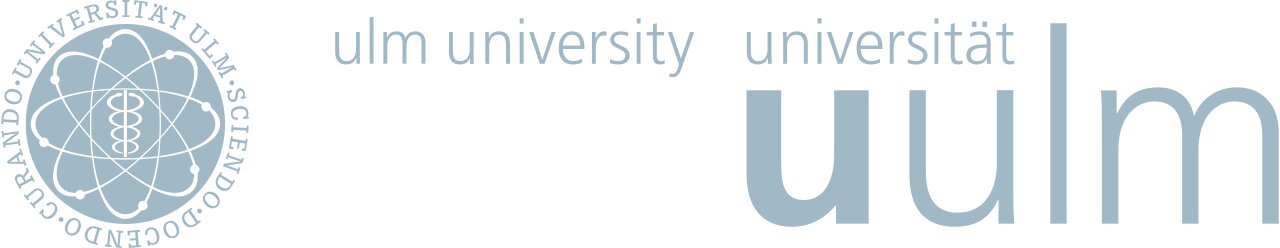}
\end{center}

\begin{textblock*}{12cm}(4cm,8cm) 
{\setlength{\parindent}{0cm}
{\fontsize{13}{13}\selectfont
\textbf{Universität Ulm} $\mid$ 89069 $\mid$ Germany
}
}
\end{textblock*}

\begin{textblock*}{8cm}(12cm,8cm) 
{\setlength{\parindent}{0cm}
{\fontsize{13}{13}\selectfont
\textbf{Fakultät für}

\textbf{Ingenieurwissenschaften,}

\textbf{Informatik}

\textbf{und Psychologie}

Institut für Neuroinformatik

Direktor: Prof. Dr. Dr. Daniel Braun
}
}
\end{textblock*}

\begin{textblock*}{12cm}(4cm,12.5cm)
\par\noindent\rule{\textwidth}{2pt}
\newline

{\setlength{\parindent}{0cm}
{\fontsize{18}{18}\selectfont \textbf{Order-theoretic models for decision-making: Learning, optimization, complexity and computation
}
}
}
\end{textblock*}

\begin{textblock*}{13cm}(4cm,16cm) 
{\setlength{\parindent}{0cm}
{\fontsize{13}{13}\selectfont
Kumulative Dissertation zur Erlangung des Doktorgrades

Doktor der Naturwissenschaften (Dr. rer. nat.)

der Fakultät für Ingenieurwissenschaften, Informatik und Psychologie

der Universität Ulm
}
}
\end{textblock*}

\begin{textblock*}{12cm}(4cm,21cm) 
{\setlength{\parindent}{0cm}
{\fontsize{13}{13}\selectfont
vorgelegt von

Pedro Hack

aus Buenos Aires (Argentinien)
\newline

Ulm 2023
}
}
\end{textblock*}

\newpage
\thispagestyle{empty}
\mbox{}
\newpage

\begin{textblock*}{12cm}(4cm,18cm) 
{\setlength{\parindent}{0cm}
{\fontsize{13}{13}\selectfont
   Amtierende Dekanin der Fakultät für Ingenieurwissenschaften, Informatik und
Psychologie: Prof. Dr. Anke Huckauf
\newline

Gutachter: Prof. Dr. Dr. Daniel A. Braun

Gutachter: Prof. Dr. Gianni Bosi

Gutachter: PD Dr. Friedhelm Schwenker
\newline




Tag der Promotion: 09.02.2024
}
}
\end{textblock*}

\end{titlepage}

\newpage\null\thispagestyle{empty}\newpage

\newpage
\thispagestyle{empty}
\mbox{}
\newpage

\chapter*{Abstract}
\addcontentsline{toc}{chapter}{Abstract}

The modern agent-based study of intelligent systems explains the behaviour of agents in terms of economic rationality, that is, it intends to justify the transitions that systems undergo as an attempt to optimally achieve a goal. This results in an optimization principle, typically involving a real-valued function or utility, which states that, in the long run, the system will evolve until the configuration of maximum utility is achieved. Recently, this theory has evolved into bounded rationality, where constraints are incorporated into the picture. In particular, instead of simply predicting utility maximization in the limit, the optimum is achieved when the utility is maximized while respecting some information-processing constraints. Abstractly, these constraints can be formulated as uncertainty bounds. Hence, alternatively, the principle can be reformulated as an attempt from the system to maximize uncertainty while achieving a certain utility. This fundamental duality is reminiscent of thermodynamic systems, where the long-run behaviour is explained by the maximization of uncertainty subject to some constraint on the energy function.
As such, the study of intelligent systems has not only benefited from the tools of equilibrium thermodynamics, but, even more recently, also from achievements in non-equilibrium thermodynamics,
where not only the infinite time limit is considered, but also regularities in the dynamics are exploited to obtain predictions that involve short-run realizations of the system, the so-called fluctuation theorems. The first aim of this thesis is to clarify the applicability of these results in the study of intelligent systems,
in particular in the context of human sensorimotor adaptation

To understand the structural similarity between thermodynamic and intelligent system models on a more abstract level, we can think of the local transition steps for both kinds of system as being driven by uncertainty. In fact, it turns out that the transitions in both systems can be described in terms of a preorder known as majorization.
Hence, real-valued uncertainty measures like Shannon entropy, which constitute the basis of their respective optimization principles, are simply a proxy for their more involved behaviour. More in general, real-valued functions are fundamental in the study of optimization and complexity in the order-theoretic approach to several topics,
including economics, thermodynamics, and quantum mechanics,
and are
key to understanding the structural similarity between
them.
The second aim of this thesis is to improve on the classification of preorders in terms of real-valued functions that can be applied i.a. to the majorization preorder.

The basic structural similarity between thermodynamic and intelligent system models exposed before is based on the notion of uncertainty expressed by a special preorder. From a computer science perspective, we can think of the transitions in the steps of a computational process as a decision-making procedure where uncertainty is reduced, like a computation that converges towards the decimal representation of a real number by producing a sequence of (longer) finite substrings of it. In fact, by adding some requirements on the considered order structures, we can build an abstract model of uncertainty reduction that allows to incorporate computability, that is, to distinguish the objects that can be constructed by following a finite set of instructions from those that cannot. The third aim of this thesis is to clarify the requirements on the order structure that allow such a framework.

 In summary, in this work, we take an abstract approach to decision-making, which we equate with the reduction of uncertainty and we model using order structures. We start by considering a decision-theoretic approach to learning for intelligent systems. The optimization tools common in this field lead us to the study of the relation between real-valued functions and order structures, which include not only optimization but also different notions of complexity for decision-making systems. We finish by considering the notions of computation that can be derived from these order-theoretic approaches to decision-making.

\newpage
\thispagestyle{empty}
\mbox{}
\newpage
\thispagestyle{empty}
\mbox{}
\newpage

\chapter*{Acknowledgements}
\addcontentsline{toc}{chapter}{Acknowledgements}

I am definitely indebted to several people in Ulm for their help during the process that led to this piece of work. 
I would like to start by thanking my supervisor Daniel Braun for giving me this opportunity and the necessary support for conducting this research. In this regard, I would also like to extend my gratitude to the European Research Council for making this research possible.
Sebastian Gottwald also helped me greatly throughout the course of my stay at Ulm. I am thankful for that and also for the fun discussions on the different projects we have worked together in.
I would like to thank Cecilia Lindig-León for helping me with the data recording and, more in general, for her cheerful attitude and for giving me the chance to talk Spanish constantly during my stay at Ulm, which definitely made me feel more at home.
I would also like to acknowledge Peter Bellmann's relaxed and fun presence. I have definitely enjoyed talking to him, specially during the times when Covid was everywhere. I felt lucky every time I would find him is his office when many people worked remotely. 

I would like to continue by thanking the supervisors I had before starting my postgraduate studies.
I am definitely indebted to Andreas Winter. I learned a lot by listening to him and trying to approach research the way he does.
I would also like to acknowledge the help and fantastic supervision by Helmut Schmidt during my undergraduate internship at the \emph{Centre de recerca matemàtica}, where Alex Roxin kindly hosted me. 

It would be unfair to leave out the people who inspired me at the \emph{Universitat Autònoma de Barcelona}, where I completed my undergraduate studies.
The first was Julià Cufí, who introduced us to calculus. The class was really disorganized and one would often find him stopping to prove some lemma he forgot before. Many people complained about it, but I enjoyed watching him all over the place. By far, this was the most inspiring class I have ever taken. The next great teacher I encountered was Juan Camacho, who taught us thermodynamics during my second undergraduate year. The organization of the curriculum changed that year, so in my class there were only the few students in the double degree program in mathematics and physics. Hence, I believe Juan taught us the course in quite an unusual way,
introducing the subject via equivalence classes, for example. Such a simple and fundamental approach really captivated me. In fact, my interest in thermodynamics has definitely lead me through several of the topics that constitute this piece of work.
It was also during my second year when I attended
Francesc Perera's lecture on algebraic structures. The class was great. Francesc introduced us to order theory, which has definitely occupied me the most along this work. I do remember a fantastic class where
Francesc told us about the axiom of choice. There is definitely something wonderful about considering foundational questions.
It was also during my second year that I met Susana Serna in a numerical analysis class. I would like to thank Susana for her encouragement and help during my last years as an undergraduate. I would also like to extend my gratitude to Artur Nicolau, from whom I learned not only harmonic analysis but several other things, like how one ought to prepare for teaching a class. Artur is truly a wonderful person and someone to look up to. 

I would like to thank \emph{Colegio 
Montesión} in Palma for helping me improve as a person. It was also there where both Pere Mateu and Rafel got me interested in physics and science in general.
I also do not dare to forget the other teachers I encountered during my life, specially those from whom I learned endurance, which was key in the completion of this piece of work. I do treasure beautiful moments with both Oscar and Joaquín, who taught me, respectively, chess and basketball when I was a kid.

Lastly, and most importantly, I am thankful to my family for their constant support. I would like to thank my mother for her tenderness and for being always there when I needed her, and
my father for all his effort and dedication, and for being an inspiration to me. I am indebted to both my sisters for being so warm and encouraging, they always bring me joy. I do have a lot to learn from them. I would also like to thank Rebecca for the light she irradiates and for the beautiful way in which she looks at the world, it is truly a joy to see her do the most common of things. 
I would like to finish by acknowledging the people who have gifted me with their honest friendship and, lastly, my family back in Argentina.

\newpage
\thispagestyle{empty}
\mbox{}
\newpage
\thispagestyle{empty}
\mbox{}
\newpage

\chapter*{Notation}
\addcontentsline{toc}{chapter}{Notation}

\begin{tabular}{ c | l}
$p^*$ & Boltzmann distribution\\
$\perp$ & bottom element in partial order\\
$\langle \cdot, \cdot \rangle$ & Cantor pairing function\\
$\preceq_C$ & Cantor partial order\\
$\times$ & Cartesian product \\
$K(P)$ & compact elements of dcpo $P$\\
$\preceq_d$ & $d$-majorization\\
$\text{dom}(\cdot)$ & domain of a map\\
$\Pi_d$ & $d$-stochastic matrix \\
$[\cdot]$ & equivalence class \\
$E$ & energy function\\
$\mathbb E$ & expected value\\
$\Sigma$ & finite alphabet\\
$\alpha$ & finite map\\
$\Omega$ & finite set\\
$F$ & free energy\\
$\mathcal F$ & generalized entropies majorization multi-utility\\
$D_{KL}$ & Kullback-Leibler divergence\\
$\preceq_L$ & lexicographic partial order\\
$\lim_{n \to \infty} \preceq_n$  & limit partial order sequence $(\preceq_n)_{n \geq 0}$ \\
$\preceq_M$ & majorization preorder\\
$\mu$ & map inducing the Scott topology everywhere\\
$\Lambda_d$ & map that defines $d$-majorization\\
$min(P)$ & minimal elements of partial order $P$\\
$\mathcal U$ & multi-utility that typically defines majorization\\
$\mathbb N$ & natural numbers (including zero)\\
$[0,\infty)^{op}$ & non-negative reals equipped with reversed ordering \\
$\mathcal N_B$ & non-trivial elements of basis $B$\\
$\sim$ & order equivalence \\
$\bowtie$ & order indifference\\
$\otimes$ & outer product \\
$\sqcup$ & partial order supremum\\
$Z$ & partition function\\
$\preceq$ & preorder\\
\end{tabular}

\newpage

\begin{tabular}{ c | l}
$\rho^F$ ($\rho^B$) & probability density forward (backward) process\\
$\mathbb P_\Omega$ & probability distributions over $\Omega$ \\
$\preceq_{\mathcal I}$ & reversed inclusion partial order\\
$\sigma(P)$ & Scott topology over partial order $P$\\
$\mathcal I$ & set of compact intervals of the real line\\
$H$ & Shannon entropy\\
$\Sigma^*$ & set of finite words over $\Sigma$\\
$\Sigma^\omega$ & set of sequences over $\Sigma$\\
$\Sigma^\infty$ & $\Sigma^* \cup \Sigma^\omega$\\
$\leq$ & standard order on the real line\\
$\prec$ & strict order preference \\
$\supp(\cdot)$ & support of a function \\
$\preceq_T$ & trumping preorder\\
$\preceq_U$ & uncertainty preorder\\
$\ll$ & way-below relation\\
$W$ & work\\
\end{tabular}

\newpage
\thispagestyle{empty}
\mbox{}
\newpage
\thispagestyle{empty}
\mbox{}
\newpage

\chapter*{Disclaimer:\\ How to read this thesis}
\addcontentsline{toc}{chapter}{Disclaimer: How to read this thesis}

The main part of this thesis is a self-contained document that can be read independently. The reader interested in the details can consult the appendices, where we include a copy of the papers that complement the main part. If a result is simply stated in the main part, then its proof can be found in the appendix papers or elsewhere in the literature. (In the latter case, a citation is included in the statement.) The statements in the main part for which a proof is provided supplement the appendix papers.

\newpage
\thispagestyle{empty}
\mbox{}
\newpage
\tableofcontents
\newpage
\thispagestyle{empty}
\mbox{}
\newpage

\pagestyle{fancy}
\renewcommand{\chaptermark}[1]{\markboth{#1}{#1}}
\fancyhead[R]{}
\fancyhead[L]{\chaptername\ \thechapter\ --\ \leftmark}

\chapter{Introduction}
\label{intro}

The fundamental notion that ties this work together is that of uncertainty. More precisely, we consider uncertainty reduction, that is, decision-making, both abstractly and in some concrete applications. As a result, we deal with several aspects of decision-making like learning, optimization, complexity and computation.  Our interest in this topic can be divided into three clear parts. In the first part, we consider the influence of uncertainty in the study of intelligent systems. A key observation we make is that, although the fundamental notion of uncertainty that underlies the study of intelligent systems takes the form of an order relation, in practice the most predominant tool to measure uncertainty is Shannon entropy. This replacement of the order relation by a function is, however, imperfect, in the sense that a function is not able to represent the uncertainty relation faithfully in general. This leads us to the second part of this work, where we examine the properties of such a transition from an order relation to a function in general and that of the uncertainty relation to Shannon entropy in particular. More precisely, we examine how well can order relations be captured using a single or several real-valued functions. That is, we approach both the existence of optimization principles and the quantification of complexity for order relations. Lastly, in the third part of this work, we take an uncertainty-based approach to computation. In particular, given that one can consider any computation as a process of uncertainty reduction, we take an approach to computation that fundamentally relies on order structures that represent relative uncertainty relations between computational elements.  

In this chapter, we provide a brief introduction to the three parts that compose this work, emphasizing the questions we address throughout this report and their relation to the existing literature.

\section{Uncertainty and learning systems}

The key aspect where uncertainty plays a role in the study of intelligent agents is \emph{learning}. Learning is an attribute of an intelligent agent, the \emph{system}, that can take several configurations and is embedded in an environment. Learning comes into play since the environment may change and, with it, the most suitable configuration the system should adopt. Hence, learning refers precisely to the ability the system may have to recognize the changes in the environment and, moreover, adapt to them by changing its configuration. 

Historically, the study of learning systems can be thought to have followed a path similar to that of classical mechanics. In fact, in both these fields, a single real-valued function has been considered as the driving force behind the system's evolution. This has resulted in well-established optimization principles for both of them, namely (among others) potential minimization for classical mechanics \cite{goldstein1950classical} and utility maximization for learning systems \cite{fishburn1970utility,morgenstern1953theory}. The progress in classical mechanics resulted in thermodynamics, where the impossibility of keeping track of the large number of individual entities that form a system (for example, a gas) led to the inclusion of a counteracting force, namely, uncertainty. In consequence, the optimal behaviour of a thermodynamic system is probabilistic. In a similar manner, uncertainty entered the study of learning systems to model constraints (in particular, \emph{information-processing} constraints) in the adaptation of such systems to an environment. Hence, despite having a different origin than its thermodynamic counterpart, the inclusion of uncertainty in the study of learning systems resulted in their optimal behaviour also being probabilistic.
More specifically, both these fields are based on the same optimization principle, namely, that the distribution over the possible states of the system in the long run results from the maximization of Shannon entropy (which is a measure of uncertainty) over a region of distributions that fulfill some constraint on a real-valued function (the \emph{utility} or \emph{energy} function, respectively).

\begin{rem}[Bounded rationality]
    It should be noted that we  refer to the uncertainty-based approach to learning systems as \emph{bounded} rationality (following \cite{wolpert2006information} or \cite{ortega2013thermodynamics}, for example), as opposed to the approach that only considers a utility function, known as (perfect) \emph{rationality} \cite{fishburn1970utility,savage1972foundations}. In fact, what we call throughout this work \emph{bounded rationality} is only a specific branch where the considered bounds have an \emph{information-theoretic} origin. Bounded rationality refers, in general, to the abandonment of perfect rationality, which can take several forms \cite{simon1990bounded,simon1997models}.
\end{rem}

While several works have pointed out the use of optimization principles from thermodynamics in bounded rationality \cite{ortega2013thermodynamics,genewein2015bounded}, it was only recently that the \emph{fluctuation theorems} \cite{seifert2012stochastic}, a series of results in (non-equilibrium) thermodynamics where the long-run (i.e. optimal) behaviour is related with short-run (i.e. non-optimal) realizations, were put forward in the context of learning systems \cite{grau2018non}. The main distinction between \cite{grau2018non} and previous work in bounded rationality is that the latter only considered the fully adapted or optimal behaviour while \cite{grau2018non} assumed a model for the adaptation of the system. In particular, the assumption that the adaptation follows a stochastic process without memory, that is, a Markov chain \cite{levin2017markov}. Following this approach, the first purpose of this thesis is to examine the validity of the fluctuation theorems for learning systems. In particular, we consider the following research questions:  

\begin{enumerate}[label=\textbf{(Q\arabic*)}]
\item \textbf{Under what conditions do the fluctuation theorems hold for learning system?}
\item \textbf{Can these results be observed in the adaptation of biological learning systems? In particular, in human sensorimotor adaptation?}
\end{enumerate}



It should be noted that the novelty of \textbf{(Q1)} lies in its detachment of the fluctuation theorems from physics, which leads us to a reexamination of the framework and necessary assumptions.
Regarding \textbf{(Q2)}, we perform the first experimental test of the fluctuation theorems in the context of human learning. In fact, this constitutes the first time they are challenged outside the realm of physics, where they have been successfully tested (see, for example, \cite{collin2005verification} or \cite{douarche2005experimental}). This concludes the introduction to the first part of this work. We proceed now, in the following section, to an overview of the second part.

\section{Real-valued representations of preorders}
\label{intro: real repre}

In both thermodynamics and the study of learning systems, the key idea is the trade-off between uncertainty and utility (or energy). Moreover, they both represent uncertainty through a specific measure, namely, Shannon entropy. Nonetheless, the fundamental idea of uncertainty is more complex, as we expose in the next paragraphs.

Assume we have a certain system that behaves as a random variable over a finite set $\Omega$ and, thus, whose configuration can be described through a distribution $p$ in $\mathbb P_\Omega$, the set of probability distributions over $\Omega$. Shannon entropy $H(p)\coloneqq -\mathbb E_p[\log p]$ \cite{shannon1948mathematical,cover1999elements}, where $\mathbb E_p$ denotes the expected value over $p$, is used in the optimization principles of both thermodynamics and bounded rationality since it ranks the distributions in terms of how uncertain they are, that is, if $p$ is less biased than $q$, then $H(p)<H(q)$. But, what does it mean do be uncertain? We can think of it in terms of gambling. In particular, in terms of the preferences of a casino owner among games.

Consider a casino owner that needs to choose a new game for the casino. All games under consideration follow the same pattern: bets are placed before a realization of a random variable and, if the outcome is properly predicted, the gambler wins. Hence, the casino owner prefers games that are more uncertain. For example, rolling a fair die is preferred to rolling a biased one. Since each gambler has the option to bet on any proper subset of the options, when comparing games, the casino owner needs to consider the best edges a gambler could have when following any betting strategy in each of them. In particular, what is the best edge a gambler could achieve when betting on a single outcome, on a pair of outcomes, and so on?

In general, when betting on $i$ outcomes ($i < |\Omega|$), the best chance of winning a gambler could face when choosing the outcomes appropriately is the sum of the probabilities of the 
$i$ most likely outcomes. Hence, the casino owner prefers a game with outcome distribution $q$ over one with outcome distribution $p$, which we denote by $p \preceq_U q$ for all $p,q \in \mathbb P_\Omega$, if, for all $i< |\Omega|$, the sum of the probabilities of the 
$i$ most likely outcomes according to $q$ is lower or equal to the sum for those of $p$. That is, the following holds:
\begin{equation}
\label{uncert rela}
    p \preceq_U q \ \iff \ u_i(p) \leq u_i(q) \ \  \forall i\in \{1,..,|\Omega|-1\} \, ,
\end{equation}
where $u_i(p) \coloneqq -\sum_{n=1}^{i} p_n^{\downarrow}$ and $p^{\downarrow}$ denotes the non-increasing rearrangement of $p$ (same components as $p$ but ordered in a non-increasing fashion). Thus, \eqref{uncert rela} ranks the distributions precisely how the casino owner would. It should be noted that the same intuitive picture involving gambling and the uncertainty preorder has been recently introduced in \cite{brandsen2022entropy}.

Another intuitive way of thinking about the uncertainty preorder has its origin in the study of wealth distribution and is based on the so-called \emph{Pigou-Dalton} transfers \cite{dalton1920measurement,pigou2017economics,gottwald2019bounded}. This point of view offers another (equivalent) definition of $\preceq_U$ where, basically, we say that $p \preceq_U q$ for any pair $p,q \in \mathbb P_\Omega$ if and only if $q$ is the result of applying to $p$ a finite number of $(i)$ permutations (where we simply interchange the probability masses of two elements in $\Omega$) and $(ii)$ transfers of probability mass from a more likely to a less likely element in $\Omega$. While permutations are innocuous form the point of view of uncertainty and bias (they simply relabel the elements that constitute $\Omega$), it is the latter mechanism what allows the probabilities of the different elements in $\Omega$ to become closer to each other and, hence, the probability distribution to approach a uniform distribution $u \coloneqq (1/|\Omega|,\dots,1/|\Omega|) \in \mathbb P_\Omega$. It is in this sense that the bias is reduced or, equivalently, the uncertainty increases. A simple case where the uncertainty increase is exemplified can be found in Figure \ref{fig: example decision-making}.

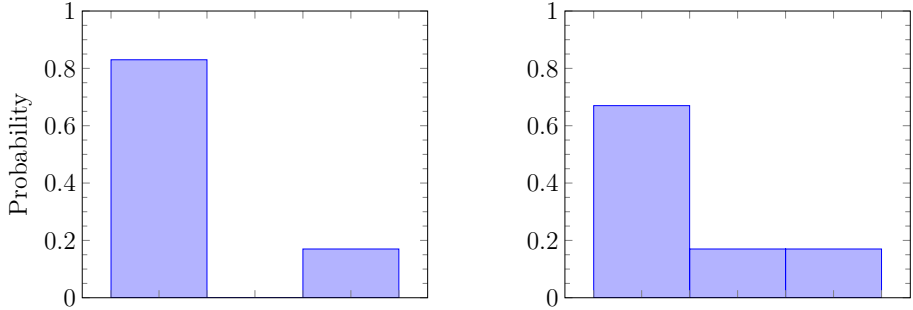
\begin{figure}
    \centering
    \begin{tikzpicture}[scale=0.67,font=\fontsize{15}{15}\selectfont]
\begin{axis}[
    ymin=0, ymax=1,
    minor y tick num = 3,
    area style,
    xticklabels={,,},
    ylabel={Probability}
    ]
\addplot+[ybar interval,mark=no] plot coordinates { (0, 0.83) (1, 0) (2, 0.17) (3,0.17)};
\end{axis}
\end{tikzpicture}
\hspace{1cm}
    \begin{tikzpicture}[scale=0.67,font=\fontsize{15}{15}\selectfont]
\begin{axis}[
    ymin=0, ymax=1,
    minor y tick num = 3,
    area style,
    xticklabels={,,}
    ]
\addplot+[ybar interval,mark=no] plot coordinates { (0, 0.67) (1, 0.17) (2, 0.17) (3,0.17)};
\end{axis}
\end{tikzpicture}
\caption{Simple example where the left distribution $p$ is more biased than the right distribution $q$, $p \preceq_U q$, and $q$ is obtained from $p$ by transferring probability mass from the first to the second element in $\Omega$.}
    \label{fig: example decision-making}
\end{figure}

Returning to our gambling picture, it should be noted that a game without bias, i.e. one whose associated distribution is the uniform distribution, is preferred over all other games. Moreover, note that, although we have referred to it as the \emph{uncertainty preorder} in \cite{hack2022representing}, $\preceq_U$ (or, rather, its dual $\preceq_M$, that is, the preorder $\preceq_M$ such that, for all $p,q \in \mathbb P_\Omega$,  $p \preceq_M q$ if and only if $q \preceq_U p$) is known in mathematics \cite{marshall1979inequalities}, economics \cite{arnold2018majorization}, and quantum physics \cite{nielsen1999conditions} as \emph{majorization}. In case it is clear from the context, we may use both terms interchangeably. Notice, importantly, that $\preceq_U$ will be a key example throughout this work.

\begin{rem}[Shannon entropy and the uncertainty preorder]
\label{shannon entropy and uncertainty}
As the preferences of the casino owner exemplify, the uncertainty preorder encompasses an intuitive picture of uncertainty. In order to see how Shannon entropy fits in this picture, we can consider Shannon's original derivation \cite{shannon1948mathematical}, which consists (roughly) of the following assumptions: (i) continuity, (ii) the increase of entropy for uniform distributions with a larger number of non-zero components, and (iii) a certain additivity property. Although (i) and (iii) are certainly desirable properties, we believe they have no intuitive justification in terms of uncertainty. The connection to uncertainty is in (ii), given that, if $u,u' \in \mathbb P_\Omega$ are uniform distributions with $u'$ having more non-zero components than $u$, then $u \preceq_U u'$ and Shannon entropy reflects that increase in uncertainty $H(u) \leq H(u')$. In fact, Shannon entropy fulfills a stronger property, namely, it is monotonic in $\preceq_U$, that is, $p \preceq_U q$ implies $H(p) \leq H(q)$ for all $p,q \in \mathbb P_\Omega$. However, this intuitive property is shared by several other functions (see \cite{gottwald2019bounded} and references therein). Hence, regarding the representation of $\preceq_U$, we believe there is no argument that favors Shannon entropy over other functions that are monotonic in $\preceq_U$. Moreover, while 
one can compare any pair of distributions in terms of Shannon entropy, this is not the case with $\preceq_U$. As a result, neither Shannon entropy nor any other function alone can represent the preferences of the casino owner faithfully. In conclusion, although useful, the optimization principles at the core of both thermodynamics and bounded rationality are simply a proxy for their more involved behaviour. The interested reader can find information concerning the proposals alternative to entropy inside of physics in, for example, \cite{tsallis1988possible,tsallis2009introduction}.
\end{rem}

The idea of substituting the preferences or transitions of a system by real-valued functions is, of course, not a special feature of majorization. On the contrary, it is present is several branches of knowledge, including
\emph{thermodynamics} \cite{lieb1999physics,giles2016mathematical}, \emph{general relativity} \cite{bombelli1987space,minguzzi2010time}, \emph{quantum physics} \cite{nielsen1999conditions,brandao2015second} and \emph{economics} \cite{debreu1954representation,ok2002utility}.
In fact, several key questions are shared among some of these areas, as was realized, for example, in \cite{candeal2001utility,campion2018survey}. Hence, it is in the interest of a large number of fields to clarify the extent to which real-valued functions can be considered instead of the order structures to describe the dynamics of the systems they are interested in. In fact, the study of real-valued representations of preorders, i.e. order structures that possess some properties that all the aforementioned fields share, has a large tradition that is closely related to mathematical economics \cite{debreu1954representation,bridges2013representations}.

In order to determine what a preorder is, let us consider two properties of majorization: 
$(i)$ any game is preferred to itself $p \preceq_U p$ and $(ii)$, if a game is preferred over another $p \preceq_U p'$ and a third one is preferred over the first one $p' \preceq_U p''$, then the third one is preferred over the second one $p \preceq_U p''$. Hence,
the first assumption allows the system to remain in the state it is in (to follow a trivial transition) and the second to concatenate transitions. These properties are known as reflexivity and transitivity, respectively, and are exactly the properties that define the binary relation named preorder \cite{bridges2013representations}.\footnote{However, these properties are not shared among all transitions systems of interest. An example of this can be found in the study of chemical processes \cite{campion2016entropy}. In fact, preorders have been also challenged in the literature that is concerned with intelligent systems \cite{armstrong1948uncertainty,fishburn1970intransitive}.}

In the second part of this work, thus, the fundamental object we consider is a preordered space $(X,\preceq)$, that is, a set $X$ of possible states a system can be in (called the \emph{ground} set \cite{harzheim2006ordered}) and a preorder $\preceq$ that models the transitions among the possible states the system can perform. Given the variety of fields that share the same concerns and the fact that we are interested in considering order-theoretic approaches to uncertainty that go beyond learning systems and majorization (as we will detail in Section \ref{sec: uncert and comp}), we address the following general research question in the second part:
\newline

\begin{enumerate}[label=\textbf{(Q\arabic*)}]\addtocounter{enumi}{2}
\item \textbf{Given the set of possible transitions of a system, how well can we encode it via measurement devices (i.e. real-valued functions)?}
\end{enumerate}



The main novelty of \textbf{(Q3)} rests in the fact that we introduce new sorts of real-valued functions related to preorders and, moreover, we clarify the relation among the existence of several descriptions of preorders via real-valued functions. In this regard, we expose the richness of both complexity and optimization, and that of their relation.
Lastly, we apply our general framework to gain insight into learning systems. In particular, we clarify the representation of the  uncertainty preorder by real-valued functions, which is key in the uncertainty-based bounded rationality approaches to learning systems. As other applications of our general considerations, we also improve on some physical models. This concludes the introduction to the second part of this work. We proceed now, in the following section, to an overview of the third part.

\section{Uncertainty and computation}
\label{sec: uncert and comp}

The fundamental notion connecting learning systems in the view of bounded rationality and thermodynamics is uncertainty, which is modelled by the majorization preorder. Computation can also be modelled using uncertainty. In fact, we can think of the individual steps that take place in some computational process as uncertainty reduction. As we will see, this notion of uncertainty reduction can be modeled through a preorder as well and, moreover, it allows the introduction of a notion of computability on uncountable spaces, that is, a distinction between elements with a \emph{finite description} and those without one. This point of view where computation is considered as a decision-making process also justifies the more abstract approach to order structures that we developed in Section \ref{intro: real repre} from majorization, given that we also wish our work there to apply to computation. 

\begin{rem}[Computability and computation]
    It should be noted that computation and computability refer to different concepts. Roughly speaking, \emph{computation} refers to the process of calculating something while \emph{computability} points to the more fundamental aspect of whether there is some computation leading to that something. However, for simplicity, we will use the terms interchangeably, with the hope that what is meant is clear from the context.   
\end{rem}

In order to clarify the first paragraph in this section, let us consider an example.
Assume, for instance, we aim to calculate some real number $x \in \mathbb{R}$ using an algorithm. In this context, roughly, to \emph{calculate} refers to the possibility of finding a finite set of instructions that, when carried out repeatedly, allow us to determine, up to any desired precision, the element inside of certain representation model, e.g., having a program that outputs (in finite time) each of the digits that conform its binary representation. Consider, thus, that the representation we use consists of the compact intervals on the real line $\mathcal I \coloneqq \{[a,b] \, | \, a,b \in \mathbb R \text{ and } a \leq b\}$. These intervals carry a natural preorder for our purpose, namely, the one given by reversed inclusion $[a,b] \preceq_{\mathcal I} [c,d]$ if $a \leq c$ and $d \leq b$. This preordered space can be used to model any computational processes that leads to $x$ by starting with some interval that $x$ belongs to and sequentially reducing its length, allowing us to reach arbitrarily small intervals that contain $x$.

An example of a computational process that can be modelled through $(\mathcal I,\preceq_{\mathcal I})$ is the bisection method, which is a simple algorithm used to calculate zeros of functions. Given, say, a polynomial with rational coefficients $p: \mathbb R \to \mathbb R$ and a pair of rational numbers $q,q' \in \mathbb{Q}$ where the polynomial changes sign $p(q)p(q')<0$, the algorithm operates in the following way for any $\varepsilon >0$:
\begin{enumerate}[label=(\roman*)]
\item Evaluate $p$ at $\frac{q+q'}{2}$.
\item If $p(\frac{q+q'}{2})=0$, return $\frac{q+q'}{2}$.
\item If $p(q)p(\frac{q+q'}{2})<0$, then $q' \coloneqq \frac{q+q'}{2}$. Otherwise, $q \coloneqq \frac{q+q'}{2}$.
\item Repeat until $|q-q'|< \varepsilon$.
\end{enumerate}
(We include a simple graphical representation of the bisection algorithm in Figure \ref{bisection method fig}.)

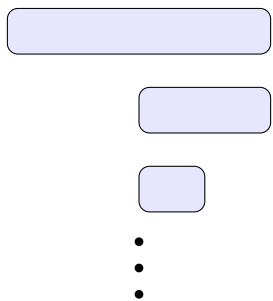
\begin{figure}[!tb]
\centering
\begin{tikzpicture}[scale=0.35, every node/.style={transform shape}]
\node[rounded corners, draw,fill=blue!10, text height = 1.5cm, minimum width = 10cm] {};
\node[rounded corners, draw,fill=blue!10, text height = 1.5cm, minimum width = 5cm, yshift=-3cm,xshift=2.5cm] {};
\node[rounded corners, draw,fill=blue!10, text height = 1.5cm, minimum width = 2.5cm, yshift=-6cm,xshift=1.25cm] {};
\node[small dot] at (0,-8) {};
\node[small dot] at (0,-9) {};
\node[small dot] at (0,-10) {};
\end{tikzpicture}
\caption{Potential intervals that can be obtained when applying the bisection method on some polynomial with rational coefficients $p: \mathbb R \to \mathbb R$ and using $[q,q']$, consisting of rational numbers $q,q' \in \mathbb{Q}$ such that $p(q)p(q')<0$, as initial interval.}
\label{bisection method fig}
\end{figure}

We can consider, thus, the action of the bisection method on arbitrary functions as a series of transitions in the space of rational compact intervals that connect two different pairs of intervals: $(a)$ those intervals that share an endpoint and where the second endpoint of the shorter one is the midpoint of the endpoints of the longer one and $(b)$ each interval $I$ to $[m_I,m_I]$, where $m_I$ is the midpoint of $I$. More in general, we can think of the transitions in several sorts of computations that aim to find some real number as going from some interval with rational points to some other interval with rational points contained in the first one. These transitions belong, hence, to the ones allowed by $\preceq_{\mathcal I}$ on $\mathcal I$. 



Under our definition of computation, any $x$ that is a root of a rational polynomial can be computed, since (removing, if necessary, the precision constraint $\varepsilon$) the bisection method is a procedure that allows us to localize $x$ to arbitrary precision. Moreover, any algorithm that follows a similar \emph{localization} strategy can be thought of as a finite set of instructions that output a set of rational intervals such that $x$ is contained in all of them. Hence, $(\mathcal I,\preceq_{\mathcal I})$ not only captures a notion of uncertainty reduction, but it is a framework in which we can distinguish between the elements that can be computed following the same uncertainty reduction strategy and those that cannot. In particular, as we did with the bisection method and $x$, we can introduce computability on the real numbers through $(\mathcal I,\preceq_{\mathcal I})$.      

Since, as we discussed above, order structures like $\preceq_{\mathcal I}$ can be used to abstractly model the uncertainty reduction that constitutes the basis of several algorithms and, moreover, to introduce computability on uncountable spaces, the third part of this thesis is concerned with the study of the order-theoretic properties that are necessary in these models. In particular, we address the following research question:  

\begin{enumerate}[label=\textbf{(Q\arabic*)}]\addtocounter{enumi}{3}
\item \textbf{Can we clarify what order-theoretic structure is required in order to introduce computability on uncountable sets?}
\end{enumerate}



The main novelty of \textbf{(Q4)} lies in its attempt to derive the order-theoretic framework from an intuitive picture, achieving, thus, a more general structure that allows the introduction of computability on uncountable sets. As part of this process, we find specific reasons supporting the stronger assumptions that are usually required in similar approaches in the literature.

Following our interest in the requirements underlying an uncertainty-based approach to computation, we encounter some countability restrictions. These restrictions are key in order to derive computability notions on uncountable sets from the well-established ones in countable sets, namely, those based on \emph{Turing machines}. This leads us to our last research question:

\begin{enumerate}[label=\textbf{(Q\arabic*)}]\addtocounter{enumi}{4}
\item \textbf{What sort of countability restrictions on the order-theoretic structures are required in order to inherit computability from Turing machines?}
\end{enumerate}

The importance of \textbf{(Q5)} lies in the fact it can establish connections between the order-theoretic approaches to computability and the characterizations of preorders by real-valued functions we discussed in Section \ref{intro: real repre}. More precisely, it allows us to relate (i) the countability restrictions on the order structure that are intimately related to the existence of such functions with (ii) the constraints that are imposed in the order-theoretic approach to computability in order for the notions of computability to be derived from
Turing machines. 

Regarding learning systems, it should be noted that they can also be considered as \emph{computation}, in the sense they are driven by uncertainty reduction \cite{gottwald2019bounded} and that they are described by a closely related mathematical structure. Hence, we can think of this third section as an attempt to improve on how computability can be introduced in the study of learning systems.

\newpage
\thispagestyle{empty}
\mbox{}
\newpage
\chapter{Uncertainty and learning systems}
\label{chapter 2}

As we briefly noted in Chapter \ref{intro}, the notion of uncertainty borrowed from thermodynamics is relevant in the contemporary study of learning systems. This is depicted, for instance, by bounded rationality. In this chapter, we aim to develop this uncertainty-based approach to learning systems further. In particular, (i) we present the basic uncertainty framework that deals with learning systems in Section \ref{thermo learning},  and (ii) we develop this framework further in Section \ref{sec:fluct thm learning}, where we derive some results in thermodynamics (the fluctuation theorems) in Section \ref{subsec: theo fluct} and apply them experimentally to a learning system in Section \ref{sec: exp}. 

The reader should note that, in this chapter, we summarize and supplement our work in \cite{hack2022jarzyski} and \cite{hack2022thermodynamic}.

\section{Thermodynamic optimization principles for learning systems}
\label{thermo learning}

The study of learning and decision-making systems has a long tradition in the economics' literature \cite{morgenstern1953theory,savage1972foundations}. In this field, a system is typically presented with a finite set of options $\Omega$ among which it ought to make a decision. The classical studies assume an unbounded decision power, which manifests itself in an optimization principle. This principle assumes that, in the long run, the system will pick the best possible option. In fact, the system's preferences are usually assumed to be represented by a real-valued function $U: \Omega \to \mathbb R$, the so-called \emph{utility} function, that assigns a larger value $U(x)<U(y)$ to an option $y \in \Omega$ that is preferred over another one $x \in \Omega$. Hence, the system's behaviour is usually expected to tend towards picking the optimal option $x^* \in \Omega$, where 
\begin{equation}
\label{no const}
    x^* \coloneqq \argmax_{ x \in \Omega} U(x)
\end{equation}
and $\argmax$ denotes the argument of the maxima (which we assume in this section, for simplicity, to consist of a single option).
Thus, this theory equates the optimal choices with those that the system will make is the long run, that is, when given enough time to make the choice. 

This framework, however, does not consider any constraints the system may have when making the decision. In this scenario, when taking into account information-processing constraints, bounded rationality enters the picture to include restrictions into the optimization principle. In particular, bounded rationality assumes certain bounds on the system's capacity to process information and, thus, to distinguish the optimal choice from the others \cite{tishby2011information,genewein2015bounded}. This results in another optimization principle which is usually stated in terms of Shannon entropy or one of its variations, like the Kullback-Leibler divergence \cite{gottwald2019bounded,cover1999elements}. In particular, the system's optimal behaviour is assumed to follow the probability distribution $p^* \in \mathbb P_\Omega$, where
\begin{equation}
\label{bounded rat prin}
    p^* \coloneqq \argmax_{ p \in \mathbb P_\Omega}  \Big\{\mathbb E_p[U] \Big| H(p) \geq H_0\Big\}
\end{equation}
and $H_0 \geq 0$. Alternatively \cite{gottwald2019bounded}, this optimization principle can be formulated in an equivalent way by interchanging the role of the utility and the uncertainty measure. In particular, we have
\begin{equation}
\label{bounded rat prin II}
    p^* = \argmax_{ p \in \mathbb P_\Omega} \Big\{H(p)\Big| \mathbb E_p[U] \geq U_0\Big\},
\end{equation}
where $U_0 \in \mathbb R$. In fact, these optimization principles have a direct analytical solution. In particular, we have that $p^*$ is a \emph{Boltzmann} distribution, that is, 
\begin{equation}
\label{boltzmann}
    p^*(x) = \frac{1}{Z} e^{\beta U(x)}
\end{equation}
for all $x \in \Omega$ \cite{jaynes1957information}, where $\beta \geq 0$ is a Lagrange multiplier for \eqref{bounded rat prin} and \eqref{bounded rat prin II}, and $Z$ is a normalization constant. Note that, in case there is no information processing constraint, we recover \eqref{no const}, that is, $p^*(x) = \delta(x-x^*)$ for all $x \in \Omega$, where $\delta$ is Dirac's delta.
It should be noted that, in its usual form \cite{genewein2015bounded,ortega2013thermodynamics}, the distribution \emph{prior} to the decision process is included in both \eqref{bounded rat prin} and \eqref{bounded rat prin II}. Nevertheless, for simplicity, we omit it here.

The Boltzmann distribution \eqref{boltzmann} is well-known since it describes the long-term probability of finding a thermodynamic system, for which the average energy is fixed $ \mathbb E_p [E]=E_0 \in \mathbb R$, in each state $x$ of a finite set of possible states $\Omega$. Note that, throughout this work, $E: \Omega \to \mathbb R$ is the energy function, which works as a negative utility function (in the sense that it assigns smaller values to the states preferred by the system).

The coincidence between the long-run behaviour of both systems comes from the fact they are obtained following the same procedure. In particular, they are instances of the \emph{maximum entropy principle} \cite{jaynes1957information,jaynes2003probability}, which, for learning systems, takes the form of \eqref{bounded rat prin} and \eqref{bounded rat prin II}, and, in thermodynamics, is stated as  
\begin{equation}
\label{max entropy}
    p^* = \argmax_{ p \in \mathbb P_\Omega} \Big\{H(p)\Big| \mathbb E_p [E] = E_0\Big\}
\end{equation}
or, alternatively, as 
\begin{equation}
    p^* = \argmin_{ p \in \mathbb P_\Omega} \Big\{\mathbb E_p [E]\Big| H(p) = H_0\Big\},
\end{equation}
where $H_0 \geq 0$ and $\argmin$ denotes the argument of the minima.

Aside from its leading role in thermodynamics and statistical mechanics \cite{jaynes1957information}, the maximum entropy principle is used in both machine learning and decision theory as a justification for several regularization techniques \cite{williams1991function,fox2016taming,maccheroni2006ambiguity,still2009information,tishby2011information,ortega2013thermodynamics}. However, as we will discuss in Chapter \ref{chap: real repre}, the maximum entropy principle is simply a proxy for a more fundamental phenomenon that underlies these different fields, namely, uncertainty. Before doing so, we continue developing the parallelism between thermodynamics and learning systems by addressing their short-term behaviour.

\section{Fluctuation theorems for learning systems}
\label{sec:fluct thm learning}

In its classical formulation \cite{callen1998thermodynamics,giles2016mathematical}, thermodynamics is restricted to the study of systems in equilibrium, that is, with those whose behaviour follows a Boltzmann distribution \eqref{boltzmann}. It was only recently \cite{crooks1998nonequilibrium,jarzynski1997equilibrium} that the regularities in the dynamics that govern the evolution of thermodynamic systems began to be exploited and gave rise to the field of \emph{non-equilibrium} thermodynamics. The main results in this area are a series of predictions, the so-called \emph{fluctuation theorems} \cite{seifert2012stochastic,jarzynski2000hamiltonian,crooks1999entropy}, that involve the statistics of the realization of non-equilibrium processes, that is, thermodynamic processes where the involved distributions are not necessarily Boltzmann-like.

Usually, the setup where these predictions are derived involves stochastic dynamics and, more specifically, Markov chains \cite{crooks1998nonequilibrium,crooks2000path}. The latter assumption is shared by the study of learning systems. In particular, Markov decision processes \cite{tishby2011information} have been widely studied. Hence, there have been attempts to deepen the parallelism between thermodynamics and the study of learning systems with the aim of profiting from the results in non-equilibrium thermodynamics \cite{grau2018non}. However, given the entanglement between mathematical and physical assumptions in the original derivations \cite{crooks1998nonequilibrium,jarzynski1997nonequilibrium}, these attempts have not clarified the derivation of the fluctuation theorems in the context of general Markov chains. Moreover, contrary to the situation in physics \cite{douarche2005experimental}, there has been no experimental evidence of fluctuation theorems in the context of learning systems. Hence, the aim of this section is to clarify the derivation for general Markov chains of the two most prominent fluctuation theorems, i.e. Jarzynski's \cite{jarzynski1997equilibrium} and Crooks' \cite{crooks1998nonequilibrium}, and to test them in the context of the adaptation of a learning system, namely, in a human sensorimotor task.

\subsection{Fluctuation theorems for general Markov chains}
\label{subsec: theo fluct}

In its simplest form, given a finite set $\Omega$, a Markov chain over $\Omega$ consists of a finite number of random variables $\vect{X}=(X_n)_{n=0}^N$, $X_n: \Omega \to \mathbb R$, such that
\begin{equation*}
\prob(X_{n}=x_n |X_0 = x_0 ,\dots,X_{n-1} =x_{n-1}) \coloneqq \prob(X_{n}=x_n | X_{n-1} =x_{n-1})
\end{equation*}
for all $(x_0,x_1,\dots,x_n) \in \Omega^{n+1}$ and $0<n \leq N$, where $\mathbb P (\cdot)$ denotes the probability of what is inside the parentheses. Hence, $\vect{X}$ is characterized by its initial distribution $p_0$, $p_0(x)=\prob(X_0=x)$ for all $x \in \Omega$, and $N$ transition matrices $(M_n)_{n=1}^N$ fulfilling 
\begin{equation*}
    (M_n)_{xy}\coloneqq \prob(X_n=x|X_{n-1}=y)
\end{equation*}
for all $x,y \in \Omega$ and $1 \leq n \leq N$. (Note that any transition matrix $M$ is \emph{stochastic}, i.e., $\sum_{x \in \Omega} (M)_{xy}=1$ for all $y \in \Omega$.)

Before we can state Jarzynski's equality, we need some definitions regarding Markov chains. A transition matrices $M$ is \emph{irreducible} if, for all $x,y \in \Omega$, there exists some $m \geq 1$ such that 
$(M^m)_{xy}>0$. Moreover, if $p$ is a distribution over $\Omega$, we say it is \emph{stationary} for $M$ if $Mp=p$, where $Mp$ stand for the usual matrix product. Importantly, any irreducible matrix $M$ has a unique stationary distribution $p$ whose components are strictly positive. (See \cite[Corollary 1.17 and Proposition 1.19]{levin2017markov}.)

Provided the transition matrices are irreducible, the initial distribution $p_0$ has non-zero components and $p_n$ is the stationary distribution of $M_n$ for $1 \leq n \leq N$, we call a family of functions $\vect{E}=(E_n)_{n=0}^N$, $E_n:\Omega\to\mathbb{R}$, a \emph{family of energies} of $\vect{X}$ if 
\begin{equation*}
    p_n(x) = \frac{1}{Z_n} e^{-\beta E_n(x)}
\end{equation*}
for $0 \leq n \leq N$, where $\beta \geq 0$ and $Z_n \coloneqq \sum_{x \in \Omega} e^{-\beta E_n(x)}$ is a normalization constant. Lastly, we have
\begin{equation}
\label{work and free en}
\begin{split}
    W_{\vect{X},\vect{E}}(\vect{x}) &\coloneqq \sum_{n=0}^{N-1} E_{n+1}(x_n)-E_{n}(x_n) \text{ and } \\
    \Delta F_{\vect{X},\vect{E}} &\coloneqq \frac{1}{\beta} \big(\log (Z_0)- \log (Z_N)\big),
\end{split}
\end{equation}
for all $\vect{x} \in \Omega^{N+1}$. (Note that, although both $W_{\vect{X},\vect{E}}$ and $\Delta F_{\vect{X},\vect{E}}$ depend on $\vect{E}$, their difference does not.)

\subsubsection{Jarzynski's equality}

Now that we have introduced all the required definitions, we can state Jarzynski's equality for Markov chains.

\begin{theo}[Jarzynski's equality for Markov chains]
\label{Jarzynski's equality}
If $\vect{X}=(X_n)_{n=0}^N$ is a Markov chain on a finite state space $\Omega$ whose initial distribution $p_0$ has non-zero entries and whose transition matrices $(M_n)_{n=1}^N$ are irreducible, then we have, for any family of energies  $\vect{E} =(E_n)_{n=0}^N$ of $\vect{X}$, 
\begin{equation}
\label{eq: jarz}
\mathbb E \big[ e^{-\beta (W(\vect{X})-\Delta F)} \big] = 1 \, ,
\end{equation}
where $W=W_{\vect{X},\vect{E}}$ and $\Delta F= \Delta F_{\vect{X},\vect{E}}$. 
\end{theo}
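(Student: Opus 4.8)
The plan is to expand the expectation as a sum over all sample paths $\vect{x}=(x_0,\dots,x_N)\in\Omega^{N+1}$ and then collapse that sum by summing out the coordinates one at a time, from $x_0$ upward, using at each step that $p_n$ is stationary for $M_n$.

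First I would record the path probability: since $\vect{X}$ is a Markov chain with initial distribution $p_0$ and transition matrices $(M_n)_{n=1}^N$, we have $\prob(\vect{X}=\vect{x}) = p_0(x_0)\prod_{n=1}^N (M_n)_{x_n x_{n-1}}$. Next I would rewrite the integrand using the family of energies. Because $p_0$ has non-zero entries by hypothesis and the stationary distributions $p_1,\dots,p_N$ have strictly positive entries (irreducibility, as noted above), every ratio below is well defined, and $p_n(x)=\frac{1}{Z_n}e^{-\beta E_n(x)}$ gives
\[
e^{-\beta\big(E_{n+1}(x_n)-E_n(x_n)\big)} \;=\; \frac{Z_{n+1}}{Z_n}\cdot\frac{p_{n+1}(x_n)}{p_n(x_n)},
\]
so that $e^{-\beta W(\vect{x})} = \frac{Z_N}{Z_0}\prod_{n=0}^{N-1}\frac{p_{n+1}(x_n)}{p_n(x_n)}$, the normalization constants telescoping. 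Since $e^{\beta\Delta F}=Z_0/Z_N$ from the definition of $\Delta F$, the factor $e^{\beta\Delta F}$ exactly cancels $Z_N/Z_0$, and we are left with
\[
\mathbb E\big[e^{-\beta(W(\vect{X})-\Delta F)}\big] \;=\; \sum_{\vect{x}\in\Omega^{N+1}} p_0(x_0)\Big(\prod_{n=1}^N (M_n)_{x_n x_{n-1}}\Big)\prod_{n=0}^{N-1}\frac{p_{n+1}(x_n)}{p_n(x_n)}.
\]

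The heart of the argument is then a telescoping summation. I would show by induction on $k$ that, after summing out $x_0,\dots,x_{k-1}$, the remaining sum equals
\[
\sum_{x_k,\dots,x_N} p_k(x_k)\Big(\prod_{n=k+1}^N (M_n)_{x_n x_{n-1}}\Big)\prod_{m=k}^{N-1}\frac{p_{m+1}(x_m)}{p_m(x_m)},
\]
the case $k=0$ being the display above. For the inductive step one uses $p_k(x_k)\cdot\frac{p_{k+1}(x_k)}{p_k(x_k)}=p_{k+1}(x_k)$ together with stationarity, $\sum_{x_k}(M_{k+1})_{x_{k+1}x_k}\,p_{k+1}(x_k)=(M_{k+1}p_{k+1})(x_{k+1})=p_{k+1}(x_{k+1})$, which is precisely the form required at level $k+1$. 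Carrying this down to $k=N-1$ reduces the sum to $\sum_{x_{N-1},x_N}(M_N)_{x_N x_{N-1}}\,p_N(x_{N-1})=\sum_{x_N}p_N(x_N)=1$.

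I do not expect a substantial obstacle: the computation is bookkeeping once the two ingredients are in place. The only points needing care are (i) checking that every denominator $p_n(x_n)$ is non-zero, which is exactly where the hypotheses that $p_0$ has non-zero entries and that each $M_n$ is irreducible enter, and (ii) performing the summations in the right order — from $x_0$ upward — so that at every stage the mass already summed is stationary for the next transition matrix. The essential mechanism is the interplay between the telescoping of the $e^{-\beta E_n}$ factors (which disposes of $W$ and $\Delta F$) and the stationarity relations $M_n p_n=p_n$ (which dispose of the path).
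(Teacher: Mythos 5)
Your proposal is correct and follows essentially the same route as the paper's proof in \cite[Theorem 1]{hack2022jarzyski}: expand the expectation as a path sum, use $p_n(x)=\frac{1}{Z_n}e^{-\beta E_n(x)}$ so that the $Z_n$'s and $e^{\beta\Delta F}$ cancel, and collapse the sum coordinate by coordinate via the stationarity relations $M_n p_n = p_n$, with positivity of the denominators guaranteed by irreducibility and the hypothesis on $p_0$. In particular, like the paper, you correctly avoid any assumption of detailed balance or a reversed process, which is exactly the point emphasized there.
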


A proof of Theorem \ref{Jarzynski's equality} can be found in \cite[Theorem 1]{hack2022jarzyski}, where we also discuss its relation to previous approaches and potential applications in decision-making.

\subsubsection{Crooks' fluctuation theorem}

If we strengthen the requirements on the Markov chain $\vect{X}$, we can show Crooks' fluctuation theorem as well. Before we can state it, we need another definition. We say a transition matrix $M_n$ with a (unique) stationary distribution $p$ satisfies \emph{detailed balance} if
\begin{equation}
\label{det balance}
(M_n)_{yx} p(x) = (M_n)_{xy}  p(y) 
\end{equation}
for all $x,y \in \Omega$.

\begin{theo}[Crooks' fluctuation theorem for Markov chains]
\label{physics crooks}
Take $\vect{X} = (X_n)_{n=0}^N$ a Markov chain on a finite state space $\Omega$
whose initial distribution $p_0$ has non-zero entries, whose transition matrices $(M_n)_{n=1}^N$ are irreducible,
and where $p_0$ is the stationary distribution of $M_1$ (i.e. $p_1=p_0$).
If $\vect{E}$ is a family of energies of $\vect{X}$ and all transition matrices of $\vect{X}$ satisfy detailed balance, then
\begin{equation} \label{eq: crooks physics}
\frac{\prob^F(W=w)}{ \prob^B(W=-w)} = e^{\beta(w-\Delta F)} 
\qquad \forall w \in \supp (\prob^F(W)),
\end{equation}
where $\prob^F(W=w)$ is the probability that the work of a realization of $\vect{X}$ takes the value $w$, $\prob^B(W=-w)$ is the probability that the work of a realization of $\vect{Y}$ takes the value $-w$, $\supp (\prob^F(W))$ denotes the support of the probability distribution of the work value of $\vect{X}$, that is, the values that can be taken by the work with non-zero probability, and $\vect{Y}$ is the Markov chain that has
$p_N$ as initial distribution and $(M_{N-(n-1)})_{n=1}^N$ as transition matrices.
\end{theo}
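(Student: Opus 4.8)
The plan is to derive Crooks' theorem for Markov chains by a direct path-counting argument, comparing the probability of a trajectory $\vect{x} = (x_0,\dots,x_N)$ under the forward chain $\vect{X}$ with the probability of its time-reversal $\bar{\vect{x}} = (x_N,\dots,x_0)$ under the backward chain $\vect{Y}$. First I would write down the forward path probability $\prob^F(\vect{x}) = p_0(x_0) \prod_{n=1}^N (M_n)_{x_n x_{n-1}}$ and the backward path probability $\prob^B(\bar{\vect{x}}) = p_N(x_N) \prod_{n=1}^N (M_{N-(n-1)})_{x_{n-1} x_n}$. The crucial algebraic step is to form the ratio $\prob^F(\vect{x})/\prob^B(\bar{\vect{x}})$ and use detailed balance \eqref{det balance} on each factor: since $p_n$ is stationary for $M_n$, detailed balance gives $(M_n)_{x_n x_{n-1}}/(M_n)_{x_{n-1} x_n} = p_n(x_n)/p_n(x_{n-1})$. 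After substituting, the transition-matrix entries cancel entirely and the ratio telescopes into a product of ratios of the stationary distributions evaluated along the path.

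The next step is to turn that telescoped product into the exponential of $\beta(W(\vect{x}) - \Delta F)$. Using the energy representation $p_n(x) = \frac{1}{Z_n} e^{-\beta E_n(x)}$, each ratio $p_n(x_n)/p_n(x_{n-1})$ contributes $e^{-\beta(E_n(x_n) - E_n(x_{n-1}))}$, with the $Z_n$ factors cancelling within each term. Carefully reindexing the resulting sum in the exponent — collecting the contributions attached to each $x_n$ — I expect to recover exactly $\sum_{n=0}^{N-1}(E_{n+1}(x_n) - E_n(x_n)) = W_{\vect{X},\vect{E}}(\vect{x})$ together with the boundary terms $\log Z_0 - \log Z_N = \beta \Delta F_{\vect{X},\vect{E}}$, using the hypothesis $p_1 = p_0$ to make the bookkeeping at the endpoints work out. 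This yields the pathwise identity $\prob^F(\vect{x}) = e^{\beta(W(\vect{x}) - \Delta F)} \, \prob^B(\bar{\vect{x}})$.

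Finally I would sum this pathwise identity over all trajectories $\vect{x}$ with $W_{\vect{X},\vect{E}}(\vect{x}) = w$. On such trajectories the exponential factor is the constant $e^{\beta(w - \Delta F)}$, so it pulls out of the sum; the left side sums to $\prob^F(W = w)$ and the sum of $\prob^B(\bar{\vect{x}})$ over these paths is $\prob^B(W = -w)$, since $W_{\vect{Y},\bar{\vect{E}}}(\bar{\vect{x}}) = -W_{\vect{X},\vect{E}}(\vect{x})$ for the reversed energy family (and the difference $W - \Delta F$ is energy-family independent, as noted after \eqref{work and free en}). Restricting to $w \in \supp(\prob^F(W))$ guarantees the left side is nonzero so the division is legitimate, giving \eqref{eq: crooks physics}. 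The main obstacle I anticipate is the index bookkeeping in the second step: making sure the shift between the energy differences $E_{n+1}(x_n) - E_n(x_n)$ and the stationary-ratio terms $E_n(x_n) - E_n(x_{n-1})$ is handled correctly at both endpoints, which is precisely where the extra hypothesis $p_1 = p_0$ (absent in Jarzynski's equality) is needed, and checking that the time-reversal of the work for the backward chain genuinely equals $-w$ rather than $-w$ up to a boundary correction.
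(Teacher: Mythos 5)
Your proposal is correct and follows essentially the same route as the paper's own derivation (given in \cite[Corollary 2]{hack2022jarzyski}): form the pathwise ratio of the forward probability of $\vect{x}$ and the backward probability of its reversal $\bar{\vect{x}}$, use detailed balance to replace transition entries by ratios of the stationary distributions, telescope to obtain $\prob^F(\vect{x}) = e^{\beta(W(\vect{x})-\Delta F)}\prob^B(\bar{\vect{x}})$, and sum over trajectories of fixed work. One small clarification on your bookkeeping: the hypothesis $p_1=p_0$ is not actually needed for that pathwise identity, which telescopes cleanly on its own; it is needed only in your final step, to ensure that the reversed energy family satisfies $W_{\vect{Y}}(\bar{\vect{x}})=-W_{\vect{X}}(\vect{x})$ exactly rather than up to the boundary term $E_1(x_0)-E_0(x_0)$, so that summing $\prob^B(\bar{\vect{x}})$ over the forward paths with work $w$ genuinely gives $\prob^B(W=-w)$.
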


A proof of Theorem \ref{physics crooks} can be found in \cite[Corollary 2]{hack2022jarzyski}, where we also discuss its relation to previous approaches. The basic difference between the general Markov chain approach here and the typical one in thermodynamics comes from the definition of $\vect{Y}$, the \emph{reverse} process of $\vect{X}$. (See \cite[Section 5]{hack2022jarzyski} for a discussion.)

Now that we have precised the applicability of Jarzynski's and Crooks' fluctuation theorems in the general framework of Markov chains, we are ready to test them in the context of learning systems. In particular, we test them in a human sensorimotor task in the following section.

\begin{rem}[Fluctuation theorems with continuous state space]
In our experiment, instead of a Markov chain where both time and the state space are discrete, we have a Markov chain where time is discrete and the state space is continuous. The derivation of the results we presented in this section hold almost identically in this scenario, although the definitions ought to be made more carefully. The principal difference is that, in case the state space is continuous (see, for example, \cite{chib1995understanding}), the stochastic matrices are replaced by a \emph{transition kernel} $\prob (A,x)$, which indicates the probability of transitioning from certain point in the state space $x \in \mathbb R$ to some subset of the state space $A \subseteq \mathbb R$ (given that the probability of transitioning to a specific point ought to be zero). However, in order to have a picture similar to the one with a discrete state space we presented above, \emph{transition kernel densities} $\rho(y,x)$ are introduced. These are simply functions that are meant to capture the \emph{probability} of a transition between two points of the state space $x,y \in \mathbb R$ since, when integrated over a region of the state space $A \subseteq \Omega$, they yield the corresponding probability as depicted by the transition kernel $\prob (A,x) = \int_{y \in A} \rho(y,x) dy$. The properties of transition matrices can be analogously defined for transition kernel densities and, hence, Theorems \ref{Jarzynski's equality} and \ref{physics crooks} hold similarly in the context of continuous state spaces. For simplicity, thus, we do not include their details.
\end{rem}



\subsection{Fluctuation theorems in human sensorimotor adaptation}
\label{sec: exp}

As we showed in Section \ref{subsec: theo fluct}, both Jarzynski's and Crooks' predictions hold under mild restrictions for Markov chains. Hence, in principle, they can be observed in several contexts. More specifically, they can be used to test models of human adaptive behaviour and, in particular, sensorimotor decision-making.
In this section, we test these predictions in this context.

As a first step towards our goal, we detail our experimental design, we interpret the thermodynamic concepts involved in Jarzynski's and Crooks' fluctuation theorems in the context of human adaptation and we introduce the notation we will use in this scenario. 


\subsubsection{Experimental design}

The adaptation task we tested involves a cursor on a screen that human participants controlled by moving a lever. The task consisted in moving the cursor towards a single fixed target on the screen. In each trial $0 \leq n \leq N$, the position of the cursor was rotated an angle $\theta_n$ relative to the actual hand position and participants had to adapt when moving the cursor towards the target. In fact, the rotation angle $\theta_n$ changed from one trial to the next according to a protocol that can be found in \cite[Section A.3.3]{hack2022thermodynamic}.

For each trial, the data we considered in our analysis is $x_n$, the angle they achieved after crossing a certain distance from the starting position. Thus, we represent the participants' response by a trajectory  $\vect{x}=(x_0,x_1,..,x_N)$. The deviation between $x_n$ and $\theta_n$ incurs a loss, which we quantify \cite{kording2004loss} as an exponential quadratic error 
\begin{equation}
\label{utility}
 E_n(x_n)= 1 - e^{- (x_n-(\theta_n+b))^2},
\end{equation}
where $b$ is a bias parameter (which is participant-specific). We think of the loss \eqref{utility} as the negative utility of $x_n$. Moreover, following \eqref{bounded rat prin} and \eqref{bounded rat prin II}, the behavior after a suitably long adaption can be described by a Boltzmann distribution \eqref{boltzmann} with $U=-E_n$, $p_n^{eq}$.

The quantity driving the adaptation was the accumulation of the externally induced error changes $\Delta E_{ext}(\vect{x})$ (analogous to the physical concept of work $W_{\vect{X},\vect{E}}(\vect{x})$ in \eqref{work and free en}). In fact, if $\Delta E_{ext}(\vect{x})$ is large, the system is more surprised and has to adapt more. We call $\Delta E_{ext}(\vect{x})$ the \emph{driving error} \cite{hack2022thermodynamic}.

In this scenario (and with the new notation), Jarzynski's and Crooks' fluctuation theorem take, respectively, the following form:

\begin{equation}
\label{prediction II}
\mathbb E \big[ e^{-\beta (\Delta E_{ext}(\vect{X})-\Delta F)} \big] = 1 \, ,
\end{equation}
where the expectation is taken over the Markov chain $\vect{X} = (X_n)_{n=0}^N$ such that (i) its realizations are the  observed data points and (ii) $(p^{eq}_n)_{n=0}^N$ are its stationary distributions, and
\begin{equation}
\label{prediction}
    \Delta E_{ext}(\vect{x}) - \Delta F = \frac{1}{\beta}\log \left( \frac{\rho^F(\Delta E_{ext}(\vect{x}))}{\rho^B(-\Delta E_{ext}(\vect{x}))} \right),
\end{equation}
where $\rho^F$ and $\rho^B$ are the learner's probability densities over driving errors after exposing them to $N+1$ environments in a certain order and its reverse, respectively.
    
In the following, we test the relationships \eqref{prediction II} and \eqref{prediction} experimentally. (We have $\Delta F = 0$ since our protocol starts and ends in the same environment.) We divide the task into 20 cycles of 66 trials each (see \cite{hack2022thermodynamic}) which we call the \emph{forward process} (trials 1 to 25 in each cycle) and the \emph{backward process} (trials 34 to 58). (In the backward process, the set of optimal angles coincides with the one for the forward process, although they are presented in reversed order.) Hence, for both the forward and the backward processes, we record 20 values of $\Delta E_{ext}(\vect{x})$ for each participant and estimate $\rho^F$ and $\rho^B$ from them using kernel density estimation. As the amount of data is limited, we use bootstrapping techniques to test \eqref{prediction II} and \eqref{prediction}.

In the following, before addressing the experimental results, we simulate the fluctuation theorems using the Metropolis-Hastings algorithm.

\begin{figure}[!tb]
\centering
    \includegraphics[scale=.2]{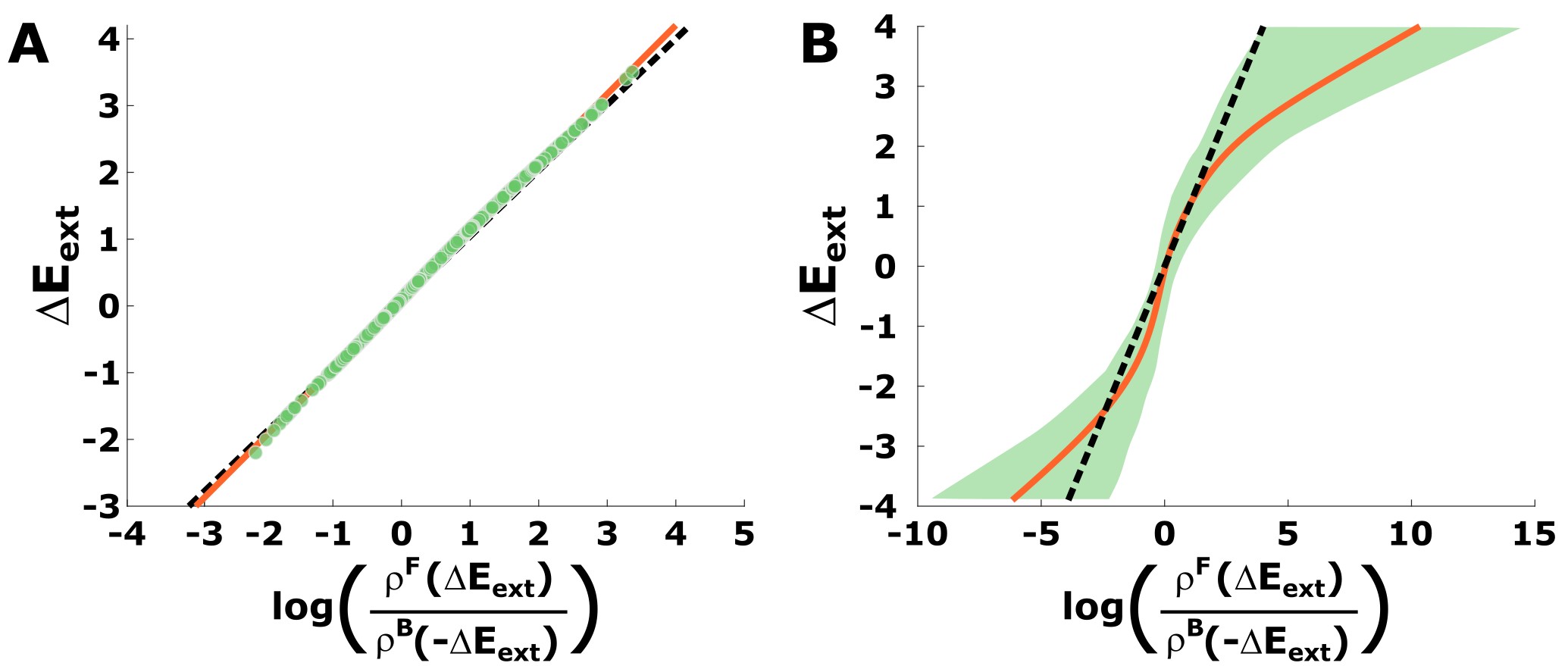}
    \caption{Simulation of Crooks' fluctuation theorem. \textbf{A} Simulation with 1000 cycles. We include the theoretical prediction (black), the linear regression for the simulated data (red) and the simulated points (green). \textbf{B} Simulation with 20 cycles and 1000 bootstraps. We include the theoretical prediction \eqref{prediction} (black) and both the mean (red) and the 99 \% confidence interval (shaded area) of \eqref{prediction} after the bootstraps.
    (Reproduced from \cite{hack2022thermodynamic}, licensed under a Creative Commons Attribution 4.0 International License (http://creativecommons.org/licenses/by/4.0/).)}
    \label{2 simus}
\end{figure}

\subsubsection{Simulations}

In order to simulate the fluctuation theorems, we use the Metropolis-Hasting algorithm to generate data following the specific procedure we detailed in \cite[Section A.2.1]{hack2022thermodynamic}. We address two scenarios:

\begin{enumerate}[label=(\roman*)]
\item We have an arbitrary amount of available data. (Thus, we expect both \eqref{prediction II} and \eqref{prediction} to hold.)
\item We imitate the experimental setup. (Hence, we obtain a rough estimation of what can be expected for our data.) 
\end{enumerate}

We include the result of both simulations in Figure \ref{2 simus}. Notice that, in the situation described by (i), we directly obtain a good adjustment regarding \eqref{prediction}, which automatically (see, for example, \cite{crooks1998nonequilibrium}) implies a good adjustment in \eqref{prediction II}. However, in (ii), we overcome the lack of data by bootstrapping the obtained points. (We follow the same procedure to evaluate the adjustment of the actual data, see \cite{hack2022thermodynamic} for more details.) In this situation, we obtain, after 1000 bootstraps, that Crooks' fluctuation theorem is consistent with the 99 \% confidence interval of our simulated data in the region where $|\Delta E_{\textrm{ext}}| \leq 4$ (i.e. the region where the experimental data lies), and that Jarzynski's equality with $\Delta F = 0$ is also consistent with $(0.48,\text{ }1.64)$, which is the $99\%$ confidence interval we obtain from the mean values we observe in each bootstrap. 

Now that we have gained some intuition regarding the the fluctuation theorems in our setup, we present the experimental results.

\begin{figure}[!tb]
\centering
    \includegraphics[scale=0.21]{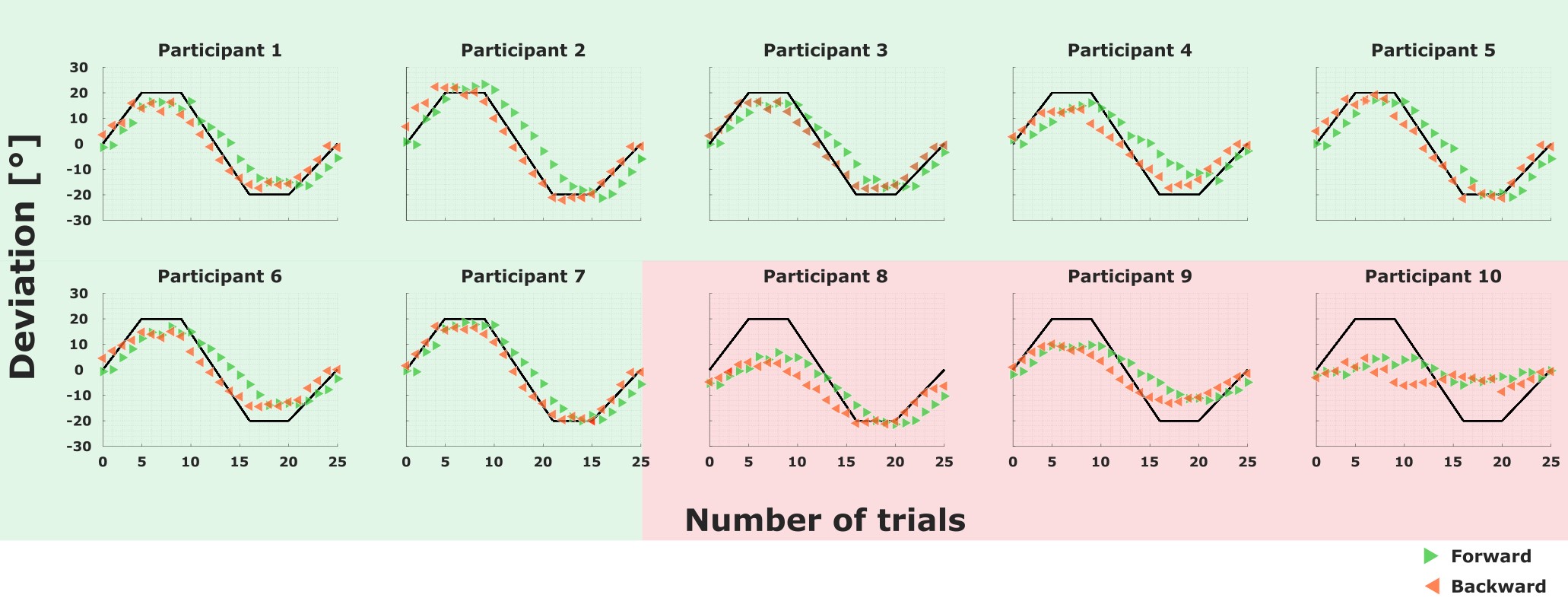}
    \caption{Mean adaptation. The filled triangles are the mean of the observed angles for both the forward (green) and backward (red) processes. The black line is the optimal deviation (the protocol). Participants that achieve at least $50\%$ mean adaptation are shaded by a green background color.
    (Reproduced from \cite{hack2022thermodynamic}, licensed under a Creative Commons Attribution 4.0 International License (http://creativecommons.org/licenses/by/4.0/).)}
    \label{hyste plot}
\end{figure}

\begin{table}[!tb]
\centering
 \begin{tabular}{c c  c c } 
 \hline\noalign{\smallskip}
 Participant & Confidence interval & Participant & Confidence interval \\
 \hline
 1 & \cellcolor[RGB]{175,234,180}(0.03,\text{ }48.59) & 6 & \cellcolor[RGB]{175,234,180} (0.04,\text{ }3.75) \\ 
 2 & \cellcolor[RGB]{175,234,180} (0.03,\text{ }137.58) & 7  &\cellcolor[RGB]{175,234,180} (0.01,\text{ }0.50)\\
 3 & \cellcolor[RGB]{175,234,180} (0.01,\text{ }3.63) & 8 &\cellcolor[RGB]{255, 182, 193}(1.98,\text{ }518130.21)\\
 4 & \cellcolor[RGB]{175,234,180}(0.49,\text{ }63.48) & 9  & \cellcolor[RGB]{255, 182, 193}(0.76,\text{ }77.24)\\
 5 &\cellcolor[RGB]{175,234,180} (0.46,\text{ }1.37) & 10  &\cellcolor[RGB]{255, 182, 193}(0.26,\text{ }48758.33)\\
 \hline
 \end{tabular}
 \caption{Experimental results for Jarzynski's equality. We include the confidence intervals for the left hand side of \eqref{prediction II}, which we obtain after 1000 bootstraps of the observed values of $\Delta E_{ext}(\vect{x})$ for the forward process, and the estimation of $\mathbb E \big[ e^{-\beta \Delta E_{ext}(\vect{X})} \big]$ by its mean for each set of bootstrapped data.
 Note, in our setup, the theoretical prediction requires values around $1.0$ for this estimate. Participants that achieve at least $50\%$ adaptation (c.f. Figure \ref{hyste plot}) are shaded by a green background color. 
 (Reproduced from \cite{hack2022thermodynamic}, licensed under a Creative Commons Attribution 4.0 International License (http://creativecommons.org/licenses/by/4.0/).)}
 \label{jarz participants}
 \end{table}
 
 \begin{figure}[!tb]
\centering
    \includegraphics[scale=0.21]{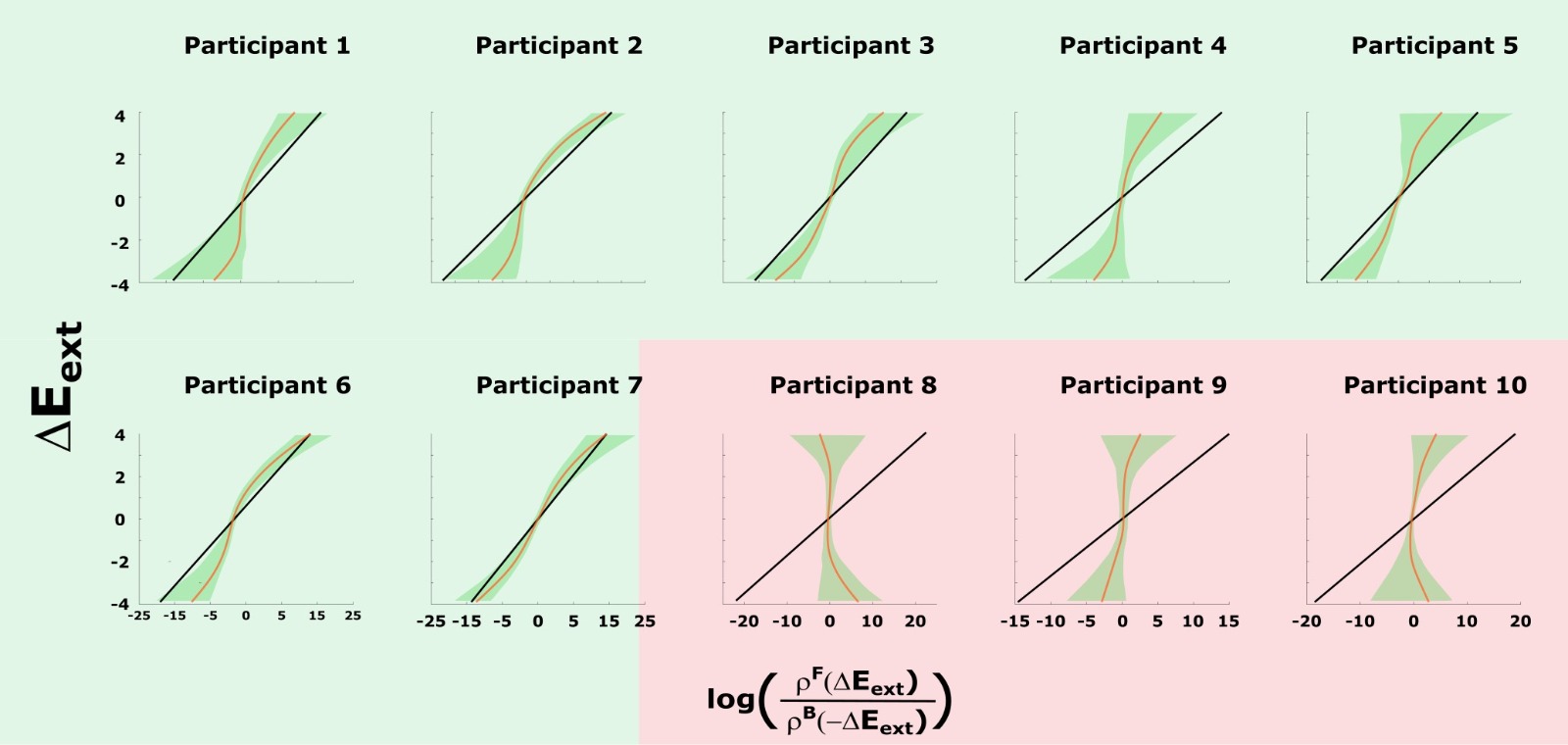}
    \caption{Experimental results for Crooks' fluctuation theorem when the sensorimotor loss behaves as an exponential quadratic error \eqref{utility}. We include the theoretical prediction of Crooks' fluctuation theorem \eqref{prediction} (black) and the mean path after 1000 bootstraps of the observed driving error values (red). Participants that achieve at least $50\%$ adaptation (c.f. Figure \ref{hyste plot}) are shaded by a green background color. The shaded areas inside the graphs are the 99\% confidence intervals which result from bootstrapping. Note that we fit the parameters for each participant according to \cite[Section A.3.3]{hack2022thermodynamic}.
    (Reproduced from \cite{hack2022thermodynamic}, licensed under a Creative Commons Attribution 4.0 International License (http://creativecommons.org/licenses/by/4.0/).)}
    \label{together}
\end{figure}

\subsubsection{Experimental results}

Our experiment consisted of ten participants $(P_i)_{i=1}^{10}$, with $P_i$ being authors of \cite{hack2022thermodynamic} for $i=1,2,3$. Among the participants, there were three ($i=8,9,10$) that presented quite a poor adaptation. In fact, they did not manage to achieve 50 \% adaptation (see Figure \ref{hyste plot}). we expect, thus, the agreement to be poorer in these cases.

Regarding Jarzynski's equality, we obtain consistency with the theoretical prediction for eight out of ten participants. (See Table \ref{jarz participants}.) Importantly, the confidence intervals for two of the participants that do not achieve 50 \% adaptation, $i=8,10$, become larger than the rest. As a consequence, one of them is not consistent with Jarzynski's equality.

Concerning Crooks' fluctuation theorem, the subjects that achieve 50 \% adaptation or higher follow the theoretical trend and, lie within or close to the prediction confidence interval
bounds for ample regions. However, the other subjects show bad adjustment.

In the following section, we challenge the results we have reported in this one by modifying the model we have considered so far.

\begin{figure}[!tb]
\centering
    \includegraphics[scale=0.21]{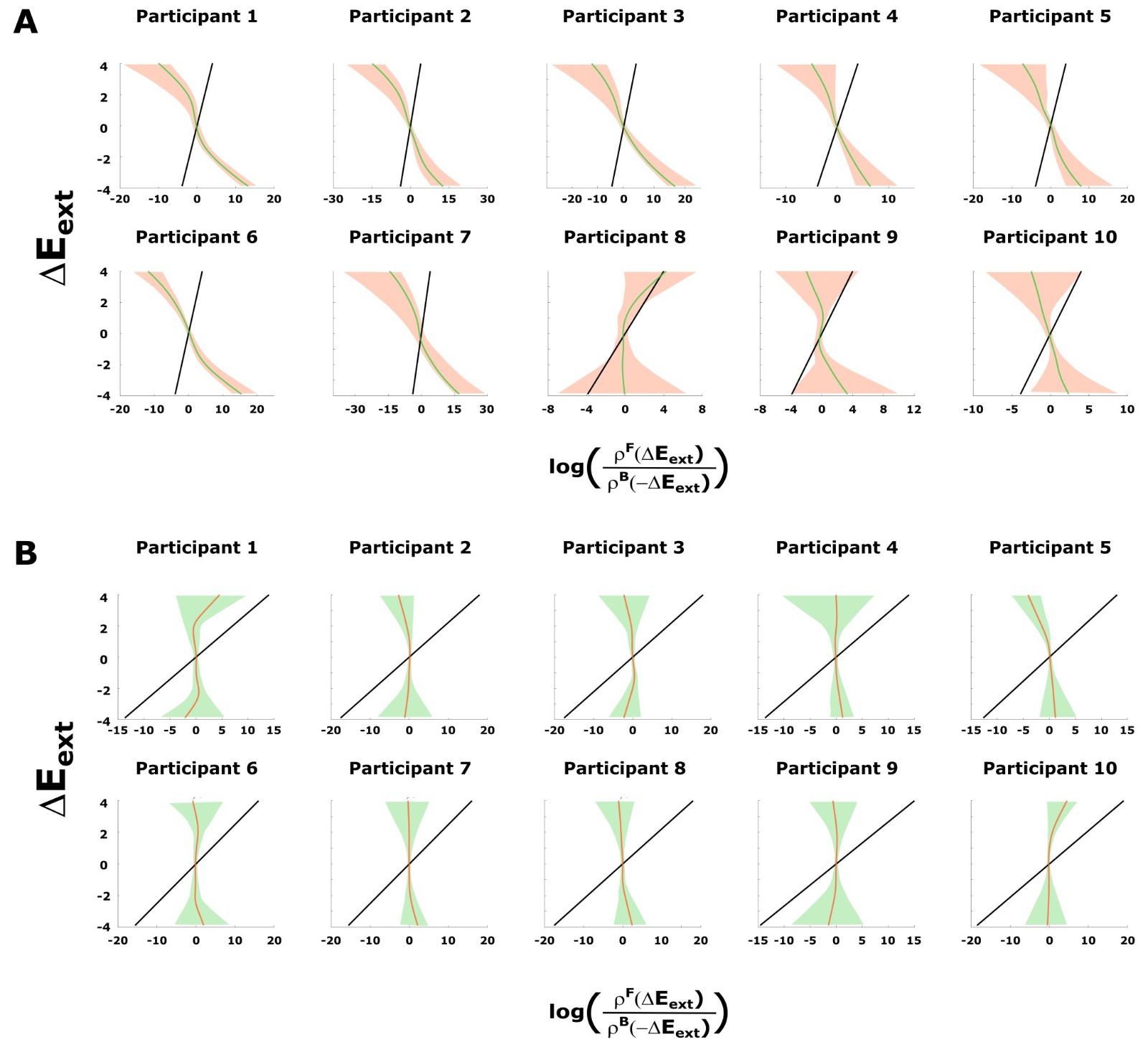}
    \caption{Control results for Crooks' fluctuation theorem in two scenarios: \textbf{A} the sensorimotor loss behaves like a Mexican hat function \eqref{mex hat} and \textbf{B} the sensorimotor loss behaves as an exponential quadratic error but we sample the observed angles randomly with repetition. We include the theoretical prediction of Crooks' fluctuation theorem \eqref{prediction} (black), the mean path after 1000 bootstraps (red) and the 99\% confidence intervals which result from bootstrapping (shaded area). For simplicity, in \textbf{A}, we assume $\beta=1$ for all participants. In \textbf{B}, we fit the parameters for each participant according to \cite[Section A.3.3]{hack2022thermodynamic}. 
    (Reproduced from \cite{hack2022thermodynamic}, licensed under a Creative Commons Attribution 4.0 International License (http://creativecommons.org/licenses/by/4.0/).)}
    \label{togetherB}
\end{figure}

\subsubsection{Robustness}

In order to assess our results, we evaluate them following the procedure above and varying our model's parameters. In particular, we consider two variants:
\begin{enumerate}[label=(\roman*)]
\item We assume absurd utility functions to illustrate the non-triviality of our results.
\item We fix the exponential quadratic error but change the parameters we use (which we fitted in our previous analysis) to determine their influence on the validity of the results.
\end{enumerate}

Regarding (i), we consider two scenarios: A sensorimotor loss that assigns low utility to the target region, in particular, a Mexican hat function that replaces $E_n$ in \eqref{utility} by
\begin{equation}
\label{mex hat}
   E'_n (x)= \frac{2}{\sqrt{3 \sigma} \pi^{\frac{1}{4}}} \bigg(1-\Big(\frac{x-\theta_n}{\sigma}\Big)^2\bigg) \exp \bigg( \frac{(x-\theta_n)^2}{2 \sigma^2} \bigg)
\end{equation}
with $\sigma=4$, and one where we maintain the quadratic exponential loss but the observed angles are randomized. As a result, the consistency in Jarzynski's equality diminishes considerably. More specifically, while there were eight consistent subjects before, there are only two when the Mexican hat function is taken. Moreover, the consistent participants ($i=8,9$) do not reach 50 \% adaptation. When randomizing the angles, after 1000 simulations, the expected number of consistent subjects is $2.33$, with the specific subjects being simulation dependent.
The agreement for Crooks' fluctuation theorem diminishes as well, with the consistency being severely reduced and the data not following the theoretical trend. (See Figure \ref{togetherB}.)

As a last test, we considered convex combinations of a reasonable loss functions (the quadratic exponential) and an absurd one (the Mexican hat) and measured the effect that a variation of the weights has on Crooks' theorem. As expected, the agreement improves as we move the weight towards the quadratic exponential. (See \cite[Table 3]{hack2022thermodynamic}.)

Regarding (ii), the participants' behavior is compatible with Crooks’ fluctuation theorem for a broad neighbourhood of parameter
settings. However, this is not the case for implausible parameters. For the sake of brevity, we do not include these values here. (They can be found in \cite[Figure 8 and Table 2]{hack2022thermodynamic}.)

\section{Summary}

In this chapter, we have extended the study of learning systems through uncertainty tools from optimal to non-optimal behaviour. In particular, we have clarified the applicability of two thermodynamic fluctuation theorems, Jarzynski's equality and Crooks' fluctuation theorem, in the context of learning systems and, moreover, we have obtained the first piece of evidence supporting their validity in a non-physical setup.

In the following chapter, we consider a more intuitive approach to uncertainty based on order structures. The substitution of such an approach by the maximum entropy principle in this chapter leads us to question the relation between order structures and real-valued functions that preserve them in some sense.

\newpage
\thispagestyle{empty}
\mbox{}
\newpage
\thispagestyle{empty}
\mbox{}
\newpage

\chapter{Real-valued representations of preorders}
\label{chap: real repre}

In the previous chapter, we exposed the connection between learning and thermodynamic systems in terms of uncertainty. Moreover, as we briefly noted in Chapter \ref{intro} and develop here in Section \ref{sec: intro pre}, this similarity can be extended to majorization, a fundamental notion of uncertainty that takes the form of a binary relation that ranks the probability distributions over a finite set.
Nonetheless, instead of dealing with it, this basic notion is usually replaced by a real-valued function when facing optimization, as in the maximum entropy principle. The question of when such a binary relation can be substituted by a function constitutes the motivation for Section \ref{opt preorders}, where we point out, in particular, the limitations of such a substitution in the well-known case of the maximum entropy principle. In Section \ref{sec:mu charact}, we complement the approach via functions to optimization over binary relations
by introducing families of functions that fully capture these binary relations. We think of these functions, thus, as complexity measures of the binary relations, given that we can use the number of them and the properties they preserve to compare different binary relations. We devote the following two sections, Sections \ref{sec: dimension} and \ref{sec:classi}, to the study of complexity, optimization and the connection between them for binary relations. Lastly, in Section \ref{sec:application}, we show some concrete applications in physics of the abstract approach taken throughout the chapter. Given that our original motivation comes from majorization and, moreover, the binary relations underlying several other areas of interest the same properties, we focus on the binary relation known as preordered spaces.


The reader should note that, in this chapter, we summarize and supplement our work in \cite{hack2022representing}, \cite{hack2022classification}, \cite{hack2022geometrical} and \cite{hack2022disorder}.

\section{Transitions in thermodynamic and learning systems: Preorders}
\label{sec: intro pre}

The first similarity we established between learning systems and thermodynamics was in terms of their optimal behaviour, which is a Boltzmann distribution since it is obtained following the same optimization principle. We can extend this similarity by assuming, as is usually done in both areas \cite{crooks1998nonequilibrium,crooks2000path,grau2018non}, that their dynamical evolution follows a Markov chain whose equilibrium distribution is a Boltzmann distribution. Notice, in thermodynamics, this is known as the \emph{principle of increasing mixing character} \cite{ruch1976principle,ruch1978mixing,ruch1975diagram} (see Section \ref{subsec: molec diff} for more details). In fact, this was a key property that we exploited in the derivation of both Jarzynski's and Crooks' fluctuation theorem. 

In general, if $d \in \mathbb P_\Omega$ is a Boltzmann distribution that was obtained via either of the optimization principles, it is common to assume, both in thermodynamics and in the study of learning systems,
that
the transition from a distribution $q \in \mathbb P_\Omega$ to another one $p \in \mathbb P_\Omega$ is possible provided that $p$ results from applying to $q$ a stochastic matrix that has $d$ as its stationary distribution. This relation between $p$ and $q$ is usually denoted by $\preceq_d$, that is, we have
\begin{equation}
\label{eq: q-majo}
    p \preceq_d q \iff p= \Pi_d q
\end{equation}
for all $p,q \in \mathbb P_\Omega$, where $\Pi_d$ is a stochastic matrix such that $\Pi_d d =d$ \cite{joe1990majorization}.
The transitions we defined in \eqref{eq: q-majo} are known as $d$-majorization \cite{joe1990majorization,ruch1978mixing}, since $\preceq_d$ is a generalization of majorization (as we detail in the following paragraphs). Crucially, $\preceq_d$ is a preorder, that is, the fundamental transitions that govern the evolution of both thermodynamics and learning systems form a preorder. The study of preorders and, more specifically, how they can be represented using real-valued functions, constitutes the central topic of this chapter. Before doing so, however, let us improve on the relation between majorization and $d$-majorization by giving an alternative definition of $\preceq_d$ taken from \cite{brandao2015second}. 

Take a distribution $d \in \mathbb P_\Omega$ such that $d(x)>0$ for all $x \in \Omega$. (Otherwise, eliminate the elements in $\Omega$ for which $d(x)=0$.) Moreover, assume $d(x) \in \mathbb Q$ for all $x \in \Omega$ (the general case follows from a simple limit of rational distributions). $\preceq_d$ aims to capture the notion of being uncertain relative to $d$. To achieve this, we map $d$ to a uniform on $\Omega'$ (with $|\Omega|< |\Omega'|$) and the other distributions over $\Omega$ to distributions in $\Omega'$ with the distributions whose components are further from $d$ being further from the uniform (in the majorization sense) in $\Omega'$. Concretely, take $\alpha \in \mathbb N$ fulfilling $\alpha d(x) \in \mathbb N$ for all $x \in \Omega$ and, for each $x \in \Omega$, a set $A_x$ with $|A_x|=\alpha d(x)$. Take then $\Omega'$ such that $|\Omega'|=\sum_{x \in \Omega} |A_x|=\alpha$ and $\Lambda_d: \mathbb P_{\Omega} \to \mathbb P_{\Omega'}$, where
\begin{equation*}
    \Lambda_d p(y) \coloneqq \frac{1}{\alpha} \frac{p(x)}{d(x)}
\end{equation*}
for all $y \in A_x$, $p \in \mathbb P_\Omega$ and $x \in \Omega$. We define then, for all $p,q \in \mathbb P_\Omega$,
\begin{equation}
\label{eq: q-majo II}
    p \preceq_d q \iff \Lambda_d p \preceq_M \Lambda_d q,
\end{equation}
where $\preceq_M$ is the majorization relation on $\mathbb P_{\Omega'}$. Hence, $\preceq_d$ orders the distributions in $\mathbb P_{\Omega}$ according to how close their components are to those in $d$. It is in this sense that $\preceq_d$ measures uncertainty relative to $d$. Importantly, definitions \ref{eq: q-majo} and \ref{eq: q-majo II} are equivalent \cite{gottwald2019bounded}.

 
To conclude, notice, if $d$ is the uniform distribution over $\mathbb P_\Omega$, then $\preceq_d$ equals $\preceq_M$ and \eqref{eq: q-majo} reduces to the well-known relation \cite{marshall1979inequalities}:
\begin{equation*}
    p \preceq_M q \iff p= \Pi q
\end{equation*}
for all $p,q \in \mathbb P_\Omega$, where $\Pi$ is a stochastic matrix for which the uniform distribution is stationary (sometimes called a \emph{bistochastic} or \emph{doubly stochastic} matrix).

In the following section, we consider the relation between the maximum entropy principle and majorization and, more in general, that of preordered spaces and optimization principles.


\section{The maximum entropy principle:\\ Optimization principles for preorders}
\label{opt preorders}

In the last section, we presented a more detailed picture of both thermodynamic and learning systems that substituted the simplified account in which transitions can be explained through a specific function for one in which a more fundamental reason for these transitions is given, in particular, a preorder. In fact, a preorder is only partially captured in general by a single function. In this section, we will briefly expose what properties such a function ought to have for it to be chosen as an optimization principle and, hence, motivate the study of real-valued representations of preorders, which is the topic of the next section. For simplicity, we use majorization as our running example throughout this section. 

In a dynamical interpretation of a preorder $(X,\preceq)$, we think of it as a
system that 
transitions
among states until some state is reached where such transitions are no longer possible, that is, for $(\mathbb P_\Omega,\preceq_U)$, until some $p \in \mathbb P_\Omega$ is attained for which there is no $p' \in \mathbb P_\Omega$ such that both $p \preceq_U p'$ and $\neg(p' \preceq_U p)$ hold. We refer to reaching such a $p$, thus, as \emph{maximizing} $\preceq_U$. Since we may be interested in maximizing $\preceq_U$ in several subsets $B \subseteq \mathbb P_\Omega$ and several maxima may exist in each $B$, it is useful to substitute the maximization of $\preceq$ by the maximization of some function $f: \mathbb P_\Omega \to \mathbb R$. This is what occurs in the maximum entropy principle \eqref{max entropy}, where the chosen function is Shannon entropy $H$. The key property that $H$ possesses is that, whenever $p \preceq_U p'$ and $\neg(p' \preceq_U p)$ hold, we have $H(p)<H(p')$. Hence, if a distribution $p$ maximizes $H$ in a subset $B$, then $p$ maximizes $\preceq$ since, otherwise, we violate this property.

If maximizing $f$ over any subset implies maximizing $\preceq$ over the same subset, we call $f$ is an \emph{optimization principle} or we say that $f$ \emph{represents maximal elements} of $\preceq$. It is not hard to see \cite{hack2022representing} that the existence of such an optimization principle is equivalent to that of a \emph{strict monotone} or \emph{Richter-Peleg} function, that is, a function that possesses the property we discussed for Shannon entropy, namely that, for all $x,y \in X$, $x \prec y$ (which is the notation we use to represent that both $x \preceq y$ and $\neg (y \preceq x)$ hold) implies $f(x)<f(y)$, and it is a \emph{monotone}, that is, if $x \preceq y$, then $f(x) \leq f(y)$ (which is also the case for $H$). Notice, however, that a function can be a monotone and not a strict monotone. Whenever $|\Omega| \geq 3$, it is easy to realize this in the case of majorization since, for all $i$ such that $1 \leq i \leq |\Omega|-1$, there exist $x^i,y^i \in \mathbb{P}_{\Omega}$ such that $x^i \prec_U y^i$ and $u_i(x^i) = u_i(y^i)$. Specific examples of such distributions can be found in \cite{hack2022disorder}.


Useful optimization principles
do not only provide distributions that are optimal in the preorder, they output a \emph{single} distribution with such a property. Take, for example, the case of the maximum entropy principle \eqref{max entropy}. The underlying goal of the maximum entropy principle is to provide a numerical procedure (the maximization of $H$) that, given a constraint of (typically) the form $\mathbb E_p[E] = c$ (where $E$ is a random variable), allows us to find a distribution that is maximal in the uncertainty preorder, that is, a distribution $p \in \mathbb P_\Omega$ for which no $q \in \mathbb P_\Omega$ satisfying both $\mathbb E_q[E] = c$ and $p \prec_U q$ exists. This principle works in the case of majorization because of two key properties: on the one hand, $H$ is a strict monotone and, on the other hand, $H$ is strictly concave and maximized over a convex set. The combination of these properties assures that the principle yields a unique distribution that is maximal in $\preceq_U$.

As we highlighted in the last paragraph, the existence of a principle like maximum entropy relies heavily on the convexity of the set where the optimization is carried. Nonetheless, we would like to consider the existence of optimization principles that retain as many properties of maximum entropy as possible in the context of general preordered spaces $(X,\preceq)$.\footnote{This is motivated by our general approach to decision-making via order-theoretic structures that is not limited to majorization and includes, for instance, the bisection method (see Section \ref{sec: uncert and comp}). In particular, we wish to avoid any kind of convexity assumptions.} To this end, we introduced injective monotones in \cite{hack2022representing}, which we define after introducing two notions that we use in their definition. We say a pair of elements $x,y \in X$ are \emph{equivalent}, which we denote by $x \sim y$, if $x \preceq y$ and $y \preceq x$ hold. Moreover, $X/\mathord{\sim}$ is the quotient space of $\sim$, $X/\mathord{\sim} = \{[x]|x\in X\}$, with $[x] = \{y\in X| y\sim x\}$ the equivalence classes.

\begin{defi}[Injective monotones] 
A monotone $f:X\to \mathbb R$ on a preordered space $(X,\preceq)$ is called an \emph{injective monotone}  if $f(x)=f(y)$ implies $x \sim y$, that is, if $f$ is injective considered as a function on the quotient set $X/\mathord{\sim}$.
\end{defi}

Aside from respecting the strict preferences like Shannon entropy, injective monotones associate \emph{incomparable} elements ($x,y \in X$ such that $\neg(x \preceq y)$ and $\neg(y \preceq x)$ hold, which we denote by $x \bowtie y$) to different real numbers.

Injective monotones have the property that, whenever they have optima on some subset, these optima are both equivalent and optima in the preorder. In fact, their existence is equivalent to that of functions that satisfy this property on any subset, namely, \emph{injective optimization principles} \cite{hack2022representing}. It should be noted that such principles extend (up to equivalence) the uniqueness of the maximum entropy principle to \emph{every} subset. This is not the case for Shannon entropy, as we showed in \cite[Lemma 4 $(i)$]{hack2022representing} for non-linear constraints.

Injective monotones are closer to the maximum entropy principle on \emph{partial orders}.\footnote{A partial order $(X,\preceq)$ is an \emph{antisymmetric} preorder, i.e., a preorder where, if $x \preceq y$ and $y \preceq x$ hold for all $x,y \in X$, then $x=y$.} In fact, if $\preceq$ is a partial order (e.g. the quotient set $X/\mathord{\sim}$ equipped with the order relation it inherits from $(X,\preceq)$), then injective monotones preserve all the optimization properties of the maximum entropy principle. In the more general case of preorders, however, the uniqueness cannot be emulated in general, given that, for preordered spaces, equivalent elements have no structural distinction and, thus, are impossible in general to functionally distinguish them only relying on $\preceq$.

In summary, in this section, we have started with an interest in the existence of optimization principles for preorders and have connected this initial aim to the existence of monotones. The relation between these functions is summarized in the following proposition.

\begin{prop}
\label{optimization charac of R-P}
If $(X, \preceq)$ is  a preordered space, then the following statements hold:
\begin{enumerate}[label=(\roman*)]
\item Strict monotones exist if and only if optimization principles do. 
\item Injective monotones exist if and only if injective optimization principles do.
\end{enumerate}
\end{prop}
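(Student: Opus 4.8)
The plan is to prove both items by the same two-step pattern: the implication from a monotone-type function to an optimization-type principle is a one-line unwinding of the definitions, while the converse is obtained by reading off weak information from two-element test sets and then upgrading it via a retraction onto a system of $\sim$-representatives.

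For (i), \emph{strict monotone $\Rightarrow$ optimization principle} is immediate: if $f$ is a strict monotone, $B\subseteq X$, and $x\in B$ attains $\max_B f$ but is not $\preceq$-maximal in $B$, then some $y\in B$ satisfies $x\prec y$, whence $f(x)<f(y)$, contradicting maximality of $f(x)$ on $B$. For the converse, let $f$ be an optimization principle. Applying the defining property to $\{x,y\}$ shows that $x\prec y$ forces $f(x)<f(y)$ (otherwise $x$ is an $f$-maximizer of $\{x,y\}$ that is not $\preceq$-maximal). This makes $f$ strict-order-preserving but not yet a monotone, since the defining property says nothing about two $\sim$-equivalent elements. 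To fix this, choose a map $g\colon X\to X$ with $g(x)\sim x$ for all $x$ and $g(x)=g(y)$ whenever $x\sim y$ — a retraction onto a transversal of $\sim$ — and set $\tilde f\coloneqq f\circ g$. If $x\preceq y$ then $g(x)\sim x\preceq y\sim g(y)$, so either $g(x)=g(y)$ and $\tilde f(x)=\tilde f(y)$, or $g(x)\prec g(y)$ and $\tilde f(x)<\tilde f(y)$; hence $\tilde f$ is a monotone, and the same dichotomy applied to $x\prec y$ (which forces $g(x)\prec g(y)$) shows it is strict. Thus $\tilde f$ is a strict monotone.

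For (ii) the skeleton is the same, with extra bookkeeping for incomparable pairs. If $f$ is an injective monotone, $B\subseteq X$, and $x,y$ both attain $\max_B f$, then monotonicity forbids $x\prec y$ and $y\prec x$ (either would force $f(x)=f(y)$ and hence $x\sim y$, a contradiction), so every $f$-maximizer of $B$ is $\preceq$-maximal; and $f(x)=f(y)$ with $f$ injective modulo $\sim$ gives $x\sim y$, so the maximizers are mutually equivalent — i.e.\ $f$ is an injective optimization principle. Conversely, given an injective optimization principle $f$, the two-element tests now yield \emph{both} $x\prec y\Rightarrow f(x)<f(y)$ and $x\bowtie y\Rightarrow f(x)\neq f(y)$ (two incomparable elements are both $\preceq$-maximal in $\{x,y\}$, so equal $f$-values would exhibit two non-equivalent $f$-maximizers). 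With the same retraction $g$ and $\tilde f\coloneqq f\circ g$, the argument of (i) shows $\tilde f$ is a monotone; moreover $\tilde f(x)=\tilde f(y)$ means $g(x),g(y)$ are transversal elements with equal $f$-value, hence neither strictly comparable nor incomparable, hence $g(x)=g(y)$ and so $x\sim y$. Thus $\tilde f$ is an injective monotone.

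The one genuinely non-trivial point is the gap between \emph{optimization principle} and \emph{monotone}: the defining property of an optimization principle is vacuous on $\{x,y\}$ when $x\sim y$, so a witness can fail monotonicity inside $\sim$-classes. The retraction onto a transversal bridges this; equivalently, one may descend to the quotient partial order $X/\mathord{\sim}$, on which strict-order-preserving and monotone coincide by antisymmetry, and pull back along the quotient map $\pi\colon X\to X/\mathord{\sim}$. The only thing to check carefully is that this appeal to choice is admissible in the ambient framework; granting that, the remaining verifications are routine.
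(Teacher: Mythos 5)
Your argument is correct. The thesis does not spell out its own proof (it points to \cite[Proposition 3]{hack2022representing}), so a line-by-line comparison is not possible from the main text; but the forward implications are exactly the argument the chapter sketches for Shannon entropy, and your converse — extracting $x\prec y\Rightarrow f(x)<f(y)$ (and, in (ii), $x\bowtie y\Rightarrow f(x)\neq f(y)$) from two-element subsets and then repairing the behaviour inside $\sim$-classes by composing with a retraction onto a transversal — is a sound way to obtain the stated existence equivalences. You also correctly isolate the one genuine subtlety: an optimization principle in the sense of the chapter is completely unconstrained on equivalent pairs, so it need not itself be a monotone, and some such repair (equivalently, a passage to the quotient $X/\mathord{\sim}$, where being $\prec$-increasing already implies being a strict monotone by antisymmetry) is unavoidable; this is precisely why the proposition is an existence statement rather than a pointwise one.

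Two minor points. In the forward direction of (ii), the step ``every $f$-maximizer of $B$ is $\preceq$-maximal'' needs one extra line: if $x\in\argmax_B f$ and $x\prec y$ with $y\in B$, monotonicity forces $f(y)=f(x)$, so $y$ is itself a maximizer, and only then does your observation that two maximizers cannot be strictly comparable apply (alternatively, note that an injective monotone is automatically a strict monotone and reuse the maximality argument from (i)). Concerning the appeal to choice in selecting the transversal: it is unproblematic here — the thesis only takes care to avoid the axiom of choice in the linear-extension results of the dimension chapter, not in this proposition — so beyond acknowledging it, no further justification is required.
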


It should be noted that we stated a slightly weaker version of Proposition \ref{optimization charac of R-P} in \cite[Proposition 3]{hack2022representing}. In fact, the proof of Proposition \ref{optimization charac of R-P} follows directly from that of \cite[Proposition 3]{hack2022representing}.

The simple relations in Proposition \ref{optimization charac of R-P} allow us to connect optimization to the way in which preorders are usually defined, namely, multi-utilities, which we will introduce in the following section. 
More specifically, Proposition \ref{optimization charac of R-P} will allow us to connect optimization with complexity, since the latter is closely related to multi-utilities.

\section{Multi-utility characterizations of preorders}
\label{sec:mu charact}

As we discussed in the previous section, Shannon entropy is a useful tool in the study of uncertainty and, in particular, in the optimization of uncertainty. Nonetheless, entropy does not capture majorization completely. For example, while one can always compare distributions in terms of Shannon entropy, one cannot always compare them in terms of majorization (if $|\Omega| \geq 3$). In particular,
$p=(2/3,1/6,1/6, 0,\dots,0)$ and
$q=(1/2,1/2,0,\dots,0)$,
where $p$ ($q$) has $|\Omega|-3$ ($|\Omega|-2$) zeros, are not related by $\preceq_U$. (See, for $|\Omega|=3$, Figure \ref{fig: example decision-making non-total}.) Nonetheless, this issue can be solved by considering a larger family of functions like
\begin{equation*}
\mathcal U \coloneqq (u_i)_{i=1}^{|\Omega|-1}  \, ,   
\end{equation*}
which we used to define the uncertainty preorder \eqref{uncert rela}. In fact, the existence of (countable) families of functions with properties analogous to those of $\mathcal U$ is closely related to the existence of optimization principles for general preordered spaces, as we will see later in this section. This constituted our original motivation for studying such families, which are central in the following.

\begin{figure}
    \centering
    \begin{tikzpicture}[scale=0.67,font=\fontsize{15}{15}\selectfont]
\begin{axis}[
    ymin=0, ymax=1,
    minor y tick num = 3,
    area style,
    xticklabels={,,},
    ylabel={Probability}
    ]
\addplot+[ybar interval,mark=no] plot coordinates { (0, 0.5) (1, 0) (2, 0.5) (3,0.5)};
\end{axis}
\end{tikzpicture}
\hspace{1cm}
    \begin{tikzpicture}[scale=0.67,font=\fontsize{15}{15}\selectfont]
\begin{axis}[
    ymin=0, ymax=1,
    minor y tick num = 3,
    area style,
    xticklabels={,,}
    ]
\addplot+[ybar interval,mark=no] plot coordinates { (0, 0.67) (1, 0.17) (2, 0.17) (3,0.17)};
\end{axis}
\end{tikzpicture}
\caption{Simple example where the left distribution $q$ and the right distribution $p$ are not related by $\preceq_U$. Despite this fact, the Shannon entropy of $q$ is strictly smaller than that of $p$.}
    \label{fig: example decision-making non-total}
\end{figure}
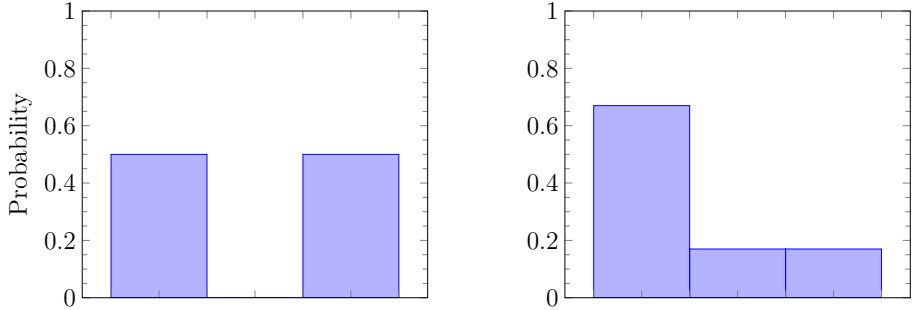

In general, several branches of science study the transitions in their systems of interest by using (a single or several) real-valued functions, which represent measurement devices. As we have seen, this manifests itself, for example, in a fundamental pillar of several disciplines, namely, optimization principles. Examples of areas that (fundamentally) rely to model systems' transitions and, only at a later stage, look for measurement devices that more or less capture their underlying order include (but are not limited to) \emph{thermodynamics} \cite{lieb1999physics,giles2016mathematical}, \emph{general relativity} \cite{bombelli1987space,minguzzi2010time}, \emph{quantum physics} \cite{nielsen1999conditions,brandao2015second} and \emph{economics} \cite{debreu1954representation,ok2002utility}. The same ideas are also relevant in \emph{multi-objective optimization}\cite{jahn2009vector,miettinen2012nonlinear}. Furthermore, the study of the association of real numbers to transitions has often been called \emph{measurement theory} \cite{roberts1979measurement,luce2002representational}.

As a result of the previous paragraph, the study of real-valued representations of preorders is what occupies us throughout the rest of this chapter, where we distinguish between two approaches: one in which multiple functions are used and the order structure is completely captured (e.g. $\mathcal U$ characterizes majorization), and one in which a single one is used and, in general, only part of the order structure is captured (e.g. Shannon entropy for $\preceq_U$ with $|\Omega| \geq 3$).

The fundamental idea regarding complete real-valued representations of a preorder $(X,\preceq)$ is to introduce coordinates into our problem, that is, to map, for some set $I$, the state space of our system into the Cartesian product of $I$ copies of the real line
\begin{equation*}
u: X \to \bigtimes_{i \in I} \mathbb R    
\end{equation*}
in a way such that the coordinates associated to a pair of elements can be compared to determine whether transitions between them are possible, i.e. $x \preceq y$ if and only if $u(x) \preceq_I u(y)$ for all $x,y \in X$, where $\preceq_I$ is some ordering on 
$\times_{i \in I} \mathbb R$. Usually, the \emph{standard} ordering is considered on
$\times_{i \in I} \mathbb R$, that is, $x \preceq_I y$ if and only if $x_i \leq y_i$ for all $i \in I$.\footnote{Although it is widely used in the literature \cite{bosi2020mathematical,evren2011multi,ok2002utility}, the standard ordering is not always considered. We will return to this point in Section \ref{other repres}.} The standard ordering gives rise to the notion of a \emph{multi-utility} \cite{evren2011multi,bosi2012continuous,kaminski2007quasi}. In particular, we say a family of real-valued functions $(u_i)_{i \in I}$, where $u_i: X \to \mathbb R$ for all $i \in I$, is a multi-utility for a preordered space $(X,\preceq)$ if we have 
\begin{equation}
\label{definition mu}
    x \preceq y \iff u_i(x) \leq u_i(y) \text{ for all } i \in I.
\end{equation}
(See Figure \ref{multi-ut 3} for a representation of the ordering when $|I|=2$.) Furthermore, in case $|I|=1$, we say $(u_i)_{i \in I}$ is a \emph{utility} function \cite{bridges2013representations,debreu1954representation}, which coincides with the definition of \emph{utility} that we introduced in Section \ref{thermo learning}. As an instance where the definition of multi-utility applies, we can take $\mathcal U$ and the uncertainty preorder $\preceq_U$. Another instance is
\begin{equation*}
\mathcal F \coloneqq \Big\{\sum_{n=1}^{|\Omega|} g(p_n) \, \Big| \, g \text{ concave} \Big\},    
\end{equation*}
which is also a multi-utility for the uncertainty preorder \cite{marshall1979inequalities}. 

The basic features of the functions that constitute a multi-utility $(u_i)_{i \in I}$ are the following: (i) $u_i$ is a  \emph{monotone} for all $i \in I$,
and (ii) all the involved monotones have to agree in order for some transition to be possible, that is, whenever $x \preceq y$ is false for some $x,y \in X$, there will be some $i_0 \in I$ such that $u_{i_0}(x)>u_{i_0}(y)$. It is in this latter sense (ii) that one may say the minimal multi-utility that exists for some preorder \emph{measures} its complexity, since it is the least number of coordinates that are required in order for the standard ordering on them to faithfully represent the original preorder. We will return to this point in Section \ref{sec: dimension}.

\begin{figure}[!tb]
\centering
\begin{tikzpicture}[scale=0.75, every node/.style={transform shape}]

\node[other node 1] at (0,-2) (1) {\textbf{A}};
\node[other node 2] at (-2,0) (3) {\textbf{C}};
\node[other node 2] at (2,0) (2) {\textbf{B}};
\node[other node 3] at (-2,3) (5) {\textbf{E}};
\node[other node 3] at (2,3) (4) {\textbf{D}};
\node[other node 4] at (0,5) (6) {\textbf{F}};

\node[other node 1] at (-2,-10.5) (1d) {$u$(\textbf{A})};
\node[other node 2] at (0,-10.5) (2d) {$u$(\textbf{B})};
\node[other node 3] at (2,-10.5) (3d) {$u$(\textbf{D})};
\node[other node 2] at (-2,-8) (1a) {$u$(\textbf{C})};
\node[other node 3] at (0,-8) (2a) {$u$(\textbf{E})};
\node[other node 4] at (2,-8) (3a) {$u$(\textbf{F})};

\path[draw,thick,-]
    (1) edge node {} (2)
    (1) edge node {} (3)
    (2) edge node {} (4)
    (3) edge node {} (5)
    (2) edge node {} (5)
    (4) edge node {} (6)
    (5) edge node {} (6)
    ;

    \draw[thick,->]
    (0,-4) -- (0,-6) node[midway,sloped,yshift=-3mm,font=\fontsize{15}{15}\selectfont, thick, rotate=90] {$u$};

\draw[thick,->]
    (-3,-11.5) -- (-3,-9.5) node[midway,sloped,yshift=3mm,font=\fontsize{15}{15}\selectfont, thick, rotate=-90] {$u_2$};
    
     \draw[thick,->]
    (-3,-11.5) -- (-1,-11.5) node[midway,sloped,yshift=-3mm,font=\fontsize{15}{15}\selectfont, thick, rotate=0] {$u_1$};
    
\end{tikzpicture}
\caption{Representation of \eqref{definition mu} with $|I|=2$. The top figure is the Hasse diagram of a preorder $(X,\preceq_0)$ with $X \coloneqq \{A,B,C,D,E,F\}$ and $A\preceq_0 B,C$; $C\preceq_0 E$; $B\preceq_0 D,E$ and $E,D\preceq_0 F$, plus the relations that hold by transitivity and reflexivity. (In a Hasse diagram \cite{harzheim2006ordered}, each point represents an element of the preorder and an arrow between a point $x \in X$ and a higher point $y \in X$ represents that $x \prec y$. Only the nearest neighbour relations are included in a Hasse diagram.) The bottom figure is the image of the elements in $(X,\preceq_0)$ via a multi-utility consisting of two functions $u=(u_1,u_2)$. Note that, since there are incomparable pairs of elements in $(X,\preceq_0)$, there is no multi-utility representation consisting of a single real-valued function.}
\label{multi-ut 3}
\end{figure}
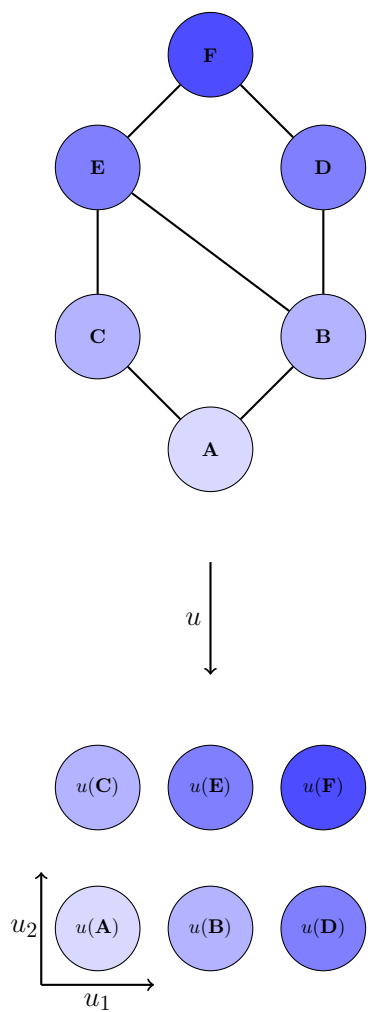

Although the number of functions that are required to form a multi-utility is a measure of the complexity of a preorder, it is not the only one. An important example of another sort of measure of complexity is given by \emph{strict monotone multi-utilities}, which are also known as \emph{Richter-Peleg multi-utilities}. These are multi-utilities $\mathcal V$ where each of its constitutive functions $v \in \mathcal V$ is a strict monotone. Equivalently \cite{alcantud2016richter}, a strict monotone multi-utility $\mathcal V$ is a family of real valued functions, $v:X \to \mathbb R$ for all $v \in \mathcal V$, such that, for all $x,y \in  X$,
\begin{equation}
\begin{cases}
    x \preceq y \iff v(x) \leq v(y) \ \forall v \in \mathcal V, \text{ and }\\
    x \prec y \iff v(x) < v(y) \ \forall v \in \mathcal V.
    \end{cases}
\end{equation}
Hence, strict monotone multi-utilities achieve a better characterization of the \emph{irreversible} transitions a system undergoes. In the case of majorization, for example, 
\begin{equation*}
\begin{split}
  &(f_{i,n})_{1 \leq i \leq |\Omega|-1, n \geq 0} \text{ with}\\
  &f_{i,n} \coloneqq u_i - q_n H
  \end{split}
\end{equation*}
is a strict monotone multi-utility \cite[Proposition 4.1]{alcantud2016richter}, where $(q_n)_{n \geq 0}$ is a numeration of the strictly positive rational numbers. The situation that occurs for majorization, where both multi-utilities and strict monotone multi-utilities coexist, does not happen in general. That is, there exist preorders for which multi-utilities exist while strict monotones do not. Moreover, even when they both exist, a different number of functions may be required in order to constitute them. In fact, this has important consequences in physics, as we will show in Section \ref{sec:application}.

In the same spirit in which we defined strict monotone multi-utilities by requiring the functions that conform a multi-utility to be strict monotones, we could introduce \emph{injective monotone multi-utilities} by requiring the functions to be injective monotones. However, given that they have a close connection to geometry, we postpone their definition until Section \ref{sec: dimension}, where we discuss what the most appropriate notion of dimension for partial orders is. Our interest in this topic also follows from optimization given that, as we will see, complexity (in the multi-utility sense) is closely related to optimization, and the different notions of dimension that we will deal with are also measures of complexity. 

It should be noted that, although we will interpret injective monotone multi-utilities in the light of our discussion concerning dimension, they can also be interpreted, using Proposition \ref{optimization charac of R-P}, via the \emph{description of possible self of the agent defined by $\preceq$} that was introduced in \cite{evren2014scalarization} and used in \cite{alcantud2016richter} to interpret strict monotone multi-utilities.

Before entering the debate concerning the notion of dimension on partial orders, let us briefly introduce some of the notions of order denseness and order separability we will use throughout this work. These definitions are relevant since they are closely related to the representation of preorders by real-valued functions. Moreover, as we will later clarify in Chapter \ref{chap:Uncertainty and computation}, these properties are also closely related to the order-theoretic approach that treats computation as uncertainty reduction. 

\section{Order denseness and separability properties}
\label{sec: order props}

The most fundamental preorder on an uncountable space is that given by the usual ordering on the real line $(\mathbb R, \leq)$. This preorder possesses a well-known countability restriction, namely, $\mathbb Q$ is dense in $\leq$ in the sense that, for all $x,y \in \mathbb R$, there exists some $q \in \mathbb Q$ such that $x < q < y$.

An abstraction of the basic denseness property of $\mathbb Q$ in $\mathbb R$ with its usual ordering is fundamental when studying the representation of preorders by real-valued functions. This extension is called \emph{Debreu separability}. More specifically, we say a preorder $(X,\preceq)$ is \emph{Debreu separable} if it contains a countable set $D \subseteq X$ such that, if $x \prec y$ holds for a pair $x,y \in X$, then there exists some $d \in D$ such that $x \preceq d \preceq y$ \cite{bridges2013representations,debreu1954representation}. Despite being a weaker version of the proper generalization of the density relation between $\mathbb Q$ and $\mathbb R$ (which corresponds to what is called \emph{order separability} \cite{mehta1986existence} or \emph{density in the sense of Cantor} \cite{bridges2013representations} and is defined as the existence of a countable subset $D \subseteq X$ such that, if $x \prec y$ for a pair $x,y \in X$, then there exists some $d \in D$ such that $x \prec d \prec x$), Debreu separability is key in the representation of preorders by real valued functions. In fact, its prominent role in the field is illustrated by
the following fundamental theorem by Debreu \cite{debreu1954representation} on the representation of \emph{total} preorders, that is, preorders $(X,\preceq)$ such that either $x \preceq y$ or $y \preceq x$ holds for any pair $x,y \in X$ \cite{bridges2013representations}.

\begin{theo}[{Debreu \cite[Theorem 1.4.8]{bridges2013representations}}]
\label{debreu thm}
If $(X,\preceq)$ is a total preorder, then it has a utility function $u:X \to \mathbb R$ if and only if it is Debreu separable.
\end{theo}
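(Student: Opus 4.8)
The plan is to prove both directions of the equivalence, with the nontrivial content being the construction of a utility function from Debreu separability. The necessity direction is a quick warm-up: if $u\colon X\to\mathbb R$ is a utility function for the total preorder $(X,\preceq)$, then $u(X)\subseteq\mathbb R$ is separable in the usual topology, so it contains a countable dense subset; pulling back suitable points of $X$ realizing these values (one per value hit) yields a countable $D\subseteq X$, and one checks that whenever $x\prec y$ there is $d\in D$ with $u(x)\le u(d)\le u(y)$, hence $x\preceq d\preceq y$ by the representation property. A small subtlety is to also throw into $D$ witnesses for any ``jumps'' of $u(X)$ — points $a<b$ with $(a,b)\cap u(X)=\emptyset$ — so I would include, for each such gap whose endpoints are attained, elements mapping to those endpoints; since $u(X)$ has at most countably many gaps this keeps $D$ countable.

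The sufficiency direction is the heart of Debreu's theorem. First I would reduce to a partial order by passing to the quotient $X/\mathord\sim$: equivalence classes are mapped to single reals anyway, and Debreu separability and totality both descend to the quotient, so without loss of generality $\preceq$ is a total order (antisymmetric). Let $D=\{d_1,d_2,\dots\}$ be the countable Debreu-dense set, and enlarge it if necessary so that it also contains, for every ``jump'' pair $x\prec y$ with no $z$ satisfying $x\prec z\prec y$, both endpoints $x$ and $y$ (there are at most countably many such pairs in a Debreu-separable total order — this is a standard lemma I would either cite or prove by an injection-into-$D\times D$ argument). Then define, for $x\in X$,
\begin{equation*}
u(x)\;\coloneqq\;\sum_{\{n\,:\,d_n\prec x\}} 2^{-n}\;+\;\sum_{\{n\,:\,d_n\preceq x\}} 2^{-n}.
\end{equation*}
(Using two sums, one strict and one non-strict, is the usual trick to make the function strictly increasing across jumps where the gap is realized by a point of $D$.) One then verifies: (a) $x\preceq y\implies u(x)\le u(y)$, immediate since both index sets grow; (b) $x\prec y\implies u(x)<u(y)$, which is where Debreu separability is used — either there is $d_n$ with $x\preceq d_n\preceq y$ and, using totality plus the jump-augmentation of $D$, one locates an index contributing to $u(y)$ but not $u(x)$. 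Conversely $u(x)\le u(y)$ forces $x\preceq y$ by totality (if $y\prec x$ then by the same argument $u(y)<u(x)$).

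The main obstacle, and the place to be careful, is step (b): extracting a strict increase of $u$ from $x\prec y$. Debreu separability only guarantees $d\in D$ with $x\preceq d\preceq y$, not with $x\prec d\prec y$ (that stronger ``Cantor density'' can genuinely fail). The resolution is the case analysis on whether the pair $(x,y)$ is a jump: if it is not a jump, pick $z$ with $x\prec z\prec y$ and apply Debreu separability twice to get $d_n,d_m\in D$ with $x\preceq d_n\preceq z\prec z'\preceq d_m\preceq y$ for some intermediate $z'$, producing an index counted in $u(y)$ but not $u(x)$; if it is a jump, then by construction $x,y\in D$ themselves (say $x=d_n$), and $d_n\preceq x$ contributes $2^{-n}$ to the non-strict sum for $x$, while $d_n\prec y$ contributes $2^{-n}$ to both sums for $y$, so $u(y)-u(x)\ge 2^{-n}>0$. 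Handling totality correctly throughout (to convert ``$\neg(y\preceq x)$'' into usable comparisons) and confirming the jump set is countable are the remaining routine-but-essential checks. Finally, I would remark that convergence of the series is trivial since $\sum 2^{-n}<\infty$, so $u$ is well-defined and real-valued, completing the proof.
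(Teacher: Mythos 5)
Your proposal is correct. Note that the paper itself does not prove this theorem at all — it is stated with a citation to Bridges and Mehta — and your argument is essentially the classical one found there: for necessity, a countable dense subset of $u(X)$ plus witnesses for the countably many gaps; for sufficiency, a countable Debreu-dense set and a dyadic double sum $u(x)=\sum_{d_n\prec x}2^{-n}+\sum_{d_n\preceq x}2^{-n}$, with totality converting $u(x)\le u(y)$ back into $x\preceq y$. One simplification worth noting: the jump-augmentation of $D$ (and the jump/non-jump case split) is not needed. Given $x\prec y$, apply Debreu density once to this pair to get $d_n$ with $x\preceq d_n\preceq y$; in each of the three possible positions ($d_n\sim x$, $d_n\sim y$, or $x\prec d_n\prec y$) the index $n$ contributes strictly more to $u(y)$ than to $u(x)$ — precisely because of the two sums — so $u(y)-u(x)\ge 2^{-n}>0$ directly. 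This also repairs the slightly garbled step in your non-jump case ("apply Debreu separability twice ... $z\prec z'\preceq d_m$"): no intermediate $z'$ need exist, and the index you produce may still lie in the weak sum for $x$ when $d_n\sim x$, but the strict-sum membership differs, which is all the double-sum construction requires.
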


Note that, instead of Debreu separability, this result is sometimes stated using several equivalent order density properties (see \cite[Proposition 1.4.4]{bridges2013representations} for a list). The reason behind using Debreu separability instead of order separability is clarified by Theorem \ref{debreu thm}: the latter would exclude preorders which have utility functions, like $(\mathbb N, \leq)$. This contrasts with Debreu separability, which, since preorders are reflexive by definition, is fulfilled by any preorder on a countable set.

Given the strong relation between order denseness and the representation of total preorders by utility functions put forward by Debreu, it is natural to ask whether similar relations hold for non-total preorders.\footnote{Regarding applications, note that, although the interest was originally focused on total preorders, the study of non-total preorders was pioneered in \cite{aumann1962utility,peleg1970utility} (see, for example,  \cite{gilboa2010objective}).} Before doing so, however, a notion of order denseness that also addresses incomparable pairs of elements should be defined. Surprisingly, despite Debreu's result, the order denseness property considered in the literature to deal with such a scenario, namely \emph{upper separability} (see, for example, \cite{ok2002utility}), is an extension of order separability. That is, a preorder is said to be upper separable if there exists a countable subset $D \subseteq X$ by which $(X,\preceq)$ is order separable and, for all $x,y \in X$ such that $x \bowtie y$, there exists some $d \in D$ such that $x \bowtie d \prec y$.

Hence, a suitable extension of Debreu separability to non-total preorders is missing. Because of that, we introduced the following definition in \cite{hack2022representing}.

\begin{defi}[Debreu upper separable preorder]
\label{def: debreu upper sep}
We say a preorder $(X,\preceq)$ is Debreu upper separable if there exists a countable subset $D \subseteq X$ by which
$(X,\preceq)$ is Debreu separable and, moreover, there exists some $d \in D$ such that $x \bowtie d \preceq y$ for each pair $x,y \in X$ such that $x \bowtie y$.
\end{defi}

Note that Debreu upper separability is a natural extension of Debreu separability that also addresses incomparabilities in non-total preorders and, like Debreu separability, it is fulfilled by any preorder on a countable set. This contrasts with upper separability which, as a result of the strong requirements in its definition, is not fulfilled by a preorder like $(\mathbb N, =)$.

The advantage of Debreu upper separability lies in the fact we can recover analogous implications to those given by Debreu separability in Theorem \ref{debreu thm} that apply to a larger family of preorders (including some which are not upper separable). In this regard, it should be noted that, for total preorders, a utility function is equivalent to a strict monotone and to the existence of a countable multi-utility. Furthermore, although the equivalence is lost when considering non-total preorders (as we will show in Figure \ref{fig:classification density}), Debreu upper separability still implies the existence of the latter two. We state the connection between order denseness and the most common representations by real-valued monotones in the following proposition, where we call a subset $D \subseteq X$ \emph{Debreu upper dense} (which we defined in \cite{hack2022representing}) if, for any pair $x,y \in X$ such that $x \bowtie y$, there exists some $d \in D$ such that $x \bowtie d \preceq y$.

\begin{table}[!t]
\begin{center}
\begin{tabular}{ p{3cm} p{1.3cm} p{6.5cm} }
 \hline\noalign{\smallskip}
 Name& \hfil Object & \hfil Definition\\
 \hline\noalign{\smallskip}
order dense  & \hfil $Z \subseteq X$    & $\forall x,y \in X$ $x \prec y$ $\implies$ $\exists z \in Z$: $x \prec z \prec y$ \\
 Debreu dense&   \hfil $Z \subseteq X$  & $\forall x,y \in X$ $x \prec y$ $\implies$ $\exists z \in Z$: $x \preceq z \preceq y$  \\
 upper dense&   \hfil$Z \subseteq X$  & $\forall x,y \in X$ $x \bowtie y$ $\implies$ $\exists z \in Z$: $x \bowtie z \prec y$\\
 Debreu upper dense& \hfil$Z \subseteq X$  & $\forall x,y \in X$ $x \bowtie y$ $\implies$ $\exists z \in Z$: $x \bowtie z \preceq y$ \\
 order separable& \hfil $X$ & $\exists Z \subseteq X$ countable: $Z$ is order dense \\
 Debreu separable    & \hfil$X$ & $\exists Z \subseteq X$ countable: $Z$ is Debreu dense\\
 upper separable& \hfil$X$  & $\exists Z \subseteq X$ countable: $Z$ is order dense and upper dense\\
 Debreu upper separable& \hfil $X$  & $\exists Z \subseteq X$ countable: $Z$ is Debreu dense and Debreu upper dense \\
 \hline\noalign{\smallskip}
\end{tabular}
\caption{Separability properties of preordered spaces $(X,\preceq)$.
(Reproduced from \cite{hack2022representing}, licensed under a Creative Commons Attribution 4.0 International License (http://creativecommons.org/licenses/by/4.0/).)}
\end{center}
\label{dense table}
\end{table}

\begin{prop}
\label{deb no mu}
If $(X,\preceq)$ is a preordered space, then the following statements hold:
\begin{enumerate}[label=(\roman*)]
\item If $(X,\preceq)$ is Debreu separable, then it has a strict monotone. However, the converse is false.
\item If $(X,\preceq)$ is Debreu upper separable, then it has a countable multi-utility. However, the converse is false. Moreover, substituting the hypothesis by either Debreu separability or the existence of a countable Debreu upper dense subset is insufficient.
\item If $(X,\preceq)$ has a countable Debreu upper dense subset, then it has a strict monotone if and only if it has a countable multi-utility. Furthermore, the hypothesis is mandatory.
\end{enumerate}
\end{prop}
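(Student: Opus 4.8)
The plan is to prove each of the three items by combining a direct construction argument with explicit counterexamples for the negative claims. I will work throughout with a preordered space $(X,\preceq)$ and use freely the standard facts recalled in the excerpt (monotones, strict monotones, multi-utilities, and the various denseness notions collected in the table).

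For item (i), I would first recall that Debreu separability means there is a countable $D$ that is Debreu dense. The construction of a strict monotone from such a $D$ is the classical Debreu-style argument: enumerate $D = \{d_1, d_2, \dots\}$, fix a summable sequence of positive weights $(w_n)_{n\ge 1}$ (e.g.\ $w_n = 2^{-n}$), and set $f(x) \coloneqq \sum_{n : d_n \preceq x} w_n - \sum_{n : x \preceq d_n} w_n$, or some suitable variant that is well-defined and bounded. One checks monotonicity is immediate from the definitions, and that $x \prec y$ forces the existence of some $d_n$ with $x \preceq d_n \preceq y$, which makes the index $n$ contribute with the right sign to push $f(x) < f(y)$; some care is needed to ensure strictness, which is where Debreu denseness (as opposed to mere order denseness) is exactly what is needed. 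For the converse being false, I would exhibit a preorder that has a strict monotone but is not Debreu separable — a natural candidate is an uncountable antichain, or more safely an uncountable antichain with a cofinal/coinitial structure that admits a strict monotone (e.g.\ take $X$ uncountable with $\preceq$ equality, which trivially has the constant-free injective map into $\mathbb R$ as a strict monotone since $x \prec y$ never holds, yet is not Debreu separable because... wait, equality on an uncountable set \emph{is} vacuously Debreu separable). So instead I would use a known example such as the one in \cite{hack2022representing}: a "long line''-type or lexicographic-type construction, or simpler, the example showing a strict monotone can exist while the space fails Debreu separability because of a large antichain that nonetheless is "measured'' by a single function — I would cite the explicit counterexample from \cite{hack2022representing} rather than reconstruct it.

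For item (ii), the forward direction is the main content. Given a countable Debreu-dense $D$ and a countable Debreu-upper-dense $D'$, set $D'' \coloneqq D \cup D'$, still countable. The idea is to build a countable multi-utility by taking, for each $d \in D''$, a small finite family of monotones that "detect'' $d$: for instance functions related to the principal up-set and down-set of $d$. Concretely, for each $d$ one wants monotones $g_d$ with the property that if $\neg(x \preceq y)$ then some $g_d$ separates them, i.e.\ $g_d(x) > g_d(y)$. The failure $\neg(x\preceq y)$ splits into two cases: $y \prec x$ (handled using the strict monotone from part (i), which already separates in this direction) and $x \bowtie y$ (handled by Debreu upper denseness: there is $d \in D'$ with $x \bowtie d \preceq y$, and then a monotone built from the down-set of $d$ — roughly $\mathbf{1}[d \preceq \cdot]$ suitably smoothed, or a strict-monotone-weighted version — will have $g(x) = $ something not below while $g(y)$ picks up $d$; one must be careful that $x \bowtie d$ gives the right inequality). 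Assembling $\{$strict monotone$\} \cup \{g_d : d \in D''\}$ and checking it is a multi-utility is then bookkeeping. For the "converse is false'' and the two "insufficiency'' claims, I would again lean on the counterexamples from \cite{hack2022representing}: a space with a countable multi-utility that is not Debreu upper separable; a Debreu-separable space with no countable multi-utility (separating the two halves of the definition); and a space with a countable Debreu-upper-dense subset but no countable multi-utility.

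For item (iii), the hypothesis is that $X$ has a countable Debreu upper dense subset $D'$. The "if'' direction: if $X$ has a countable multi-utility $(u_i)_{i\in I}$ with $I$ countable, then $\sum_i 2^{-i} \arctan(u_i)$ (or any convergent weighted sum after bounding) is a monotone, and it is strict because if $x \prec y$ then $u_i(x) \le u_i(y)$ for all $i$ with strict inequality for at least one $i$ (since a multi-utility detects $\neg(y \preceq x)$), so the weighted sum is strictly larger; this direction needs no hypothesis at all. The "only if'' direction is where $D'$ is used: given a strict monotone $f$, combine it with the family $\{g_d : d \in D'\}$ constructed as in part (ii) — the strict monotone handles comparabilities (both $y \prec x$ via $f(x) > f(y)$) and the $g_d$'s handle incomparabilities via Debreu upper denseness — to get a countable multi-utility. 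Finally, "the hypothesis is mandatory'' means exhibiting a preorder with a strict monotone but no countable multi-utility (so the equivalence genuinely fails without the Debreu-upper-dense hypothesis), again citing or adapting the relevant example from \cite{hack2022representing}.

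The main obstacle I anticipate is the careful construction in item (ii) of the separating monotones $g_d$ that correctly handle the incomparability case: one needs $x \bowtie d \preceq y$ to translate into a genuine inequality $g_d(x) > g_d(y)$ while simultaneously keeping every $g_d$ globally monotone, and the naive indicator of the up-set or down-set of $d$ is not monotone in general, so the $g_d$ must themselves be built as strict-monotone-weighted sums over $D''$ restricted appropriately — making the argument somewhat recursive and requiring attention to well-definedness and convergence. Verifying that the resulting countable family is genuinely a multi-utility (i.e.\ that \emph{every} failure of $x \preceq y$ is caught) is the crux, and it is where the precise difference between Debreu upper denseness and the weaker/stronger variants in the table becomes essential. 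I would therefore present this construction in full detail and treat the counterexamples by citation to \cite{hack2022representing}, where they are worked out.
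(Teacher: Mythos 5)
Your constructive halves are essentially the paper's route: part (i)'s forward direction is the classical Debreu-type construction (the paper just cites the literature for it), your family \(\{\text{strict monotone}\}\cup\{g_d\}\) for (ii) and (iii) is in substance the standard argument behind the cited propositions, and your weighted-sum observation that ``countable multi-utility \(\Rightarrow\) strict monotone'' needs no hypothesis is correct. One technical remark there: your stated obstacle is not real. The indicator function of the up-set \(\{z\in X \mid d\preceq z\}\) \emph{is} a monotone (up-sets are increasing sets, exactly as in the increasing-set characterization recalled in Proposition \ref{prop: sets charact}), so no ``recursive, strict-monotone-weighted'' construction is needed: for \(x\bowtie y\), Debreu upper denseness applied to the pair \((y,x)\) gives \(d\) with \(y\bowtie d\preceq x\), and the up-set indicator of \(d\) takes value \(1\) at \(x\) and \(0\) at \(y\); comparabilities are handled by the strict monotone. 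That is the whole bookkeeping.

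The genuine gap is that every negative claim in the statement — ``the converse is false'' in (i) and (ii), the two insufficiency claims in (ii), and ``the hypothesis is mandatory'' in (iii) — is left to an unspecified citation, and these counterexamples are precisely the content the paper's own proof supplies. Your one attempted example (an uncountable set with equality) fails, as you noticed, and you did not replace it. Concretely, the paper uses: majorization with \(|\Omega|\ge 3\) (Shannon entropy is a strict monotone and \(\mathcal U\) is a countable multi-utility, yet the preorder is not Debreu separable) for the converses in (i) and (ii); for ``Debreu separability is insufficient'', the trivial order \((\mathbb P(\mathbb R),=)\), which is vacuously Debreu separable but has no countable multi-utility because a countable multi-utility bounds the cardinality of the quotient by \(\mathfrak c\) (plus a second, more interesting example \(X=[0,1]\cup[2,3]\) whose impossibility proof goes through the increasing-set characterization); and for ``a countable Debreu upper dense subset is insufficient'', any total preorder that is not Debreu separable, since totality makes Debreu upper denseness vacuous while Debreu's theorem together with the equivalence of countable multi-utilities and utilities for total preorders rules out a countable multi-utility. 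Without producing examples of this kind (or verifying that the cited source actually contains them — several of these are constructed in this paper, not in \cite{hack2022representing}), the proposal proves only the positive implications and so does not establish the proposition as stated.
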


\begin{proof}

$(i)$ The first statement is well-known, see for example \cite{herden2012utility}. For the second statement, we can consider majorization with $|\Omega| \geq 3$ and note that Shannon entropy is a strict monotone although majorization is not Debreu separable, as we showed in \cite[Lemma 5 (ii)]{hack2022representing}.

$(ii)$ We showed the first statement in \cite[Proposition 9]{hack2022representing}. For the second statement, we can again consider majorization with $|\Omega| \geq 3$ and note that \eqref{uncert rela} is a countable multi-utility although majorization is not Debreu separable, as we showed in \cite[Lemma 5 (ii)]{hack2022representing}.

For the last statement, we can take any total preorder (which has a trivial countable Debreu upper dense subset) that is not Debreu separable (hence neither utility functions by Theorem \ref{debreu thm} nor countable multi-utilities by \cite[Section 4]{alcantud2016richter} exist) to obtain that we cannot weaken the hypothesis to the existence of a countable Debreu upper dense subset.
Moreover, to show that we cannot weaken the hypothesis to \emph{Debreu separability}, we
can take the trivial ordering on the power set of the real numbers $(\mathbb P (\mathbb R ), =)$, which is trivially Debreu separable and has no countable multi-utility since $|\mathbb P (\mathbb R )| > | \mathbb R|$ (see \cite{hack2022classification} for a proof of the bound on the set imposed by a countable multi-utility).
As a perhaps more interesting example of a Debreu separable preorder without a countable multi-utility, consider the set $X \coloneqq [0,1] \cup [2,3]$ equipped with the preorder $\preceq$, where 
\begin{equation*}
    x \preceq y \iff 
\begin{cases}
    x,y \in [0,1] \text{ and } x \leq y,\\
x  \in [2,3],\text{ } y \in [0,1] \text{ and } x-2 <y
\end{cases}
\end{equation*}
for all $x,y \in X$. (One can find a representation of $(X,\preceq)$ in Figure \ref{fig 4}.) Note that $D \coloneqq \mathbb Q \cap [0,1]$ is a countable Debreu dense subset. To conclude, we show that countable multi-utilities do not exist. In order to do so, we consider the characterization of countable multi-utilities by a countable family of increasing sets $(A_i)_{i \in I}$ that separate the elements according to  \cite[Proposition 7]{hack2022representing}. We will show any $(A_i)_{i \in I}$ with those separation properties is uncountable. Since $x \bowtie x+2$ for all $x \in [0,1]$, there exists some $A_x \in (A_i)_{i \in I}$ such that $x+2 \in A_x$ and $x \not \in A_x$. To conclude, it is sufficient to notice that, for all $x,y \in [0,1]$, $A_x = A_y$ implies that $x = y$. Hence, $(A_i)_{i \in I}$ is uncountable. Assume w.l.o.g. that $x < y$. Then, since $A_x$ is increasing, $y \in A_x=A_y$, which contradicts the definition of $A_y$.

$(iii)$ We showed the first statement in \cite[Proposition 10]{hack2022representing}. The second statement is implied by \cite[Proposition 8]{hack2022representing}. 
\end{proof}

As a last remark, note that we include the most common notions of \emph{separability} and \emph{order denseness} in Table \ref{dense table}.

In the following section, now that we have clarified the different notion of order density that we will consider throughout this work, we take our study of complexity on order structures beyond multi-utilities.

\begin{figure}[!tb]
\centering
\begin{tikzpicture}
    \node[main node] (1) {$x+2$};
    \node[main node] (2) [right = 2cm  of 1]  {$y+2$};
    \node[main node] (3) [right = 2cm  of 2]  {$z+2$};
    \node[main node] (4) [below = 2cm  of 1] {$x$};
    \node[main node] (5) [right = 2cm  of 4] {$y$};
    \node[main node] (6) [right = 2cm  of 5] {$z$};

    \path[draw,thick,->]
    (1) edge node {} (5)
    (2) edge node {} (6)
    (1) edge [dashed] node {} (6)
    (4) edge node {} (5)
    (5) edge node {} (6)
    (4) edge [bend right,dashed] node {} (6)
    ;
\end{tikzpicture}
\caption{Graphical representation of a preordered space, defined in Proposition \ref{deb no mu}, which has no countable multi-utility despite being Debreu separable. In particular,  we show $A \coloneqq [0,1]$ in the bottom line, $B \coloneqq [2,3]$ in the top line and how $x,y,z \in A$, $x<y<z$, are related to $x+2,y+2,z+2 \in B$. Note that an arrow from an element $w$ to an element $t$ represents $w \prec t$, where we indicate the relations between nearest neighbours by solid lines and the rest by dashed lines.}
\label{fig 4}
\end{figure}
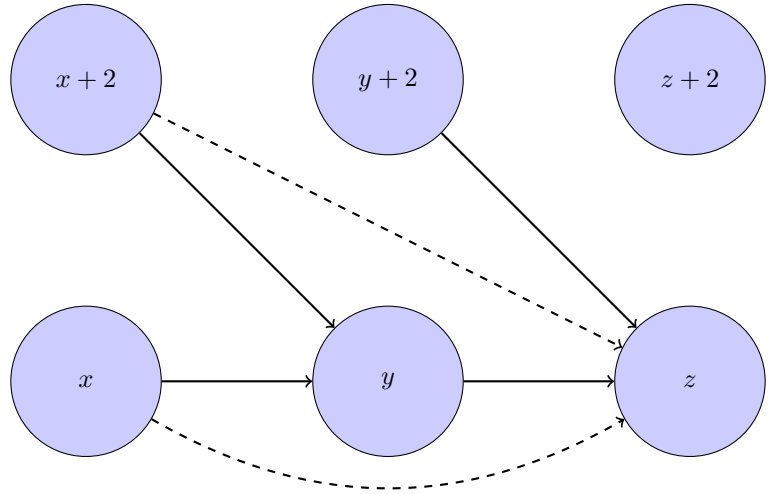

\section{The geometrical notion of dimension for partial orders}
\label{sec: dimension}

The notion of \emph{dimension} is usually attached to geometry.  Roughly, it is the number of copies of the real line required to represent certain mathematical object. As such, the dimension is a minimal translation of the object's properties into a familiar space like the Cartesian product of a set of copies of the real line. A well-known example of this are manifolds, whose dimension $n$ corresponds to the number of copies of the real line such that every point on has a neighborhood equivalent to a subset of $\mathbb R^n$. 

In the case of manifolds, the equivalence is topological. Similarly, in case the structure we are interested in is a partial order $(X,\preceq)$, we could consider $\times_{i \in I} \mathbb R$ with its standard ordering as our \emph{familiar} space.\footnote{In this section, we consider the more specific case of partial orders instead of preorders given that the literature on order dimension usually deals with partial orders \cite{dushnik1941partially,harzheim2006ordered,rival2012ordered}. We comment on the case of preorders later on.} Hence, as we did in the spirit of mathematical economics in \cite{hack2022geometrical} (see, for example, \cite{evren2011multi}), we can define the \emph{geometrical dimension} of $(X,\preceq)$ as follows:

\begin{defi}[Geometrical dimension of a partial order]
\label{geo dim}
If $(X,\preceq)$ is a partial order, then its geometrical dimension is the cardinality of the smallest set $I$ for which
$(X,\preceq)$ has a multi-utility $(u_i)_{i \in I}$.
\end{defi}

The geometrical dimension is not the usual notion of dimension for partial orders \cite{dushnik1941partially,ore1987theory,harzheim2006ordered}. In fact, two equivalent definitions were coined by Dushnik and Miller \cite{dushnik1941partially} and Ore \cite{ore1987theory}, respectively. We introduce these notions in the following section.

\subsection{The Dushnik-Miller dimension and its relation to the geometrical dimension}

  In order to introduce the Dushnik-Miller and Ore dimensions, we need some preliminary concepts. Two partial orders $(X,\preceq)$ and $(Y,\preceq')$ are \emph{order isomorphic} if there exists a bijection $f:X \to Y$ such that $x \preceq y \iff f(x) \preceq' f(y)$
 \cite{ore1987theory}. If $(X,\preceq)$ is a partial order, we say a partial order $(X,\preceq')$ is an \emph{extension} of $(X,\preceq)$ if $x \preceq y$ implies $x \preceq' y$ for all $ x,y \in X$ \cite{harzheim2006ordered}.
Furthermore, if the extension is total, we call it  \emph{linear}. Moreover, a family of linear extensions $(\preceq_i)_{i \in I}$ is called a \emph{realizer} of $\preceq$ provided
\begin{equation*}
    x \preceq y \iff x \preceq_i y \text{ for all } i \in I.
\end{equation*}
In addition, the \emph{Dushnik-Miller dimension} of $(X,\preceq)$ is the cardinality of the smallest $I$ for which a realizer $(\preceq_i)_{i \in I}$ exists \cite{dushnik1941partially}.
A closely related notion is
the \emph{Ore dimension} of $(X,\preceq)$, that is, the cardinality of the smallest $I$
 such that $(X,\preceq)$
 is order-isomorphic to
 $(S,\preceq')$, where $S \subseteq \times_{i \in I} C_i$, $(C_i)_{i \in I}$ is a set of chains and $\preceq'$ is the product-induced order for $(C_i)_{i \in I}$ \cite{ore1987theory,rival2012ordered}.\footnote{The \textit{product-induced} order $\preceq'$ for a family of preordered spaces $((X_i,\preceq_i))_{i\in I}$ is the preorder defined on $\times_{i\in I} X_i$, where $x\preceq' y$ if and only if $x_i \preceq_i y_i$ for all $i \in I$. That is, the product-induced order generalizes the transition from the usual order on $\mathbb R$, $\leq$, to the standard ordering on $\times_{i\in I} \mathbb R$.}
(The Ore and Dushnik-Miller dimensions are known to be equivalent, see \cite[Theorem 10.4.2]{ore1987theory}.)
 
We can think of the geometrical dimension as a special case of the Ore (or, equivalently, Dushnik-Miller) dimension where we impose $C_i = \mathbb R$ and $\preceq_i= \leq$ for all $i \in I$. In fact, if $X$ is countable, it is easy to see that these notions are equivalent. Nonetheless, as we argued in \cite{hack2022geometrical} via \cite[Section 1.4]{bridges2013representations}, they may be significantly different provided $X$ is uncountable. Take, for example, the \emph{lexicographic plane} $(\mathbb{R}^2,\preceq_L)$ \cite{debreu1954representation,bridges2013representations}, where
 \begin{equation}
 \label{def:lexico}
(x,y) \preceq_L (z,t)  \iff
\begin{cases}
    x < z, \text{ or }\\
    x=z  \text{ and } y \leq t
    \end{cases} 
\end{equation}
for all $ x,y,z,t \in \mathbb{R}$. Since $(\mathbb{R}^2,\preceq_L)$ is total, its Dushnik-Miller dimension equals one. Despite this, its geometrical dimension is actually uncountable \cite{hack2022geometrical}. Hence, there is a considerable gap between the geometrical and the Dushnik-Miller dimension. To overcome this difference, we introduced the Debreu dimension in \cite{hack2022geometrical}. In fact, the following section is devoted to the Debreu dimension.

\begin{rem}[Order dimension literature]
Regarding the literature related to the study of order dimension, it should be noted that it is usually interested in the finite dimensional case \cite{trotter1975inequalities,trotter1974dimension,hiraguchi1955dimension,kelly1982dimension}. In fact, the topic is studied in relation to other areas of interest, like graph theory \cite{schnyder1989planar,hocsten1999order} or computer science \cite{felsner2017complexity}, and possesses several interesting open questions, like the \emph{removable pair conjecture} \cite{trotter1992combinatorics}.
\end{rem}

\subsection{The Debreu dimension}

 The disagreement between the geometrical and the Dushnik-Miller dimension originates from the relation between the first one and the real numbers, which is missing for the second one. In order to overcome this issue, we can consider the classical result by Debreu \cite{debreu1954representation} that we stated in Theorem \ref{debreu thm}.
 This theorem equates Debreu separable total preorders with those having a utility function.
 Debreu separability, thus, provides the connection to the real numbers that we were looking for. With this in mind, we introduced in \cite{hack2022geometrical} the \emph{Debreu dimension}, a modification of the Dushnik-Miller dimension.
 \begin{defi}[Debreu dimension]
 \label{def: deb dim}
 If $(X,\preceq)$ is a partial order, then its Debreu dimension is the cardinality of the smallest set $I$ for which
$(X,\preceq)$ has a Debreu separable realizer $(\preceq_i)_{i \in I}$, where we say a realizer is \emph{Debreu separable} if all the linear extensions that conform it are.
\end{defi}

As an important constraint originated by the relation with the real line, the Debreu dimension is only defined for partial orders $(X,\preceq)$ where $X$ has, at most, the cardinality of the continuum.

\begin{rem}[The Debreu dimension for preorders]
In the more general case where $(X,\preceq)$ is a preorder instead of a partial order, it would make sense to generalize Definition \ref{def: deb dim} by asking for the components of the realizer to be \emph{preorders} instead of partial orders. In this case, it is not hard to see (using Theorem \ref{debreu thm}) that the Debreu dimension would be equivalent to the geometrical dimension. We are, however, concerned with partial orders, since that is the case usually considered in the literature on order dimension and, moreover, it is there where interesting differences arise, as we will see in the following section.
\end{rem}

\subsection{Relating the different notions of dimension}

In the remainder of this section, we state our main results regarding the Debreu dimension and, more specifically, its relation to both the geometrical and Dushnik-Miller dimensions. (Notice that we included these results, and others, in \cite{hack2022geometrical}.) In order to state them, however, we need first to formalize the notion of \emph{limit} for a sequence of partial orders.


\begin{defi}[Limit of a sequence of binary relations]
If $(\preceq_n)_{n \geq 0}$ is a sequence of binary relations on a set $X$, the limit of $(\preceq_n)_{n \geq 0}$ is the binary relation $\preceq$ on $X$ that fulfills
\begin{equation*}
\label{def limit}
    x \preceq y \iff \exists \ n_0 \geq 0 \text{ such that } x \preceq_n y \text{ } \forall n \geq n_0
\end{equation*}
for all $ x,y \in X$.
We also denote $\preceq$ by 
$\lim_{n \to \infty} \preceq_n$.
\end{defi}

In case $\preceq_n$ is a partial order for all $n \geq 0$, it is easy to see that $\lim_{n \to \infty} \preceq_n$ is also a partial order. It should be noted that, although we have not encountered this notion previously in the order-theoretic literature, $\lim_{n \to \infty} \preceq_n$ coincides \cite{hack2022geometrical} with the set-theoretic \emph{limit infimum} of the sequence $(A_n)_{n \geq 0}$ \cite{resnick2019probability}, where $A_n \coloneqq \{(x,y) \in X \times X| x \preceq_n y\}$ for all $ n \geq 0$.

Now that we have gathered all the relevant definitions, we can state the main results in \cite{hack2022geometrical}. The first one is concerned with constructing the building blocks of the Debreu dimension, namely Debreu separable linear extensions, when countability constraints are imposed. In particular, we can assume either Debreu upper separability or the weaker assumption that countable multi-utilities exist, as the following theorem states. 

\begin{theo}
\label{count mu has D-ext}
If $(X,\preceq)$ is a partial order, then the following statements hold:
\begin{enumerate}[label=(\roman*)]
\item If $(X,\preceq)$ is Debreu upper separable and $x,y \in X$ are two incomparable elements $x \bowtie y$, then there exists a sequence $(\preceq_n)_{n \geq 0}$ of extensions of $\preceq$,
such that
$\preceq' \coloneqq \lim_{n \to \infty} \preceq_n$ is a Debreu separable extension of $\preceq$ where $x\preceq' y$.
\item If $(X,\preceq)$ has a countable multi-utility and $x,y \in X$ are two incomparable elements, $x \bowtie y$,
then there exists a sequence $(\preceq_n)_{n \geq 0}$ of extensions of $\preceq$,
such that
$\preceq' \coloneqq \lim_{n \to \infty} \preceq_n$ is a Debreu separable extension of $\preceq$ where $x \preceq' y$.
\end{enumerate}
\end{theo}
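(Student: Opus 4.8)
The plan is to prove both parts at once by observing that part (ii) is the genuinely general statement: by Proposition \ref{deb no mu}(ii), Debreu upper separability implies the existence of a countable multi-utility, so (i) follows from (ii). Thus I would focus on (ii). Let $(u_i)_{i \in \mathbb{N}}$ be a countable multi-utility for $(X,\preceq)$, which we may assume (after composing each $u_i$ with an increasing bounded homeomorphism such as $\arctan$, and rescaling by $2^{-i}$) to take values in $[-2^{-i},2^{-i}]$, so that $u \coloneqq \sum_{i} u_i$ converges and, more importantly, so that any countable combination we form below is well-defined. Fix the incomparable pair $x \bowtie y$. The target is a Debreu separable extension $\preceq'$ of $\preceq$ with $x \preceq' y$; the strategy is to build it as a limit $\lim_{n\to\infty}\preceq_n$ of extensions, where at stage $n$ we commit to resolving finitely much incomparability in a way consistent with $x \preceq y$.

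The key construction is a single real-valued function. Since $x \bowtie y$, the set $S \coloneqq \{ i \in \mathbb{N} \mid u_i(x) > u_i(y)\}$ is nonempty (some coordinate must witness $\neg(x \preceq y)$), and likewise $\{i \mid u_i(y) > u_i(x)\}$ is nonempty. I would define a new family of monotones $v_i$ obtained from the $u_i$ by "tilting" so that, on the pair $\{x,y\}$, every $v_i$ agrees that $x$ should go below $y$; concretely, pick a strictly positive summable sequence of weights and form the single function $w \coloneqq \sum_i \lambda_i u_i$ with the $\lambda_i$ chosen so that $w(x) < w(y)$ — this is possible precisely because we are free to up-weight the coordinates in the complement of $S$. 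Now $w$ is a monotone for $\preceq$ (a positive combination of monotones), it is a strict monotone on the quotient only if we are lucky, but crucially $w(x) < w(y)$. The extension $\preceq'$ is then obtained by breaking ties: declare $a \preceq' b$ iff $a \preceq b$, or $w(a) < w(b)$, or ($w(a) = w(b)$ and $a \preceq b$ already) — more carefully, one wants the preorder generated by $\preceq$ together with $w$-order on each level set, and one checks this is a well-defined extension (transitivity uses that $w$ is a monotone, so $\preceq$ never contradicts $w$-order). To exhibit it as $\lim_n \preceq_n$, let $\preceq_n$ be the extension where one only adds the $w$-comparisons between elements whose $w$-values differ by at least $1/n$ (together with the needed transitive closure), so that $\preceq_n$ is an honest extension of $\preceq$ and $\bigcup_n$ of the "eventually true" relations recovers $\preceq'$; since $w(x) < w(y)$ strictly, $x \preceq_n y$ for all large $n$, hence $x \preceq' y$.

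The remaining content is Debreu separability of $\preceq'$. Here I would use that $w$ induces, on the set of $\preceq$-equivalence classes, a countable partition of $\mathbb{R}$ into level sets, and that $\preceq'$ restricted to a single level set is just $\preceq$ restricted there, which is itself Debreu separable (every preorder on the relevant piece inherits Debreu separability because $\preceq$ had a countable multi-utility, hence — via Proposition \ref{deb no mu} and the total-preorder theory — the pieces behave well); across level sets, $\preceq'$ is governed by the real number $w$, which is Debreu separable in the classical sense since $\mathbb{Q}$ is dense in $\mathbb{R}$. Assembling a countable Debreu-dense set for $\preceq'$ from the countably many per-level dense sets plus a countable $w$-dense set is then routine. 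I expect the main obstacle to be exactly this last assembly: showing that "Debreu separable on each level set, plus Debreu separable transversally via $w$" glues to a single countable Debreu-dense set for $\preceq'$ requires care, because a witness $d$ with $a \preceq' d \preceq' b$ may need to lie strictly between $a$ and $b$ in $w$-value while simultaneously respecting $\preceq$ inside a level set — one has to split into the cases $w(a) < w(b)$ (use the transversal dense set) and $w(a) = w(b)$ (use the per-level dense set), and check the definition of Debreu density goes through in each. If the gluing is awkward with a bare multi-utility, the cleaner route is to prove (i) directly from Debreu upper separability — where the countable Debreu-upper-dense set $D$ hands us the witnesses almost for free — and then deduce (ii) by the implication "countable multi-utility $\Rightarrow$ one can refine to a Debreu upper separable extension", though that implication itself would need justification.
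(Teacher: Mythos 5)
Your reduction of (i) to (ii) via Proposition \ref{deb no mu}(ii) is legitimate, and the bookkeeping that exhibits your relation $\preceq'$ (order by the aggregated monotone $w$, ties broken by $\preceq$) as a limit $\lim_{n\to\infty}\preceq_n$ of extensions with $x \preceq' y$ is workable. The gap is in the only step that carries the real content of the theorem: Debreu separability of $\preceq'$. Your argument for it rests on two false claims. First, the level sets of $w$ partition $X$ into as many pieces as there are values in the range of $w$, which is in general uncountable, so there is no ``countable partition'' and no countable union of per-level dense sets to assemble. Second, possessing a countable multi-utility does \emph{not} imply Debreu separability (of $\preceq$ or of its restriction to a level set); the paper's own running example, majorization with $|\Omega|\geq 3$, has the countable multi-utility $\mathcal U$ yet is not Debreu separable, and this non-implication is exactly why the theorem is nontrivial. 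Your fallback route is also unavailable: ``countable multi-utility $\Rightarrow$ refine to a Debreu upper separable extension'' is essentially the statement being proved, and the unrefined implication to Debreu (upper) separability is false.

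Moreover, the construction itself, as you state it (any strictly positive summable weights subject only to $w(x)<w(y)$), can produce an extension that is not Debreu separable, so no correct argument can finish from there. Take $X=\mathbb{R}\times\{0,1\}$ with $(s,i)\preceq(t,j)$ iff $s=t$ and $i\leq j$; the countable family $\{(s,i)\mapsto \chi_{(q,\infty)}(s)\}_{q\in\mathbb{Q}}\cup\{(s,i)\mapsto -\chi_{(q,\infty)}(s)\}_{q\in\mathbb{Q}}\cup\{(s,i)\mapsto i\}$ is a multi-utility. Choosing the weights so that the two indicator groups cancel except at $q=1/2$, one gets (up to an additive constant) $w(s,i)=c\,\chi_{(1/2,\infty)}(s)+\nu i$ with $0<\nu<c$, and $w(x)<w(y)$ for suitable incomparable $x,y$. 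In the resulting $\preceq'$ the uncountably many pairs $(s,0)\prec'(s,1)$ with $s\leq 1/2$ admit no Debreu witness other than $(s,0)$ or $(s,1)$ themselves, since no element has $w$-value strictly between $0$ and $\nu$; hence $\preceq'$ is not Debreu separable. The point the theorem turns on is precisely how to resolve incomparabilities so that a single countable set becomes Debreu dense in the limit; the paper does this not by a one-shot aggregation into a single monotone but by a sequential elimination of incomparable pairs guided by the countable multi-utility (respectively the Debreu upper dense set), building the countable Debreu dense set alongside the sequence of extensions. A single monotone used to decide incomparable pairs yields only Proposition \ref{before teo II}-type linear extensions, for which Debreu separability genuinely fails in general (cf.\ Theorem \ref{finite geo infinite deb}).
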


As a consequence of the last result, we can show that the Debreu dimension actually achieves our original aim: it reduces the gap between the geometrical and Dushnik-Miller dimensions. In particular, the following statement holds: 

\begin{coro}
\label{why deb dim}
If $(X,\preceq)$ is a partial order, then its geometrical dimension is countable if and only if its Debreu dimension is countable.
\end{coro}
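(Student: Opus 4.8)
The plan is to establish the two implications of the equivalence separately; $(\Leftarrow)$ is immediate from Debreu's theorem, while $(\Rightarrow)$ is where all the work lies and where Theorem \ref{count mu has D-ext} enters.

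For $(\Leftarrow)$, suppose the Debreu dimension is countable and fix a Debreu separable realizer $(\preceq_i)_{i\in I}$ with $I$ countable. Each $\preceq_i$ is a linear extension of $\preceq$ which is Debreu separable, so Theorem \ref{debreu thm} supplies a utility $u_i:X\to\mathbb R$ with $x\preceq_i y\iff u_i(x)\le u_i(y)$. Substituting this into the defining identity of a realizer, $x\preceq y\iff(x\preceq_i y$ for all $i\in I)$, gives $x\preceq y\iff(u_i(x)\le u_i(y)$ for all $i\in I)$, so $(u_i)_{i\in I}$ is a multi-utility and the geometrical dimension is at most $|I|$, hence countable.

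For $(\Rightarrow)$, assume $(X,\preceq)$ has a countable multi-utility $(u_i)_{i\in I}$; the goal is to produce a countable Debreu separable realizer. I would first reduce the problem: it suffices to build a countable family $\mathcal R$ of Debreu separable linear extensions of $\preceq$ such that every pair with $x\not\preceq y$ is separated by some member of $\mathcal R$ (i.e.\ $x\not\preceq' y$ for some $\preceq'\in\mathcal R$), because then $\bigcap\mathcal R=\preceq$ and $\mathcal R$ is the desired realizer. Since $(u_i)$ is a multi-utility, $x\not\preceq y$ forces $u_i(y)<u_i(x)$ for some $i\in I$, so the task splits over coordinates: for each $i\in I$ I need a countable family $\mathcal R_i$ of Debreu separable linear extensions of $\preceq$ that separates every pair with $u_i(y)<u_i(x)$, and then $\mathcal R=\bigcup_{i\in I}\mathcal R_i$ works and is countable. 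To build $\mathcal R_i$, fix $i$: the function $u_i$ induces a Debreu separable total preorder $\preceq_{u_i}\supseteq\preceq$ (it has utility $u_i$, so Theorem \ref{debreu thm} applies), and on each level set $u_i^{-1}(r)$ the partial order inherited from $\preceq$ still carries the countable multi-utility $(u_j|_{u_i^{-1}(r)})_{j\in I}$. Theorem \ref{count mu has D-ext}(ii) is precisely the device that converts such a countable multi-utility into Debreu separable extensions resolving prescribed incomparabilities (as limits of sequences of extensions); applying it on the level sets and gluing the pieces together while prioritising $u_i$ ever more strongly, one obtains extensions $\preceq_{i,k}$ ($k\in\mathbb N$) of $\preceq$ such that any pair with $u_i(y)<u_i(x)$ satisfies $x\not\preceq_{i,k}y$ once $k$ is large enough; set $\mathcal R_i=\{\preceq_{i,k}:k\in\mathbb N\}$. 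This yields a countable Debreu separable realizer $\mathcal R$, so the Debreu dimension is at most $|I\times\mathbb N|$, hence countable.

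I expect the main obstacle to be the construction of the extensions $\preceq_{i,k}$. One cannot use a single Debreu separable linear extension that refines the whole strict order of $u_i$: on the lexicographic plane such an extension would be the lexicographic order itself, which has no utility function and is therefore not Debreu separable. Hence the strict preferences encoded by $u_i$ must be distributed among countably many extensions that interleave $u_i$ with the other coordinates at increasing rates, and the delicate step is to check that Debreu separability of the total order produced at each stage survives in the limit — which is exactly what the limit-of-extensions mechanism in Theorem \ref{count mu has D-ext}, together with the characterisation of countable multi-utilities by countable separating families of increasing sets used in its proof, is designed to guarantee.
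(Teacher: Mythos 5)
Your $(\Leftarrow)$ direction is correct and standard: each member of a countable Debreu separable realizer is a Debreu separable total order, Theorem \ref{debreu thm} gives a utility for it, and these utilities form a countable multi-utility. Your reduction of $(\Rightarrow)$ is also sound: it suffices to produce, for each $i$, a countable family $\mathcal R_i$ of Debreu separable linear extensions of $\preceq$ that jointly separate every pair with $u_i(y)<u_i(x)$. The gap is that this family is never actually constructed --- and it is the entire content of the corollary. Theorem \ref{count mu has D-ext}$(ii)$ orients a \emph{single} prescribed incomparable pair per application; the pairs you must handle in $\mathcal R_i$ lie across distinct level sets of $u_i$ and are in general uncountable, so applying the theorem inside level sets says nothing about them, while applying it pairwise produces uncountably many extensions. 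Gluing all level sets in the $u_i$-order is, as you yourself note, impossible in general: one extension per coordinate cannot exist by Theorem \ref{finite geo infinite deb} (which is the relevant obstruction here; the lexicographic plane has no countable multi-utility and so lies outside the hypothesis). Consequently the objects $\preceq_{i,k}$, the rule by which the strict preferences of $u_i$ are ``distributed'' among them, and the verification that Debreu separability survives each stage and the limit are all asserted rather than proved; you flag exactly this as the delicate step and defer it to the ``mechanism'' of Theorem \ref{count mu has D-ext}, which, as stated, does not supply it.

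One concrete way to close the gap --- and, in essence, what deriving the corollary from Theorem \ref{count mu has D-ext} requires --- is to work not with the functions $u_i$ but with a countable separating family of increasing sets $(A_n)_{n\geq 0}$ as in Proposition \ref{prop: sets charact}$(i)$, and to prove a \emph{uniform} statement: for each $n$ there is one Debreu separable linear extension $\preceq_n$ of $\preceq$ with $y\prec_n x$ for every incomparable pair such that $y\notin A_n$ and $x\in A_n$. For instance, first extend $\preceq$ by declaring $y\preceq' x$ whenever $y\notin A_n$ and $x\in A_n$; one checks this is again a partial order admitting a countable multi-utility (the sets $A_j\cap A_n$, $A_j\cup A_n$ and $A_n$ remain increasing and separate as required), and then the limit machinery behind Theorem \ref{count mu has D-ext}$(ii)$ yields a Debreu separable linear extension of $\preceq'$, hence of $\preceq$, with the desired uniform property. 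Since every incomparable pair is separated in both directions by some $A_n$, the family $(\preceq_n)_{n\geq 0}$ is a countable Debreu separable realizer. Your coordinate-wise plan would become a proof if you established an analogous uniform lemma for each $u_i$; as written, the proposal stops precisely where the real argument begins.
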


In the more specific case where the dimensions are finite, the relation between them is more involved. In this scenario, let us first consider the relation between the geometrical and the Dushnik-Miller dimension. As we discussed for the lexicographic plane \eqref{def:lexico}, there exist instances where the Dushnik-Miller dimension is finite and the geometrical dimension is not. This contrasts with the converse, which actually holds: Provided $(X,\preceq)$ has a countable multi-utility, and given a monotone $u: X \to \mathbb R$, it is possible to use $u$ to build a linear extension of $(X,\preceq)$ that establishes the same preference among incomparable elements than $u$. The statement of the following proposition clarifies what we mean by this.   

\begin{prop}
\label{before teo II}
If $(X,\preceq)$ is a partial order with a countable multi-utility and $u:X \to \mathbb{R}$ is a monotone, then there exists a sequence of extensions of $\preceq$, $(\preceq_n)_{n\geq0}$, such that $\preceq_u \coloneqq \lim_{n\to\infty} \preceq_n$ is a linear extension of $\preceq$ where $x \prec_u y$ for all $ x,y \in X$ such that both $x \bowtie y$ and $u(x)<u(y)$ hold.
\end{prop}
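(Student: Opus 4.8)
The plan is to build the linear extension $\preceq_u$ as a limit of a carefully chosen sequence of extensions, mirroring the construction used in Theorem \ref{count mu has D-ext} but driven by the single monotone $u$ rather than by a Debreu-dense set. First I would recall that, by Theorem \ref{count mu has D-ext}(ii), for each incomparable pair $x \bowtie y$ the order $\preceq$ admits a Debreu separable extension in which $x$ and $y$ become comparable; so the existence of extensions with good countability properties is already available. The new ingredient here is that we must make the \emph{direction} of every newly forced comparison agree with $u$: whenever $x \bowtie y$ and $u(x) < u(y)$, the limit extension must satisfy $x \prec_u y$. The natural way to encode this is to pass to the auxiliary preorder $\preceq^u$ defined by $x \preceq^u y \iff x \preceq y$ or ($u(x) < u(y)$), which one checks is reflexive and transitive because $u$ is a monotone (so $x \preceq y$ implies $u(x) \le u(y)$, which prevents the cycles that would otherwise appear); its equivalence classes are exactly the $\preceq$-classes on which $u$ is constant, and on incomparable pairs with distinct $u$-values it already gives the required strict preference.

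The key steps, in order, are as follows. (1) Verify that $\preceq^u$ is a preorder refining $\preceq$, and that it still possesses a countable multi-utility: one can take the original countable multi-utility $(u_i)_{i}$ of $\preceq$ together with $u$ itself, since $x \preceq^u y$ holds iff $u_i(x) \le u_i(y)$ for all $i$ and $u(x) \le u(y)$, with the strict part handled by $u$. Here I would invoke Proposition \ref{deb no mu}(ii) in the direction that is actually needed, or more simply check the multi-utility condition by hand. (2) Apply Theorem \ref{count mu has D-ext}(ii) repeatedly, or rather a ``diagonal'' version of its proof, to the preorder $\preceq^u$: enumerate the (at most continuum-many, but the construction only needs countably many at a time) incomparable pairs of $\preceq^u$ and successively linearize them, obtaining a sequence $(\preceq_n)_{n \ge 0}$ of extensions of $\preceq^u$ — hence of $\preceq$ — whose limit $\preceq_u := \lim_{n\to\infty}\preceq_n$ is total and Debreu separable. (3) Check that $\preceq_u$ is a genuine partial order (antisymmetry is inherited because each $\preceq_n$ can be taken antisymmetric once we have passed to the quotient, or because the $u$-component separates the remaining ties) and that it extends $\preceq$. (4) Finally, confirm the target property: if $x \bowtie y$ (incomparable in $\preceq$) and $u(x) < u(y)$, then $x \prec^u y$ already, so $x \prec_n y$ for all large $n$, hence $x \prec_u y$.

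I would lean on the machinery of \cite{hack2022geometrical} as much as possible: the honest work is essentially the same as in Theorem \ref{count mu has D-ext}(ii), and the only genuinely new observation is that replacing $\preceq$ by $\preceq^u$ and then linearizing automatically produces an extension that is ``$u$-oriented'' on incomparabilities, because the orientation has been baked into $\preceq^u$ before any linearization happens. The main obstacle I anticipate is the interaction between totality, Debreu separability, and the preservation of $u$'s preferences when there are continuum-many incomparable pairs to resolve simultaneously: one must be sure that the sequential (limit-of-extensions) construction from Theorem \ref{count mu has D-ext} can be carried out so that \emph{all} the $u$-forced comparisons survive in the limit, not just countably many. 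This is handled by the fact that they are already present in $\preceq^u$, so they are respected by \emph{every} extension in the sequence and therefore trivially by the limit; the delicate part is purely that the limit is total and Debreu separable, which is exactly what Theorem \ref{count mu has D-ext}(ii) delivers once we feed it a preorder with a countable multi-utility — and $\preceq^u$ is such a preorder by step (1).
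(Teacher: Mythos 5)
Your auxiliary order $\preceq^u$ is a sound first move: by monotonicity of $u$ and antisymmetry of $\preceq$ it is indeed a partial order extending $\preceq$, and it already orients every $\preceq$-incomparable pair with distinct $u$-values. The gap is your step (1). The family $(u_i)_i\cup\{u\}$ is \emph{not} a multi-utility for $\preceq^u$: if $x\bowtie y$ and $u(x)<u(y)$, then $x\preceq^u y$, yet $\neg(x\preceq y)$ forces $u_{i}(x)>u_{i}(y)$ for some $i$ (that is exactly what being a multi-utility for $\preceq$ means), so the claimed equivalence fails in the forward direction. Worse, the conclusion itself is false in general: take $X=\mathbb{R}\times\{0,1\}$ with $\preceq$ the trivial order (which has the finite multi-utility $\{\phi,-\phi\}$ for any injection $\phi:X\to\mathbb{R}$) and the monotone $u(t,i)\coloneqq t$. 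Then $\preceq^u$ is an ordered sum over $t\in\mathbb{R}$ of two-element antichains, and it has no countable multi-utility: if $(v_k)_{k\in\mathbb{N}}$ were one, a pigeonhole argument gives an uncountable $T\subseteq\mathbb{R}$ and a fixed $k_0$ with $v_{k_0}((t,0))>v_{k_0}((t,1))$ for all $t\in T$, and since $v_{k_0}((s,0))\leq v_{k_0}((t,1))$ for $s<t$, the nonempty open intervals $\bigl(v_{k_0}((t,1)),v_{k_0}((t,0))\bigr)$, $t\in T$, would be pairwise disjoint, which is impossible. Hence the hypothesis of Theorem \ref{count mu has D-ext}(ii) is unavailable for $\preceq^u$, and the step you defer to it --- running a sequential limit construction that totalizes $\preceq^u$ despite possibly uncountably many incomparable pairs --- is precisely what remains unproven. (Two further overreaches: Theorem \ref{count mu has D-ext}(ii) only makes one prescribed pair comparable in a Debreu separable extension, it does not linearize; and the Debreu separability you promise for the limit is stronger than Proposition \ref{before teo II} and should not be expected in general, cf. the remark after Theorem \ref{equi finite geo and deb}.)

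The repair is to draw the countability from the multi-utility of the \emph{original} $\preceq$ rather than from $\preceq^u$. The incomparabilities left in $\preceq^u$ are exactly the pairs with $x\bowtie y$ and $u(x)=u(y)$, and for such a pair antisymmetry plus the multi-utility property of $(u_i)_{i\geq1}$ guarantee some $u_i$ with $u_i(x)\neq u_i(y)$. So set $x\preceq_n y$ iff $x\preceq y$ or $(u(x),u_1(x),\dots,u_n(x))$ is lexicographically strictly smaller than $(u(y),u_1(y),\dots,u_n(y))$: each $\preceq_n$ is a partial order extending $\preceq$ (monotonicity of all functions gives transitivity and antisymmetry), and $\lim_{n\to\infty}\preceq_n$ is the full lexicographic order, which is total, extends $\preceq$, and satisfies $x\prec_u y$ whenever $x\bowtie y$ and $u(x)<u(y)$, since the first coordinate already decides such pairs at every stage. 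This sequential, choice-free construction is the kind of argument the paper relies on in \cite{hack2022geometrical}; your idea of baking the $u$-orientation in first is compatible with it, but the tie-breaking must come from $(u_i)_i$, not from a countability property of $\preceq^u$ that it does not possess.
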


Given that the treatment of incomparable elements is what determines the dimension in any of the three definitions we have considered here, the previous proposition allows us to recover a result that was shown (following a different method) by Ok in \cite{ok2002utility}.

\begin{coro}[{\cite[Proposition 1]{ok2002utility}}]
\label{teo II}
If $(X,\preceq)$ is a partial order with a finite multi-utility $(u_i)_{i=1}^N$, then it has a finite realizer $(\preceq_i)_{i=1}^N$. In particular, any partial order whose geometrical dimension is finite has a finite Dushnik-Miller dimension.
\end{coro}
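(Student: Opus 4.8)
The plan is to derive Corollary~\ref{teo II} directly from Proposition~\ref{before teo II}. A finite multi-utility $(u_i)_{i=1}^N$ is in particular a countable multi-utility, so Proposition~\ref{before teo II} applies to each (monotone) coordinate function $u_i$. Thus, for every $i \in \{1,\dots,N\}$ I would fix one linear extension $\preceq_i \coloneqq \preceq_{u_i}$ of $\preceq$ enjoying the property that $x \prec_i y$ whenever $x \bowtie y$ and $u_i(x) < u_i(y)$. The claim is then that $(\preceq_i)_{i=1}^N$ is a realizer of $\preceq$; this immediately gives that the Dushnik-Miller dimension of $(X,\preceq)$ is at most $N$, hence finite.

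To verify the realizer property I would check both directions of $x \preceq y \iff x \preceq_i y \text{ for all } i$. The forward implication is immediate, since each $\preceq_i$ extends $\preceq$. For the converse, assume $x \preceq_i y$ for all $i$ but $\neg(x \preceq y)$. As $\preceq$ is a partial order, either $y \prec x$ or $x \bowtie y$. If $y \prec x$, then $y \preceq_i x$ for all $i$ by extension, and combining with $x \preceq_i y$ and the antisymmetry of the linear order $\preceq_i$ yields $x = y$, contradicting $y \prec x$. If $x \bowtie y$, then since $(u_i)_{i=1}^N$ is a multi-utility and $\neg(x \preceq y)$, there is an index $i_0$ with $u_{i_0}(x) > u_{i_0}(y)$; applying the defining property of $\preceq_{i_0}$ to the pair $(y,x)$ (note $y \bowtie x$ and $u_{i_0}(y) < u_{i_0}(x)$) gives $y \prec_{i_0} x$, which directly contradicts $x \preceq_{i_0} y$. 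Hence no such pair $x,y$ exists and $(\preceq_i)_{i=1}^N$ realizes $\preceq$.

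The ``in particular'' clause is then immediate from the definitions: if the geometrical dimension of $(X,\preceq)$ is finite, then $(X,\preceq)$ has a finite multi-utility, so by the argument above it has a finite realizer, i.e. a finite Dushnik-Miller dimension. I do not anticipate a genuine obstacle here: the substantive work is already contained in Proposition~\ref{before teo II}, and what remains is a short case analysis. The only point requiring care is the incomparable case, where one must pick the coordinate $i_0$ on which $x$ beats $y$ --- rather than one on which $y$ beats $x$ --- since it is exactly that coordinate whose extension is forced to place $y$ strictly above $x$, contradicting $x \preceq_{i_0} y$.
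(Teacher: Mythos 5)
Your proposal is correct and follows exactly the paper's route: Corollary \ref{teo II} is obtained by applying Proposition \ref{before teo II} to each coordinate $u_i$ of the finite multi-utility and checking that the resulting linear extensions $(\preceq_{u_i})_{i=1}^N$ form a realizer, with the incomparable case handled by choosing the index $i_0$ where $u_{i_0}(x) > u_{i_0}(y)$. The case analysis you give (including the antisymmetry argument for $y \prec x$) is sound, so there is nothing to add.
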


The last result cannot be emulated when considering the Debreu instead of the Dushnik-Miller dimension. In fact, although Corollary \ref{why deb dim} closely relates the Debreu and geometrical dimensions, large discrepancies (although smaller than those between the geometrical and Dushnik-Miller dimensions) can arise between the geometrical and Debreu dimensions, as the following theorem states.

\begin{theo}
\label{finite geo infinite deb}
There exist partial orders with finite geometrical dimension and countably infinite Debreu dimension. In particular, such partial orders whose geometrical dimension is $2$ exist.
\end{theo}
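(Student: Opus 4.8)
The plan is to exhibit one explicit example. I would take $X \coloneqq [0,1] \times \{0,1\}$ with the product‑induced order inherited from $(\mathbb{R}^2,\le)$, that is $(a,i) \preceq (b,j)$ iff $a \le b$ and $i \le j$; this is a partial order on a set of size $|\mathbb{R}|$, so its Debreu dimension is defined. Its geometrical dimension is exactly $2$: the coordinate projections $u_1(a,i)\coloneqq a$ and $u_2(a,i)\coloneqq i$ form a multi‑utility by construction, while no single utility can exist because $(1,0)$ and $(0,1)$ are incomparable. Hence, by Corollary \ref{why deb dim}, the Debreu dimension is countable, and the whole content of the theorem is that it is \emph{not} finite, so that it equals $\aleph_0$.

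The remaining, and main, task is therefore to show that $(X,\preceq)$ has no finite realizer consisting of Debreu separable linear extensions. The incomparable pairs of $(X,\preceq)$ are exactly those of the form $\{(a,1),(a',0)\}$ with $a<a'$. For a linear extension $L$ of $\preceq$ I would say that $L$ \emph{sharpens at} $a \in [0,1)$ if $(a,1) \prec_L (a',0)$ for all $a' \in (a,1]$, and write $S_L$ for the set of such $a$. Step one: for any realizer $L_1,\dots,L_k$ of $\preceq$ one has $[0,1) \subseteq S_{L_1} \cup \dots \cup S_{L_k}$. Indeed, if some $a\in[0,1)$ were sharpened by no $L_i$, then for each $i$ there is $a'_i > a$ with $(a'_i,0) \prec_{L_i} (a,1)$; putting $a' \coloneqq \min_{i} a'_i > a$, monotonicity of $\preceq$ in the first coordinate gives $(a',0) \preceq (a'_i,0) \prec_{L_i} (a,1)$, hence $(a',0) \prec_{L_i} (a,1)$ for every $i$, so $(a',0)\preceq(a,1)$ in $\bigcap_i \preceq_{L_i} = \preceq$, contradicting the incomparability of $(a,1)$ and $(a',0)$. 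This is the one point where finiteness of the realizer is essential: it lets us extract a single $a'$ that defeats all the $L_i$ at once.

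Step two: if $L$ is Debreu separable then $S_L$ is countable. For $a,\tilde a \in S_L$ with $a < \tilde a$, sharpening (at $a$ and at $\tilde a$) gives $(a,0) \prec_L (a,1) \prec_L (\tilde a,0) \prec_L (\tilde a,1)$, and — still using that every point of $S_L$ sharpens — no element of $\{(b,j) : b \in S_L,\ j \in \{0,1\}\}$ lies strictly $L$‑between $(a,0)$ and $(a,1)$: those with first coordinate $<a$ are $\prec_L (a,0)$ and those with first coordinate $>a$ are $\succ_L (a,1)$. Hence the restriction of $L$ to $S_L \times \{0,1\}$ is order‑isomorphic to $(S_L,\le)\times\{0,1\}$ with its lexicographic order, which has a jump at every point of $S_L$. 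But a Debreu separable total order has a utility function by Theorem \ref{debreu thm}, hence embeds order‑preservingly into $(\mathbb{R},\le)$, as does every subset of it, and a subset of $\mathbb{R}$ has at most countably many jumps (choose a distinct rational inside each); so $S_L$ is countable. Combining the two steps, $[0,1)$ would be a finite union of countable sets — absurd. Thus no finite Debreu separable realizer exists, the Debreu dimension of $(X,\preceq)$ is $\aleph_0$, and its geometrical dimension is $2$, which proves the theorem.

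The step I expect to require the most care is step one: turning "being a realizer" into the covering $[0,1)\subseteq\bigcup_i S_{L_i}$, and especially the simultaneous choice of $a'$, which both drives the argument and explains why it does not clash with Corollary \ref{why deb dim}. If one prefers not to invoke that corollary for the upper bound, a concrete countable Debreu separable realizer can also be written down: the "block" linear extension placing all of $[0,1]\times\{0\}$ below all of $[0,1]\times\{1\}$, together with, for each rational $q \in [0,1)$, the linear extension determined on incomparable pairs by $(a,1) \prec_L (a',0) \iff (a \le q \text{ and } a' > q)$; each of these embeds into $\mathbb{R}$, and their intersection is $\preceq$.
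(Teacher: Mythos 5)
Your proof is correct, and your example is essentially the paper's own counterexample: $[0,1]\times\{0,1\}$ with the product order is order-isomorphic (up to the added endpoints) to $\mathbb{R}\setminus\{0\}$ ordered by $|x|\leq|y|$ and $\mathrm{sgn}(x)\leq \mathrm{sgn}(y)$, which is the partial order of Figure \ref{fig:counterex} used in \cite[Theorem 2]{hack2022geometrical}. Your lower-bound argument (covering $[0,1)$ by the sharpening sets $S_{L_i}$ of a finite realizer, and showing each $S_L$ is countable because a Debreu separable linear extension embeds into $(\mathbb{R},\leq)$ and hence has only countably many jumps) is sound and self-contained, and the upper bound via Corollary \ref{why deb dim} or your explicit countable Debreu separable realizer completes the claim that the Debreu dimension is countably infinite while the geometrical dimension is $2$.
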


\begin{figure}[!tb]
\centering
\begin{tikzpicture}
    \node[main node] (1) {$x$};
    \node[main node] (2) [right = 2cm  of 1]  {$y$};
    \node[main node] (3) [right = 2cm  of 2]  {$z$};
    \node[main node] (4) [below = 2cm  of 1] {$-x$};
    \node[main node] (5) [right = 2cm  of 4] {$-y$};
    \node[main node] (6) [right = 2cm  of 5] {$-z$};

    \path[draw,thick,->]
    (4) edge node {} (5)
    (5) edge node {} (6)
    (1) edge node {} (2)
    (2) edge node {} (3)
    (4) edge node {} (1)
    (6) edge node {} (3)
    (5) edge node {} (2)
    (4) edge [dashed] node {} (2)
    (4) edge [dashed] node {} (3)
    (5) edge [dashed] node {} (3)
    ;
\end{tikzpicture}
\caption{Representation of the partial order $(X,\preceq)$ we used in \cite[Theorem 2]{hack2022geometrical} to prove Theorem \ref{finite geo infinite deb}. In particular, we represent
$X \coloneqq \mathbb{R}\setminus\{0\}$ equipped with $\preceq$ where, for all $x,y \in X$, $x \preceq y$ if and only if $|x| \leq |y|$ and $ sgn(x) \leq sgn(y)$, where $|x|$ and $sgn(x)$ denote, respectively, the absolute value of $x$ and the sign of $x$ ($sgn(x) \coloneqq 1$ if $x> 0$ and $sgn(x) \coloneqq -1$ if $x<0$).
As we showed in \cite[Theorem 2]{hack2022geometrical}, the geometrical dimension of $(X,\preceq)$ is finite (it equals $2$) and its Debreu dimension is countably infinite. In this representation, we include three positive real numbers $0<x<y<z$ in the top line and their negatives $-x,-y,-z$ in the bottom line and we draw
an arrow from an element $w$ to an element $t$ whenever $w \prec t$ (we indicate the relations between nearest neighbours by solid lines and the rest by dashed lines).
Aside from helping with the distinction between the different notions of dimension for partial orders, this partial order also illustrates that the existence of a finite multi-utility does not imply that of finite strict monotone multi-utilities, which allows us to improve on the classification of preorders by real-valued monotones (included in Figure \ref{fig:classification density}). Lastly, this partial order will prove to be useful for applications (see Remark \ref{useful monos} in Section \ref{subsec: molec diff}). (Modified from our work in \cite{hack2022geometrical}.)}
\label{fig:counterex}
\end{figure}
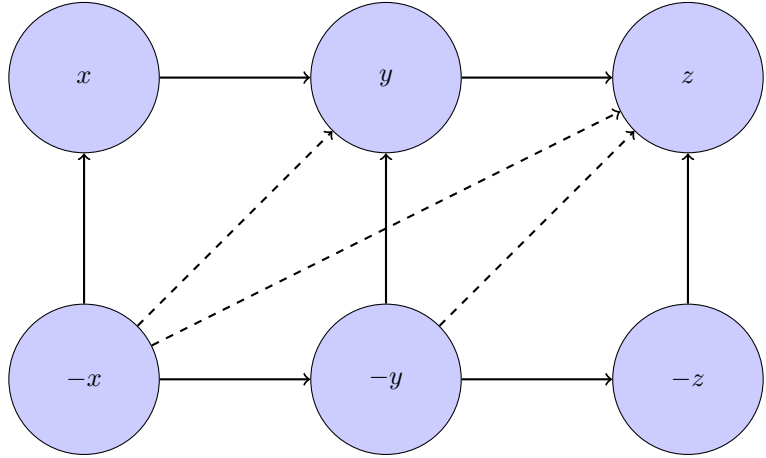

By Theorem \ref{debreu thm}, the geometrical dimension equals one if and only if the Debreu dimension equals one. Hence, the counterexample by which we proved the last theorem is minimal. (See \cite[Theorem 2]{hack2022geometrical} and Figure \ref{fig:counterex} for a representation.) Another instance of this result (with the geometrical dimension being $n-1$) is, for all $n \geq 3$, (the quotient of) majorization with $|\Omega|=n$. Hence, for any natural $m >1$, there are partial orders of geometrical dimension $m$ and countably infinite Debreu dimension. Given that this result is also interesting from the point of view of physics, we wait until Section \ref{sec:application} to formulate it more carefully.

Despite the coincidence between finite geometrical and Debreu dimensions not being fulfilled in general, it can be recovered for dimension two by requiring the partial order in question to be Debreu separable, as we state in the following theorem. In fact, by adding another hypothesis, we can recover it for the general finite case

\begin{theo}
\label{equi finite geo and deb}
If $(X,\preceq)$ is a Debreu separable partial order and $N < \infty$, then the following statements hold:
\begin{enumerate}[label=(\roman*)]
    \item The geometrical dimension of $(X,\preceq)$ equals two if and only if its Debreu dimension equals two.
    \item If there exists a countable subset $B \subseteq X$ such that, for all $x,y \in X$ where $x \bowtie y$, we either have $x \in B$ or $y \in B$, then the geometrical dimension of $(X,\preceq)$ equals $N$ if and only if its Debreu dimension equals $N$.  
    \end{enumerate}
\end{theo}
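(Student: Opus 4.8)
The plan is to prove each direction of the two equivalences by combining the earlier structural results, in particular Corollary~\ref{teo II} and Theorem~\ref{count mu has D-ext}, and then to control the Debreu separability of the linear extensions one constructs. The ``only if'' directions are the easy ones: a Debreu separable realizer is in particular a realizer, so if the Debreu dimension equals $N$ then the Dushnik--Miller dimension is at most $N$; and since $(X,\preceq)$ is Debreu separable it has a countable multi-utility (by Proposition~\ref{deb no mu}$(ii)$, using that Debreu separability together with the extra hypothesis in $(ii)$ forces Debreu upper separability), so a finite realizer gives a finite multi-utility via the standard argument that each Debreu separable linear extension carries a utility by Theorem~\ref{debreu thm} — hence the geometrical dimension is at most $N$. (For $(i)$, $N=2$, one also uses that geometrical dimension one is equivalent to Debreu dimension one by Theorem~\ref{debreu thm}, ruling out a drop to $1$.) So the content is entirely in the ``if'' directions: starting from a multi-utility $(u_i)_{i=1}^N$, produce a \emph{Debreu separable} realizer of the same size.

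For the ``if'' direction I would proceed as follows. Fix a multi-utility $(u_i)_{i=1}^N$. By Proposition~\ref{before teo II}, for each $i$ there is a linear extension $\preceq_{u_i}$ of $\preceq$ that agrees with $u_i$ on incomparable pairs, i.e. $x\bowtie y$ and $u_i(x)<u_i(y)$ imply $x\prec_{u_i}y$; and as in the proof of Corollary~\ref{teo II} the family $(\preceq_{u_i})_{i=1}^N$ is a realizer of $\preceq$. The new ingredient needed here is that, under the stated Debreu-separability hypotheses, these $\preceq_{u_i}$ can be taken Debreu separable. The natural strategy is to feed into Proposition~\ref{before teo II} not the raw construction but one that already respects a countable Debreu-dense skeleton: $(X,\preceq)$ is Debreu separable, so fix a countable Debreu-dense $D_0\subseteq X$; in case $(ii)$ also adjoin the countable set $B$ of the hypothesis, and in case $(i)$ observe that a partial order of geometrical dimension two has at most continuum cardinality and its ``incomparability graph'' is bipartite-like enough that a countable witness set can be extracted. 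Then one builds each $\preceq_{u_i}$ as a limit $\lim_{n\to\infty}\preceq_n$ of extensions in the manner of Theorem~\ref{count mu has D-ext}$(ii)$, at each stage only inserting comparabilities forced by $u_i$ and by the countable skeleton, so that $D_0$ (resp.\ $D_0\cup B$) remains Debreu dense in the limit order. This yields a Debreu separable realizer of size $N$, hence the Debreu dimension is at most $N$; combined with the ``only if'' bound and with the exclusion of smaller values (dimension $\ge 2$ on both sides because $\preceq$ is not total, using Theorem~\ref{debreu thm} and Corollary~\ref{teo II}) one gets exactly $N$.

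The main obstacle I anticipate is precisely the preservation of Debreu separability of the individual linear extensions. Proposition~\ref{before teo II} as stated only guarantees that $\preceq_{u_i}$ respects $u_i$ on incomparable pairs; it says nothing about whether $\preceq_{u_i}$ is Debreu separable, and a linear extension of a Debreu separable partial order need \emph{not} be Debreu separable in general (this is essentially why the Debreu and geometrical dimensions diverge in Theorem~\ref{finite geo infinite deb}). What rescues case $(i)$ is the low dimension: with geometrical dimension two there are essentially only two ``coordinates'' to linearize, and the obstruction to Debreu separability of an extension — an uncountable family of pairwise ``jump'' gaps — cannot be accommodated once one coordinate is already Debreu separable (here the Debreu separability of $(X,\preceq)$ itself, plus totality of the target, does the bookkeeping, and this is where the restriction to dimension exactly two rather than arbitrary finite $N$ is used). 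In case $(ii)$ the extra hypothesis that every incomparable pair meets the countable set $B$ is exactly the device that caps the number of independent ``jumps'' any coordinate extension can create, so the same argument goes through for general finite $N$. Making this counting rigorous — translating ``Debreu separable'' into the language of countable families of increasing sets as in \cite[Proposition~7]{hack2022representing}, and showing the construction never produces an uncountable antichain of gaps — is the technical heart; the rest is assembly from Corollary~\ref{teo II}, Proposition~\ref{before teo II} and Theorem~\ref{debreu thm}.
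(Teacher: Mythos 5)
Your overall architecture does coincide with the paper's route: the easy direction is obtained by applying Theorem \ref{debreu thm} to each member of a Debreu separable realizer to get a multi-utility of the same size, and the hard direction is supposed to come from a strengthening of Proposition \ref{before teo II} in which the limit linear extension attached to each coordinate $u_i$ of a finite multi-utility is itself \emph{Debreu separable} (this is exactly the "result analogous to Proposition \ref{before teo II} but for Debreu separable linear extensions" mentioned after the theorem, proved in \cite{hack2022geometrical}). However, that strengthening is the entire content of the theorem, and you do not prove it: you explicitly defer "the technical heart," and what you offer in its place is not an argument. Asserting that the Debreu separability of $(X,\preceq)$ "does the bookkeeping" does not say where a countable Debreu dense subset of $\preceq_{u_i}$ comes from; the comparabilities inserted at the limit for formerly incomparable pairs are precisely where Debreu density can break, and, as you note yourself in connection with Theorem \ref{finite geo infinite deb}, linear extensions produced this way need not inherit Debreu separability. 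So there is a genuine gap: the step that distinguishes this theorem from Corollary \ref{teo II} is left unestablished.

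Moreover, the specific mechanism you sketch for case $(i)$ is not right. You propose to extract, from geometrical dimension two, a countable set meeting every incomparable pair; that would make $(i)$ a special case of $(ii)$, and no justification is given (it is not a consequence of dimension two together with Debreu separability, and if it were automatic the paper would not need to state $(ii)$ with that hypothesis). The usable leverage in the $N=2$ case is instead the rigidity of a two-element multi-utility: if $x \bowtie y$ and $u_1(x)<u_1(y)$, then necessarily $u_2(y)<u_2(x)$, and it is this strict coupling of the two coordinates, combined with a countable Debreu dense subset of $(X,\preceq)$, that allows one to certify Debreu density of a countable set for each extension; in case $(ii)$ the set $B$ plays that role for arbitrary finite $N$. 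A smaller inaccuracy: your parenthetical claim that Debreu separability plus the hypothesis of $(ii)$ forces Debreu upper separability is not immediate, since for an incomparable pair $\{x,y\}$ the hypothesis places only one of the two elements in $B$, which witnesses the Debreu upper density condition for only one of the two ordered pairs; fortunately that detour is unnecessary, because the utilities of the Debreu separable realizer already yield the finite multi-utility directly.
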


Note, on the one hand, that the partial order in Figure \ref{fig:counterex}, which we used to show Theorem \ref{finite geo infinite deb}, is not Debreu separable and, importantly, it possesses a countably infinite Debreu dimension despite its geometrical dimension being two. Hence, Debreu separability is not trivial in Theorem \ref{equi finite geo and deb} $(i)$. On the other hand, the hypothesis in Theorem \ref{equi finite geo and deb} $(ii)$ is not satisfactory, although it is only the first attempt towards a hypothesis that equates the geometrical and Debreu dimensions. (For a \emph{nicer} equivalence between finite multi-utilities and finite strict monotone multi-utilities see \cite[Theorem 3 $(i)$]{hack2022geometrical}.) Lastly, note that, under the hypotheses in Theorem \ref{equi finite geo and deb} $(ii)$, we can obtain a result analogous to Proposition \ref{before teo II} but for \emph{Debreu separable} linear extensions.

In the following section, now that we have introduced the sort of real-valued characterizations of preorders that are of interest to us and we have extensively discussed their relation to the notion of dimension for partial orders, we present the classification of preorders in terms of real-valued monotones. 

\section{The classification of preorders in terms of real-valued monotones}
\label{sec:classi}

In this section, we present the classification of preordered spaces in terms of real-valued monotones. Aside from the classes we defined in Section \ref{sec:mu charact}, we also consider injective monotone multi-utilities, that is, multi-utilities such that each function is an injective monotone. We have delayed their introduction since, if $(X,\preceq)$ is a partial orders, then there exists an injective monotone multi-utility $(u_i)_{i \in I}$ if and only if there exists a Debreu separable realizer $(\preceq_i)_{i \in I}$ (see \cite[Proposition 4]{hack2022geometrical}). The case of preorders is equivalent, it just requires to redefine the notion of realizer.

Before we present the classification of preorders according to real-valued monotones, we characterize several of the classes of interest through families of increasing sets.

\subsection{Characterization by families of increasing sets}

The preordered spaces with certain real-valued monotones can be characterized by families of increasing sets that separate the elements in some sense. We call a subset $A\subseteq X$ \emph{increasing} if, for all $x\in A$, $x\preceq y$ implies that $y\in A$ \cite{mehta1986existence}. Furthermore, a family $(A_i)_{i\in I}$ of subsets $A_i\subseteq X$ \emph{separates $x$ from $y$}, if there exists $i\in I$ with $x\not\in A_i$ and $y\in A_i$. In the literature, such families have been recognized as useful to characterize the existence of several real-valued representations of preorders \cite{herden1989existence,mehta1981recent,alcantud1999characterization,alcantud2013representations,bosi2013existence}. The characterizations that we use later on are stated in the following proposition.

\begin{prop}
\label{prop: sets charact}
Let $(X,\preceq)$ be a preordered space.
\begin{enumerate}[label=(\roman*)]
\item  For any infinite set $I$, there exists a multi-utility with the cardinality of $I$ if and only if there exists a family of increasing subsets $(A_i)_{i \in I}$
that $\forall x,y\in X$ with $x \prec y$ separates $x$ from $y$, and $\forall x,y\in X$ with $x \bowtie y$ separates both $x$ from $y$ and $y$ from $x$.
 \item There exists a strict monotone if and only if there exists a countable family of increasing subsets that $\forall x,y\in X$ with $x \prec y$ separates $x$ from $y$.
 \item There exists an injective monotone if and only if there exists a countable family of increasing subsets that $\forall x,y\in X$ with $x \prec y$ separates $x$ from $y$ and $\forall x,y\in X$ with $x \bowtie y$ separates either $x$ from $y$ or $y$ from $x$.
\end{enumerate}
\end{prop}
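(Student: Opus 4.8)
The plan is to establish each of the three equivalences by exhibiting the natural correspondence between real-valued monotones and families of increasing subsets, realized via sublevel or superlevel sets in one direction and via indicator-type sums in the other. For the forward directions, given a monotone $u:X\to\mathbb R$, one associates to it the family of increasing sets $\{x\in X\mid u(x)>q\}$ indexed by $q\in\mathbb Q$; monotonicity of $u$ guarantees each such set is increasing, and the separation properties transfer from the defining inequalities on $u$ to separation statements about the sets. For the converse directions, given a family $(A_i)_{i\in I}$ of increasing sets with the stated separation properties, one builds monotones by taking weighted sums of indicator functions: for a countable family $(A_n)_{n\geq 0}$ one sets $u(x)\coloneqq\sum_{n\geq 0}2^{-n}\mathbf 1_{A_n}(x)$, which converges, is a monotone because each $\mathbf 1_{A_n}$ is (as $A_n$ is increasing), and inherits the required injectivity/strictness from how the family separates pairs.

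Concretely, I would proceed as follows. First I would handle (ii), the strict monotone case, since it is the cleanest and its argument is reused: the forward direction takes superlevel sets at rationals, and the converse takes $u=\sum_n 2^{-n}\mathbf 1_{A_n}$ and checks that $x\prec y$ forces strict inequality because some $A_n$ separates $x$ from $y$ while, by the increasing property and $x\preceq y$, no $A_n$ can separate $y$ from $x$. Second I would do (iii), the injective monotone case: the converse again uses $u=\sum_n 2^{-n}\mathbf 1_{A_n}$, but now I must check that $u(x)=u(y)$ implies $x\sim y$ — if $x\bowtie y$ the hypothesis gives an $A_n$ separating one from the other, forcing a strict discrepancy in exactly the $n$-th binary digit that cannot be compensated (here the dyadic weights are essential, so the digits are independent); if $x\prec y$ we are back in the case (ii) argument. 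The forward direction of (iii) takes $u$ injective-monotone and the family $\{u>q\}_{q\in\mathbb Q}$: for $x\bowtie y$, without loss of generality $u(x)<u(y)$, and a rational strictly between them yields a set separating $x$ from $y$. Third I would treat (i), the multi-utility case for infinite $I$: the forward direction takes a multi-utility $(u_i)_{i\in I}$ and replaces each $u_i$ by its countably many rational superlevel sets $\{u_i>q\}$, obtaining a family indexed by $I\times\mathbb Q$, which has the same cardinality as $I$ since $I$ is infinite; one checks separation of $\prec$-pairs and $\bowtie$-pairs in both directions from the definition \eqref{definition mu} of multi-utility. The converse of (i) takes a family $(A_i)_{i\in I}$ with the stated separation properties and sets $u_i\coloneqq\mathbf 1_{A_i}$ for each $i$; each is a monotone as $A_i$ is increasing, and the two separation clauses are precisely what is needed to verify both implications of \eqref{definition mu}.

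The main obstacle is bookkeeping rather than conceptual depth: in the converse of (iii) one must verify that distinct binary expansions of the form $\sum_n 2^{-n}\varepsilon_n$ with $\varepsilon_n\in\{0,1\}$ cannot coincide when they disagree in a way that is not "compensable" — and here one has to be slightly careful because dyadic rationals have two expansions. This is sidestepped by the following observation: if $x\bowtie y$ then, since $A_n$ is increasing and $x,y$ are incomparable, one cannot have both $x\in A_n\implies y\in A_n$ fail in both directions simultaneously in a way producing $0.0111\ldots=0.1000\ldots$; more robustly, one replaces the weights $2^{-n}$ by a rapidly decreasing summable sequence, or argues directly that the separation hypothesis supplies a single index at which the indicators differ and all other indices contribute a controlled remainder smaller than that single discrepancy. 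The cardinality step in (i) — that $|I\times\mathbb Q|=|I|$ for infinite $I$ — is the other point to state carefully, but it is standard. I would also remark that parts (ii) and (iii) are essentially known in the representation-theory literature (cf. the references cited just before the proposition), so the proof can be kept brief there and the emphasis placed on (i) and on the injective case (iii), which is the genuinely new ingredient.
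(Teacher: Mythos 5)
Your proposal takes essentially the same route as the paper's: the actual proof (part (i) is cited from \cite{bosi2013existence}, parts (ii) and (iii) are proved in \cite{hack2022representing}) passes between monotones and separating families of increasing sets exactly as you do, with rational superlevel sets $\{u>q\}$ in one direction and weighted sums of indicator functions in the other, and indicators $\chi_{A_i}$ plus the count $|I\times\mathbb Q|=|I|$ for part (i). One point needs correcting, though: for part (iii) the weights $2^{-n}$ are genuinely insufficient, and your proposed ``observation'' that the increasing property of the $A_n$ rules out the dyadic coincidence is false. Take $X=\{x,y\}$ with the identity preorder (so $x\bowtie y$ and there are no strict pairs), $A_0=\{x\}$ and $A_n=\{y\}$ for all $n\geq 1$: every subset is increasing, the family separates $x$ from $y$ and $y$ from $x$, yet $\sum_{n\geq 0}2^{-n}\chi_{A_n}(x)=1=\sum_{n\geq 0}2^{-n}\chi_{A_n}(y)$, so the resulting function is not an injective monotone. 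Hence you must commit to your second fix, namely weights $r^{n}$ with $r\in(0,\tfrac{1}{2})$ (so that the tail $\sum_{n>n_0}r^{n}$ is strictly smaller than $r^{n_0}$ and a first index of disagreement cannot be compensated); this is precisely the choice made in Lemma 1 of \cite{hack2022representing}, and with it the rest of your argument for (ii), (iii) and (i) goes through as in the paper. Note also that for part (ii) alone the $2^{-n}$ weights are harmless, since there the comparison is termwise; the issue only arises for incomparable pairs in (iii).
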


It should be noted that the proof of $(i)$ can be found in \cite{bosi2013existence}, while we built upon the characterization in the literature and showed $(ii)$ and $(iii)$ in \cite{hack2022representing}. 

The characterization in Proposition \ref{prop: sets charact} is useful in order to picture counterexamples that will allow us to present the most complete characterization of preordered spaces in terms of real-valued monotones so far.

\subsection{The classification of preorders}
\label{subsec:classi}

For the sake of brevity, instead of stating the different results separately, we profit from Figure \ref{fig:classification density} in order to state our main result regarding the classification of preorder. More specifically, the following theorem holds.

\begin{theo}
\label{theo class}
The classification of preorders in terms of real-valued monotones, the cardinality of the quotient set and order density properties is given by Figure \ref{fig:classification density}.
\end{theo}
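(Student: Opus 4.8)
The plan is to read Figure~\ref{fig:classification density} as the assertion that a certain partially ordered collection of properties of a preordered space $(X,\preceq)$ — ``has a finite multi-utility'', ``has a countable multi-utility'', ``has a (countable) injective monotone multi-utility'', ``has a strict monotone multi-utility'', ``has an injective monotone'', ``has a strict monotone'', ``is Debreu separable'', ``is Debreu upper separable'', ``has a countable Debreu upper dense subset'', ``$|X/\mathord{\sim}|\le|\mathbb R|$'', etc. — stands in exactly the inclusion relations drawn in the diagram. Proving the theorem thus splits into two tasks: \emph{soundness}, i.e. every arrow displayed is a genuine implication, and \emph{completeness}, i.e. whenever two classes are not joined by a directed path there is a preordered space lying in the first but not in the second. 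I would organise the whole argument along these two axes.

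For soundness, most arrows are either immediate from the definitions (a finite multi-utility is a countable one; an injective or strict monotone multi-utility is in particular a multi-utility whose members are injective, resp. strict, monotones; a Debreu upper separable space is Debreu separable and carries a countable Debreu upper dense subset; a countable multi-utility forces $|X/\mathord{\sim}|\le|\mathbb R|$, the bound cited from \cite{hack2022classification}) or have already been established earlier in the chapter: Proposition~\ref{deb no mu} gives ``Debreu separable $\Rightarrow$ strict monotone'' and ``Debreu upper separable $\Rightarrow$ countable multi-utility'', Proposition~\ref{optimization charac of R-P} links the monotone classes to their optimization-principle analogues, and Theorem~\ref{debreu thm} settles the totally ordered case. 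The one implication worth isolating is ``countable multi-utility $\Rightarrow$ strict monotone and injective monotone'': given a countable multi-utility $(u_i)_{i\in\mathbb N}$, the bounded function $f\coloneqq\sum_{i}2^{-i}\arctan\circ u_i$ is a monotone, and if $x\prec y$ then $u_i(x)\le u_i(y)$ for all $i$ with strict inequality for some $i_0$, so $f(x)<f(y)$; moreover the countable family of increasing sets $\{\,u_i^{-1}((q,\infty)) : i\in\mathbb N,\ q\in\mathbb Q\,\}$ separates every strict pair in one direction and every incomparable pair in both directions, so Proposition~\ref{prop: sets charact}\,(iii) also produces an injective monotone.

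For completeness I would traverse the diagram edge by edge, attaching to each ``missing'' arrow a concrete witness drawn from the examples already developed. Majorization with $|\Omega|\ge 3$ has a strict monotone (Shannon entropy) and a countable multi-utility ($\mathcal U$) yet is not Debreu separable (by \cite[Lemma 5 (ii)]{hack2022representing}), separating the two monotone/multi-utility classes from Debreu separability. The partial order of Figure~\ref{fig:counterex}, and equivalently the quotient of majorization with $|\Omega|=n$ for $n\ge 3$, possesses a finite multi-utility but no finite strict monotone multi-utility (by \cite[Theorem 2]{hack2022geometrical}), separating ``finite multi-utility'' from ``finite strict monotone multi-utility''. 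The trivial order $(\mathbb P(\mathbb R),=)$ and the ``nicer'' space $X=[0,1]\cup[2,3]$ of Proposition~\ref{deb no mu} are Debreu separable with no countable multi-utility (the former violating, the latter respecting the continuum bound). A total non-Debreu-separable preorder — e.g. the lexicographic order on a suitable product, or an uncountable well-order — is trivially Debreu upper dense in itself yet, by Theorem~\ref{debreu thm} and \cite[Section 4]{alcantud2016richter}, has neither a strict monotone nor a countable multi-utility, separating ``countable Debreu upper dense subset'' from both. Finally, to pull apart ``$\exists$ strict monotone'', ``$\exists$ injective monotone'' and ``$\exists$ countable multi-utility'' I would build witnesses directly from the increasing-set characterisations of Proposition~\ref{prop: sets charact}: a preorder carrying a countable family of increasing sets that separates all strict pairs in one direction and all incomparable pairs in \emph{at least} one direction, but for which any family separating incomparable pairs in \emph{both} directions must be uncountable, separates ``injective monotone'' from ``countable multi-utility''; dropping the incomparable-pair requirement separates ``strict monotone'' from ``injective monotone''.

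The main obstacle is completeness, and within it precisely the counterexamples that pull the monotone-existence classes apart from the countable-multi-utility class and the finite multi-utility class apart from the finite strict monotone multi-utility class: these cannot be read off from known order-denseness facts and require explicit constructions — a carefully tailored family of increasing sets, or an explicit partial order such as the one in Figure~\ref{fig:counterex} — together with a cardinality or realizer argument certifying that the minimal faithful representation is strictly larger than one might hope. By contrast, soundness is essentially bookkeeping over results already proved in Sections~\ref{opt preorders}--\ref{sec: dimension}, so the proof reduces to assembling those implications and then supplying the separating examples above, one per non-edge of the diagram.
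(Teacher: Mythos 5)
Your overall strategy is the right one and is essentially how the paper proceeds: Theorem~\ref{theo class} is proved by bookkeeping, i.e.\ by assembling the implications and separating examples already established in \cite{hack2022representing,hack2022classification,hack2022geometrical} and in Propositions~\ref{optimization charac of R-P}, \ref{deb no mu} and \ref{prop: sets charact}, together with one new proposition supplying the density-related placements. Your soundness sketch (including the explicit argument that a countable multi-utility yields a strict monotone via a weighted bounded sum and an injective monotone via the increasing sets $u_i^{-1}((q,\infty))$ and Proposition~\ref{prop: sets charact}\,(iii)) is fine, and your completeness witnesses for the monotone/multi-utility part of the figure (majorization, the partial order of Figure~\ref{fig:counterex}, $(\mathbb P(\mathbb R),=)$ and $[0,1]\cup[2,3]$) are the same ones the paper uses or cites.

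The genuine gap is that you never address the class ``has a Debreu dense and Debreu upper dense subset with cardinality $\mathfrak c$'', which is one of the three order-density regions that Figure~\ref{fig:classification density} adds to the earlier Figure~\ref{fig:classification}, and whose placement is exactly the new content the paper must prove (its Proposition~\ref{new classi 2}). Concretely, your plan is missing: (a) the inclusion ``injective monotone $\Rightarrow$ existence of a Debreu dense and Debreu upper dense subset of cardinality at most $\mathfrak c$'' (which follows because an injective monotone bounds $|X/\mathord{\sim}|$ by $\mathfrak c$, and by reflexivity a set of representatives of the equivalence classes is both Debreu dense and Debreu upper dense); (b) a preordered space possessing such a subset but no strict monotone; (c) a preordered space with a strict monotone all of whose Debreu dense and Debreu upper dense subsets have cardinality larger than $\mathfrak c$ (e.g.\ built from a trivial ordering, where every Debreu upper dense subset must be the whole ground set); and (d) a space with a multi-utility of cardinality $\mathfrak c$, no strict monotone, and only density subsets of cardinality larger than $\mathfrak c$. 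Without these four items the dashed cardinality-$\mathfrak c$ density region of the figure is unverified, so the theorem as stated is not yet proved; the property you do discuss, ``countable Debreu upper dense subset'', is not the class appearing in the figure. A smaller, more forgivable looseness is that your separations of ``strict monotone'' vs.\ ``injective monotone'' vs.\ ``countable multi-utility'' only specify what the witnesses must satisfy rather than constructing them, but those constructions exist in the cited papers and the paper itself defers to them.
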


Our contribution to the classification of preordered spaces can be found in \cite{hack2022classification,hack2022geometrical,hack2022representing} and Propositions \ref{deb no mu} and \ref{new classi 2}. The contribution in \cite{hack2022representing} amounts to the introduction of injective monotones and their relation to the other classes. Furthermore, the contribution in \cite{hack2022geometrical} consists in the distinction between  the existence of finite multi-utilities and that of finite strict monotone multi-utilities (which we obtained as a consequence of the proof of Theorem \ref{finite geo infinite deb}, as we showed in \cite[Corollary 3]{hack2022geometrical}). Aside from that, our contributions in \cite{hack2022classification} to the classification, and a general overview of the relation between the classes can be found in \cite[Section 3 and Figure 2]{hack2022classification}.


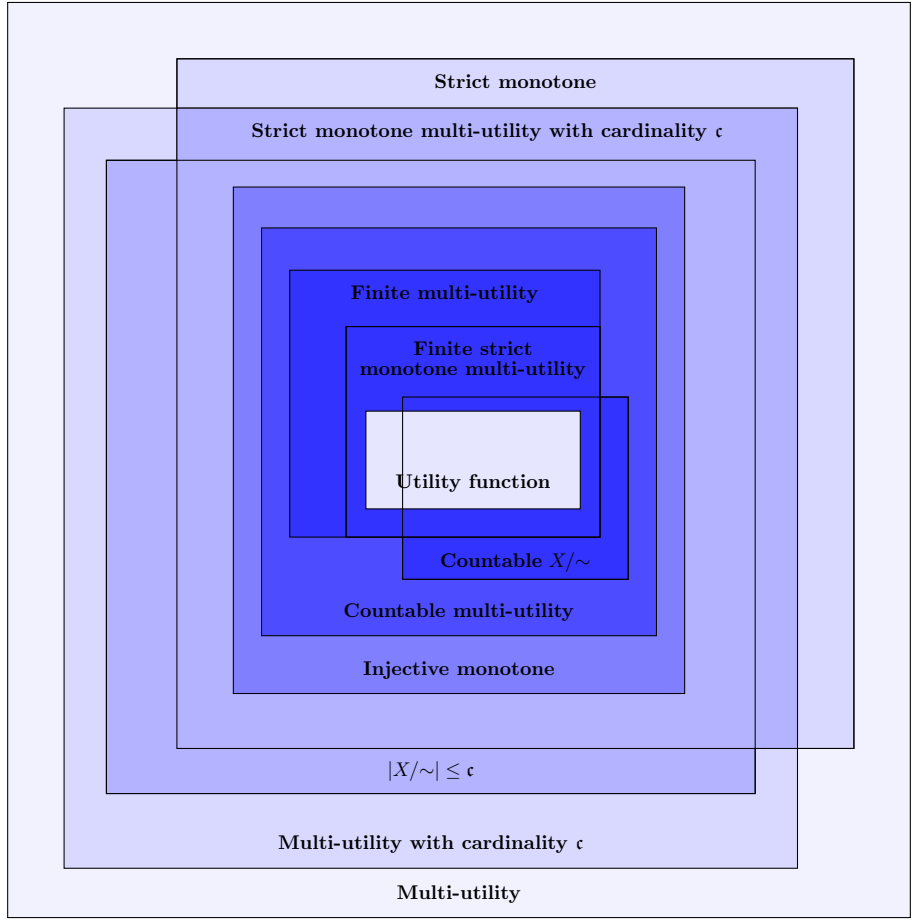
\begin{figure}[!tb]
\centering
\begin{tikzpicture}[scale=0.75, every node/.style={transform shape}]
\node[ draw, fill=blue!5 ,text height = 16cm,minimum     width=16cm, label={[anchor=south,above=1.5mm]270: \textbf{Multi-utility}}]  (main) {};
\node[ draw, fill=blue!15, text height =12cm, minimum width = 12cm,xshift=1cm,yshift=1cm,label={[anchor=south,below=1.5mm]90: \textbf{Strict monotone}}] at (main.center)  (semi) {};
\node[ draw, fill=blue!15, text height =13.25cm, minimum width = 13cm,xshift=-0.5cm,yshift=-0.5cm,label={[anchor=south,above=1.5mm]270: \textbf{Multi-utility with cardinality $\mathfrak{c}$}}] at (main.center)  (semi) {};
\node[ draw, text height =12cm, minimum width = 12cm,xshift=1cm,yshift=1cm] at (main.center)  (semi) {};
\node[ draw,fill=blue!30, text height = 11cm, minimum width = 11.5cm,xshift=-0.5cm,yshift=-0.3cm,label={[anchor=south,above=1mm]270:$|X/\mathord{\sim}| \leq \mathfrak{c}$}] at (main.center) (active) {};
\node[ draw,fill=blue!30,yshift= 0.5625cm,xshift=0.5cm,text height = 11.125cm, minimum width = 11cm,label={[anchor=north,below=1.5mm]90:\textbf{Strict monotone multi-utility with cardinality $\mathfrak{c}$}}] at (main.center) (active) {};
\node[ draw, text height = 11cm, minimum width = 11.5cm,xshift=-0.5cm,yshift=-0.3cm] at (main.center)  (semi) {};
\node[ draw,fill=blue!50, text height = 8.75cm, minimum width = 8cm,yshift=0.35cm,label={[anchor=south,above=1.5mm]270:\textbf{Injective monotone}}] at (main.center) (active) {};
\node[ draw,fill=blue!70, text height = 7cm, minimum width = 7cm,yshift=0.5cm,label={[anchor=south,above=1.5mm]270:\textbf{Countable multi-utility}}] at (main.center) (non) {};
\node[ draw,fill=blue!80, text height = 3cm, minimum width = 4cm,yshift=-0.5cm,xshift=1cm,label={[anchor=south,above=0.1mm]270: \textbf{Countable} $X/\mathord{\sim}$}] at (main.center) (non) {};
\node[ draw,fill=blue!80, text height = 4.5cm, minimum width = 5.5cm,yshift=1cm,xshift=-0.25cm,label={[anchor=north,below=1.5mm]90:\textbf{Finite multi-utility}}] at (main.center) (non) {};
\node[ draw,fill=blue!80, text height = 3.5cm, minimum width = 4.5cm,yshift=0.5cm,xshift=0.25cm,label={[anchor=north,below=1.5mm]90:\textbf{Finite strict}}] at (main.center) (non) {};
\node[ draw, text height = 3.5cm, minimum width = 4.5cm,yshift=0.5cm,xshift=0.25cm,label={[anchor=north,below=5mm]90:\textbf{monotone multi-utility}}] at (main.center) (non) {};
\node[ draw,fill=blue!10, text height = 1.5cm, minimum width = 3.8cm,xshift=0.25cm,label={[anchor=north,below=1cm]90:\textbf{Utility function}}] at (main.center) (non) {};
\node[ draw, text height = 3cm, minimum width = 4cm,yshift=-0.5cm,xshift=1cm] at (main.center) (non) {};
\end{tikzpicture}
\caption{Classification of preordered spaces according to the existence of real-valued monotones and the cardinality of the quotient space. (Based on \cite[Figure 1]{hack2022classification} and \cite[Figure 8]{hack2022geometrical}.)}
\label{fig:classification}
\end{figure}

Aside from the categories we considered in \cite{hack2022classification,hack2022geometrical,hack2022representing}(which we reproduce in Figure \ref{fig:classification}), we include order density properties in Figure \ref{fig:classification density}. To convince ourselves about the validity of this classification, we simply ought to show the following proposition.

\begin{prop}
\label{new classi 2}
The following statements hold:
\begin{enumerate}[label=(\roman*)]
\item If a preordered space has an injective monotones, then it has a subset whose cardinality is smaller or equal to $\mathfrak{c}$ that is both Debreu dense and Debreu upper dense.
\item There exist preordered spaces that have no strict monotone despite having subsets that are both Debreu dense and Debreu upper dense whose cardinality is lower or equal than $\mathfrak{c}$.
\item There exist preordered spaces that have strict monotones and whose subsets that are both Debreu dense and Debreu upper dense have cardinality larger than $\mathfrak{c}$.
    \item There exist preordered spaces with multi-utilities of cardinality $\mathfrak{c}$ that have no strict monotones and whose subsets that are both Debreu dense and Debreu upper dense have cardinality larger than $\mathfrak{c}$.
\end{enumerate}
\end{prop}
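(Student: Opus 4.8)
The plan is to prove $(i)$ directly from the definition of injective monotone and to settle $(ii)$, $(iii)$ and $(iv)$ by exhibiting concrete preorders. For $(i)$: given an injective monotone $f\colon X\to\mathbb R$, I would note that $f$ is injective on $X/\mathord\sim$, so $|X/\mathord\sim|\le\mathfrak c$, and then choose one representative from each equivalence class to form a set $D\subseteq X$ with $|D|\le\mathfrak c$. Debreu density of $D$ is immediate: if $x\prec y$ and $d\in D$ represents $[x]$, then $d\sim x$ gives $x\preceq d$ and $d\preceq x\preceq y$. For Debreu upper density, if $x\bowtie y$ and $d\in D$ represents $[y]$, then $d\preceq y$, while $x\bowtie d$ holds since $x\preceq d$ would force $x\preceq d\preceq y$ and $d\preceq x$ would force $y\preceq d\preceq x$, each contradicting $x\bowtie y$.

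For $(ii)$ I would use the lexicographic plane $(\mathbb R^2,\preceq_L)$: being total it has no incomparable pairs, so Debreu upper density is vacuous and the whole space is itself a Debreu dense and Debreu upper dense subset of cardinality $\mathfrak c$; on the other hand it is not Debreu separable (its geometrical dimension is uncountable), so by Theorem \ref{debreu thm} it has no utility function, and since on a total preorder every strict monotone is a utility function, it has no strict monotone. For $(iii)$ I would use the trivially ordered power set $(\mathbb P(\mathbb R),=)$: any constant map is a strict monotone, but if $D$ is Debreu upper dense then, for every $B\in\mathbb P(\mathbb R)$, picking $A\neq B$ gives $A\bowtie B$ and the only $d$ with $A\bowtie d\preceq B$ is $d=B$, so $D=\mathbb P(\mathbb R)$ and every Debreu dense and Debreu upper dense subset has cardinality $2^{\mathfrak c}>\mathfrak c$.

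For $(iv)$ the example must simultaneously exclude strict monotones (which requires a long chain) and force large dense subsets (which requires a wide antichain) while still admitting a multi-utility of cardinality $\mathfrak c$. I would take $X$ to be the disjoint union of a copy $C$ of the ordinal $\omega_1$ and an antichain $B$ realised as a $\subseteq$-antichain in $\mathbb P(\mathbb R)$ of cardinality $2^{\mathfrak c}$ (such antichains exist, e.g.\ a maximal independent family of subsets of $\mathbb R$), together with the relations $c\preceq b$ for all $c\in C$, $b\in B$. Then $X$ has no strict monotone, since the restriction of one to $C$ would be a strictly increasing map $\omega_1\to\mathbb R$, which does not exist. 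Every Debreu dense and Debreu upper dense $D$ contains all of $B$: for $b\neq b'$ in $B$ one has $b\bowtie b'$, and the only $d$ with $d\preceq b$ and $d\bowtie b'$ is $b$ (every $c\in C$ satisfies $c\preceq b'$, hence is comparable to $b'$), so $b\in D$ and $|D|\ge 2^{\mathfrak c}>\mathfrak c$. Finally $X$ has a multi-utility of cardinality $\mathfrak c$: combine the functions $u_\alpha$ ($\alpha\in C$) that equal $1$ on $B$ and, on $C$, the characteristic function of $\{\beta\in C:\beta>\alpha\}$; the functions $w_r$ ($r\in\mathbb R$) that equal $0$ on $C$ and send $b\in B$ to $1$ if $r\in b$ and to $0$ otherwise; and the map that is $0$ on $C$ and $1$ on $B$. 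Each of these is a monotone, and the family separates pairs inside $C$ via the $u_\alpha$, pairs inside $B$ via the $w_r$, and a cross pair via the last function.

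I expect $(iv)$ to be the main obstacle: one must realise the long chain and the $2^{\mathfrak c}$-sized antichain inside one preorder without interference, and then check carefully that the listed $\mathfrak c$ many monotones really form a multi-utility — the key being that only $\mathfrak c$ monotone functions can separate a $2^{\mathfrak c}$-element antichain precisely because these functions vanish on the chain lying below $B$. The cardinal-combinatorial ingredient (an antichain of size $2^{\mathfrak c}$ in $(\mathbb P(\mathbb R),\subseteq)$) I would cite rather than reprove. Together the four items show that the class of preorders possessing a Debreu dense and Debreu upper dense subset of cardinality $\le\mathfrak c$ properly contains the injective-monotone class, and is incomparable with both the strict-monotone class and the class of preorders with a multi-utility of cardinality $\mathfrak c$, which is exactly what is needed to locate the order-density properties correctly in the classification.
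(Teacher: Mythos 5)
Your proposal is correct in all four parts, but it reaches (ii)--(iv) by a different route than the paper. For $(i)$ you argue exactly as the paper does: an injective monotone bounds $|X/\mathord{\sim}|$ by $\mathfrak{c}$, and a set of representatives is Debreu dense and Debreu upper dense (your explicit verification of the two density properties is just a spelled-out version of the paper's one-line appeal to reflexivity). For $(ii)$--$(iv)$ the paper simply points to counterexamples already constructed in \cite{hack2022classification} (and, for $(iv)$, takes a disjoint union of two of them with the parts unrelated), whereas you build self-contained examples: the lexicographic plane for $(ii)$ (using the small but correct lemma that on a total preorder a strict monotone is a utility, so Theorem \ref{debreu thm} applies), the trivially ordered power set $(\mathbb P(\mathbb R),=)$ for $(iii)$ (same idea as the paper's, since its cited example is also a trivially ordered set of cardinality $>\mathfrak{c}$), and for $(iv)$ a genuinely different, connected construction: an $\omega_1$-chain placed entirely below a $\subseteq$-antichain of size $2^{\mathfrak{c}}$, with no strict monotone because $\omega_1$ does not embed strictly increasingly into $\mathbb R$, every Debreu dense and Debreu upper dense subset forced to contain the whole antichain, and an explicitly verified multi-utility of cardinality $\mathfrak{c}$ (your case analysis for that family is complete and correct). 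What each approach buys: the paper's gluing argument is shorter because the hard counterexamples are already catalogued elsewhere and their properties can be imported wholesale; your construction is self-contained and exhibits a single indecomposable witness, at the price of importing two classical facts (non-embeddability of $\omega_1$ in $\mathbb R$, and the existence of an antichain of size $2^{\mathfrak{c}}$ in $\mathbb P(\mathbb R)$) and doing the multi-utility bookkeeping by hand. One small inaccuracy: a \emph{maximal} independent family need not have cardinality $2^{\mathfrak{c}}$; what you need (and what is true, e.g.\ by the Fichtenholz--Kantorovich--Hausdorff theorem, or by choosing one point from each block of a partition of $\mathbb R$ into $\mathfrak{c}$ two-element sets) is that \emph{some} antichain of cardinality $2^{\mathfrak{c}}$ exists --- this does not affect the argument, since existence is all you use.
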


\begin{proof}
$(i)$ This holds since having an injective monotone implies the cardinality of the quotient of the ground set under $\mathord{\sim}$ is bounded by $\mathfrak{c}$. Hence, by reflexivity of $\preceq$, the set constituted by one representative for each equivalence class has the desired properties.

$(ii)$ We can simply take the preordered space $(X,\preceq)$ that we introduced in \cite{hack2022classification}[Proposition 3] and notice that $D \coloneqq X$ is both Debreu dense and Debreu upper dense by reflexivity of $\preceq$ and that $|D| \leq \mathfrak{c}$.

$(iii)$ We can take the preordered space we introduced in \cite[Proposition 4 $(ii)$]{hack2022classification}. Since we already know from there that we have strict monotone multi-utilities with the desired cardinality, we simply ought to notice that, for any ground set equipped with the trivial ordering, any Debreu upper dense subset has the cardinality of the ground set.

$(iv)$ We can simply take a preordered space that is the union of those in \cite[Corollary 4 and Proposition 4 $(ii)$]{hack2022classification} (with the elements from different sets being unrelated) and use their proof  plus the argument in $(iii)$. 
\end{proof}

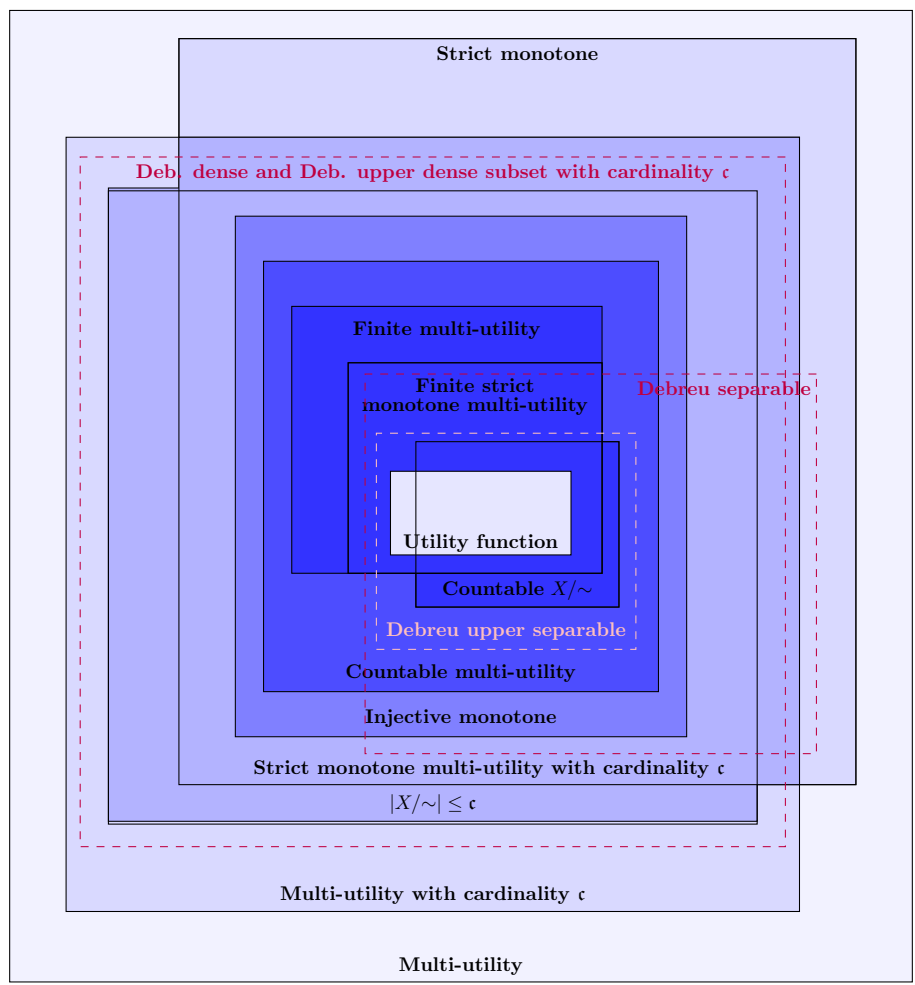
\begin{figure}[!tb]
\centering
\begin{tikzpicture}[scale=0.75, every node/.style={transform shape}]
\node[ draw, fill=blue!5 ,text height = 17cm,minimum   width=16cm, label={[anchor=south,above=.1mm]270: \textbf{Multi-utility}}]  (main) {};
\node[ draw, fill=blue!15, text height =13cm, minimum width = 12cm,xshift=1cm,yshift=1.5cm,label={[anchor=south,below=.1mm]90: \textbf{Strict monotone}}] at (main.center)  (semi) {};
\node[draw, fill=blue!15, text height =13.5cm, minimum width = 13cm,xshift=-0.5cm,yshift=-0.5cm,label={[anchor=south,above=.1mm]270: \textbf{Multi-utility with cardinality $\mathfrak{c}$}}] at (main.center)  (semi) {};
\node[ draw, text height =13cm, minimum width = 12cm,xshift=1cm,yshift=1.5cm] at (main.center)  (semi) {};
\node[ draw,fill=blue!30, text height = 11cm, minimum width = 11.5cm,xshift=-0.5cm,yshift=-0.15cm,label={[anchor=south,above=.1mm]270:$|X/\mathord{\sim}| \leq \mathfrak{c}$}] at (main.center) (active) {};
\node[ draw,fill=blue!30,yshift= 0.625cm,xshift=0.5cm,text height = 11.25cm, minimum width = 11cm,label={[anchor=south,above=.1mm]270:\textbf{Strict monotone multi-utility with cardinality $\mathfrak{c}$}}] at (main.center) (active) {};
\node[ draw, text height = 11cm, minimum width = 11.5cm,xshift=-0.5cm,yshift=-0.2cm] at (main.center)  (semi) {};
\node[ draw,fill=blue!50, text height = 9cm, minimum width = 8cm,yshift=0.35cm,label={[anchor=south,above=.5mm]270:\textbf{Injective monotone}}] at (main.center) (active) {};
\node[ draw,fill=blue!70, text height = 7.4cm, minimum width = 7cm,yshift=0.35cm,label={[anchor=south,above=0.5mm]270:\textbf{Countable multi-utility}}] at (main.center) (non) {};
\node[ draw,fill=blue!80, text height = 2.7cm, minimum width = 3.6cm,yshift=-0.5cm,xshift=1cm,label={[anchor=south,above=0.1mm]270: \textbf{Countable} $X/\mathord{\sim}$}] at (main.center) (non) {};
\node[ draw,fill=blue!80, text height = 4.5cm, minimum width = 5.5cm,yshift=1cm,xshift=-0.25cm,label={[anchor=north,below=1.5mm]90:\textbf{Finite multi-utility}}] at (main.center) (non) {};
\node[ draw,fill=blue!80, text height = 3.5cm, minimum width = 4.5cm,yshift=0.5cm,xshift=0.25cm,label={[anchor=north,below=1.5mm]90:\textbf{Finite strict}}] at (main.center) (non) {};
\node[ draw, text height = 3.5cm, minimum width = 4.5cm,yshift=0.5cm,xshift=0.25cm,label={[anchor=north,below=5mm]90:\textbf{monotone multi-utility}}] at (main.center) (non) {};
\node[ draw,fill=blue!10, text height = 1.25cm, minimum width = 3.2cm,yshift=-0.3cm,xshift=0.35cm,label={[anchor=north,below=1cm]90:\textbf{Utility function}}] at (main.center) (non) {};
\node[ draw, text height = 2.7cm, minimum width = 3.6cm,yshift=-0.5cm,xshift=1cm] at (main.center) (non) {};
\node[ draw, dashed, pink, text height = 3.6cm, minimum width = 4.6cm,yshift=-0.8cm,xshift=0.8cm,label={[anchor=south,above=0.5mm]270:\textcolor{pink}{\textbf{Debreu upper separable}}}] at (main.center) (non) {};
\node[draw,dashed,purple, text height = 12cm, minimum width = 12.5cm,xshift=-0.5cm,yshift=-0.1cm,label={[anchor=south,below=.1mm]90:\textcolor{purple}{\textbf{Deb. dense and Deb. upper dense subset with cardinality 
$\mathfrak{c}$}}}] at (main.center) (non) {};
\node[draw, dashed, purple, text height = 6.5cm, minimum width = 8cm,yshift=-1.2cm,xshift=2.3cm,label={[anchor=south,below=.1mm]55:\textcolor{purple}{\textbf{Debreu separable}}}] at (main.center) (non) {};
\end{tikzpicture}
\caption{Classification of preordered spaces according to the existence of real-valued monotones, the cardinality of the quotient set and order density properties. Aside from the results we mentioned in the caption of Figure \ref{fig:classification}, we include those in Propositions \ref{deb no mu} and \ref{new classi 2}. (Based on \cite[Figure 1]{hack2022classification} and \cite[Figure 8]{hack2022geometrical}.)}
\label{fig:classification density}
\end{figure}

Although the real-valued characterizations we have encountered until now are the ones that are most widely used, some other sorts of characterizations have been considered in the literature. In the following section, we complement the characterizations that appeared previously in this chapter by introducing another type of characterization and giving some basic relations between them.

\subsection{Real-valued representations of preorders in thermodynamics}
\label{other repres}

The study of physical systems using order structures has a long tradition, with thermodynamics being one of the major areas where they have been deployed. In fact, these approaches include
several attempts to the formalization of thermodynamics as a partial order
\cite{lieb1999physics,giles2016mathematical,landsberg1970main}.

Such approaches usually include stronger requirements. In particular, the order structure $(X,\preceq)$ is complemented by an algebraic structure $(X,\circ)$ that is connected to it and models the composition of physical systems. Aside from the algebraic properties that the real-valued functions are required to fulfill, these approaches also differ from the ones we presented in Section \ref{subsec:classi} in an order-theoretic sense.
More specifically, they are concerned with \emph{components of content}, that is, real-valued functions $g:X \to \mathbb{R}$ such that, for all $x,y \in X$, $x \preceq y$ implies $g(x)=g(y)$ \cite{roberts1968axiomatic}. The interest in these functions lies in the fact they represent the quantities that are \emph{conserved} during the transitions $\preceq$ that the system can undergo. Furthermore, another class of real-valued functions of interest there are the so-called \emph{entropy functions},
that is, real-valued functions $s:X \to \mathbb R$ such that, for all $x,y \in X$, $x \preceq y$ implies $s(x) \leq s(y)$ and $x \prec y$ implies $s(x) < s(y)$.\footnote{Note that, usually, the requirements on $g$ and $s$ are actually stronger, since some restrictions regarding the algebraic structure are also involved in their definition \cite{roberts1968axiomatic}.}

These order-theoretic approaches to thermodynamics are concerned with the existence of both a finite set of components of content $g_1,..,g_N$ and an entropy function $s$ such that, together, these functions characterize the system's transitions in the following sense \cite{giles2016mathematical,roberts1968axiomatic}:
\begin{equation}
\label{eneg cons repre}
    x \preceq y \iff
     \begin{cases}
    g_i(x) =g_i(y) \text{ } \forall i=1,..,N \text{, and } \\ s(x) \leq s(y),
    \end{cases}
\end{equation}
where the construction of the entropy function $s$ is always based on some sort of countability restriction (which can take several forms \cite{cooper1967foundations,giles2016mathematical,lieb1999physics}).

In the spirit of Section \ref{subsec:classi}, we define the thermodynamic representations following \eqref{eneg cons repre}. 

\begin{defi}[Thermodynamic representation]
\label{def: thermo repre}
A family of functions $G \cup \{ s \}$, where $g:X \to \mathbb R$ for all $g \in G$ and $s: X \to \mathbb R$, is a thermodynamic representation of a
preordered space $(X,\preceq)$ if it fulfills that
\begin{equation*}
    x \preceq y \iff
    \begin{cases}
    g(x) =g(y) \text{ } \forall g \in G \text{, and } \\s(x) \leq s(y),
    \end{cases}
\end{equation*}
for all $x,y \in X$.
\end{defi}

The purpose of this section is simply to point out, for completeness, the existence of a different kind of real-valued characterization of preorders, namely the thermodynamic representation. (We include a comparison between the two representations in Figure \ref{thermo repre}.) However, since they are nor usually addresses in the study of real-valued characterizations of preordered spaces, we take the opportunity, in the following proposition, to make some preliminary remarks regarding their relation to multi-utilities. 
Before we do so, we introduce a weaker notion of totality for preorders, namely conditional connectedness, and comment on its role in this work in general.

\begin{figure}[!tb]
\centering
\begin{tikzpicture}
    \node[other node] (1) {};
    \node[other node] (2) [right = 2cm  of 1] {}; 
    \node[other node] (3) [right = 2cm  of 2]{};  
    \node[other node] (4) [below = 1.5cm  of 1]{}; 
    \node[other node] (5) [right = 2cm  of 4]{}; 
    \node[other node] (6) [right = 2cm  of 5]{}; 
    
      \node[other node] (1B) [below = 2cm of 4]{};
    \node[other node] (2B) [right = 2cm  of 1B] {}; 
    \node[other node] (3B) [right = 2cm  of 2B]{};  
    \node[other node] (4B) [below = 1.5cm  of 1B]{}; 
    \node[other node] (5B) [right = 2cm  of 4B]{}; 
    \node[other node] (6B) [right = 2cm  of 5B]{}; 

   \path[draw,thick,->]
    (1) edge node {} (2)
    (2) edge node {} (3)
    (4) edge node {} (5)
    (5) edge node {} (6)
    (1) edge [bend left,dashed] node {} (3)
    (4) edge [bend right,dashed] node {} (6)
    
    (1B) edge node {} (2B)
    (2B) edge node {} (3B)
    (4B) edge node {} (5B)
    (5B) edge node {} (6B)
    (1B) edge [bend left,dashed] node {} (3B)
    (4B) edge [bend right,dashed] node {} (6B)
    
    (4B) edge node {} (1B)
    (4B) edge [dashed] node {} (2B)
    (4B) edge [dashed] node {} (3B)
    (5B) edge node {} (2B)
    (5B) edge [dashed] node {} (3B)
    (6B) edge node {} (3B)
    ;
    
     \draw[thick,->]
    (-1,-4.5) -- (-1,-1.5) node[midway,sloped,yshift=3mm,font=\fontsize{15}{15}\selectfont, thick, rotate =-90] {$g$};
    
    \draw[thick,->]
    (-1,-4.5) -- (2,-4.5) node[midway,sloped,yshift=-3mm,font=\fontsize{15}{15}\selectfont, thick] {$s$};
    
     \draw[thick,->]
    (-1,-11) -- (-1,-8) node[midway,sloped,yshift=3mm,font=\fontsize{15}{15}\selectfont, thick, rotate=-90] {$u_1$};
    
    \draw[thick,->]
    (-1,-11) -- (2,-11) node[midway,sloped,yshift=-3mm,font=\fontsize{15}{15}\selectfont, thick] {$u_2$};
\end{tikzpicture}
\caption{Difference between the thermodynamic and multi-utility representations with two functions. We present, above, the partial order on $\mathbb R^2$ where the preorder $(X,\preceq)$ can be embedded if a thermodynamic representation consisting of two functions $\{g,s\}$ exists and, below, the identical situation if a multi-utility $\{u_1,u_2\}$ exists. Note that an arrow from an element $w$ to an element $t$ represents $w \prec t$, where we indicate the relations between nearest neighbours by solid lines and the rest by dashed lines.}
\label{thermo repre}
\end{figure}
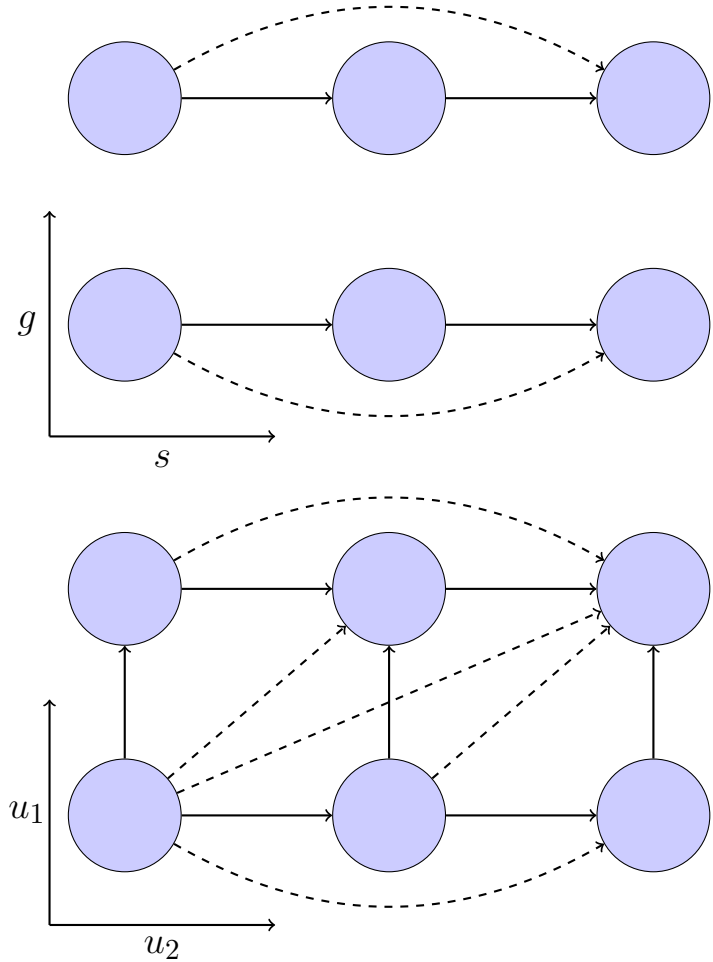

\begin{defi}[Conditionally connected preorder]
\label{def: cond connect}
A preorder $(X,\preceq)$ is conditionally connected if, for any pair $x,y \in P$ for which there exists some $z \in P$ such that $x \preceq z$ and $y \preceq z$, we have that $x$ and $y$ are comparable, i.e. $\neg(x \bowtie y)$.
\end{defi}

\begin{rem}[The role of conditional connectedness]
Although the definition of conditional connectedness is usually presented in a slightly different way in thermodynamics \cite{landsberg1970main,roberts1968axiomatic,giles2016mathematical}, we have a good reason for the specific form of Definition \ref{def: cond connect}. In particular, while the closely related notion of conditional connection between allowed transitions plays a key role in thermodynamics, we introduced the one in Definition \ref{def: cond connect} while studying an order-theoretic approach to computability (see Section \ref{sec: count const turing} and \cite{hack2022relation}). Hence, we believe it is interesting to remark how this property arises in these different disciplines.
\end{rem}

\begin{prop}
\label{thermo repre prop}
If $(X,\preceq)$ is a preordered space, then the following statements hold:
\begin{enumerate}[label=(\roman*)]
\item If $(X,\preceq)$ has a thermodynamic representation consisting of $n+1$ functions,
then it has a multi-utility with $2n+1$ functions. Moreover, if it has a thermodynamic representation consisting of infinitely many functions,
then it has a multi-utility with the same cardinality.
\item If $(X,\preceq)$ has a thermodynamic representation, then it is conditionally connected. In particular, there exist preorders that have strict monotone multi-utilities and no thermodynamic representation.
\item If $(X,\preceq)$ has a countable thermodynamic representation, then it has a thermodynamic representation consisting of two functions. 
\item If $(X,\preceq)$ is total, then it has a thermodynamic representation with a single function if and only if it has a utility function.
\end{enumerate}
\end{prop}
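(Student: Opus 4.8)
The plan is to handle the four parts in the order (i), (ii), (iii), (iv), since the constructions for the later parts reuse the observations made for the earlier ones. Throughout I write a thermodynamic representation as $G \cup \{s\}$, recalling that every $g \in G$ is a \emph{component of content} ($x \preceq y \implies g(x) = g(y)$) and $s$ is an \emph{entropy function} ($x \preceq y \implies s(x) \le s(y)$ and $x \prec y \implies s(x) < s(y)$).

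For (i), the key observation is that the equality constraint $g(x) = g(y)$ is equivalent to the conjunction of $g(x) \le g(y)$ and $-g(x) \le -g(y)$, and that both $g$ and $-g$ are monotones whenever $g$ is a component of content. I would therefore propose the family $\{s\} \cup G \cup \{-g \mid g \in G\}$ as a candidate multi-utility and check \eqref{definition mu} directly: the forward direction uses that $x \preceq y$ forces $g(x) = g(y)$ for all $g \in G$ and $s(x) \le s(y)$, while the backward direction recovers each equality $g(x) = g(y)$ from the matched pair of opposite inequalities and then invokes the thermodynamic representation. When $G$ has $n$ elements this family has $2n+1$ elements; when $G$ is infinite of cardinality $\kappa$, it has cardinality $\kappa + \kappa + 1 = \kappa$, which gives both stated bounds.

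For (ii), suppose $x,y$ admit a common upper bound $z$. Then $g(x) = g(z) = g(y)$ for every $g \in G$, and since $\le$ is total on $\mathbb R$ we may assume $s(x) \le s(y)$; the thermodynamic representation then yields $x \preceq y$, so $x$ and $y$ are comparable and $(X,\preceq)$ is conditionally connected. For the separating example I would take the three-element partial order on $\{a,b,c\}$ with $a \preceq c$, $b \preceq c$ and $a \bowtie b$: it fails conditional connectedness (so by the first half it admits no thermodynamic representation) yet carries the strict monotone multi-utility $\{u_1,u_2\}$ given on $(a,b,c)$ by $u_1=(0,1,2)$ and $u_2=(1,0,2)$, which one checks is a multi-utility whose members are both strict monotones.

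Parts (iii) and (iv) are where I expect the only genuine work. For (iii), the idea is to collapse a countable family $G=\{g_i\}$ of components of content into a single one while keeping $s$ fixed. I would consider the tuple map $T:X \to \mathbb R^{\mathbb N}$, $T(x)=(g_i(x))_i$; its image $T(X)$ lies in $\mathbb R^{\mathbb N}$, which has cardinality $\mathfrak c$, so there is an injection $h:T(X) \to \mathbb R$, and I set $g := h \circ T$. Since $x \preceq y$ forces $g_i(x)=g_i(y)$ for all $i$, the map $g$ is again a component of content, while injectivity of $h$ yields $g(x)=g(y) \iff g_i(x)=g_i(y)\ \forall i$; one then verifies that $\{g,s\}$ is a thermodynamic representation, and the degenerate case $G=\emptyset$ is handled by adjoining a constant component. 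The main obstacle here is purely the cardinality bookkeeping: one must inject $T(X)$ rather than $X$, so that the argument survives even when $|X| > \mathfrak c$. Finally, for (iv) the decisive point is that the definition always includes the entropy $s$, so a representation with a single function must have $G=\emptyset$ and reduce to $x \preceq y \iff s(x) \le s(y)$, which is precisely the definition of a utility function; conversely any utility $u$ gives the single-function representation with $s=u$ and $G=\emptyset$. The hypothesis of totality is exactly what is compatible with this equivalence, since $x \preceq y \iff s(x) \le s(y)$ can describe only a total preorder as $\le$ compares any two reals.
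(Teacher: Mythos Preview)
Your proof is correct. Parts (i), (ii) and (iv) coincide with the paper's argument essentially verbatim; the only difference in (ii) is that you supply an explicit three-point counterexample, whereas the paper simply cites a preorder from an earlier article.

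Part (iii) is where the two arguments genuinely diverge. You collapse the countable family $(g_i)_{i}$ into a single component of content by the pure cardinality observation $|\mathbb R^{\mathbb N}|=\mathfrak c$, pushing the tuple map $T(x)=(g_i(x))_i$ through an arbitrary injection $h:T(X)\hookrightarrow\mathbb R$. The paper instead builds the new component explicitly: it enumerates the upper level sets $A_{m,q}=\{x\mid g_m(x)\ge q\}$ over $m\ge 0$, $q\in\mathbb Q$, and sets $g'(x)=\sum_{n\ge 0} r^{n}\chi_{A_n}(x)$ for a suitable ratio, invoking a lemma from a companion paper to conclude that $g'(x)=g'(y)$ iff $g_n(x)=g_n(y)$ for all $n$. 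Your route is shorter and self-contained; the paper's route yields a concrete formula for $g'$ and reuses machinery already developed for the monotone classification results.
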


\begin{proof}
$(i)$ We only show the finite case, since the infinite case is equivalent. To show the result, it suffices to notice that $U \coloneqq (u_m)_{m=1}^{2n+1}$ is a multi-utility, where $(g_m)_{m=1}^n \cup \{s\}$ is a thermodynamic representation of $(X,\preceq)$, $u_m = g_m$ for $m=1,\dots,n$, $u_{n+m} = -g_m$ for $m=1,\dots,n$ and $u_{2n+1}=s(x)$. Since all functions involved in $U$ are monotonic by definition and $x \prec y$ implies $u_{2n+1}(x)< u_{2n+1}(y)$ for all $x,y \in X$, we simply have to notice that, if $x \bowtie y$ for a pair $x,y \in X$, then there exists some $m$, $1 \leq m \leq n$, such that $g_m(x) \neq g_m(y)$ and, thus, both $u_m(x)>u_m(y)$ and $u_{n+m}(x)<u_{n+m}(y)$ (or vice versa) hold. 


$(ii)$ We see directly that $(X,\preceq)$ is conditionally connected since, if $x,y \preceq z$ for $x,y,z \in X$ and $(g_i)_{i \in I}  \cup \{s\}$ is a thermodynamic representation, then we have $g_i(x)=g_i(z)=g_i(y)$ for all $i \in I$ and, thus, $\neg (x \bowtie y)$. Moreover, we can profit from this property to easily show there exist preorders with strict monotone multi-utilities that are not conditionally connected and, thus, have no thermodynamic representation.  The preorder in \cite[Proposition 2]{hack2022classification} is a particular example of this.

$(iii)$ Take a thermodynamic representation $(g_n)_{n \geq 0} \cup \{s\}$
and consider $(A_n)_{n \geq 0}$ a numeration of the family of sets $(A_{m,q})_{m \geq 0, q \in \mathbb Q}$, where $A_{m,q} \coloneqq \{x \in X|g_m(x) \geq q\}$ for all $m \geq 0$ and $q \in \mathbb Q$. By \cite[Lemma 1]{hack2022representing}, we have that $g': X \to \mathbb R$, where $g'(x) \coloneqq \sum_{n \geq 0} r^{-n} \chi_{A_n} (x)$ for all $x \in X$ and $r \in (0,\frac{1}{2})$, fulfills the property that, for all $x,y \in X$, $g'(x)=g'(y)$ if and only if $g_n(x) = g_n(y)$ for all $n \geq 0$. Hence, we have that
\begin{equation*}
 x \preceq y \iff
 \begin{cases}
    g'(x) =g'(y) \text{, and } \\s(x) \leq s(y).
    \end{cases}
 \end{equation*}
Thus, $\{g',s\}$ constitutes a thermodynamic representation.

$(iv)$ Straightforward.
\end{proof}

Note that the idea behind $(iii)$ is that the components of content $(g_n)_{n \geq 0}$ form a hyperplane in the representation space $(g_n)_{n \geq 0} \cup \{s\}$ such that the only non-trivial or \emph{irreversible} transitions (i.e. those where $x \preceq y$ and $\neg(y \preceq x)$ hold) take place in the direction that is perpendicular to that hyperplane. In particular, since we can map the hyperplane into the real line injectively, we can convert any countable set of components of content into a single one. However, the argument in $(iii)$ does not work in the case of multi-utilities, since we ought to have a pair of monotones $u,v$ inside the multi-utility such that $u(x) < u(y)$ and $v(x)>v(y)$ whenever $x \bowtie y$. Moreover, there exist preorders with countably infinite multi-utilities and no finite multi-utility, like the counterexample in \cite[Proposition 7]{hack2022classification}.
Lastly, note that $(iii)$ does not necessarily reduce the study of thermodynamic representations to the case with two functions. More specifically, the representations of interest in axiomatic thermodynamics \cite{roberts1968axiomatic,giles2016mathematical} require the components of content to satisfy some extra conditions regarding the algebraic structure $(X,\circ)$, which lie outside our scope here.


In the following section, we abandon the abstract approach that has characterized most of this chapter and provide three specific applications in order to illustrate the usefulness of our more general approach.

\section{Applications}
\label{sec:application}

In this section, we show that the abstract approach we have considered throughout this chapter can have concrete applications. Following our original motivation, we state a straightforward application to decision-making and learning systems (Section \ref{application: learning}). Moreover, we include applications in physics. In particular, we relate the results reported in this chapter with molecular diffusion (Section \ref{subsec: molec diff}) and resource theory (Section \ref{subsec: resource th}).

\subsection{The maximum entropy principle}
\label{application: learning}

As a first application, we return to the topic in Chapter \ref{chapter 2}, that is, decision-making and learning systems, and provide a specific application of our abstract approach that concerns the
maximum entropy principle.
Let us start by addressing some issues that arise when substituting the intuitive picture based on the uncertainty preorder by the maximum entropy principle.

\subsubsection{The issue with the maximum entropy principle}
\label{subsec: maximum entr}

The \emph{principle of insufficient reason} \cite{bernoulli1713ars} states that, when there is no evidence available, we should not establish preferences among different possible manifestations of a phenomenon. In particular, if we ought to assign probabilities to the outcomes of a random variable, we should assume the probability distribution that, while being consistent with the information we have about the random variable, is less biased (i.e. less concentrated around some subset of the outcome space).

Majorization allows us to determine how biased a distribution is, as we detailed in Chapter \ref{intro} when interpreting it as the preference of a casino owner. Hence, we can substitute the principle of insufficient reason by the \emph{principle of minimal majorization} or \emph{maximal uncertainty}. This does not, however, yield a unique distribution in general, that is, when given \emph{any} information that restricts the set of possible distributions. Hence, for commodity, it is usually substituted by the principle of maximum entropy, which, by neglecting the other distributions that are maximal in $\preceq_U$, outputs a single distribution that is maximal in the uncertainty preorder. (This holds provided the constraint is linear, it is not true in general \cite{hack2022representing}.).

The fact that maximum entropy is usually used has resulted in a somewhat artificially prominent role of the Boltzmann distribution in several areas of science. (A few more words in this regard can be found in Remark \ref{shannon entropy and uncertainty}.) In fact, a non-Boltzmann distribution $p$ that is maximal in $\preceq_U$ and fulfills the typical restriction that defines a Boltzmann distribution by means of the maximum entropy principle, namely a linear constraint, can be found in Figure \ref{fig:maxent}. However, applying the maximum entropy principle, the Boltzmann distribution is favored over $p$. Moreover, even when the constraint is linear, the bias introduced by the Shannon entropy may favor the Boltzmann distribution among an uncountable set of distributions that are maximal in $\preceq_U$, as the following proposition shows.

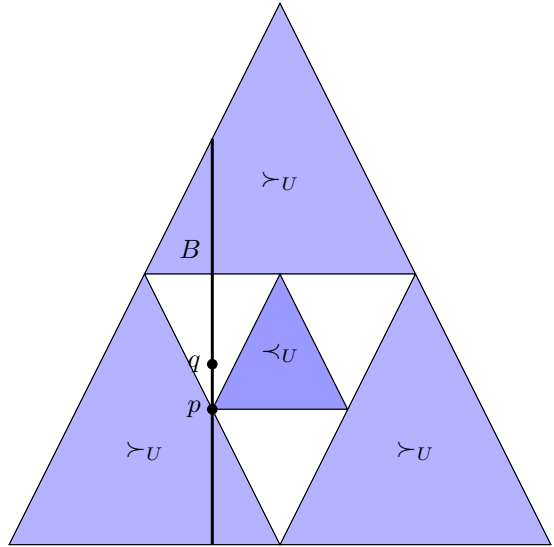
\begin{figure}[!tb]
\centering
\begin{tikzpicture}[scale=0.6]

\draw (0,6)--(-6,-6)--(6,-6)--cycle;
\draw (-3,0)--(3,0)--(0,-6)--cycle;
\draw (0,0)--(-1.5,-3)--(1.5,-3)--cycle;
\draw [fill=blue!30] (-3,0)--(0,6)--(3,0)-- node[above=1cm] {\textbf{$\succ_U$}} cycle;
\draw [fill=blue!30] (3,0)--(6,-6)-- node[above=1cm] {\textbf{$\succ_U$}} (0,-6)--cycle;
\draw [fill=blue!30] (-6,-6)-- node[above=1cm] {\textbf{$\succ_U$}} (0,-6)--(-3,0)--cycle;
\draw [fill=blue!40] (0,0)--(-1.5,-3)-- node[above=0.5cm] {\textbf{$\prec_U$}}(1.5,-3)--cycle;
\draw [fill=white] (0,0)--(-3,0)--(-1.5,-3)--cycle;
\draw [fill=white] (0,0)--(3,0)--(1.5,-3)--cycle;
\draw [fill=white] (0,-6)--(-1.5,-3)--(1.5,-3)--cycle;

\draw [line width=0.4mm] (-1.5,-6)-- (-1.5,3);
\draw [line width=0.0mm] (-1.5,1)-- node[ below left=0.05cm] {\textbf{$B$}} (-1.5,1);
\filldraw (-1.5,-3) circle (3pt) node[left=0.3mm] {$p$};
\filldraw (-1.5,-2) circle (3pt) node[left=0.3mm] {$q$};

\end{tikzpicture}
\caption{The maximum entropy principle does not yield all maximal elements of $\preceq_U$ in some $B \subseteq P_\Omega$ defined by a linear constraint. Here, we show a visualization of the 2-simplex, that is, the set of probability distributions in $P_\Omega$ with $|\Omega|=3$. Take the energy function $E$ with $E(x_1)\coloneqq 1$, $E(x_2)\coloneqq-1$, and $E(x_3)\coloneqq 0$, and let $B \coloneqq \{p \in \mathbb P_\Omega | \langle E \rangle_p = \frac{1}{4}\}$ (the vertical line). While $p=(1/2, 1/4, 1/4)$ is a maximal element of $B$, $q=(9/20,4/20,7/20)$ is also in $B$ and $H(p) < H(q)$. As a result, $p$ is a maximal element of $B$ which is not obtained via the maximum entropy principle.
(Reproduced from \cite{hack2022representing}, licensed under a Creative Commons Attribution 4.0 International License (http://creativecommons.org/licenses/by/4.0/).)}
\label{fig:maxent}
\end{figure}

\begin{prop}
\label{uncount max ent}
If $|\Omega| \geq 3$, then there exist subsets $B \subseteq \mathbb P_\Omega$ that are defined by a linear constraint and possess an uncountable set of maximal elements in $\preceq_U$.
\end{prop}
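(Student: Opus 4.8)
The plan is to exhibit an explicit linear constraint whose intersection with the simplex contains an uncountable family of mutually incomparable distributions, all of which are maximal in $\preceq_U$. The key observation is that being maximal in $\preceq_U$ is a \emph{local} condition: a distribution $p \in \mathbb P_\Omega$ fails to be maximal precisely when there is some $q \in B$ with $p \prec_U q$, i.e. $q$ is obtained from $p$ by a nontrivial Pigou--Dalton transfer (equalizing probability mass) that stays inside $B$. So I would pick a linear constraint $\mathbb E_p[E] = c$ that is ``transverse'' to all such equalizing directions along a large piece of its intersection with the simplex, so that one cannot move toward the uniform distribution without leaving $B$.

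Concretely, for $|\Omega| = 3$ I would take an energy function such as $E(x_1) = 1$, $E(x_2) = -1$, $E(x_3) = 0$ (as in Figure \ref{fig:maxent}) and consider $B \coloneqq \{ p \in \mathbb P_\Omega \mid \mathbb E_p[E] = c \}$ for a suitable constant $c$ (e.g. $c = 1/4$, or more generally any $c$ close to the extreme value $1$). Geometrically $B$ is a line segment in the $2$-simplex. I would then argue that along a subsegment of $B$ near the vertex $x_1$, every distribution $p = (p_1, p_2, p_3)$ on that subsegment has $p_1$ strictly larger than $p_2$ and $p_3$, and any majorization-increasing modification of $p$ (a transfer from the largest component $p_1$ to a smaller one, or an equalization among $p_2, p_3$) changes the value of $\mathbb E_p[E]$, hence leaves $B$. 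Therefore each such $p$ is maximal in $\preceq_U$ relative to $B$. Since this subsegment is uncountable, we obtain uncountably many maximal elements.

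For general $|\Omega| \geq 3$ I would reduce to the $|\Omega| = 3$ case: take $E$ supported (non-constantly) on three coordinates $x_1, x_2, x_3$ and constant (say zero) on the rest, and consider the subset of $B$ consisting of distributions supported on $\{x_1, x_2, x_3\}$; the earlier argument applies verbatim there, and adding the extra zero coordinates does not create new majorization relations that would spoil maximality. The main obstacle is the verification step: I need to check carefully that \emph{no} allowed transfer — not just transfers between two fixed coordinates, but any finite composition of permutations and mass-equalizing transfers that respects the linear constraint — can increase $p$ in $\preceq_U$. This amounts to showing that the constraint hyperplane, restricted to the relevant region, meets the ``upper cone'' $\{ q \mid p \prec_U q \}$ only trivially; I expect this to follow from a direct computation with the partial sums $u_i$ together with the fact that near the vertex the ordering of components is fixed, so the non-increasing rearrangement is locally just the identity permutation and $u_i$ becomes an explicit linear function whose level sets are transverse to $B$.
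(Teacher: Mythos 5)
Your choice of constraint is the same as the paper's (the set $B$ of Figure \ref{fig:maxent}, with $E(x_1)=1$, $E(x_2)=-1$, $E(x_3)=0$ and $\mathbb E_p[E]=1/4$), but the core of your maximality argument has a genuine gap: the subsegment you single out is the wrong one, and the single-transfer transversality heuristic cannot certify maximality there. Parameterize $B$ as in the paper by $p_\lambda=\frac{1}{8}(5-\lambda,3-\lambda,2\lambda)$, $\lambda\in[0,3]$. The part of $B$ nearest the vertex $x_1$ is $\lambda<1$, where indeed $p_1>p_2>p_3$; yet none of these points is maximal, since for every $\lambda<1$ one has $p_\lambda\prec_U(1/2,1/4,1/4)\in B$ (the partial sums $1/2\le(5-\lambda)/8$ and $3/4\le 1-\lambda/4$ are strict). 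The dominating move is exactly a composition your parenthetical list omits: transfer mass from $x_1$ to $x_3$ and from $x_2$ to $x_3$ at equal rates, which preserves $\mathbb E_p[E]$ (the changes $-\delta$ and $+\delta$ cancel) while strictly increasing uncertainty. So transversality of the individual level sets of $u_1,u_2$ to the line $B$ proves nothing: what matters is whether some direction along $B$ increases $u_1$ and $u_2$ simultaneously, and on your claimed region such a direction exists. The maximal set is the middle window $\lambda\in[1,5/3]$, characterized not by ``$p_1$ is the strict maximum'' but by ``$p_1$ is the maximum \emph{and} $p_2$ is the minimum'', and the correct argument is a one-line global one rather than a local computation: if $\max p=p_1$ and $\min p=p_2$, then any $q\in B$ with $p\preceq_U q$ satisfies $q_1\le\max q\le p_1$ and $q_2\ge\min q\ge p_2$, while the constraint forces $q_1-q_2=p_1-p_2$, hence $q=p$. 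This immediately yields the uncountable family that the paper asserts.

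A second gap is the padding step for $|\Omega|>3$: it is not true that ``adding the extra zero coordinates does not create new majorization relations that would spoil maximality''. Once extra zero-energy coordinates exist, mass can be spread onto them without affecting the constraint; for $|\Omega|=4$, $(1/2,1/4,1/4,0)\in B$ is strictly dominated by $(3/8,1/8,1/4,1/4)\in B$, so elements that are maximal on the three-point face need not be maximal in $\mathbb P_\Omega$. (The paper's own parenthetical treatment of the case $|\Omega|>3$ glosses over the same subtlety.) The repair is to choose witnesses adapted to the full simplex, e.g. distributions with $p_1=\max p$, $p_2=\min p$ and all zero-energy coordinates equal to a common value $m\in[p_2,p_1]$; then the min/max argument above pins down $q_1,q_2$, and comparing the remaining partial sums with $p_1+(k-1)m$ forces the middle coordinates of $q$ to equal $m$ as well, so these distributions are maximal and still form an uncountable family.
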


\begin{proof}
We simply take $B$ the set defined in Figure \ref{fig:maxent} (adding to its definition, if needed, $E(x_i)=0$ for all $i$ such that $3<i \leq |\Omega|$). It is then easy to see that, if we parameterize $B=(p_\lambda)_{\lambda \in [0,3]}$ by
\begin{equation*}
p_\lambda = \frac{1}{8}(5-\lambda,3-\lambda,2 \lambda,0,\dots,0) \text{ for all } \lambda \in [0,3]    
\end{equation*}
and let the number of zeros be equal to $|\Omega|-3$, then $B^\preceq_M=(p_\lambda)_{\lambda \in [1,5/3]}$, where $B^\preceq_M$ stands for the set of distributions that are maximal in the uncertainty preorder when restricted to $B$.
\end{proof}

If $|\Omega|=2$, then $(\mathbb P_\Omega, \preceq_U)$ has no incomparable elements and, hence, Proposition \ref{uncount max ent} does not hold. To conclude this section, let us briefly note how one of the general results we obtained allows us to extend the uniqueness property of the maximum entropy principle.

\subsubsection{Improving on the maximum entropy principle}

One of the relations in Theorem \ref{theo class} exemplifies that we can improve on the maximum entropy principle. In particular, combining it with Proposition \ref{optimization charac of R-P}, we have that injective optimization principles exist for preorders with a countable multi-utility. Hence, there are functions that (up to equivalence) preserve the uniqueness property of the maximum entropy principle for any subset of $\mathbb P_\Omega$. Moreover, applying \cite[Proposition 11 $(ii)$]{hack2022representing}, such functions yield maxima when optimized over any compact set, in particular, over a linear subset of $\mathbb P_\Omega$. Hence, these functions allow us to (essentially) preserve the desirable properties of the maximum entropy principle for any bound that defines a compact subset of $\mathbb P_\Omega$. (Note that Shannon entropy does not possess this uniqueness property on \emph{any} compact set, as we showed in \cite[Lemma 4 (ii)]{hack2022representing}.)   

\subsection{Entropy and the second law in molecular diffusion}
\label{subsec: molec diff}

As a second application of the abstract results in this chapter, we consider the relation between the second law of thermodynamics and molecular diffusion. In particular, in this section, we will briefly enter the realm of physics in order to motivate the use of certain order-theoretic tools in thermodynamics. Hence, a remark regarding the scope of our work seems appropriate.

\begin{rem}[The scope of this work regarding physics]
\label{no physics scope}
Although we have referred to thermodynamics in several sections of this work, the reader may notice we have only done so in order to either expose the origin of certain tools or to exemplify a field where these concepts can be applied. In fact, the specific physical picture that supports the use of these tools lies outside the scope of this work. This limitation is prominent in this section, where we briefly explore the complexity of a non-standard version of the second law of thermodynamics. We are aware of the subtleties involving the second law and, hence, we do not intend to make any claim in this direction. However, we believe it is important to give some naive physical background, since the purpose of this section is to show that, actually, our tools can be useful in the discussion of certain proposals that have been put forward by the physical community \cite{ruch1975diagram,ruch1976principle,mead1977mixing,ruch1978mixing}. In summary, this section should be interpreted as a contribution towards understanding the structure of a proposed physical model with the sole aim of illustrating an application of the tools we developed along this section.  
\end{rem}

Before introducing the physical applications of the order-theoretic framework we developed in this chapter, let us introduce two elementary physical notions that are important for molecular diffusion. It should be noted that we do not intend to provide a precise definition, rather to give the reader a rough estimate of the idea that is sufficient for our purposes in the following.

\subsubsection{Molecular diffusion}

The basic picture we have in mind when referring to molecular diffusion is that of a box that contains a gas, i.e. a set of \emph{balls} moving within the box and colliding with each other (and with the walls of the box). We say our system (the box with the gas) is \emph{isolated} if there is no external influence acting on it, that is, if the balls have no preference regarding what position in the box they occupy. On the contrary, we say our system is \emph{in contact with a heat bath} whenever these balls do have a preference (say in one axis) concerning what position in the box they occupy. Moreover, we consider an (imaginary) division of the box into a finite set of compartments $\Omega$, and take the distributions $p \in \mathbb P_\Omega$ as description of the system's state, where $p_i$ is the fraction of molecules in compartment $i$ for $i=1,\dots,|\Omega|$.
As we will see in the following, for our purposes here, the difference between an isolated system and one in contact with a heat bath is simply a variation of the model we choose for the transitions between the distributions that the gas may undergo.

\subsubsection{Isolated systems: The principle of increasing mixing character}

In this section, we use the term \emph{second law} of thermodynamics to refer to the statement that, in an isolated system described by $\mathbb P_\Omega$, a thermodynamic transition between probability distributions $p,p' \in \mathbb P_\Omega$ takes place if and only if there is an increase in Shannon entropy $H$. Starting from this premise, we expose a critique of this statement that led Ruch \cite{ruch1975diagram,ruch1976principle,ruch1978mixing} and others \cite{mead1977mixing,zylka1985note,alberti1982stochasticity,uhlmann1971satze,alberti1981dissipative}) to propose majorization instead of the second law as the fundamental model that governs thermodynamic transitions. (It should be noted that this critique is centered around the case with a few molecules. The reader interested in more details, which lie outside the scope of this work, may find \cite{gour2015resource,lostaglio2019introductory} of interest.) After briefly providing some physics' background, we proceed to the main result in this section: We show that, when considering the uncertainty preorder (which we call \emph{disorder}) as the driving force behind molecular diffusion, an infinite number of functions are required to replicate the properties attributed to Shannon entropy via the second law of thermodynamics. Hence, we show that, in this model, the complexity of molecular diffusion is infinite.

Although our purpose here is not to argue against the second law of thermodynamics (which, as discussed in Remark \ref{no physics scope}, lies outside our scope here), but to explore the complexity of a preorder that has been proposed to fundamentally underpin  molecular diffusion, we (briefly) recall the reasoning that led Ruch to formulating a statement stronger than the increase of Shannon entropy, namely, the \emph{principle of increasing mixing character} \cite{ruch1976principle,ruch1978mixing,ruch1975diagram}. 

The principle of increasing mixing character postulates that, as time evolves, the distribution that describes an isolated thermodynamic system tends to become more mixed, that is, any statistical preference over specific states that the system may have held tends to vanish. Thus, its distribution becomes less biased over time. More specifically, Ruch proposed that the transitions an isolated thermodynamic system may undergo are those allowed by the uncertainty preorder, since it relates the distributions over some finite set according to how biased or concentrated around some points these distributions are. In particular, a transition from $p \in \mathbb P_\Omega$ to $p' \in \mathbb P_\Omega$ is possible provided $p' \preceq_M p$, where $\preceq_M$ denotes the majorization preorder.

\begin{rem}[Majorization and molecular diffusion]
The reader may object that some of the transitions allowed by majorization would, actually, not take place in molecular diffusion. In particular, for the gas with three compartments we have considered all along, we have that $p \preceq q$ for $p=(1/4,1/2,1/4)$ and $q=(3/8,1/2,1/8)$. Nonetheless, reaching $p$ from $q$ by molecular diffusion does not seem likely, since we would expect the number of molecules in the central compartment to decrease given it is in contact with the other two compartments, which actually exchange molecules. However, the main result in this section (Theorem \ref{dim majo}) still holds true, namely, the analogous of a finite family of second laws of disorder (which we define below) do not exist for molecular diffusion. This is the case since the transitions in the ordered subset of majorization that we used to prove Theorem \ref{dim majo} in \cite[Theorem 1]{hack2022disorder} remain the same in this improved model of molecular diffusion where exchanges of molecules between non-adjacent compartments are forbidden.
\end{rem}

In order to discuss the physical intuition that led to proposing majorization instead of the second law to study thermodynamic transitions, consider a simple example by Mead \cite{mead1977mixing} (which we reproduced in \cite{hack2022disorder}). Take a gas in a box that is divided into three compartments by two walls. Consider two possible states \textbf{A} and \textbf{B} (see Figure \ref{gas trans} for a representation): 
\begin{enumerate}[label=\textbf{(\Alph*)}]
\item Half of the molecules are in the first and the other half are in the second compartment. 
\item $2/3$ of the molecules are in the first and $1/6$ are in both the second and third compartments.
\end{enumerate}
In this scenario, we can ask the following question:
\newline

If we prepare the system in the state \textbf{A} and eliminate the walls that separate the compartments, will the system evolve to state \textbf{B}?
\newline

We would answer negatively, since it is very unlikely that molecular diffusion concentrates the \emph{balls} in the first compartment.
However, the Shannon entropy of \textbf{B} is larger than that of \textbf{A}. Hence, the \emph{second law} would suggest that a very unlikely transition happens spontaneously.

\begin{figure}[!tb]
\centering
\begin{tikzpicture}
\node[rounded corners, draw, text height = 2cm, minimum width = 9cm] {};
\node[rounded corners, draw, text height = 2cm, minimum width = 9cm,yshift=-3.5cm] {};

\draw [dashed] (-1.5,1) -- (-1.5,-1);
\draw [dashed] (1.5,1) -- (1.5,-1);
\draw [dashed] (-1.5,-2.5) -- (-1.5,-4.5);
\draw [dashed] (1.5,-2.5) -- (1.5,-4.5);

\node[small dot blue] at (-3,0) {};
\node[small dot blue] at (-4,0.75) {};
\node[small dot blue] at (-2.5,0.75) {};
\node[small dot blue] at (-3.5,-0.85) {};
\node[small dot blue] at (-1.8,-0.3) {};
\node[small dot blue] at (-3.8,-0.3) {};

\node[small dot blue] at (-3+3,0) {};
\node[small dot blue] at (-4+3,0.75) {};
\node[small dot blue] at (-2.5+3,0.75) {};
\node[small dot blue] at (-3.5+3,-0.85) {};
\node[small dot blue] at (-1.8+3,-0.3) {};
\node[small dot blue] at (-3.8+3,-0.3) {};

\node[small dot blue] at (-3,0-3.5) {};
\node[small dot blue] at (-4,0.75-3.5) {};
\node[small dot blue] at (-2.5,0.75-3.5) {};
\node[small dot blue] at (-3.5,-0.85-3.5) {};
\node[small dot blue] at (-1.8,-0.3-3.5) {};
\node[small dot blue] at (-3.8,-0.3-3.5) {};
\node[small dot blue] at (-2.3,-4) {};
\node[small dot blue] at (-2.3,-3.3) {};

\node[small dot blue] at (-3+3,0-3.5) {};
\node[small dot blue] at (-4+3,0.75-3.5) {};

\node[small dot blue] at
(-3.5+6,-3.3) {};
\node[small dot blue] at (-1.8+6,-0.3-3.5) {};

\draw [-implies,double equal sign distance] (-1,-1.2) -- (-1,-2.2);

\draw (-1.3,-2) -- (-0.7,-1.4);

\draw[thick,->]
    (1,-1.2) -- (1,-2.2) node[midway,sloped,yshift=-3mm,font=\fontsize{15}{15}\selectfont, thick, rotate=-90] {$H$};

\end{tikzpicture}
\caption{Two molecular states for a gas in a box where the increase in Shannon entropy $H$ allows transitions which are very unlikely. More specifically, while the top state has lower Shannon entropy, we do not expect to observe a transition from it to the bottom state. (Reproduced from \cite{hack2022disorder}.)}
\label{gas trans}
\end{figure}
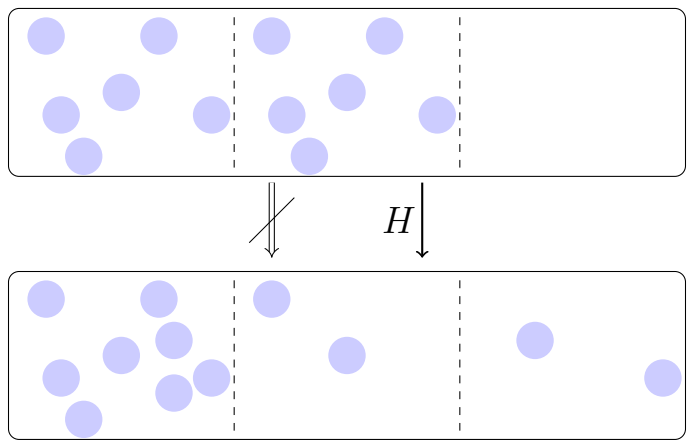

\subsubsection{The second laws of disorder}

Now that we have exposed the improvement on the second law proposed by disorder, we would like to obtain the second law of disorder, that is, a family of functions that emulate the the role of entropy in the second law. We aim to keep the following properties \cite{hack2022disorder}: 
\begin{enumerate}[label=(\roman*)]
\item if a transition is reversible, then entropy remains unchanged, 
\item if a transition is irreversible, then entropy increases, and
\item if a transition is impossible, then entropy decreases.
\end{enumerate}
Following these properties, we are looking for a family of functions $(f_i)_{i \in I}$, $f_i: \mathbb P_\Omega \to \mathbb R$ for all $i \in I$, such that $f_i(p) = f_i(q)$ for reversible transitions $p\sim_M q$, $f_i(p) < f_i(q)$ for irreversible transitions $p \prec_M q$ and at least one function strictly decreases (i.e. there exists some $i\in I$ such that $f_i(p) > f_i(q)$) for forbidden transitions $\neg( p \preceq_M q)$. This motivates the following definition we introduced in \cite{hack2022disorder}.

\begin{defi}[Family of second laws of disorder]
\label{family second}
If $\Omega$ is a finite set, a family of functions $(f_i)_{i \in I}$, $f_i: \mathbb P_\Omega \to \mathbb R$ for all $i \in I$, is a family of second laws of disorder provided it fulfills the following properties:
\begin{enumerate}[label=(\roman*)]
\item if $p\sim_M q$, then $f_i(p)=f_i(q)$ for all $i\in I$, 
\item if $p\prec_M q$, then $f_i(p)<f_i(q)$ for all $i\in I$, and
\item if $p \bowtie_M q$, then there exist $i,j\in I$ with $f_i(p)>f_i(q)$ and $f_j(p)<f_j(q)$.
\end{enumerate}
\end{defi}

We can think of Definition \ref{family second} as a (partial) extension of what Lieb and Ingvason call the \emph{entropy principle} \cite{lieb1998guide,lieb1999physics}. More specifically, as an extension where, instead of only considering pairs of states related by the underlying order relation, we consider arbitrary pairs of states.  

As we remarked in \cite{hack2022disorder}, a family of second laws of disorder for $\mathbb P_\Omega$ is equivalent to a strict monotone (or Richter-Peleg) multi-utility. Hence, the question we ask is whether strict monotone multi-utilities exist for disorder and, moreover, what the minimal number of functions in such a family would be. For example, the first candidate for such a family that comes to mind, $\mathcal U$, is not a strict monotone multi-utility, as we remarked in Section \ref{opt preorders}. In fact, if $|\Omega| =2$, then the scenario where only one function is needed, like in the second law, is recovered. However, if $|\Omega| \geq 3$, then a countably infinite number of functions is actually needed in order for Definition \ref{family second} to be fulfilled. We state that this in the following theorem.

\begin{theo}
\label{dim majo}
The following statements hold:
\begin{enumerate}[label=(\roman*)]
\item If $|\Omega|=2$, then 
\begin{equation*}
    p \preceq_M q \ \iff \ -H(p) \leq -H(q).
\end{equation*}
\item If $|\Omega| \geq 3$, then the smallest family of second laws of disorder is countably infinite.
\end{enumerate}
\end{theo}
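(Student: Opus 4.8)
The plan is to translate the statement into the language of Section~\ref{sec:mu charact}. As recorded just above Definition~\ref{family second}, a family $(f_i)_{i\in I}$ is a family of second laws of disorder for $\mathbb P_\Omega$ precisely when it is a strict monotone (Richter--Peleg) multi-utility of $(\mathbb P_\Omega,\preceq_M)$; since each such $f_i$ is constant on $\sim_M$-classes (permutations), it factors through the sorted simplex $\Delta^{\downarrow}_\Omega:=\mathbb P_\Omega/\mathord{\sim_M}$, a partial order, so it suffices to determine the least cardinality of a strict monotone multi-utility of $(\Delta^{\downarrow}_\Omega,\preceq_M)$. For $(i)$, when $|\Omega|=2$ a distribution is determined by its largest entry $p^{\downarrow}_1\in[\tfrac12,1]$, with $p\preceq_M q\iff p^{\downarrow}_1\le q^{\downarrow}_1$, so $\preceq_M$ is total; Shannon entropy is a strictly decreasing function of $p^{\downarrow}_1$ on $[\tfrac12,1]$, hence $-H$ is a monotone with $p\prec_M q\Rightarrow -H(p)<-H(q)$ and $p\sim_M q\Rightarrow -H(p)=-H(q)$, i.e.\ a utility for the total preorder $\preceq_M$. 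This yields $p\preceq_M q\iff -H(p)\le -H(q)$ and exhibits $\{-H\}$ as a (smallest possible) family of second laws of disorder.

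For the upper bound in $(ii)$ I would simply invoke the countably infinite strict monotone multi-utility of $(\mathbb P_\Omega,\preceq_M)$ recorded in Section~\ref{sec:mu charact} (the functions $u_i-q_nH$ with $1\le i\le|\Omega|-1$, $n\ge0$, from \cite[Proposition~4.1]{alcantud2016richter}), so the least cardinality is at most $\aleph_0$. The real content is the lower bound: for $|\Omega|\ge3$ there is no \emph{finite} strict monotone multi-utility of $(\Delta^{\downarrow}_\Omega,\preceq_M)$. I would first reduce to $|\Omega|=3$ by restriction: on the distributions supported on three fixed outcomes only the constraints $u_1,u_2$ are active, so $\preceq_M$ there agrees with majorization on $\mathbb P_3$, and restricting every $f_i$ gives a strict monotone multi-utility of $(\mathbb P_3,\preceq_M)$. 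Next, $(a,b,c)\mapsto(a,a+b)$ is an order-isomorphism from $(\Delta^{\downarrow}_3,\preceq_M)$ onto a two-dimensional region $R\subseteq\mathbb R^2$ carrying the coordinatewise order, since $(a,b,c)\preceq_M(a',b',c')\iff a\le a'$ and $a+b\le a'+b'$; in particular $R$ contains a non-degenerate closed square. So the theorem reduces to showing that a planar region with the coordinatewise order admits no finite strict monotone multi-utility.

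This planar statement is the main obstacle, and I would attack it as follows. Suppose $\{w_1,\dots,w_M\}$ is a finite strict monotone multi-utility of $([c_0,d_0]^2,\le)$. Along any horizontal segment $L_a=[c_0,d_0]\times\{a\}$ — a maximal chain order-isomorphic to a real interval — each $w_k$ is strictly increasing, hence an order-isomorphism onto its image; an image of an interval is a Dedekind-complete, order-dense-in-itself subset of $\mathbb R$ and therefore contains a non-degenerate real interval, which in turn contains a rational. Since there are $\mathfrak c$-many segments $L_a$ but only countably many rationals, for some $k$ some rational value is attained by $w_k$ on an uncountable set of points, necessarily a $\le$-antichain (a strict monotone cannot repeat a value on comparable points). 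On that antichain $w_k$ is useless as a separator, so the multi-utility condition forces the remaining $M-1$ functions to strictly separate every one of the uncountably many pairs; for $M=2$ this is already impossible (a single function cannot be simultaneously strictly larger at one point and strictly smaller at the other of a pair), and for larger $M$ one has to descend again. The delicate point — and where I expect to have to work — is that this level-set antichain is the graph of a strictly decreasing function, not itself a two-dimensional region, so the descent must be set up with a cleverer choice of sub-structure; this is exactly what is done in \cite{hack2022disorder} via an explicit totally ordered subset of majorization with suitable incomparabilities attached (the same subset that survives the adjacent-compartments-only refinement noted in the remark on majorization and molecular diffusion), and the conclusion also matches the computation of the (countably infinite) Debreu dimension of the quotient of majorization in \cite{hack2022geometrical} alluded to after Theorem~\ref{finite geo infinite deb}. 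Combining the two bounds gives that the smallest family of second laws of disorder is countably infinite.
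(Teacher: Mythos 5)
Parts $(i)$ and the upper bound of $(ii)$ are fine and coincide with what the paper does (totality of $\preceq_M$ for $|\Omega|=2$ with $-H$ as a utility, and the countable strict monotone multi-utility $u_i-q_nH$ from \cite{alcantud2016richter}). The genuine gap is the lower bound of $(ii)$, which is the entire content of the theorem, and your sketch of it breaks at two points. First, the step ``each $w_k$ is strictly increasing along a horizontal segment, hence its image contains a non-degenerate interval and therefore a rational'' is false: the theorem is explicitly proved \emph{without} any continuity assumption, and a strictly increasing but discontinuous function on an interval (e.g.\ $x\mapsto x+\sum_{q_n\le x}2^{-n}$ over an enumeration $(q_n)$ of the rationals) has an image that is order-isomorphic to the interval --- so Dedekind complete and dense in itself \emph{as an abstract order} --- yet contains no real interval and may meet $\mathbb{Q}$ nowhere; your pigeonhole over the continuum of segments therefore does not produce the uncountable level set. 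Second, even granting an uncountable antichain $A$ on which some $w_k$ is constant, the ``descent'' cannot proceed as envisioned: $A$ carries no residual order structure to recurse on, and a continuum-sized planar antichain is separated by just two functions, so discarding $w_k$ and restricting to $A$ gives no contradiction once $M\ge 3$. You acknowledge exactly this and defer to \cite{hack2022disorder}; since that deferred step \emph{is} the theorem, the proposal does not constitute a proof.

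For comparison, the paper's argument (split between \cite{hack2022geometrical} and \cite{hack2022disorder}) does not use level sets at all. It embeds into majorization an order-isomorphic copy of the two-chain partial order of Figure \ref{fig:counterex}, realized inside the simplex as the two parallel segments of Figure \ref{fig:majo} (in your coordinates $(p_1^{\downarrow},p_1^{\downarrow}+p_2^{\downarrow})$, two horizontal segments at different heights: the lower point with parameter $t$ lies below the upper point with parameter $s$ iff $t\le s$, and never conversely). If $w_1,\dots,w_M$ were a family of second laws, then for every parameter $t$ the comparable pair (lower $t$, upper $t$) yields a nonempty gap $\bigl(w_i(\text{lower }t),\,w_i(\text{upper }t)\bigr)$ for each $i$, in which one chooses a rational; the incomparability of (upper $s$, lower $t$) for $s<t$ forces, for some coordinate, the chosen rationals at $s$ and $t$ to differ, so $t\mapsto(\text{rational tuple})$ is injective from an uncountable set into $\mathbb{Q}^M$ --- a contradiction. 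Your reductions (restriction to three outcomes, the order isomorphism onto a planar region with the coordinatewise order) are correct and compatible with this, since the square contains the two required segments; but to close your planar lemma you need this two-chain counting argument, not the level-set pigeonhole.
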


\begin{rem}[Applications of the classification of preorders]
\label{useful monos}
Notice Theorem \ref{dim majo} exemplifies the usefulness of
the general classification of preorders we considered during this chapter. In particular, the path that led to it started with the construction of a partial order $(X,\preceq)$ that had a finite geometrical dimension and a countably infinite Debreu dimension (as reported in Theorem \ref{finite geo infinite deb}, see Figure \ref{fig:counterex} for a representation) and naturally led to the question whether our running example throughout this work, majorization, presented the same properties. The last step to obtaining it was, then, to construct a subset of majorization order-isomorphic to $(X,\preceq)$ (see \cite{hack2022disorder}). Hence, it was the interest in the more general question in Theorem \ref{finite geo infinite deb} which led to obtaining Theorem \ref{dim majo}.
\end{rem}


It should be noted that the proof of Theorem \ref{dim majo}, which we split between \cite{hack2022geometrical} and \cite{hack2022disorder}, is purely combinatorial, given that we did not require any sort of continuity property on the functions that constitute the family of second laws of disorder. However, in case we were only interested in that particular case, we can apply \cite[Proposition 5.2]{alcantud2016richter} (which is a consequence of \cite[Theorem 1]{schmeidler1971condition}) to show that finite families of second laws of disorder composed of continuous functions do not exist. (The notion of continuity we are referring to is the one obtained when taking the standard topology for both $\mathbb P_\Omega$ and $\mathbb R$.) 
In the following section, we address the same scenario we considered in this one for a system that, instead of being isolated, is in contact with a heat bath. 

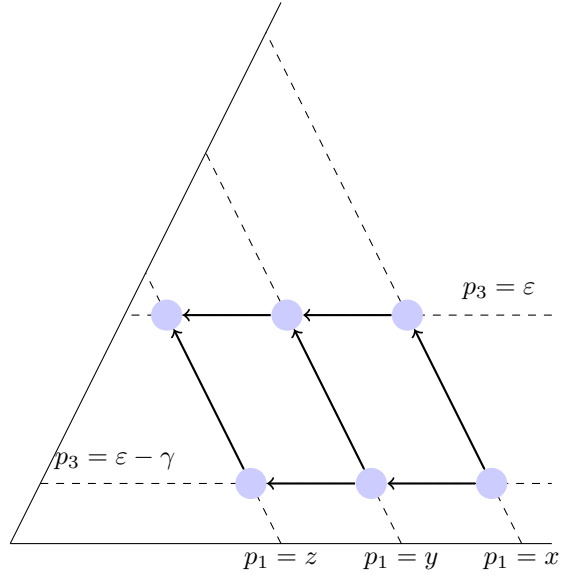
\begin{figure}[!tb]
\centering
\begin{tikzpicture}[scale=0.8]

\draw (-5,-5)--(4,-5);

\draw (-5,-5)--(-0.5,4);


\draw [dashed] (3.5,-5) node[below] {$p_1=x$} -- (-3/4,3/2+2);

\draw [dashed] (1.5,-5) node[below] {$p_1=y$} -- (-7/4,7/2-2);

\draw [dashed] (-0.5,-5) node[below] {$p_1=z$} -- (-11/4,5-11/2);

\draw [dashed] (4,-1.2) node[above left=0.13cm] {$p_3=\varepsilon$} -- (-3.1,-1.2);

\draw [dashed] (-9/2,-4) node[above right=0.1 cm] {$p_3=\varepsilon - \gamma$}-- (4,-4);


\node[small dot 3] at (-12/5,-1.2) (1u) {};

\node[small dot 3] at (-2/5,-1.2) (2u) {};

\node[small dot 3] at (8/5,-1.2) (3u) {};


\node[small dot 3] at (-1,-4) (1d) {};


\node[small dot 3] at (1,-4) (2d) {};

\node[small dot 3] at (3,-4) (3d) {};

\path[draw,thick,->]
    (1d) edge node {} (1u)
    (2d) edge node {} (2u)
    (3d) edge node {} (3u)
    (3d) edge node {} (2d)
    (2d) edge node {} (1d)
    (3u) edge node {} (2u)
    (2u) edge node {} (1u)
    ;

\end{tikzpicture}
\caption{Representation of the subset of majorization that is order isomorphic to the partial order in Figure \ref{fig:counterex}. We introduced this subset in \cite[Theorem 1]{hack2022disorder} in order to prove Theorem \ref{dim majo}. 
We represent the subset inside the 2-simplex as a directed graph, where $p$ is majorized by $q$ whenever there exists a path following the arrows from $p$ to $q$.
This figure can also be thought to represent the ordered subset we introduced in \cite[Theorem 2]{hack2022disorder} in order to prove Theorem \ref{d-majo dim} $(ii)$. (Reproduced from \cite{hack2022disorder}.)}
\label{fig:majo}
\end{figure}

\subsubsection{Systems in contact with a heat bath: The principle of decreasing mixing distance}

Continuing along the lines we just presented above, we consider now the situation where the thermodynamic system we are interested in is not isolated but in contact with a heat bath. In this case, analogously to the situation where the system is isolated, Ruch proposed the \emph{principle of decreasing mixing distance} \cite{ruch1976principle,ruch1978mixing}. The fundamental idea behind this principle is that, in the presence of a heat bath, the long-term behaviour of the gas follows a Boltzmann distribution $p^*$ \eqref{boltzmann} and the transitions the system undergoes decrease the uncertainty relative to $p^*$. Hence, like majorization describes the principle of increasing mixing character, $d$-majorization with $d=p^*$ describes the principle of decreasing mixing distance. Hence, in the following, we refer to $d$-majorization as $d$-disorder.

\subsubsection{The second laws of d-disorder}

In the context of $d$-disorder, we are also interested in finding a family of functions that emulate the role of entropy in the second law. As we argued before definition \ref{family second} in the case of disorder, we arrived at the following definition in \cite{hack2022disorder}.

\begin{defi}[Family of second laws of $d$-disorder]
\label{d second laws}
If $\Omega$ is a finite set, $d \in \mathbb P_\Omega$ and $d(x)>0$ for all $x \in \Omega$, then a family of functions $(f^d_i)_{i \in I}$, $f^d_i: \mathbb P_\Omega \to \mathbb R$ for all $i \in I$, is a family of second laws of $d$-disorder provided it fulfills the following properties:
\begin{enumerate}[label=(\roman*)]
\item if $p\sim_d q$, then $f^d_i(p)=f^d_i(q)$ for all $i\in I$, 
\item if $p\prec_d q$, then $f^d_i(p)<f^d_i(q)$ for all $i\in I$, and
\item if $p \bowtie_d q$, then there exist $i,j\in I$ with $f^d_i(p)>f^d_i(q)$ and $f^d_j(p)<f^d_j(q)$,
\end{enumerate}
where $\preceq_d$ denotes $d$-majorization.
\end{defi}

Definition \ref{d second laws} corresponds, analogously to the case of disorder, to a strict monotone multi-utility for $\preceq_d$. Hence, the question we ask is whether strict monotone multi-utilities exist for $d$-disorder and, moreover, what the minimal number of functions in such a family would be. In Theorem \ref{d-majo dim}, we state that, if either $|\Omega| =2$ and $d$ is not the uniform distribution or
$|\Omega| \geq 3$, then a countably infinite number of functions is actually needed. 

\begin{theo}
\label{d-majo dim} 
If $d$ is a distribution over a finite set $\Omega$ such that $d(x)>0$ for all $x \in \Omega$, then the following statements hold:
\begin{enumerate}[label=(\roman*)]
\item If $|\Omega|=2$ and $d$ is not the uniform distribution, then the smallest family of second laws of $d$-disorder is countably infinite.
\item If $|\Omega| \geq 3$, then the smallest family of second laws of $d$-disorder is countably infinite.
\end{enumerate}
\end{theo}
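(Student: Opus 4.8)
The plan is to reduce the statement to a computation about the minimal cardinality of a strict monotone (Richter--Peleg) multi-utility. As noted just before Definition \ref{d second laws}, a family of second laws of $d$-disorder $(f^d_i)_{i\in I}$ is exactly a strict monotone multi-utility for $(\mathbb P_\Omega,\preceq_d)$. So in both cases it suffices to show that (a) $(\mathbb P_\Omega,\preceq_d)$ admits a countable strict monotone multi-utility, and (b) it admits no finite one.

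For (a), I would use the order embedding $\Lambda_d$ of \eqref{eq: q-majo II}: for rational $d$ it satisfies $p \preceq_d q \iff \Lambda_d p \preceq_M \Lambda_d q$ into $(\mathbb P_{\Omega'},\preceq_M)$ for a suitable finite $\Omega'$ (the general case follows by the limiting argument from \cite{brandao2015second}). By \cite[Proposition 4.1]{alcantud2016richter}, $(\mathbb P_{\Omega'},\preceq_M)$ has a countable strict monotone multi-utility $(v_i)_{i}$; one then checks directly from \eqref{eq: q-majo II} that $(v_i\circ\Lambda_d)_{i}$ is a strict monotone multi-utility for $\preceq_d$ (the multi-utility and the strict-monotonicity conditions transfer along the embedding, since $p\prec_d q$ forces $\Lambda_d p\prec_M\Lambda_d q$). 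Hence at most countably many functions are ever needed.

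For (b), the crux, I would exhibit a subset $Y\subseteq\mathbb P_\Omega$ such that $(Y,\preceq_d|_Y)$ is order-isomorphic to a partial order with no finite strict monotone multi-utility --- concretely, the partial order $(\mathbb R\setminus\{0\},\preceq)$ of Figure \ref{fig:counterex}, which has finite geometrical dimension but countably infinite Debreu dimension (Theorem \ref{finite geo infinite deb} and \cite[Corollary 3]{hack2022geometrical}), and in particular no finite strict monotone multi-utility. Because strict monotone multi-utilities restrict to sub-preorders, the non-existence of a finite one on $Y$ forces the same for $(\mathbb P_\Omega,\preceq_d)$, and together with (a) this is precisely ``countably infinite''. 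For part $(ii)$ ($|\Omega|\ge 3$) the required $Y$ is the $\preceq_d$-analogue of the subset of majorization drawn in Figure \ref{fig:majo}: one takes a two-parameter family of distributions --- those with a prescribed $d$-renormalized value in one coordinate, the remaining coordinates varying --- and verifies, by writing out the defining inequalities of $\preceq_d$ (equivalently, applying $\Lambda_d$ and reading off the majorization inequalities), that the induced order is the target one; the construction works because near such a family $\preceq_d$ is, after the $\Lambda_d$ skew, a copy of ordinary majorization. For part $(i)$ ($|\Omega|=2$, $d$ not uniform) the essential observation is that, although $\mathbb P_\Omega$ is one-dimensional, $\preceq_d$ is not total once $d$ is biased: writing $p=(t,1-t)$, points with $t$ on opposite sides of $d_1$ can be $\preceq_d$-incomparable --- e.g. for $d=(1/3,2/3)$ one has $\Lambda_d(1/4,3/4)=(1/4,3/8,3/8)$ and $\Lambda_d(2/5,3/5)=(2/5,3/10,3/10)$, whose sorted partial sums ($3/8,3/4,1$ versus $2/5,7/10,1$) cross, so $(1/4,3/4)\bowtie_d(2/5,3/5)$. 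One then carves out of the two rays $\{t<d_1\}$ and $\{t>d_1\}$ a subset on which $\preceq_d$ reproduces the ``two chains with crossing relations'' pattern of Figure \ref{fig:counterex}, and concludes as above.

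The main obstacle is the verification in (b): one must pin down the explicit inequalities describing $\preceq_d$ on the chosen family and check that the resulting induced order is order-isomorphic to a partial order known to lack finite strict monotone multi-utilities --- i.e. build $Y$ so that its induced order \emph{equals} the target, not merely contains a copy of it. The $|\Omega|=2$ case is the delicate one, since only a single scalar parameter is available and one must exploit exactly the asymmetry forced by a biased $d$; for $|\Omega|\ge 3$ the combinatorics is that of the proof of Theorem \ref{dim majo}, transported through $\Lambda_d$. A secondary point requiring care is that ``no finite multi-utility'' would be too weak: what is actually invoked is that a finite multi-utility need not yield a finite \emph{strict monotone} multi-utility, which is precisely the phenomenon that Figure \ref{fig:counterex} illustrates.
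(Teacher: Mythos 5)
Your plan is essentially the paper's own proof: the reduction of Definition \ref{d second laws} to strict monotone multi-utilities, a countable upper bound, and, for the crucial lower bound, the exhibition of an induced subposet of $(\mathbb P_\Omega,\preceq_d)$ order-isomorphic to the two-chain partial order of Figure \ref{fig:counterex} together with the heredity of strict monotone multi-utilities under restriction --- the subset depicted in Figure \ref{fig:majo} is exactly the one used for this purpose in \cite[Theorems 2 and 3]{hack2022disorder}, including the treatment of the non-total binary case via incomparabilities created by a biased $d$, which you verified correctly. The only loose end is your upper bound when $d$ is irrational, where the pullback along $\Lambda_d$ is unavailable and a ``limiting argument'' does not by itself transport a multi-utility; this is routine to repair (e.g. take the countable multi-utility $p\mapsto\sum_{x\in\Omega}\max(p(x)-t\,d(x),0)$ for rational $t\geq 0$ together with $D_{KL}(\cdot\,\|\,d)$ as a strict monotone and apply the construction of \cite{alcantud2016richter}), so it does not affect the substance of the argument.
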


\begin{rem}[Applications of the classification of preorders]
Note that, in parallel to our comment in Remark \ref{useful monos}, Theorem \ref{d-majo dim} also exemplifies the usefulness of the general classification of preorders we considered during this chapter. The details can be found in \cite[Theorems 2 and 3]{hack2022disorder}.
\end{rem}

Note that the requirement that $d(x)>0$ for all $x \in \Omega$ is not actually a restriction since, in that scenario, we simply eliminate the states where that is not the case. Moreover, if we assume that $d$ is a Boltzmann distribution $d=p^*$, then $p^*(x)=0$ for some $x \in \Omega$ corresponds to a state that can be discarded for physical reasons. In particular, $x$ is a state with infinite energy.

\subsection{Quantum resource theories}
\label{subsec: resource th}

In the previous section, we exemplified in the context of molecular diffusion the usefulness of the general approach we have developed along this chapter. Here, we continue in the realm of physics by briefly addressing its applicability in quantum resource theories.

Resource theory constitutes an area of quantum physics where both incomplete characterizations of preorders by single monotones and multi-utilities have received increasing attention over the last couple of decades \cite{jonathan1999entanglement,nielsen1999conditions,brandao2015second,gour2015resource,winter2016operational}. The basic setup in any resource theory consist of a set of possible states a system $X$ can be in together with a a set of operations one is allowed to perform, which determine the conversion between these states one can achieve. The connection between such an approach and our more abstract framework in this chapter comes from the fact the conversions between states often form a preorder and, hence, a resource theory is based on a preordered space $(X,\preceq)$. 
Resource theories are actively used in quantum physics where, in particular, they are deployed in order to understand several quantum effects
and to draw relations among them by finding formal similarities in their structures. 
In the following, we briefly address one of the main resource theories, namely entanglement  \cite{nielsen1999conditions,nielsen2001majorization,nielsen2002introduction}.

\subsubsection{Entanglement}

In order to avoid an introduction to the concepts in quantum mechanics, which lie outside the scope of this work, we simply include, roughly, how majorization is related to entanglement.
A well-known resource theory, which is relevant in the study of the quantum phenomenon known as entanglement \cite{nielsen2002introduction}, restricts the operations one can perform on the quantum systems to the so-called
\emph{local operations and classical communications}. Roughly, the usual setup there consists of two parties that possess quantum states which are correlated and are only allow to (i) perform operations on their individual systems and (ii) communicate with each other via classical communication channels \cite{nielsen2002quantum}.
These operations are connected to majorization by Nielsen's theorem \cite{nielsen1999conditions} which, basically, shows they are equivalent to majorization for certain probability vectors that are associated to the quantum state of the system. Thus, our results regarding molecular diffusion and majorization are also applicable to the study of entanglement. In conclusion, our tools also turn out to be useful in studying the complexity of the conversion between quantum states. The reader interested in the topic can consult a recent review by Sagawa \cite{sagawa2022entropy} which deals with the role of majorization in physics from a broad perspective.

To conclude, let us include a few words on a variation of entanglement that is regarded as ill-understood and where our abstract approach could be useful, namely entanglement catalysis \cite{jonathan1999entanglement,turgut2007necessary}.

\subsubsection{Entanglement catalysis}

In the same spirit of entanglement, entanglement catalysis 
considers the conversion between quantum states through local operations and classical communications and, moreover, allows the use of an auxiliary system or \emph{catalyst}. As in the case of entanglement and majorization, this resource theory can be reduced to a preorder, known as \emph{trumping}. In particular, for a finite $\Omega$, trumping $\preceq_T$ is a preorder on
$\mathbb{P}_{\Omega}$ where
     \begin{equation}
     \label{trumping def}
        p \preceq_T q \iff \exists r \in \mathbb{P}_{\Omega'} \text{ such that } p \otimes r \preceq_M q \otimes r,
\end{equation}
for all $p,q \in \mathbb{P}_{\Omega}$, where $p \otimes r:= (p_1r_1,\dots,p_1r_{\Omega'},\dots,p_{\Omega}r_1,\dots,p_{\Omega} r_{\Omega'})$ for all $p \in \mathbb{P}_{\Omega}$ and $r \in \mathbb{P}_{\Omega'}$, and $\Omega'$ is some finite set. (Note that $r$ in \eqref{trumping def} is known as a \emph{catalyst}.)

Given that trumping remains to be well understood in terms of real-valued functions \cite{turgut2007catalytic,turgut2007necessary}, it is our hope that abstract approaches like the one in this chapter may shed some light on it.  
As a final remark, note that trumping represents an anomaly regarding preorders since, unlike majorization for instance, it is not directly defined through a family of real-valued functions.

\section{Summary}

In this chapter, we addressed the substitution of order structures by real-valued functions. We started considering the substitution of the uncertainty preorder by the maximum entropy principle. This led us to the study of optimization, complexity and their interplay in order structures. Despite having quite an abstract approach to the subject, we showed how such a treatment may be helpful in concrete applications. More specifically, we related our more general framework to the maximum entropy principle and the uncertainty-based approaches to both molecular diffusion and quantum resource theories.

In the following chapter, we consider an abstract order-theoretic approach to computation based on the intuitive idea of uncertainty reduction. In particular, we study the specific requirements on such a structure that allow it to provide a notion of \emph{finite description} which is based on Turing machines. Hence, continuing our abstract approach to decision-making, we considering the interplay between uncertainty and computation through an order structure. 

\newpage
\thispagestyle{empty}
\mbox{}
\newpage
\chapter{Uncertainty and computation}
\label{chap:Uncertainty and computation}


The link we established between the local transitions in thermodynamics and learning systems was in terms of uncertainty. In particular, we modelled both transitions through a specific preorder on the space of probability distributions, namely majorization. This led us to equating decision-making with uncertainty reduction and dealing with them abstractly through order structures. In this chapter, we follow this intuition and use order structures to capture a notion of relative uncertainty between elements that, ultimately, allows us to introduce a notion of \emph{finite description} for decision-making. In particular, we translate computability from Turing machines (i.e. from the natural numbers) to other spaces where we intend to perform some decision-making process (e.g. determine a zero of a function), like the real numbers. 

We can think of any algorithm as a sequence of steps that consecutively gather information about the answer to some problem of interest. The strategy followed by the algorithm to find a solution can be considered in a similar way to how we pictured the bisection method through
$(\mathcal I, \preceq_{\mathcal I})$. That is, we think of the algorithm as a series of transitions between the elements of some representation of the solution space. Moreover, we assume that these transitions behave as a preorder that relates the elements in the representation in terms of how much \emph{information} or relative \emph{uncertainty} they possess. Aside from these considerations, for a set of instructions to be considered an algorithm, there must be a finite description of them that allows one to perform them without any ingenuity (i.e. they must be simple and clear enough). 
In this regard, we can
strengthen the order structure to add finiteness to our view of preorders as abstract relations representing core strategies that can be followed by an algorithm. In particular, we can build an order-theoretic picture were notions of computability (i.e. finite description) for elements and functions are defined. The purpose of this chapter is to explore the requirements on such an order structure.

The importance of developing an order-theoretic framework for computability lies in the fact it could help to sharpen the notion of computation on uncountable spaces, which is known to presents difficulties (see Section \ref{sec: issue uncoun comp} for more details).

The reader should note that, in this chapter, we summarize and supplement our work in \cite{hack2022computation} and \cite{hack2022relation}.

\section{Computation on countable and uncountable sets}
\label{compu count and uncount}

\subsection{Computation on countable sets}

The classical notion of computation was introduced by Turing
\cite{turing1937computable,turing1938computable}. Turing's approach is based on an idealized machine, namely the \emph{Turing machine}. Roughly speaking (a formal definition can be found in \cite{hopcroft2001introduction}), a Turing machine consists of:
\begin{enumerate}[label=(\roman*)]
    \item A tape, which is divided into cells with each cell containing a single symbol from a finite alphabet $\Sigma$.
    \item A head, that can read and write symbols on the tape's cells.
    \item A state register, which stores the state that the Turing machine has at every step and, hence, the actions regarding the head that the machine may preform.
    \item A finite table of instructions, which contains the rules regarding what the following state of the Turing machine may be depending on its current state and the tape's input it obtains from the head. 
\end{enumerate}
The fundamental idea that the Turing machine aims to formalize is that of \emph{finite instructions}, which is fundamental in (iv). In particular, although the tape is allowed to be infinite and, hence, both the input and the output of the machine may be infinite, the instructions that regulate its performance are fixed and finite. This corresponds to the fundamental idea that one can obtain any arbitrary amount of finite information about several infinite objects provided they are sufficiently \emph{regular}, that is, provided they can be described in some finite manner. An example of this would be the set of natural numbers or the set of even natural numbers, since there exists a recursive way of defining both of them and, hence, there is a straightforward Turing machine for each set that (i) outputs only elements that belong to the set and (ii) eventually generates any element in the set.

Turing's effort was successful on the natural numbers. In fact, Turing machines are still regarded as the canonical model of computation on them \cite{weihrauch2012computability,weihrauch2012computable} (see the \emph{Church-Turing} thesis below). We summarize the formalization of computability on the natural numbers they provide in the remainder of this section (see, for example, \cite{rogers1987theory} for a more detailed account). A function $f:\mathbb{N} \rightarrow \mathbb{N}$ is \emph{computable} if there exists a Turing machine which, for all $ n \in \mathbb{N}$, \emph{halts} (i.e. finishes after some finite time) on input $n$ and returns $f(n)$ \cite{rogers1987theory}.\footnote{What we call a \emph{computable} function is known as a \emph{total recursive function}, which contrast with functions where $\text{dom} (f) \subset \mathbb{N}$ holds \cite{rogers1987theory}, which we call \emph{partially computable}. More in general, $f:X \to Y$ is \textit{partial} if it is not defined on the whole $X$.}
Moreover, a subset $A \subseteq \mathbb{N}$ is said to be \emph{computable} if there exists a computable function $f:\mathbb{N} \rightarrow \mathbb{N}$ such that $f(x)=1$ if $x \in A$ and $f(x)=0$ if $x \not \in A$. (For instance, both the natural numbers and the even natural numbers are computable sets.) 

There is a weaker notion of computable sets.
A subset $A \subseteq \mathbb{N}$ is said to be \emph{recursively enumerable} or \emph{computably enumerable} if $A=\emptyset$ or there exists a computable $f$ such that $A=f(\mathbb{N})$. The recursively enumerable subsets of $\mathbb{N}$ are those whose elements can be produced in finite time.
It should be noted that recursively enumerable sets which are not computable exist \cite{rogers1987theory}.

\subsubsection{The Church-Turing thesis}

The established status of Turing's approach is due to the fact the subsequent attempts to formalize computation on the natural numbers, like universal register machines \cite{cutland1980computability} or lambda calculus \cite{barendregt1984lambda}, turned out to be equivalent to it \cite{rogers1987theory}.
The equivalence between these different formalizations led to the formulation of the so-called \emph{Church-Turing thesis} or \emph{Church thesis} \cite{rogers1987theory}. This \emph{thesis} states that a function between the natural numbers $f: \mathbb N \to \mathbb N$ is \emph{effectively calculable} if and only if it is computable in the Turing machine sense. Note that we say a function is effectively calculable \cite{copeland1996computation,turing1937computable} if
\begin{enumerate}[label=(\roman*)]
\item it consists of a finite number of exact instructions such that each of them can be expressed by a finite amount of symbols,
\item it produces the desired result after a finite number of steps,
\item it can be carried, in principle, by a human using only paper and pencil and
\item it requires no ingenuity from the human carrying out the method.
\end{enumerate}
Thus, the definition of effective calculability aims to clarify our intuitive notion of \emph{finite instructions} and, hence, the Church-Turing thesis connects (in fact, equates) this informal notion to the formal one provided by Turing machines. 

Although the computability landscape for countable sets is well-established, the situation regarding uncountable sets is quite different. We give  a brief account of this in the following section.

\subsection{The issue with computation on uncountable sets}
\label{sec: issue uncoun comp}

Despite the approach-invariant nature of computation on the natural numbers, uncountable spaces like the real numbers behave differently. In particular, significant differences between certain proposed models have been unveiled. Notably, the naive approach to computability on the real numbers through their decimal representation deems as uncomputable some elementary functions which turn out to be computable in other approaches.

Originally, Turing proposed to use the decimal expansion of the real numbers and, consequently, defined the computable real numbers as those whose \emph{decimal expansion can be calculated by finite means} \cite{turing1937computable}. However, as realized by him, some elementary functions like multiplication are not computable in this scenario. For instance, as argued in \cite{di1996real}, one can directly see there is no algorithm that computes the multiplication by 3: If it existed, such an algorithm should generate the first digit in the decimal expansion of the output after observing a finite amount of digits in the input. Let us assume, with input $0.3333\dots$, the output is $1.0000\dots$ (we can make the same argument if we choose $0.9999\dots$ as output). In this scenario, the single one that belongs to the output ought to be generated after examining, say, $n$ digits from the input. Given the input $0.3^{n)}\dots300\dots$ that contains exactly $n$ threes, the algorithm would again choose one as the first digit of the output. This results in an incorrect computation, since the output should be $0.9^{n)}\dots900\dots$, with exactly $n$ nines. The same holds true for any representation that only differs from the decimal in the base, like the binary representation used in digital computers.

The problem in the previous paragraph can be fixed by using a representation of the real numbers that belongs to a different class. For example, we can take one belonging to the \emph{negative-digit representations} \cite{di1996real,wiedmer1980computing,avizienis1961signed}. The difference between these representations and the ones considered above is that they allow negative digits in their expansions. Because of that, multiplication by 3 turns out to be computable in the latter. In fact, roughly, the main practical difference between these approaches lies in the bounds on the final result established by the outputs obtained at finite times (the representations with negative digits allow us to reduce these bounds in the following steps by outputting negative numbers). Note that this property regarding bounds is what we exploit in the example above, where we also use the fact that, in order to converge, the algorithm must fix each digit of the output after only examining a finite subset of the representation of the input. 


Currently, as a result of the discussion in the previous paragraph, there is no canonical model for computation on the real numbers \cite{weihrauch2012computability,weihrauch2012computable}. Because of this, the study of the different approaches towards computation on uncountable spaces continues to be pursued. It is here where order structures play a role, as we will see in the following section. 

\section{An order-theoretic approach to computation on uncountable sets}

Among the proposed models for computation on uncountable sets, like the decimal or negative-digit representations, there exist some which are based on preorders \cite{scott1970outline,edalat1999domain,edalat1997domains}. The basic idea in this approach is to use an order structure to translate the formal notion of computation from the natural numbers (i.e. from Turing machines) to uncountable sets. The fundamental structure, thus, consists of a preordered set $(X,\preceq)$, where $X$ is a representation set of the solution space to some problem and $\preceq$ relates the elements in $X$ in terms of information. These approaches belong to a field called \emph{domain theory} \cite{abramsky1994domain,scott1970outline}.

Despite being a successful theory, we believe that the intuitive introduction of the order-theoretic framework in domain theory does not justify the strong order-theoretic requirements it imposes. This led us to proposing a more general order-theoretic framework in which the assumed structure is better justified. As we will see in Section \ref{discuss order comp}, where we compare our approach to domain theory, we gain on the intuitive side, although we loose some useful properties. This will lead us to establishing formal reasons why the stronger assumptions in domain theory are required. Before doing so, in the following section, we give an intuitive explanation of what the order-theoretic approaches to computation aim to capture and we introduce our more general framework.

\begin{rem}[Domain theory outside computer science]
Following the interdisciplinary aim of this work, the reader should note that, as in the case of real-valued representations of preorders (see Section \ref{sec:mu charact}), domain theory has been applied in disciplines outside computer science. For example, in general relativity \cite{martin2006domain,martin2010domain,martin2012spacetime}, quantum mechanics \cite{coecke2010partial,rennela2014towards} and information theory \cite{martin2008topology,chatzikokolakis2008monotonicity}.
In fact, some of these applications have been compiled in a recent book concerned with new mathematical structures that may become useful in physics \cite{coecke2011new}.
\end{rem}

\subsection{The fundamental intuitive picture:\\ Countable distinguishing properties}
\label{intuition compu}

Contrary to the rest of this work, instead of taking a preordered space $(X,\preceq)$ as starting point, in this section we begin by considering a set of features that distinguish the elements in $X$ and define the order structure $\preceq$ through them.

The fundamental trait of some uncountable space $X$ that allows us to describe it is the existence of a countable family of \emph{features} or \emph{properties} $(O_n)_{n \geq 0}$ that enable us to distinguish among the elements in $X$. Following the idea behind Dedekind cuts \cite{dedekind1901essays}, an example of this is the characterization of any real number by listing all the rational numbers below it.
More specifically, an element $x \in X$ should be the only element in $X$ that possesses the features in (and only in) some subfamily $O_x \subseteq (O_n)_{n \geq 0}$. Hence, determining the element $x$ is equivalent to listing all (and only) the features that define it, that is, the properties that belong to $O_x$. For instance, listing the rational numbers below some real number is equivalent to determining it. In fact, we can use the countable family of features $(O_n)_{n \geq 0}$ to build a binary relation $\preceq$ among the elements of $X$ in the following way:
\begin{equation*}
x \preceq y \iff (\text{for all } n \geq 0) \text{ } y \text{ fulfills } O_n \text{ provided that } x \text{ fulfills } O_n.    
\end{equation*}
We call $\preceq$ an \emph{information} relation on $X$, given that, since we think of $O_n$ as a feature for each $n \geq 0$, $x \prec y$ implies that $y$ possesses all the features that define $x$ and at least one more of them is needed in order to determine $y$. Hence, $y$ requires more \emph{information} to be specified. This information relation will play a key role in the reminder of this chapter.

As explained in the previous paragraph, provided there exists a countable set of features that allows us to distinguish between the elements in an uncountable space we are interested in, determining an element is equivalent to providing a list consisting of its features. Another way of determining an element $x \in X$ would consist of a procedure that, for any arbitrary but finite number of features of $x$, provides us an element $y \in X$ that possesses all these features (and no features that $x$ does not share). In this scenario, the features defining $x$ would be the intersection of all the features defining the elements provided by the procedure. Although we will sharpen the notion of \emph{procedure} later on, for the moment, let us just say it consists of a sequence of elements $(x_n)_{n\geq 0} \subseteq X$ where each $x_n$ can be represented using finite space (this will play a key role in the following) and such that, for any finite subfamily $O'_x \subseteq O_x$, there exists some element $x_n$ that possesses the properties in $O'_x$.
However, in case we have such a procedure $(x_n)_{n \geq 0}$, how do we associate it to an element $x$? Following the idea behind the \emph{principle of insufficient reason} (see Section \ref{subsec: maximum entr}), we associate to $(x_n)_{n \geq 0}$ the element $x$, if it exists, that, for all $n \geq 0$, fulfills the properties that $x_n$ does and is \emph{minimal} (i.e. any other element sharing these properties would require more features than $x$ to be determined). We denote such an $x$ by $\sqcup (x_n)_{n \geq 0}$. Hence, we have a way of associating an element $x$ (which may have no finite representation) to a list $(x_n)_{n \geq 0}$ conformed by elements which have a finite representation. This brings us closer to obtaining a notion of \emph{finite instructions} for $x$, which is key in the following sections of this chapter.

Despite the connection between $(x_n)_{n \geq 0}$ and $x$ in the previous paragraph, we continue having the issue that, whatever finite number of features $O'_x$ we choose, we would like our procedure to contain some element $x_n$ that fulfills all these features. To remedy this, 
we will only consider a specific sort of procedures. If we were only interested in the case where $|O'_x|=1$, we could require the family $(O_n)_{n \geq 0}$ to fulfill that, if $x$ has some feature $O \in (O_n)_{n \geq 0}$ and a procedure $(x_n)_{n \geq 0}$ is associated to $x$, then we ought to have some element that possesses that feature, that is, if $\sqcup (x_n)_{n \geq 0} = x$, then there exists some $n \geq 0$ such that $x_n$ fulfills $O$. Nonetheless, we are interested in finding elements that have any finite number of properties of $x$ and, if we follow the previous proposal, then we may encounter a situation where there is no way of obtaining an element with more than one feature of $x$ (an instance of this can be found in Remark \ref{why directed}). To avoid this, we only consider certain procedures. More specifically, those $(x_n)_{n \geq 0}$ where, for any pair of elements $x_n,x_m \in (x_n)_{n \geq 0}$, there exists a third element $x_p \in (x_n)_{n \geq 0}$ that gathers all their features $x_n,x_m \preceq x_p$. We refer to those procedures as \emph{directed}. For such procedures, we can obtain elements with any finite number of features of $x$ by simply requiring the features to fulfill the property regarding $|O'_x|=1$ that we explained above. In conclusion, we require the features to fulfill that, if $(x_n)_{n \geq 0}$ is a directed procedure, $\sqcup (x_n)_{n \geq 0}=x$, and $O$ is a feature of $x$, then there exists some $n \geq 0$ such that $x_n$ fulfills $O$.

This section illustrates again the uncertainty basis of our computational picture. In particular, we think of a procedure $(x_n)_{n \geq 0}$ as a \emph{computation} of $x=\sqcup (x_n)_{n \geq 0}$ that proceeds by unveiling the features $O_x \subseteq (O_n)_{n \geq 0}$ that define $x$, that is, by \emph{reducing uncertainty} with respect to $x$. In this regard, we can think of computation as a decision-making process \cite{gottwald2019bounded}. 

In the following section, instead of the family of features $(O_n)_{n \geq 0}$, we take $\preceq$ as the starting point and we deal with a specific computational example to refine our notion of \emph{procedure}.

\subsection{An intuitive example: The bisection method}
\label{bisection ex}

To complement the introduction to the order-theoretic approaches to computation from Section \ref{intuition compu}, we return to the example of the bisection method and $(\mathcal I,\preceq_{\mathcal I})$ (see Chapter \ref{intro}). In particular, we provide some more details regarding this example in order to easily transition from it to the general case.

The outcome of each computation step in the bisection method is an interval $I_n = [a_n,b_n] \in \mathcal I$ with $a_n,b_n\in\mathbb Q$. Hence, a computation outputs a sequence $(I_n)_{n \geq 0}$ inside $\mathcal I$. In the following two paragraphs, we detail the two key properties of such sequences.

The sequences $(I_n)_{n \geq 0}\subset B$, where $B \subseteq \mathcal I$ is the set of interval with rational endpoints, that result from  applying the bisection method have a \emph{finite} description in the sense that:
\begin{enumerate}[label=(\roman*)]
\item each element $I_n$ has a finite description (its endpoints), and 
\item the sequence $(I_n)_{n \geq 0}$ is generated by using the bisection method, which also has a finite description.
\end{enumerate}
To generalize (i) to arbitrary spaces the existence of an effectively calculable surjective map $\alpha:\mathbb N\to B$ is key.
Regarding (ii),
, and following the introduction to Turing machines from Section \ref{compu count and uncount},
$\alpha$ enables us to translate the finite description (via Turing machines) of sequences in $\mathbb N$ to sequences in $B$.  

We can call a number $x\in\mathbb R$ \emph{computable} via the bisection method if an instance of the method $(I_n)_{n \geq 0}$ \textit{convergences} (or reduces  \emph{information} or \textit{uncertainty}) to $x$. By this, we mean that $I_{n+1}\subseteq I_n$ for all $n \geq 0$ and that, if we associate $x\in \mathbb R$ with $[x,x]\in \mathcal I$, then $[x,x] = \inf_{n \geq 0} I_n$, where we understand the \emph{infimum} in the set inclusion sense. Hence, we relate sequences in $B$ with real numbers.

To translate the previous notions to the general case, we need to replace \cite{hack2022computation}:
\begin{enumerate}[label=(\roman*)]
\item $\mathcal I$ and $\mathbb R$ by general sets $Y$ and $X$, respectively,
\item the partial map $f:\mathcal I \to \mathbb R$ by a surjective partial map $\rho:Y\to X$,
\item $B$ by a countable $B_Y \subseteq Y$, that contains finite sequences $(b_n)_{n \geq 0}$ which we can associate to $y \in Y$ such that $\rho(y)$ exists,
\item the finite instructions by Turing machines which translate finiteness to $B_Y$ via some effectively calculable $\alpha:\mathbb N\to B_Y$,
\item set inclusion on $\mathcal I$ by a partial order $\preceq$ on $Y$ which connects $Y$ with sequences in $B_Y$.
\end{enumerate}


Given a surjective partial map $\rho:Y\to X$ and the points (i)-(v), we say $x\in X$ is computable if there exists some $y \in Y$ such that $x=\rho(y)$ and there is some $(b_n)_{n \geq 0}\subseteq B_Y$ with a finite description (that is, $\alpha^{-1}((b_n)_{n \geq 0})$ is the output of some Turing machine) that converges in $\preceq$ to $y$.

In the following section, we formalize the intuitive introduction that we have attempted in this one.

\subsection{The formal order-theoretic approach}
\label{order in compu}

Since computation on $X$ is introduced via Turing machines, it needs to be somehow related to the subsets of $\mathbb N$ which are the output of some Turing machine. Such a relation is obtained through a \emph{finite map} $\alpha: \mathbb{N} \to B$, which we defined in \cite{hack2022computation}.

\begin{defi}[Finite map]
\label{def:finite map}
A map $\alpha: \text{dom}(\alpha) \to A$, where $\text{dom}(\alpha) \subseteq \mathbb{N}$, is finite for $A$ or simply a finite map if $\alpha$ is bijective and both $\alpha$ and $\alpha^{-1}$ are effectively calculable.\footnote{Note that $\alpha$ does not need to be defined on all natural numbers.}
\end{defi}

We define the \emph{computable} subsets of some countable $B$ as those $B' \subseteq B$ such that $\alpha^{-1}(B')$ is the output of some Turing machine.
Through a finite map, we translate computability from Turing machines to the subsets of $B \subseteq X$. However, to translate it to the (potentially uncountable) $X$, 
we use $\preceq$ and associate subsets of $B$ to elements in $X$. To make this association, we first require the computational processes that consistently gather information in $X$ (i.e. those that eventually reduce uncertainty) to converge in $X$. That is, we require $(X,\preceq)$ to be \emph{directed complete} \cite{abramsky1994domain}. (If $A \subseteq X$ is a \emph{directed set}, the we require it to have a \emph{supremum} $\sqcup A$, where we say a set $A \subseteq X$ is \emph{directed} if, for all $a,b \in A$, there exists some $c \in A$ such that $a \preceq c$ and $b \preceq c$, and we say an element $\sqcup A \in X$ is the \emph{supremum} of $A$ provided we have (i) $a \preceq \sqcup A$ for all $a \in A$ and (ii), given some $b \in X$ such that $a \preceq b$ for all $a \in A$, then $\sqcup A \preceq b$ also holds.) Hence, since $\preceq$ represents information, $\sqcup A$ is the \emph{minimal} element (in information) that possesses all the information in $A$. Hence, we consider any computational process outputting $A$ to be computing $\sqcup A$.

To conclude the definition, we limit the computational processes we consider to those in $B$. More specifically, we consider a partial order $(X,\preceq)$ that is directed complete (a \emph{dcpo}) and has a countable $B \subseteq X$ (related to $\mathbb N$ through some finite $\alpha: \mathbb N \to B$) and, for each $x \in X$, potentially contains a computational process that leads to $x$. We specify this in the following definition \cite{hack2022computation}.

\begin{defi}[Weak basis]
\label{def:weak basis}
A subset $B \subseteq P$ of a dcpo $P$ is a weak basis if, for each $x \in P$, there exists a directed set $B_x \subseteq B$ such that $x=\sqcup B_x$.
\end{defi}

We can define computability on $X$: If $X$ is a dcpo with a countable weak basis $B \subseteq X$, then $x \in X$ is \emph{computable} if there exists a Turing machine whose output is $\alpha^{-1}(B_x)$ for some subset $B_x \subseteq B$ such that $\sqcup B_x=x$. However, we still need to address an important aspect:
To gather the information in $\sqcup A$ from some $A \subseteq B$,
we need to know how the different chunks of information in $A$ are related to each other in order to approximate the information in $\sqcup A$. (That is, we pretend to provide, after any finite amount of time, the best approximation of $x$ so far by considering the information relationship among the elements in $A$ that have been outputted.) 
To achieve this, we ask for the dcpo to have a specific sort of weak basis, which we call \emph{effective} in the following definition \cite{hack2022computation}.

\begin{defi}[Effective weak basis]
\label{def:eff weak basis}
A countable weak basis $B \subseteq P$ of a dcpo $P$ is effective if there exist both a finite map for $B=(b_n)_{n\geq0}$ and a computable function $f:\mathbb{N} \to \mathbb{N}$ such that $f(\mathbb{N}) \subseteq \{\langle n,m \rangle| b_n \preceq b_m\}$ and, for each $x \in P$, there is a directed set $B_x \subseteq B$ such that $\sqcup B_x = x$ and, if $b_n,b_m \in B_x\setminus\{x\}$,
then there exists some $b_p \in B_x$ such that $b_n,b_m \prec b_p$ and $\langle n,p \rangle, \langle m,p \rangle \in f(\mathbb{N})$.\footnote{$\langle \cdot,\cdot\rangle $ corresponds to a \emph{pairing function}, i.e., a computable bijection $\langle \cdot,\cdot\rangle: \mathbb{N} \times \mathbb{N} \rightarrow \mathbb{N}$ (which can be defined similarly to how we defined computable functions $f:\mathbb{N} \rightarrow \mathbb{N}$). In the following, we take $\langle n,m \rangle \coloneqq \frac{1}{2} ( n^2+ 2nm + m^2 + 3n + m)$, the \emph{Cantor pairing function}. Note that we will denote the inverses of the Cantor pairing function by $\pi_1$ and $\pi_2$, i.e., $\pi_1(\langle n,m \rangle)=n$ and   $\pi_2(\langle n,m \rangle)=m$.}
\end{defi}

Now that we have clarified the properties we require, we formalize the definition of computable elements \cite{hack2022computation}.

\begin{defi}[Computable element]
\label{def:compu ele}
If $P$ is a dcpo, $B\subseteq P$ is an effective weak basis and $\alpha$ is a finite map for $B$, then
an element $x \in P$ is computable if there exists some $B_x \subseteq B$ such that the properties in Definition \ref{def:eff weak basis} are fulfilled and $\alpha^{-1}(B_x) \subseteq \mathbb{N}$ is recursively enumerable.
\end{defi}

In summary, we start from $\preceq$ and a countable $B \subseteq X$and obtain a notion of computable element for an uncountable space $X$ that is grounded on Turing machines.

Before we continue by providing some examples, let us make a brief remark regarding directed sets.

\begin{rem}[Why directed sets?]
\label{why directed}
The fundamental idea in our approach to computability is that of an information relation $\preceq$ which defines a notion of convergence $\sqcup $ that is key in order to deal with uncountable sets. However, having a set $B_x$ that converges to some element $x$ is not sufficient to actually compute it, since we should be able to provide elements that are arbitrarily close to $x$ to any degree of precision required. In order to do so, we need to be able to distinguish the internal structure of $B_x$ in terms of $\preceq$. In particular, we should be able to differentiate, among the elements in $B_x$, those that contain more information about $x$. This is why we require some finiteness constraint on $\preceq$ in Definition \ref{def:eff weak basis}. Hence, if we simply ask for the existence of a subset $B_x \subseteq B$ such that $\sqcup B_x=x$ for $x$ to be computable, we may have no way of distinguishing among the elements in $B_x$ and, hence, we may not able to provide elements that have arbitrary information regarding $x$. A very pathological example of such a behaviour would be the dcpo $(\mathbb N \cup p,\preceq)$, where $p$ has no finite representation and $x \preceq y$ if and only if $x=y$ or $y=p$. In this scenario, although $B_p \coloneqq \mathbb N$ converges to $p$, there is no way of providing approximations to $p$ with arbitrary precision. 

We should also note that the notion of approximation we use is based on the Scott topology (see Section \ref{uniform compu}). Hence, it is intimately connected to directed sets. However, there does not seem to be a satisfactory notion of convergence that can be similarly defined via the suprema of subsets in general (which would correspond the the notion of computation we have put forward in this remark). 
Finally, note that the case where the condition on directed sets is substituted by one on  chains is discussed in \cite{abramsky1994domain}.
\end{rem}

\subsubsection{Examples}

An example of the objects in this section is given by the notion of computable elements that $(\mathcal I, \preceq_{\mathcal I})$ (a directed complete partial order such that the subset of compact intervals with rational endpoints can be used as effective weak basis $B$) introduces into $\mathbb R$. As a matter of fact, the notion of computation on $(\mathcal I, \preceq_{\mathcal I})$ from this section includes all possible applications of the bisection method. To illustrate this, take $P_{[0,1]}$ the family of polynomials with rational coefficients that have a unique zero in $[0,1]$. The first thing to notice is that we can take as finite map $\alpha$, an enumeration of the rational intervals in $[0,1]$ that can be encountered when running the bisection method with $[0,1]$ as starting interval on any polynomial $p_0 \in P_{[0,1]}$. It is then not difficult to construct for each $p_0 \in P_{[0,1]}$ a Turing machine $f_{p_0}: \mathbb{N} \to \mathbb{N}$ whose output has the property that $\alpha(f_{p_0}(\mathbb{N}))$ is the set of intervals that are obtained when running the bisection method on $p_0$ starting with $[0,1]$. This occurs since the conditions in the bisection method can be rewritten in a way such that they only involve operations on the natural numbers.  

Aside from $(\mathcal I, \preceq_{\mathcal I})$, an important example is the \emph{Cantor domain} or \emph{Cantor set model} \cite{blanck2008reducibility,martin2000foundation}. If $\Sigma$ is a finite set of symbols, $\Sigma^*$ is the set of finite strings in $\Sigma$ and $\Sigma^\omega$ is the set of countably infinite sequences, then their union together with the prefix order $(\Sigma^{\infty},\preceq_C)$
is called the \emph{Cantor domain}
\begin{equation}
\label{Cantor domain}
\begin{split}
    \Sigma^\infty &\coloneqq \Big\{x\Big|x:\{1,..,n\} \to \Sigma,\text{ }0\leq n\leq \infty\Big\}, \\
    x \preceq_C y &\iff |x| \leq |y| \text{ and } x(i)=y(i) \text{ } \forall i \leq |x|,
   \end{split}
\end{equation}
where $|s|$ is the cardinality of the domain of $s \in \Sigma^\infty$.
Note that  $\Sigma^*$ is an effective weak basis (see \cite[Proposition 8]{hack2022computation}). If $\Sigma = \{0,1\}$, then the Cantor domain is the binary representation of real numbers in $[0,1]$. (See Figure \ref{cantor dom} for a representation of the Cantor domain with $\Sigma = \{0,1\}$.)


\begin{figure}[!tb]
\centering
\begin{tikzpicture}
    \node[other node] (1) {$\perp$};
    \node[other node] (2) [above right = 1cm and 1.5cm  of 1]  {$1$};
    \node[other node] (3) [above left = 1cm and 1.5cm  of 1]  {$0$};
    \node[other node] (4) [above left = 1cm  of 3]  {$00$};
    \node[other node] (5) [above right = 1cm  of 3]  {$01$};
    \node[other node] (6) [above left = 1cm  of 2]  {$10$};
    \node[other node] (7) [above right = 1cm  of 2]  {$11$};

   \path[draw,thick,->]
    (1) edge node {} (2)
    (1) edge node {} (3)
    (3) edge node {} (4)
    (3) edge node {} (5)
    (2) edge node {} (6)
    (2) edge node {} (7)
    ;
    \path[draw,thick,dotted]
    (4) edge node {} (-6,7)
    (4) edge node {} (-3,7)
    (5) edge node {} (-3,7)
    (5) edge node {} (0,7)
    (6) edge node {} (0,7)
    (6) edge node {} (3,7)
    (7) edge node {} (3,7)
    (7) edge node {} (6,7)
    ;
    \draw [line width=0.75mm] (-6.0,7.0) -- (6.0,7.0);
    \node [font=\fontsize{15}{15}] at (0,7.5) {$\{0,1\}^*$};
\end{tikzpicture}
\caption{Cantor domain with binary alphabet $\Sigma=\{0,1\}$. An arrow from an element $w$ to another $t$ represents $w \preceq_C t$ (we only include the relations between nearest neighbors). The points where the dotted lines starting at an element $w$ and the horizontal line cross delimit the elements of $\{0,1\}^*$ that are above $w$ in the $\preceq_C$ sense. (Reproduced from \cite{hack2022computation}.)}
\label{cantor dom}
\end{figure}
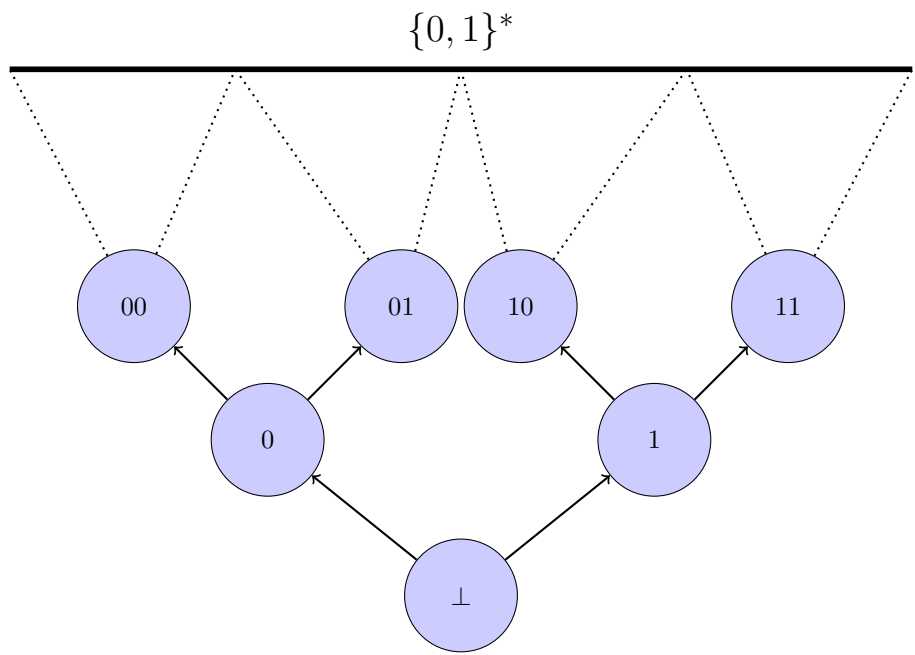

Although well motivated and intuitive, our order-theoretic introduction of computation on uncountable sets is not commonplace in domain theory. In fact, domain theory incorporates more involved assumptions.
In the following section, we present its usual framework and compare it to the one we introduced above.



\section{Properties of order-theoretic approaches to computation}
\label{discuss order comp}

The order structure we introduced does not correspond to the one usually used in domain theory. Hence, the purpose of this section is to improve on our understanding regarding the specific order-theoretical structure needed to enable the definition of computation in uncountable spaces.

\subsection{The uniform order-theoretic approach}
\label{uniform compu}

In this section, we recall the order-theoretic approach to computation in domain theory (see \cite{abramsky1994domain,stoltenberg1994mathematical,stoltenberg2001notes,scott1982lectures,cartwright2016domain} for an introduction and \cite{edalat1999domain,martin2000foundation,mislove1998topology} for further information).
In the following, we refer to the framework in domain theory as \emph{uniform computability} and to ours as the
\emph{non-uniform} approach. We make this distinction since, in our approach, computability was defined as the existence of \emph{any} computational path leading to an element, whereas, in domain theory, its computability is equivalent to that of a \emph{specific} computational path. 

In Section \ref{order in compu}, we considered an element $x$ to be computable provided there was some directed $B_x \subseteq B$ such that both $\sqcup B_x =x$ and $\alpha^{-1}(B_x)$ was recursively enumerable.
We recall now the stronger approach introduced in \cite{scott1970outline} by Dana Scott. More specifically, we
associate to each $x \in P$ a unique $B_x \subseteq B$ such that the recursive enumerability of $\alpha^{-1}(B_x)$ is equivalent to $x$ being computable.
To achieve this goal, $B_x$ must be closely related to $x$. More specifically, the information in each $b \in B_x$ must be obtained by any computational process leading to $x$. To formalize this, we recall the notion of \emph{essential} information (see, for instance, \cite{abramsky1994domain}), which we define in the following. Before doing so, we ought to introduce the Scott topology.

\subsubsection{The Scott topology}

If $P$ is a dcpo, then $O \subseteq P$ is \emph{open} in the \emph{Scott topology} \cite{abramsky1994domain,scott1970outline}
 if it is (i) \emph{upper closed} (if $x \in O$ and $y \in P$ fulfills $x \preceq y$, then $y \in O$) and (ii) \emph{inaccessible by directed suprema} (if $A\subseteq P$ is directed and $\sqcup A \in O$, then $A \cap O \neq \emptyset$). Given the dcpo $P$, we denote by $\sigma(P)$ its Scott topology. Crucially, 
the Scott topology characterizes the order structure in $P$:
\begin{equation}
\label{charac order by topo}
    x \preceq y \iff  x \in O \text{ implies } y \in O \text{ for all } O \in \sigma(P)
\end{equation}
\cite[Proposition 2.3.2]{abramsky1994domain}. In fact, a countable topological basis of $\sigma(P)$, $(O_n)_{n \geq 0}$, coincides with the set of features on which we based our intuitive picture in Section \ref{intuition compu}.
Moreover, the Scott topology fulfills the $T_0$ separation axiom \cite{kelley2017general}: 
If $x,y \in P$ and $x \neq y$, then, by antisymmetry, either $\neg(x\preceq y)$ or $\neg(y \preceq x)$ holds. Thus, by \eqref{charac order by topo}, there is some $O \in \sigma(P)$ such that $x \in O$ and $y \not \in O$ (or vice versa).

As we just pointed out, we can think of the Scott topology as a family of features that distinguish the points in some set $P$. Importantly, by definition, any feature possessed by an element $x \in P$ is gathered (at some point) by a computational processes approaching $x$. Furthermore, by the $T_0$ property, these features are enough to distinguish among the elements of $P$. This property is key for computability in an uncountable set $P$. More specifically, if the Scott topology is second countable, then the elements in $P$ can be distinguished through a countable set of features.

Now that we have introduced the Scott topology, we can define the key property that distinguishes the uniform from the non-uniform approach, namely, essential information.

\subsubsection{Essential information}

To equate the computability of an element to that of a subset of $\mathbb{N}$ (i.e. to obtain a \emph{uniform} computability notion), we introduce the way-below relation. If $x,y \in P$, we say $x$ is \emph{way-below} $y$ and denote it by $x \ll y$ if, for each directed  $A\subseteq P$ such that $y \preceq \sqcup A$, there is an $a \in A$ for which $x \preceq a$ holds \cite{abramsky1994domain,scott1972continuous}. (The set of elements way-below $x \in P$ is represented by
$\twoheaddownarrow x$. Moreover, $\twoheaduparrow x$ denotes the set of elements \emph{way-above} $x$, i.e. the set of elements $y \in P$ such that $x \ll y$.)
We interpret an element way-below another as containing \emph{essential information} about the latter. We do so since, if a computational process approaches the latter, it must gather the information in the former. (As an example, if $A_1 \subseteq \twoheaddownarrow x$ is a directed set fulfilling $\sqcup A_1 =x$ and $A_2 \subseteq P$ with $\sqcup A_2=x$ is directed, then there is, for each $a \in A_1$, some $b \in A_2$ such that $a \preceq b$.
We think of $A_1$ as a canonical computation of $x$. (We will clarify what we mean in Proposition \ref{compu sets}.) 

Now that we have exposed what essential information is, we can define computability in the uniform approach.

\subsubsection{Uniform computation}

In this section, we introduce the uniform approach to computation in domain theory. To achieve this, in analogy to effective weak bases, we define \emph{effective bases}.
A subset $B \subseteq P$ is a \emph{basis} if, for any $x \in P$, there exists a directed set $B_x \subseteq \twoheaddownarrow x \cap B$ such that $\sqcup B_x = x$ \cite{abramsky1994domain}. The connection between bases and the Scott topology is stronger than that of weak bases. In fact, if $B$ is a basis, then $(\twoheaduparrow b)_{b \in B}$ constitutes a topological basis for the Scott topology $\sigma(P)$ \cite{abramsky1994domain}.
We say a dcpo is \emph{continuous} or a \emph{domain} if bases exist. As in the case of weak bases, we are interested in the cases where countable bases exist, or  \emph{$\omega$-continuous} dcpos. (Here, as in the non-uniform case, we can introduce computability via
Turing machines, as we will see later on.) An example of an $\omega$-continuous dcpo is the Cantor domain $\Sigma^\infty$, where $\Sigma^*$ is a (countable) basis.

As in the case of weak bases, we call a basis $B \subseteq P$ \emph{effective} if there is a finite map $\alpha$ enumerating $B$, $B=(b_n)_{n\geq0}$, a \emph{bottom} element $\perp \in B$ (i.e. an element $\perp \in B$ fulfilling $\perp \preceq x$ for all $x \in P$), and
\begin{equation*}
\big\{\langle n,m \rangle \big| b_n \ll b_m \big\}
\end{equation*}
is recursively enumerable \cite{edalat1999domain,stoltenberg2008computability}. (We assume $b_0 = \perp$ in the following, which we can do w.l.o.g..) Note that $\Sigma^*$ is an effective basis for $\Sigma^\infty$.

Now that we have defined the basic ingredients, we can define computability in the uniform scenario. If $P$ is a dcpo with an effective basis $B=(b_n)_{n\geq0}$, we say $x \in P$ is \emph{computable} provided 
\begin{equation*}
\big\{n \in \mathbb{N} \big|b_n \ll x \big\}
\end{equation*}
is recursively enumerable \cite{edalat1999domain}. More specifically, we may say that $x$ is $B$-computable to specify the basis. (We deal with the dependence of element computability on the different components in this order-theoretic model in the following section.) 


The uniform approach also allows us to introduce computable functions. Before doing so, we need some order-theoretic/topological notions. The first one is a direct extension of the notion of (real-valued) monotones from Section \ref{opt preorders}. In particular, a map $f:P \rightarrow Q$ between dcpos $P,Q$ is
a \emph{monotone} if $x \preceq_P y$ implies $f(x) \preceq_Q f(y)$. Furthermore, $f$ is
\emph{continuous} if it is a monotone and, for any directed set $A \subseteq P$, we have $f(\sqcup A) = \sqcup f(A)$ \cite{abramsky1994domain,scott1970outline}.
Note this definition coincides with the usual topological one when we take Scott topology on $P$ \cite{kelley2017general}.
We can now introduce computable functions.

We say a function $f:P \to Q$ between two dcpos $P$ and $Q$ with respective effective bases $B=(b_n)_{n\geq0}$ and $B'=(b'_n)_{n\geq0}$ is \emph{computable} if it is continuous and the set
 \begin{equation*}
\big\{\langle n,m \rangle \big| b'_n \ll f(b_m) \big\}
 \end{equation*}
is recursively enumerable \cite{edalat1999domain}. As in the case of elements, we may specify the effective bases we consider and we may call $f:P \to Q$ \emph{$(B,B')$-computable} instead. (We deal with the dependence of function computability on the different components in this order-theoretic model in the following section.) 



From the exposition in this section, it is clear that there are several differences between the non-uniform approach
and the one in domain theory. The most prominent contrast being that we did not even attempt to define computable functions in the former. We devote the following section to the discussion of the dissimilarities between these two approaches, which are detailed in \cite{hack2022computation} more thoroughly.

\subsection{The difference between the uniform and non-uniform approaches}
\label{sub: difference comput}

The basic differences between the non-uniform approach that we introduced in Section \ref{order in compu} and the uniform one in Section \ref{uniform compu} are the following:

\begin{enumerate}[label=(\Roman*)]
\item The inclusion of $\ll$ and, hence, the substitution of weak bases by bases.
\item The definition of computable elements. The uniform approach imposes the recursive enumerability of $\alpha^{-1}(B \cap \twoheaddownarrow x)$, which is stronger than the natural extension of the non-uniform framework.
\item The definition of computable functions, which is absent in the non-uniform approach.
\item The inclusion of a bottom element $\perp$ in the uniform framework.
\item The definition of effectivity, which again is stronger in the uniform approach than the natural extension of the non-uniform approach.
\end{enumerate}

In the reminder of this section, we discuss the influence of the differences (I)-(V) in terms of four different key computational angles: computable elements, computable functions, model dependence of computability and complexity.

\subsubsection{Computable elements}

Regarding computable elements, there are two basic differences between the approaches we have considered: (a) the scope (i.e. the set of order structures where we get a computability notion), and (b) the definition of computable elements (instead of element computability being equal to the existence of \emph{some} computational path approaching it, it is tied to a specific path).
We discuss these differences in this section.

Concerning the scope of these approaches, it should be noted that the non-uniform computability covers a strictly smaller set of partial orders, as we state in the following proposition.


\begin{prop}
\label{weak basis compu and no basis compu}
If $P$ is a dcpo with an effective basis, then it has an effective weak basis. However, there are dcpos with effective weak bases that have no basis.
\end{prop}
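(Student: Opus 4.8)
The plan is to prove the two parts of Proposition \ref{weak basis compu and no basis compu} separately.

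\textbf{Part one: every dcpo with an effective basis has an effective weak basis.} First I would observe that a basis $B$ is, almost by definition, a weak basis: the defining condition for a basis requires, for each $x \in P$, a directed set $B_x \subseteq {\twoheaddownarrow} x \cap B$ with $\sqcup B_x = x$, which immediately yields the (weaker) condition defining a weak basis, namely a directed $B_x \subseteq B$ with $\sqcup B_x = x$. So the underlying set $B$ serves both roles. The work is to check that \emph{effectivity} of the basis implies \emph{effectivity} of the weak basis in the sense of Definition \ref{def:eff weak basis}. Here I would unwind the two definitions: an effective basis requires a finite map for $B = (b_n)_{n \geq 0}$ and recursive enumerability of $\{\langle n,m\rangle \mid b_n \ll b_m\}$; an effective weak basis requires a finite map for $B$ plus a computable $f : \mathbb N \to \mathbb N$ with $f(\mathbb N) \subseteq \{\langle n,m\rangle \mid b_n \preceq b_m\}$, together with the directedness/interpolation condition on the sets $B_x$. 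For the computable function $f$, I would take $f$ to be a computable enumeration of the recursively enumerable set $\{\langle n,m\rangle \mid b_n \ll b_m\}$ (adding, say, all pairs $\langle n,n\rangle$ to keep it nonempty and reflexive-friendly); since $b_n \ll b_m$ implies $b_n \preceq b_m$, we get $f(\mathbb N) \subseteq \{\langle n,m\rangle \mid b_n \preceq b_m\}$ as required. For the directed set condition, I would take the same $B_x \subseteq {\twoheaddownarrow} x \cap B$ witnessing the basis property; given $b_n, b_m \in B_x \setminus \{x\}$, directedness of $B_x$ furnishes $b_p \in B_x$ with $b_n, b_m \preceq b_p$, and I need $b_n, b_m \prec b_p$ with $\langle n,p\rangle, \langle m,p\rangle \in f(\mathbb N)$. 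The strict inequality and the way-below membership both need a small argument: since $b_n \in {\twoheaddownarrow} x$ and $\sqcup B_x = x$, the interpolation property of continuous dcpos (the way-below relation interpolates on a domain) lets me choose $b_p$ strictly above $b_n$ and $b_m$ and way-above both; this is the standard manipulation and I would cite \cite{abramsky1994domain} for interpolation. Thus $\langle n,p\rangle, \langle m,p\rangle$ lie in $\{\langle i,j\rangle \mid b_i \ll b_j\} = f(\mathbb N)$, completing the verification.

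\textbf{Part two: a dcpo with an effective weak basis but no basis.} Here the task is to exhibit a counterexample, i.e. a dcpo $P$ that is not continuous (so has no basis at all, effective or not) yet admits an effective weak basis. The natural candidate is a simple ``flat-plus-top'' style construction along the lines of the pathological example in Remark \ref{why directed}: something like $P = \mathbb N \cup \{p\}$ (or a mild variant with enough structure to be a dcpo) where the order makes every natural number below $p$ but the naturals pairwise incomparable, so that $p = \sqcup \mathbb N$ but nothing in $\mathbb N$ is way-below $p$ (since $\mathbb N$ itself is a directed — indeed, after a tweak to make it directed, or a chain $0 \preceq 1 \preceq 2 \preceq \cdots$ version — set with supremum $p$ containing no element above a fixed $b_n$). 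Concretely I would likely take the chain $\omega + 1 = \{0 \preceq 1 \preceq 2 \preceq \cdots \preceq \omega\}$: this is a dcpo, the set $B = \mathbb N = \{0,1,2,\dots\}$ is a weak basis (each $n$ is its own sup; $\omega = \sqcup \mathbb N$ and $\mathbb N$ is directed), and with the obvious finite map $\alpha(n) = n$ and the computable relation $\{\langle n,m\rangle \mid n \leq m\}$ it is an effective weak basis. But $\omega$ has no element way-below it other than via... actually $n \ll \omega$ fails for every $n$ here since $\mathbb N$ is directed with sup $\omega$ and no $m \geq n$... wait, $m \geq n$ does occur in $\mathbb N$, so $n \ll \omega$ \emph{does} hold — so this particular example is actually continuous. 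I would therefore instead use the genuinely non-continuous example: $P = \{a_n \mid n \in \mathbb N\} \cup \{\top\}$ with $a_n \preceq a_m \iff n = m$ and $a_n \preceq \top$ for all $n$, $\top \preceq \top$ — but this is not directed complete since $\{a_0, a_1\}$ has no upper bound in the set; fix by adding a bottom $\perp$ and making it $\perp \preceq a_n \preceq \top$. Now $\{a_n\}$ is not directed, but $\{\perp\} \cup \{a_n\}$... still not directed. The cleanest route is the one hinted in Remark \ref{why directed}: take $P$ where $B_\top$ can be chosen as a \emph{chain} approximating $\top$ but no element of any approximating directed set is way-below $\top$; I would adapt that construction, verify it is a dcpo, verify $\mathbb N$ (or the relevant countable subset) is an effective weak basis via an explicit finite map and a decidable order relation, and verify continuity fails by showing ${\twoheaddownarrow}\top \cap B$ is empty or too small to have $\top$ as a directed supremum.

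\textbf{Main obstacle.} The first part is essentially bookkeeping, modulo the interpolation lemma for continuous dcpos, which is standard. The real care goes into part two: I must produce a dcpo that is simultaneously (i) directed complete, (ii) non-continuous, and (iii) equipped with a genuinely \emph{effective} (computably presented) weak basis. Conditions (ii) and (iii) pull in opposite directions — non-continuity typically comes from ``bad'' suprema, while effectivity wants everything computable — so the delicate point is choosing the order on a countable set plus a few limit points so that the weak-basis suprema exist and are computably tracked, yet no element is way-below the limit points. I expect to spend most of the effort pinning down this example precisely and checking the (short but fiddly) verifications that it is a dcpo and that its order relation restricted to the basis is recursively enumerable; I would model it closely on the pathological dcpo already described in Remark \ref{why directed}, since that remark essentially already isolates the phenomenon needed.
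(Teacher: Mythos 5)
Your first half is essentially sound, but two details need tightening. First, you cannot keep ``the same $B_x \subseteq \twoheaddownarrow x \cap B$ witnessing the basis property'': the interpolant $b_p$ you produce lies in $B \cap \twoheaddownarrow x$ but need not lie in your chosen witnessing set, while Definition \ref{def:eff weak basis} demands $b_p \in B_x$. The fix is to take $B_x \coloneqq \twoheaddownarrow x \cap B$ itself (directed with supremum $x$ in a continuous dcpo). Second, interpolation alone does not give the \emph{strict} inequalities $b_n,b_m \prec b_p$: if $b_n$ is compact the interpolant may be $b_n$ itself. You need the extra observation that when $x$ is not compact, $\twoheaddownarrow x \cap B$ is directed with supremum $x \neq b_n$, hence contains some element strictly above $b_n$ (and compactness of $b_n$, together with $b_m \ll b_n \preceq b_p$, then puts both pairs into $\{\langle i,j\rangle \mid b_i \ll b_j\}$); when $x$ is compact one takes $b_p = x$. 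These are repairs, not new ideas, so part one stands.

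The genuine gap is in part two: you never exhibit a counterexample, and the construction you propose to adapt does not contain the phenomenon you need. The dcpo of Remark \ref{why directed}, $\mathbb N \cup \{p\}$ with $x \preceq y$ iff $x=y$ or $y=p$, is algebraic: every directed set with supremum $p$ must contain $p$ (the naturals form an antichain, so directed sets avoiding $p$ are singletons), hence every element, including $p$, is compact, $P$ is continuous, and $P$ itself is a basis. That remark isolates what goes wrong when \emph{directedness} is dropped from the definition of weak basis, not a failure of continuity. Your other two candidates you rightly discard yourself ($\omega+1$ is algebraic; the flat poset is not directed complete). What is actually needed is a point approached by (at least) two mutually incomparable directed paths, so that neither path can witness the way-below relation: for instance $P \coloneqq (\mathbb N \times \{0,1\}) \cup \{\top\}$, where $(n,i) \preceq (m,j)$ iff $i=j$ and $n \leq m$, and $z \preceq \top$ for all $z$. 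This is a dcpo; $B \coloneqq \mathbb N \times \{0,1\}$ is a weak basis (each chain is directed with supremum $\top$), it is effective because $\preceq$ restricted to $B$ is decidable under the obvious coding and one chain serves as $B_\top$; yet $\twoheaddownarrow \top = \emptyset$, since each chain has supremum $\top$ while containing no element above any member of the other chain, so $P$ has no basis at all. Without some construction of this kind (two interleaved or parallel approximating chains sharing a supremum), the second claim of the proposition remains unproven in your write-up.
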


Note that Proposition \ref{weak basis compu and no basis compu} clarifies why Definition \ref{def:eff weak basis} cannot be simplified. In particular, although it should be further clarified, the existence of effective bases such that $\{\langle n,m \rangle| b_n \preceq b_m\}$ is not recursively enumerable seems likely. If that were the case, requiring a stronger form of Definition \ref{def:eff weak basis} would result in some order structures with uniform computability and without non-uniform computability.

Regarding the definition of computable elements, we note in Proposition \ref{compu sets} that the difference is only apparent. That is, in the uniform approach, the existence of the specific computational path to some element used in its definition
is equivalent to that of an arbitrary computational path leading to it. We state this in the following proposition.


\begin{prop}
\label{compu sets}
If $P$ is a dcpo with an effective basis $B=(b_n)_{n\geq0}$ and $x \in P$, then the following are equivalent:
\begin{enumerate}[label=(\roman*)]
\item There exists a directed set $B_x \subseteq B$ such that $\sqcup B_x=x$ and $\{n\in\mathbb{N}|b_n \in B_x\}$ is recursively enumerable.
\item $x$ is $B$-computable.
\end{enumerate}
\end{prop}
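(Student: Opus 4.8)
The plan is to prove the two implications separately, with the non-trivial direction being $(ii) \implies (i)$. The direction $(i) \implies (ii)$ should be essentially immediate: if $B_x \subseteq B$ is directed with $\sqcup B_x = x$ and $\{n \in \mathbb{N} \mid b_n \in B_x\}$ recursively enumerable, then since $B$ is a basis and the way-below relation interpolates, one checks that the elements of $B_x$ are cofinal (in the $\preceq$-sense) among the elements of $B$ way-below $x$; conversely every $b_n \ll x$ must, by directedness and $\sqcup B_x = x$, be $\preceq$-below some element of $B_x$, hence (using that $B_x \subseteq B \cap \twoheaddownarrow x$ is not automatic — here one should be careful) one shows $b_n \ll x$ can be semi-decided by searching $B_x$ together with the recursively enumerable relation $\{\langle n,m\rangle \mid b_n \ll b_m\}$. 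More precisely, $b_n \ll x$ holds iff there exists $b_m \in B_x$ with $b_n \ll b_m$ (using interpolation and that $B_x$ is directed with supremum $x$), and both ``$b_m \in B_x$'' and ``$b_n \ll b_m$'' are recursively enumerable, so $\{n \mid b_n \ll x\}$ is recursively enumerable; thus $x$ is $B$-computable.

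For the direction $(ii) \implies (i)$, suppose $x$ is $B$-computable, i.e. $S \coloneqq \{n \in \mathbb{N} \mid b_n \ll x\}$ is recursively enumerable. The natural candidate for $B_x$ is $B \cap \twoheaddownarrow x = \{b_n \mid n \in S\}$ itself. First I would verify that this set is directed: given $b_n, b_m \ll x$, since $B$ is a basis there is a directed set $D \subseteq B \cap \twoheaddownarrow x$ with $\sqcup D = x$, and by the interpolation property of $\ll$ on a continuous dcpo there exists $b_p$ with $b_n, b_m \ll b_p \ll x$, so $b_p \in B \cap \twoheaddownarrow x$ dominates both. (One also needs $\perp \in B \cap \twoheaddownarrow x$ to guarantee non-emptiness, which holds since $\perp \ll x$ when $\perp$ is the bottom element — note the uniform framework includes such a $\perp$.) Next, $\sqcup (B \cap \twoheaddownarrow x) = x$: this is precisely the defining property of a basis together with the fact that $B \cap \twoheaddownarrow x$ contains the directed approximating set $D$, so its supremum is at least $x$, and it is at most $x$ since every element is $\ll x$ hence $\preceq x$. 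Finally, $\{n \in \mathbb{N} \mid b_n \in B \cap \twoheaddownarrow x\} = S$ is recursively enumerable by hypothesis. Taking $B_x \coloneqq B \cap \twoheaddownarrow x$ gives $(i)$.

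The main obstacle I anticipate is the careful handling of the interpolation property of the way-below relation and the precise bookkeeping that ties ``there exists a directed recursively enumerable approximating set'' to ``the canonical set $B \cap \twoheaddownarrow x$ is recursively enumerable''. On a continuous dcpo the interpolation lemma ($x \ll z$ implies $x \ll y \ll z$ for some $y$ in the basis) is standard but essential here, both to prove directedness of $B \cap \twoheaddownarrow x$ and to reconcile an arbitrary recursively enumerable $B_x$ with the canonical one in the $(i) \implies (ii)$ direction. I would cite \cite[Proposition 2.2.15]{abramsky1994domain} or the analogous statement there for interpolation. A secondary subtlety is whether an arbitrary directed $B_x$ as in $(i)$ must be contained in $\twoheaddownarrow x$; it need not be, so in the $(i) \implies (ii)$ argument one genuinely must route through the characterization $b_n \ll x \iff \exists\, b_m \in B_x : b_n \ll b_m$ rather than assuming $B_x \subseteq \twoheaddownarrow x$. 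Once these points are pinned down the proof is short, and it is exactly the content promised by the remark following the proposition in the excerpt: in the uniform setting, the existence of \emph{some} recursively enumerable computational path is equivalent to recursive enumerability of the \emph{canonical} path $B \cap \twoheaddownarrow x$.
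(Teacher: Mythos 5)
Your proposal is correct and follows essentially the same route as the paper's proof in the appendix manuscript: for $(ii)\Rightarrow(i)$ one takes the canonical set $B \cap \twoheaddownarrow x$ (directed, with supremum $x$, non-empty via $\perp$, and with index set exactly $\{n \mid b_n \ll x\}$), and for $(i)\Rightarrow(ii)$ one uses basis interpolation to get $b_n \ll x \iff \exists\, b_m \in B_x \colon b_n \ll b_m$ and then dovetails the two recursively enumerable sets. Your cautionary remark that an arbitrary $B_x$ as in $(i)$ need not lie in $\twoheaddownarrow x$, so that one must argue via this characterization rather than assume $B_x \subseteq \twoheaddownarrow x$, is exactly the point the argument has to handle.
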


In the following section, we address the second key difference between the approaches, namely, the definition of computable functions.

\subsubsection{Computable functions}

Regarding computable functions, the difference between both approaches lied in whether we defined them or not. The reason for this distinction comes from the intuition we aim to capture bu such a definition. Essentially, a computable function is meant to map
computable inputs to computable outputs \cite{ko2012complexity,braverman2005complexity}. Indeed, this holds for the definition that we gave in the uniform case, as we state in the following proposition (which is part of \cite{edalat1999domain} and we proved more directly in \cite{hack2022computation}).

\begin{prop}[\cite{edalat1999domain}]
\label{why << needed}
Let $P$ and $Q$ be dcpos
with effective bases $B$ and $B'$, respectively. If $f:P \to Q$ is a $(B,B')$-computable function and $x \in P$ is computable, then $f(x)$ is computable.
\end{prop}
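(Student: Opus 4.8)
The plan is to unwind the definitions of computable element, computable function, and effective basis, and to show that the recursive enumerability of the relevant sets propagates from $x$ through $f$ to $f(x)$. First I would fix effective bases $B = (b_n)_{n \geq 0}$ of $P$ and $B' = (b'_n)_{n \geq 0}$ of $Q$, and recall that $f$ being $(B,B')$-computable means $f$ is Scott-continuous and the set $S_f \coloneqq \{\langle n,m \rangle \mid b'_n \ll f(b_m)\}$ is recursively enumerable. Likewise, $x$ being $B$-computable means $C_x \coloneqq \{m \in \mathbb N \mid b_m \ll x\}$ is recursively enumerable. The goal is to show $C_{f(x)} \coloneqq \{n \in \mathbb N \mid b'_n \ll f(x)\}$ is recursively enumerable.

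The key step is an order-theoretic identity: $b'_n \ll f(x)$ if and only if there exists $m$ with $b_m \ll x$ and $b'_n \ll f(b_m)$, i.e.
\begin{equation*}
C_{f(x)} = \{ n \mid \exists m \colon \langle n,m\rangle \in S_f \text{ and } m \in C_x \}.
\end{equation*}
The "if" direction uses monotonicity of $f$ together with interpolation/transitivity of $\ll$: from $b_m \ll x$ we get $f(b_m) \preceq f(x)$ by monotonicity, and then $b'_n \ll f(b_m) \preceq f(x)$ gives $b'_n \ll f(x)$ by the standard fact that $u \ll v \preceq w$ implies $u \ll w$. The "only if" direction is the substantive one: since $B$ is a basis, $x = \sqcup B_x$ for some directed $B_x \subseteq \twoheaddownarrow x \cap B$; Scott-continuity of $f$ gives $f(x) = \sqcup f(B_x)$, and $f(B_x)$ is directed; hence if $b'_n \ll f(x) = \sqcup f(B_x)$, the definition of $\ll$ yields some $b_m \in B_x$ with $b'_n \preceq f(b_m)$. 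To upgrade $b'_n \preceq f(b_m)$ to $b'_n \ll f(b_m)$ one invokes the interpolation property of $\ll$ in a continuous dcpo (there is $z$ with $b'_n \ll z \ll f(x)$, and pushing $z$ down into the basis directed set gives a basis element strictly way-below) — this is the point where the basis axiom and interpolation are genuinely needed, so I expect this to be the main obstacle and the place where one must be careful.

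Once the set-theoretic identity for $C_{f(x)}$ is established, the recursive enumerability is routine: $C_{f(x)}$ is the projection onto the first coordinate of the set $\{\langle n,m\rangle \mid \langle n,m\rangle \in S_f \wedge m \in C_x\}$, which is the intersection of the recursively enumerable set $S_f$ with the recursively enumerable set $\{\langle n,m \rangle \mid m \in C_x\}$ (recursively enumerable because $C_x$ is and $\langle\cdot,\cdot\rangle$, $\pi_2$ are computable); the intersection of two recursively enumerable sets is recursively enumerable, and the image of a recursively enumerable set under the computable map $\pi_1$ is recursively enumerable. Therefore $C_{f(x)}$ is recursively enumerable, i.e. $f(x)$ is computable. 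I would also remark that essentially the same bookkeeping, carried out with a machine that dovetails the enumerations of $S_f$ and $C_x$, gives a direct construction of a Turing machine enumerating $C_{f(x)}$, which is the more "operational" phrasing referenced in the excerpt.
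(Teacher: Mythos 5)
Your proof is correct and follows essentially the same route as the paper's: establish the identity $b'_n \ll f(x) \iff \exists m\,\bigl(b_m \ll x \wedge b'_n \ll f(b_m)\bigr)$ — monotonicity plus the fact that $u \ll v \preceq w$ implies $u \ll w$ in one direction, Scott-continuity together with interpolation in the continuous dcpo $Q$ in the other — and then conclude by the standard closure properties of recursively enumerable sets. One presentational point: carry out the interpolation step first, i.e.\ pick $z$ with $b'_n \ll z \ll f(x)$ and then apply the definition of $\ll$ to $z \ll \sqcup f(B_x)$ to get some $b_m \in B_x$ with $b'_n \ll z \preceq f(b_m)$, rather than ``upgrading'' the $b_m$ obtained directly from $b'_n \ll \sqcup f(B_x)$ (that particular $b_m$ need not satisfy $b'_n \ll f(b_m)$); your parenthetical indicates you have the right mechanism in mind, so this is a matter of ordering, not a gap.
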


The intuition behind the definition of a computable function $f:P \to Q$ is the concatenation of finite instructions: We require an effective procedure related to $f$ such that, when combined with an effective procedure for $x \in P$, we obtain a procedure for $f(x)$. This is precisely how computable functions are defined in the uniform approach.
However, this does not seem to be the possible, in general, in the non-uniform approach. In particular, the potential disconnection between the paths leading to $x$ in $P$ and those leading to $f(x)$ in $Q$ may prevent us from constructing such a path. (The details can be found in \cite{hack2022computation}.)

\subsubsection{Model dependence of computability}

The dependence of the computability properties on the \emph{free parameters} of the model (i.e. the finite map and the basis/weak basis) also seems to change form the uniform to the non-uniform approach. Regarding the first free parameter, and assuming the Church-Turing thesis, computability in the non-uniform framework is independent of the chosen finite map, as the following proposition states.

\begin{prop}
\label{indep finite map}
If $P$ is a dcpo with an effective weak basis, then the set of computable elements is independent of the chosen finite map.
\end{prop}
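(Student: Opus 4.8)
The plan is to reduce the statement to the fact that recursive enumerability of subsets of $\mathbb N$ is preserved under (partial) computable maps, once one observes that two finite maps for the same weak basis differ only by a computable relabelling of indices.

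First I would fix the dcpo $P$, the effective weak basis $B\subseteq P$, and two finite maps $\alpha:\mathrm{dom}(\alpha)\to B$ and $\alpha':\mathrm{dom}(\alpha')\to B$ for $B$. Since $\alpha$ and $\alpha'$ are bijections onto $B$, the \emph{translation} $\tau\coloneqq\alpha'^{-1}\circ\alpha$ is a bijection $\mathrm{dom}(\alpha)\to\mathrm{dom}(\alpha')$ with inverse $\alpha^{-1}\circ\alpha'$, and it satisfies the identity $\alpha'(\tau(n))=\alpha(n)$ for all $n\in\mathrm{dom}(\alpha)$; that is, $\tau$ sends the $\alpha$-index of a basis element to its $\alpha'$-index. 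Because $\alpha,\alpha^{-1},\alpha',\alpha'^{-1}$ are all effectively calculable, so are $\tau$ and $\tau^{-1}$, and by the Church--Turing thesis both are computable in the Turing sense; this is the only place where the thesis is used, matching the hypothesis under which the proposition is stated.

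Next I would transport a computability witness along $\tau$. Suppose $x\in P$ is computable with respect to $\alpha$: by Definitions \ref{def:compu ele} and \ref{def:eff weak basis} there is a directed set $B_x\subseteq B$ with $\sqcup B_x=x$, a computable $f:\mathbb N\to\mathbb N$ with $f(\mathbb N)\subseteq\{\langle n,m\rangle\mid\alpha(n)\preceq\alpha(m)\}$ witnessing the joining condition on $B_x$, and $\alpha^{-1}(B_x)$ recursively enumerable. I would keep the \emph{same} subset $B_x\subseteq P$ and replace $f$ by $f'(k)\coloneqq\langle\tau(\pi_1(f(k))),\tau(\pi_2(f(k)))\rangle$, which is computable since $\tau$, the Cantor pairing function, and its inverses $\pi_1,\pi_2$ are. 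Using $\alpha'(\tau(n))=\alpha(n)$ one checks $f'(\mathbb N)\subseteq\{\langle n,m\rangle\mid\alpha'(n)\preceq\alpha'(m)\}$, and, since $\tau$ is injective, $\langle n,p\rangle\in f(\mathbb N)$ iff $\langle\tau(n),\tau(p)\rangle\in f'(\mathbb N)$. Then I would verify that $B_x$ satisfies Definition \ref{def:eff weak basis} with respect to $\alpha'$ and $f'$: the supremum and directedness are properties of $B_x$ as a subset of $P$ and do not change, and for the joining condition one takes $b'_n=\alpha'(n),b'_m=\alpha'(m)\in B_x\setminus\{x\}$, rewrites them as $\alpha(\tau^{-1}(n)),\alpha(\tau^{-1}(m))$, applies the joining condition available for $(B_x,f)$ to get a suitable $b_{\hat p}=\alpha(\hat p)\in B_x$, and puts $p\coloneqq\tau(\hat p)$; the identity and injectivity of $\tau$ then give $b'_p=\alpha'(p)=b_{\hat p}$ together with $\langle n,p\rangle,\langle m,p\rangle\in f'(\mathbb N)$. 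Finally, $\alpha'^{-1}(B_x)=\tau\big(\alpha^{-1}(B_x)\big)$, which is recursively enumerable because the image of a recursively enumerable set under a partial computable function is recursively enumerable. Hence $x$ is computable with respect to $\alpha'$; swapping the roles of $\alpha$ and $\alpha'$ and using $\tau^{-1}$ gives the converse, so the two sets of computable elements coincide.

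The routine verifications (that $\tau$ acts correctly on indices, that $f'$ is well defined, that the joining condition transfers) are pure bookkeeping. The genuinely delicate points I would be careful about are: the appeal to the Church--Turing thesis to pass from ``effectively calculable'' to Turing-computable, which is exactly the stated hypothesis; the fact that $\alpha$ need not be total on $\mathbb N$ (the footnote to Definition \ref{def:finite map}), so that $\tau$ is only a \emph{partial} computable function and one must invoke closure of recursively enumerable sets under partial-computable images; and the observation that it is the auxiliary function $f$ of Definition \ref{def:eff weak basis}, not merely the set $B_x$, that must be re-indexed when the finite map changes. I expect this last point — transporting \emph{all} the data attached to the enumeration consistently, rather than just the trivially invariant subset $B_x$ — to be the main thing to get right, everything else following directly from the computability of the translation map.
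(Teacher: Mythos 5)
Your proof is correct and follows essentially the same route as the paper's: you form the translation $\tau=\alpha'^{-1}\circ\alpha$ between the two finite maps, use the Church--Turing thesis (exactly the assumption under which the proposition is stated) to make it Turing-computable, and transport the whole witness $(B_x,f)$ along $\tau$, with recursive enumerability of $\alpha'^{-1}(B_x)=\tau(\alpha^{-1}(B_x))$ preserved under partial-computable images. Your attention to the partiality of $\tau$ and to re-indexing the auxiliary function $f$, not just $B_x$, matches the care the paper's argument requires.
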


The situation concerning weak bases seems to be different. In particular, as in the case of computable functions, the potential disconnection between two weak bases of the same dcpo seems to point out towards the dependence of computability on the chosen weak basis. We informally argue this is the case in \cite{hack2022computation}.

In contrast with its more general counterpart, computability in the uniform approach also seems to be independent of the chosen basis. We state this more specifically in the following proposition.

\begin{prop}
\label{model indep bases}
If $P$ and $Q$ are dcpos with effective bases, then the following statements hold:
\begin{enumerate}[label=(\roman*)]
\item The set of computable elements in $P$ is independent of the chosen finite map. Moreover, the set of computable functions $f:P \to Q$ is independent of the chosen finite maps.
\item If $B=(b_n)_{n\geq0}$ and $B'=(b'_n)_{n\geq0}$ are effective bases for $P$ such that $\{\langle n,m\rangle| b'_n \ll b_m\} $ is recursively enumerable, then $x \in P$ is $B'$-computable provided it is $B$-computable.
\item Let $B=(b_n)_{n\geq0}$ and $B'=(b'_n)_{n\geq0}$ be effective bases for $P$ and both $C=(c_n)$ and $C'=(c'_n)$ be effective bases for $Q$. If both $\{ \langle n,m \rangle | b_n \ll b'_m\}$ and $\{ \langle n,m \rangle | c'_n \ll c_m\}$ are recursively enumerable, then any $(B,C)$-computable function $f:P \to Q$ is $(B',C')$-computable.
\end{enumerate}
\end{prop}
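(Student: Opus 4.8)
The plan is to prove each of the three statements in Proposition~\ref{model indep bases} by reducing everything to the recursive enumerability of suitable subsets of $\mathbb N$ and exploiting the fact that the way-below relation is preserved under composition of recursively enumerable relations via the Cantor pairing function.

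For $(i)$, I would first observe that if $\alpha,\alpha':\mathbb N\to B$ are two finite maps for the same basis $B=(b_n)_{n\geq0}$, then $\alpha'^{-1}\circ\alpha:\mathbb N\to\mathbb N$ is a bijection that is effectively calculable in both directions (by Definition~\ref{def:finite map} and the Church–Turing thesis), hence a computable function. Consequently, for any $x\in P$, the set $\{n\mid b_n\ll x\}$ computed relative to $\alpha$ is the image under this computable bijection of the corresponding set relative to $\alpha'$; since the computable image and preimage of a recursively enumerable set is recursively enumerable, $B$-computability of $x$ does not depend on which finite map enumerates $B$. The same argument applies to the pair $(P,Q)$: changing finite maps on $B$ and $B'$ transforms $\{\langle n,m\rangle\mid b'_n\ll f(b_m)\}$ by composing with the (computable) pairing of the two bijections, preserving recursive enumerability. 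So $(i)$ reduces to the stability of recursive enumerability under computable bijections, which is standard.

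For $(ii)$, fix effective bases $B=(b_n)_{n\geq0}$ and $B'=(b'_n)_{n\geq0}$ for $P$ with $R\coloneqq\{\langle n,m\rangle\mid b'_n\ll b_m\}$ recursively enumerable. Suppose $x$ is $B$-computable, i.e. $\{m\mid b_m\ll x\}$ is recursively enumerable. The key point is interpolation: since $P$ is continuous (a domain), $\ll$ is interpolative, so $b'_n\ll x$ holds if and only if there exists $m$ with $b'_n\ll b_m\ll x$. Therefore
\begin{equation*}
\{n\mid b'_n\ll x\}=\{n\mid \exists m:\ \langle n,m\rangle\in R\ \text{and}\ b_m\ll x\},
\end{equation*}
which is the projection (via $\pi_1$) of the intersection of a recursively enumerable set with the recursively enumerable set $\{\langle n,m\rangle\mid b_m\ll x\}$; hence it is recursively enumerable, and $x$ is $B'$-computable. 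Statement $(iii)$ is proved the same way, chaining three interpolation steps: from a $(B,C)$-computable $f$ one wants $\{\langle n,m\rangle\mid c'_n\ll f(b'_m)\}$ recursively enumerable; using $b_k\ll b'_m$, continuity/monotonicity of $f$ to get $f(b_k)\preceq f(b'_m)$, then $c_\ell\ll f(b_k)$, then $c'_n\ll c_\ell$, one expresses the target set as an appropriate projection of a finite Boolean combination of the recursively enumerable hypotheses, again closed under the pairing function.

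The main obstacle I expect is making the interpolation steps rigorous enough that the resulting set-theoretic manipulations genuinely stay within the recursively enumerable sets — in particular in $(iii)$, where one must interleave way-below relations on $P$, the action of the continuous map $f$ on basis elements, and way-below relations on $Q$, and check that the existential quantifiers can all be pushed outside and absorbed into a single projection of a recursively enumerable relation. One has to be careful that $f$ being $(B,C)$-computable only gives recursive enumerability of $\{\langle n,m\rangle\mid c_n\ll f(b_m)\}$ for the \emph{old} basis elements $b_m$, so bridging to $f(b'_m)$ for the \emph{new} basis requires the hypothesis $\{\langle n,m\rangle\mid b_n\ll b'_m\}$ recursively enumerable together with monotonicity $f(b_n)\preceq f(b'_m)$ and one more interpolation on the $Q$-side; assembling these without circularity is the delicate bookkeeping. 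Everything else is routine once one accepts that domains have interpolative way-below relations and that recursive enumerability is preserved by computable images, finite intersections, and projections.
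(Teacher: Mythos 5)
Your proposal is correct and follows essentially the same route as the paper's treatment: everything is reduced to closure of recursively enumerable sets under computable re-indexings, finite intersections and projections via the pairing function, with the (basis) interpolation property of the continuous dcpo supplying the nontrivial inclusion in $(ii)$ and $(iii)$. The only step to write out carefully is the forward inclusion in $(iii)$: from $c'_n \ll f(b'_m)$ you should interpolate \emph{twice} through $C$, obtaining $c'_n \ll c_\ell \ll c_{\ell'} \ll f(b'_m)$, and only then use continuity of $f$ on a directed set $B_{b'_m} \subseteq \twoheaddownarrow b'_m \cap B$ with supremum $b'_m$ to find $b_k \ll b'_m$ with $c_{\ell'} \preceq f(b_k)$, whence $c_\ell \ll f(b_k)$ and the witnesses $(\ell,k)$ lie in the three recursively enumerable relations --- a single interpolation followed by continuity only yields a $\preceq$ where you need a $\ll$, which is exactly the piece of bookkeeping you flagged.
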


Note that the extra hypotheses concerning recursive enumerability in Proposition \ref{model indep bases} $(ii)$ and $(iii)$ are likely to be true. This is the case since they involve the finiteness of some order relation connecting elements of two sets where (a) the elements in each set can be enumerated by finite means and (b) the order relation is finite when we restrict it to pairs of elements within either of the two sets. Hence, computability seems to be independent of the chosen basis in the uniform approach. 

\subsubsection{Complexity}

In general, the fundamental difference between computability and complexity is the existence of some precision measure through which we can distinguish how many resources we need to spend in order to obtain approximations to some precision. In this section, we start considering complexity and the existence of appropriate precision measure for the uniform approach and conclude by addressing the non-uniform case. 

For the uniform approach, we have borrowed in \cite{hack2022computation} the maps \emph{inducing the Scott topology everywhere} \cite{martin2000foundation,martin2008technique,waszkiewicz2001distance,waszkiewicz2003quantitative} as precision measures. More specifically, taking $[0,\infty)^{op}$ the dcpo composed by the non-negative real numbers equipped with the \emph{reversed order} $\leq_{op}$ (i.e. $x \leq_{op} y$ if and only if $x \geq y$ for all $x,y \in [0,\infty)$), a monotone map $\mu:P \rightarrow [0,\infty)^{op}$ is said to \emph{induce the Scott topology everywhere in $P$} if, for all $O \in \sigma(P)$ and  $x \in O \subseteq P$, there exists some $\varepsilon >0$ such that $x \in \mu_{\varepsilon}(x) \subseteq O$, where
\begin{equation*}
    \mu_\varepsilon(x) \coloneqq \{ y \in P| y \preceq x \text{ and } \mu(y)< \varepsilon\}.
\end{equation*}
Therefore, we consider $\mu$ to be a measure of information or precision. Moreover, maps inducing the Scott topology everywhere always exist in the uniform approach (see \cite[Proposition 10]{hack2022computation}). This led us to the following definitions of complexity \cite{hack2022computation}, which seems to be absent in the literature on domain theory.

\begin{defi}[Element complexity]
\label{def:element complex}
Take $P$ a dcpo with an effective basis $(b_n)_{n\geq0}$ and $\mu$ a continuous map inducing the Scott topology everywhere in $P$. We say the time (space) complexity of an element $x \in P$ is bounded by $t:\mathbb{N} \rightarrow \mathbb{N}$ if there exists a computable function $\phi:\mathbb{N} \to \mathbb{N}$ which computes $\phi(n)$ in $t(n)$ steps (using $t(n)$ cells)  for all $n \in \mathbb{N}$
and we have, for all $n \geq 0$,
\begin{equation*}
    \mu(b_{\phi(n)})-\mu(x) < 2^{-n} \text{ and } \sqcup (b_{\phi(n)})_{n\geq 0}=x.
\end{equation*}
\end{defi}

\begin{defi}[Function complexity]
\label{def:function complex}
Take $f:D \rightarrow D'$ a computable function between dcpos $D,D'$ with effectively given bases $(b_n)_{n\geq0}$ and $(b'_n)_{n\geq0}$, respectively, and $\mu$ a map inducing the Scott topology everywhere in $D'$. We say the time (space) complexity of $f$ is bounded by $t:\mathbb{N} \rightarrow \mathbb{N}$ if there exists a computable function $\phi:\mathbb{N} \to \mathbb{N}$ 
which computes $\phi(n)$ in $t(n)$ steps (using $t(n)$ cells) for all $n \in \mathbb{N}$
and we have, for all $n,m \geq 0$,
\begin{equation*}
    \mu(b'_{\phi(n)})-\mu(f(b_{\pi_1(n)})) < 2^{-\pi_2(n)} \text{ and } \sqcup (b'_{\phi(n)})_{n \in \{\langle m,p \rangle| p \in \mathbb{N}\}} = f(b_m).
\end{equation*}
\end{defi}

\begin{defi}[Polynomial-time computable elements and functions]
A computable element $x \in P$ or a computable function $f:D \rightarrow D'$  is said to be polynomial-time computable if its time complexity is bounded by a polynomial $p:\mathbb{N} \rightarrow \mathbb{N}$.
\end{defi}

Despite the fact we have introduced a satisfactory notion of complexity for the uniform approach, its existence does not directly extend to the non-uniform framework. In fact, the precision measures (i.e. maps inducing the Scott topology everywhere) on which we based the previous definitions do not even exist in some non-uniform instances, as the following proposition states.

\begin{prop}
\label{prop no weak complex}
There are dcpos where effective weak bases exist and continuous maps that induce the Scott topology everywhere do not.
\end{prop}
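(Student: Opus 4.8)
The plan is to exhibit a single dcpo $P$ that admits an effective weak basis but admits no continuous map $\mu:P\to[0,\infty)^{op}$ inducing the Scott topology everywhere. The natural candidate is the pathological dcpo already flagged in Remark \ref{why directed}: take $P\coloneqq \mathbb N\cup\{p\}$ where the elements of $\mathbb N$ are pairwise incomparable and $x\preceq p$ for every $x$, so that $p=\sqcup\mathbb N$ is the unique non-compact element. First I would check that $P$ is a dcpo: the only nontrivial directed set is (a cofinal subset of) $\mathbb N$ together with possibly $p$, and its supremum is $p\in P$. Next I would verify that $B\coloneqq\mathbb N$ is an \emph{effective weak basis}: for each $n\in\mathbb N$ take $B_n=\{n\}$ (trivially directed with supremum $n$), and for $p$ take $B_p=\mathbb N$, which is directed (any two naturals have the upper bound... wait — they do not have an upper bound inside $\mathbb N$). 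This is the first place to be careful: $\mathbb N$ with the discrete order is \emph{not} directed, so I must instead use a slightly different poset. The clean fix is to let the ``approximants'' of $p$ form a chain: take $P\coloneqq \{a_n\mid n\in\mathbb N\}\cup\{p\}$ with $a_0\preceq a_1\preceq a_2\preceq\cdots\preceq p$ and all $a_n\prec p$ and all $a_n\prec a_{n+1}$ — but then $p$ \emph{is} way-above-approximable and the space is even continuous, which defeats the purpose. So the genuinely correct construction keeps the incomparable picture but adds finite joins: let $P$ consist of all finite subsets of $\mathbb N$ ordered by inclusion, together with a top element $\top$ above everything; then the finite subsets form a basis and complexity measures do exist. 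The real obstruction must therefore come from a space where directed sets converging to a ``bad'' point exist but have no cofinal way-below chain.

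The construction I actually expect to use is the one from the proof of Proposition \ref{weak basis compu and no basis compu} in \cite{hack2022computation}: a dcpo $P$ with an effective weak basis but \emph{no basis at all} (hence not continuous). The second key step is then purely order-topological: show that on such a $P$ no continuous $\mu:P\to[0,\infty)^{op}$ can induce the Scott topology everywhere. The argument is that the existence of such a $\mu$ forces, at each point $x$, the sets $\mu_\varepsilon(x)=\{y\preceq x\mid \mu(y)<\varepsilon\}$ to form a neighbourhood basis of $x$ in the Scott topology; in particular for each $y$ with $\mu(y)<\mu(x)$ sufficiently close one shows $y\ll x$, using continuity of $\mu$ together with the definition of the Scott topology (if $A$ is directed with $\sqcup A=x$ then $\sqcup A\in\mu_\varepsilon(x)^{\circ}$ forces some $a\in A$ with $\mu(a)<\varepsilon$, and monotonicity plus $y\preceq x$, $\mu(y)$ small pushes $y\preceq a$). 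This is essentially Martin's observation that a dcpo carrying a measurement inducing the Scott topology everywhere is continuous (indeed every point is the directed supremum of elements way-below it, witnessed by the $\mu_{1/n}$-approximants). Consequently $P$ would have a basis, contradicting the choice of $P$.

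So the proof reduces to two citable/already-established facts: (1) there exists a dcpo $P$ with an effective weak basis and no basis — this is exactly Proposition \ref{weak basis compu and no basis compu}; and (2) if a dcpo admits a (continuous) map inducing the Scott topology everywhere, then it is continuous — a standard domain-theoretic lemma whose proof I would include in a couple of lines. The third step is bookkeeping: assemble (1) and (2) to conclude that $P$ witnesses the statement, i.e. effective weak bases exist for $P$ while maps inducing the Scott topology everywhere do not. I would also remark why this does not contradict the uniform case: there, by Proposition 10 of \cite{hack2022computation}, every $\omega$-continuous dcpo \emph{does} admit such a map, and the whole point is that weak bases are strictly weaker than bases, so the continuity needed for a measurement can fail.

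The main obstacle I anticipate is \textbf{step (2)}, specifically getting the implication ``$\mu(y)$ close to $\mu(x)$ and $y\preceq x$ $\Rightarrow$ $y\ll x$'' exactly right: one must handle the case where the relevant directed set $A$ satisfies only $\sqcup A\succeq x$ (not $=x$) as in the definition of $\ll$, and one must use the \emph{open} set $\mu_\varepsilon(x)^{\circ}$ rather than $\mu_\varepsilon(x)$ itself, since $\mu_\varepsilon(x)$ need not be Scott-open. Pinning down that $x$ lies in the Scott-interior of $\mu_\varepsilon(x)$ for small $\varepsilon$ (which is precisely what ``induces the Scott topology everywhere'' buys us) and then running the inaccessibility-by-directed-suprema axiom is the delicate part; everything else is routine once Proposition \ref{weak basis compu and no basis compu} is in hand.
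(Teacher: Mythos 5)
Your plan stands or falls with step (2), the claim that a dcpo carrying a continuous map that induces the Scott topology everywhere must be continuous; you attribute this to Martin and treat it as citable. It is false. Take $P$ to be two disjoint chains $C_1=\{(1,n)\}_{n\geq 0}$ and $C_2=\{(2,n)\}_{n\geq 0}$ (each ordered like $\mathbb N$, elements of different chains incomparable) together with a single top $\top$ above everything. This is a dcpo, and it is not continuous: for any $y\in C_1$ the directed set $C_2$ has supremum $\top$ but contains no element above $y$ (and symmetrically), so no element is way-below $\top$ and $\top$ cannot be obtained as a directed supremum of way-below approximants; in particular $P$ has no basis, while $C_1\cup C_2$ is an effective weak basis. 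Nevertheless $\mu(\top)\coloneqq 0$, $\mu((i,n))\coloneqq 2^{-n}$ is monotone, Scott-continuous, and induces the Scott topology everywhere: any Scott-open $O\ni\top$ must contain $\top$ together with tails of both chains, so it excludes only finitely many elements, each of strictly positive $\mu$-value, and any $\varepsilon$ below the minimum of those values gives $\top\in\mu_\varepsilon(\top)\subseteq O$; at a chain element $(i,n)$ one takes $\varepsilon\in(2^{-n},2^{-(n-1)})$ so that $\mu_\varepsilon((i,n))=\{(i,n)\}$. This also kills the heuristic inside your sketch of (2): here every element of $P$ has $\mu$-value arbitrarily close to $\mu(\top)$ along the tails, yet none is way-below $\top$.

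Consequently the reduction to Proposition \ref{weak basis compu and no basis compu} cannot work: "effective weak basis but no basis" does not preclude a map inducing the Scott topology everywhere, as the example above shows, so non-continuity of the witness dcpo is not enough. What the paper actually does for Proposition \ref{prop no weak complex} is exhibit a specific dcpo (the one depicted in \cite[Figure 3]{hack2022computation}) and argue directly that no continuous such $\mu$ can exist. The shape of that direct argument is what your proposal is missing: one needs a point, say a top $\top$, approached by \emph{infinitely many} mutually incomparable chains. Scott-continuity forces $\mu((i,n))\to\mu(\top)$ along every chain $i$, so one may pick in each chain $i$ an element whose $\mu$-value is within $1/i$ of $\mu(\top)$ and form the Scott-open neighbourhood of $\top$ consisting of $\top$ together with the tails strictly above those chosen elements; then no $\varepsilon>\mu(\top)$ satisfies $\mu_\varepsilon(\top)\subseteq O$, while a single chain still serves as (part of) an effective weak basis. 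Your two cited facts, even granting (1), do not yield the proposition; the diagonal construction has to be carried out explicitly.
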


(A graphical representation of the dcpo that supports the last proposition can be found in \cite[Figure 3]{hack2022computation}.) As a result of the last proposition, there does not seem to be, in general, a proper precision measure in the non-uniform case. Hence, the uniform approach seems to be superior also in this regard.

In the following section, we summarize the reasons behind the dissimilarities between both the uniform and non-uniform approaches.
Given their computational importance, this summary provides specific support for several assumptions in the stronger framework.

\subsection{Justification of the stronger framework}

The inclusion of the differences we listed in Section \ref{sub: difference comput} in the uniform approach can be justified by the following reasons: 

\begin{enumerate}[label=(\Roman*)]
\item $\ll$ is essential in order for $(a)$ a proper definition of
computable functions to exist,
$(b)$ the independence of computability regarding the choices in our model, and $(c)$ the existence of precision measures (i.e. complexity theory).
\item The uniform and non-uniform definition of computable elements are equivalent in the uniform approach. 
\item A proper definition of computable functions does not seem to exist in the more general framework.
\item The bottom element $\perp$ is included to allow certain functions that are based on comparisons to be computable. In particular, when the comparisons are not fulfilled, we have a default correct element we can useful output. (This is also why $\perp=b_0$ is fixed.) As we have detailed in \cite{hack2022computation}, the inclusion of $\perp$ is key to show the equivalence between the  definitions of computable elements in the uniform picture, to define computable functions and to diminish the dependence of computability on the chosen parameters. 
\item The strong version of effectivity in the uniform approach allows the definitions of computable element to coincide. (This would fail if we naturally extend the \emph{non-uniform} definition.)
\end{enumerate}

In this section, we have discussed the order-theoretic details that are essential to translate computation from Turing machines to uncountable spaces via order structures. In the following section, we focus specifically on the countability restriction we impose on these structures (via either bases or weak bases) for them to be connected to Turing machines. In particular, we relate these restrictions with the ones that are usually deployed in the study of real-valued representations of partial orders, namely, order density and multi-utilities (see Sections \ref{sec:mu charact} and \ref{sec: order props}).

\section{Denumerability constraints:\\ From Turing machines to order structures} 
\label{sec: count const turing}

In the previous section, we were concerned with the sort of order structure that is required in order to introduce computability in uncountable spaces. Throughout that discussion, we proposed a modification of the countability requirements that allows us to derive the notion of computability on uncountable spaces from Turing machines. In particular, we considered the substitution of countable \emph{bases} by their \emph{weak} counterpart. These sort of restrictions have only been considered, to our best knowledge, in the realm of computability that belongs to domain theory. Hence, it is the purpose of this section to bring them closer to the more usual countability restrictions that are considered in order theory, like order denseness properties (Section \ref{sec: order props}) and multi-utilities (Section \ref{sec:mu charact}). This improves, thus, on the relation between the order-theoretic approaches to computability and the study of learning system. The original connection we presented between them was in terms of uncertainty, which is the fundamental notion that underlies both areas. Here, furthermore, we show connections between the fundamental properties that allow the introduction of optimization principles in the former and the introduction of a notion of uncountable computability derived from Turing machines in the latter.

\subsection{Order density and weak bases}

As a first connection, we consider the relation between order density and weak bases. In this regard, in the following proposition, the first thing we notice is that, although the existence of a countable weak basis does imply that of a countable set of elements that somewhat separates a dcpo $P$, it does not necessarily imply that $P$ is Debreu separable.

\begin{prop}
\label{weak basis implications}
The following statements hold:
\begin{enumerate}[label=(\roman*)]
    \item If $P$ is a dcpo with a countable weak basis $B \subseteq P$, then there exists a countable Debreu upper dense subset $D\subseteq P$. Furthermore, if $x \prec y$, then there exists some $b \in B$ such that $b \preceq y$ and either $x \prec b$ or $x \bowtie b$ holds.
    \item There exist dcpos with countable weak bases and no countable Debreu dense subset.
\end{enumerate}
\end{prop}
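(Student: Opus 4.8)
For part $(i)$, the plan is to extract a Debreu upper dense subset directly from the weak basis $B$. Let $P$ be a dcpo with countable weak basis $B$. I take $D := B$ and claim it is both Debreu dense and Debreu upper dense (hence Debreu upper separable). For Debreu upper density: suppose $x \bowtie y$. Since $y = \sqcup B_y$ for a directed $B_y \subseteq B$, and the identity $y \preceq y$ places $y$ in the Scott-open set complement discussion — more concretely, since $B_y$ is directed with supremum $y$, if every $b \in B_y$ satisfied $b \preceq x$ then $y = \sqcup B_y \preceq x$ would follow (as $x$ is an upper bound of $B_y$), contradicting $x \bowtie y$. Hence there is some $b \in B_y$ with $\neg(b \preceq x)$; since $b \preceq y$, we have $b \preceq y$ and either $b \bowtie x$ or $x \prec b$. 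In either case $b$ witnesses Debreu upper density (note $x \bowtie b$ or the stronger $x \prec b \preceq y$ both suffice), and this simultaneously proves the ``furthermore'' clause once we also handle the $x \prec y$ case. For the $x \prec y$ case: again write $y = \sqcup B_y$; if all $b \in B_y$ satisfied $b \preceq x$ we would get $y \preceq x$, contradicting $x \prec y$, so some $b \in B_y$ has $b \preceq y$ and $\neg(b \preceq x)$, i.e. $x \prec b$ or $x \bowtie b$. For plain Debreu density (needed to conclude Debreu upper separability rather than just the existence of a Debreu upper dense set), I should note that a weak basis need not give Debreu density on its own — but part $(i)$ as stated only asserts a Debreu upper dense subset exists, so $D = B$ suffices directly for that clause, and the ``furthermore'' clause is exactly the refinement just proved.

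For part $(ii)$, the plan is to exhibit a concrete dcpo with a countable weak basis but no countable Debreu dense subset. The natural candidate is a ``flat''-style construction: take $X := \{\perp\} \cup [0,1] \cup \{1_x \mid x \in [0,1]\}$ where $\perp$ is a bottom element below everything, each $x \in [0,1]$ sits above $\perp$, and each ``top'' $1_x$ sits above exactly $\perp$ and the single point $x$. The points of $[0,1]$ are mutually incomparable, and the $1_x$ are mutually incomparable. This is a dcpo: the only nontrivial directed sets are those contained in $\{\perp\} \cup \{x\}$ or in $\{\perp, x, 1_x\}$ for a fixed $x$, and these have suprema. A countable weak basis is $B := \{\perp\} \cup (\mathbb Q \cap [0,1]) \cup \{1_q \mid q \in \mathbb Q \cap [0,1]\}$ — wait, this fails because an irrational $x$ has $1_x$ above it but no rational basis element below $1_x$ other than $\perp$, and $\{\perp\}$ is not directed with supremum $1_x$. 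So instead I adapt: make each $1_x$ be the supremum of a directed chain of basis elements, e.g. replace the single point $x$ by a chain $x^{(n)} \nearrow 1_x$; or more cleanly, use the Cantor-domain-style trick where each real in $[0,1]$ is the supremum of its rational lower approximants while the ``incomparability'' is engineered at the top. Concretely, I expect the right example resembles: $P = \Sigma^* \cup \Sigma^\omega$ modified so that two infinite sequences agreeing on a cofinite set become incomparable — but the cleanest is likely the one from \cite[Figure in hack2022computation]{hack2022computation} referenced later, or a direct construction where uncountably many incomparable maximal elements each have a countable directed set below them drawn from a shared countable basis, yet no countable set can be Debreu dense because between any maximal element and itself... — here I must be careful, since Debreu density only concerns pairs $x \prec y$. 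The obstruction to Debreu density must come from a chain: I need a dcpo containing an order-embedded copy of an interval of $\mathbb R$ with a ``non-separable'' order, e.g. (a sub-poset of) the lexicographic structure, so that any Debreu dense set must be uncountable, while still admitting a countable weak basis because weak bases only need directed (not order-dense) approximants.

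The main obstacle I anticipate is in part $(ii)$: reconciling the two requirements, since a countable weak basis wants ``enough'' countably many elements to approximate every point from below via directed sets, while the failure of countable Debreu density wants the order to be ``too rich'' — in an order-dense (Cantor-type) sense — to be countably separated. These pull in opposite directions, so the construction must place the richness in incomparabilities or in suprema-of-non-order-dense-chains rather than in the order-denseness structure. I would resolve this by taking a dcpo such as $P := \big([0,1] \times \{0,1\}\big) \cup \{\top\}$ with the lexicographic order on $[0,1]\times\{0,1\}$ restricted appropriately and $\top$ a top element (or the sub-dcpo of $\Sigma^\omega$-type spaces realizing a lexicographic interval), verify it is directed complete, verify that the countably many ``rational-level'' elements form a weak basis (each point is the directed supremum of the rational elements below it), and then invoke the standard argument — analogous to \cite[Lemma 5(ii)]{hack2022representing} showing majorization is not Debreu separable — that any Debreu dense subset must contain, for each of uncountably many disjoint ``jumps'' $\{(x,0),(x,1)\}$, a point between them, forcing uncountability. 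I would double-check that the weak-basis directed sets genuinely avoid needing these jump-points, which is exactly where ``weak'' (directed, not order-dense) is doing the work.
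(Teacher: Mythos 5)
Part (i) of your proposal is correct and is essentially the expected argument: take $D=B$, write $y=\sqcup B_y$ with $B_y\subseteq B$ directed, and observe that if every $b\in B_y$ were below $x$ then $x$ would be an upper bound of $B_y$, forcing $y\preceq x$; the surviving witness $b\preceq y$ with $\neg(b\preceq x)$ gives both the Debreu upper density claim and the ``furthermore'' clause. One small slip: when $x\bowtie y$, the subcase $x\prec b$ cannot occur at all (it would give $x\preceq b\preceq y$, hence $x\prec y$), so the witness is automatically incomparable to $x$; your parenthetical remark that ``$x\prec b\preceq y$ also suffices'' for Debreu upper density is not right as stated (that relation does not witness upper density), but it is harmless because the case is vacuous.

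Part (ii) contains a genuine gap: you never produce a working example, and the concrete candidate you settle on — a lexicographic interval $[0,1]\times\{0,1\}$ (with or without a top element) — cannot work. For each $x$, the set of elements strictly below $(x,1)$ has a greatest element, namely $(x,0)$, so any directed set with supremum $(x,1)$ must contain $(x,1)$ itself; hence every weak basis contains all uncountably many jump-tops and there is no countable weak basis. More structurally, your candidate is totally ordered, hence conditionally connected, and by Theorem~\ref{thm 2} a conditionally connected dcpo with a countable weak basis is Debreu separable — so the uncountably many jumps that you rely on to defeat Debreu density simultaneously destroy the countable weak basis. This is exactly the check you said you would ``double-check'' and it fails; the missing idea is that the counterexample must be non-total, with each jump-top approached from below by basis elements that are \emph{incomparable} to the jump-bottom (this is the $x\bowtie b$ escape route in the ``furthermore'' clause of part (i)). For instance, one can take the Cantor domain $\Sigma^*\cup\Sigma^\omega$ and add, for each $\omega\in\Sigma^\omega$, a new element $a_\omega$ lying above exactly the even-length prefixes of $\omega$ and below $\omega$ only: then $\Sigma^*$ is still a countable weak basis (each $a_\omega$ is the supremum of the chain of even-length prefixes, each $\omega$ of all its prefixes), while the uncountably many pairwise disjoint jumps $a_\omega\prec\omega$ force every Debreu dense subset to be uncountable. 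Some construction of this kind is what part (ii) requires, and your proposal as written does not supply it.
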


As we have just stated, the existence of countable weak bases does not imply that of countable Debreu dense subsets. However, we have not addressed the converse yet. As a matter of fact, provided an arbitrary (not necessarily countable) weak basis exists and $P$ is a Debreu separable dcpo, we can construct a dcpo $Q$ that contains several elements of $P$ and has a countable weak bases. Nonetheless, even strengthening the order density restriction to Debreu upper separability is insufficient for a countable weak basis to exist. We state both these results precisely in the following theorem. In order to do so, however, we require a definition:  We say $x \in P$ has a \emph{non-trivial} directed set if there exists a directed set $A \subseteq P$ such that $\sqcup A=x$ and $x \not \in A$ \cite{hack2022relation}. Accordingly, given a weak basis $B \subseteq P$, we call the set of elements which have non-trivial directed sets $A \subseteq B$ the \emph{non-trivial} elements of $B$ and denote it by $\mathcal{N}_B$. In the same vein, we may call elements \emph{trivial}.

\begin{theo}
\label{thm 1}
The following statements hold:
\begin{enumerate}[label=(\roman*)]
    \item If $P$ is a dcpo with a countable Debreu dense set $D\subseteq P$ and a weak basis $B\subseteq P$, then there exists a dcpo $Q$ with a countable weak basis such that $D \cup \mathcal{N}_B \subseteq Q$.
    \item There exist Debreu upper separable dcpos without countable weak bases.
    \end{enumerate}
\end{theo}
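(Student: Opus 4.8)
The plan is to prove the two statements separately, each by an explicit construction.

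\textbf{Statement (i): building $Q$ from $P$.} Suppose $P$ is a dcpo with a countable Debreu dense set $D$ and some (possibly uncountable) weak basis $B$. First I would recall that the elements of $\mathcal{N}_B$ are exactly the ones that genuinely need an infinite computational process: a trivial element $x$ satisfies $x = \sqcup A$ only when $x \in A$, so it can be ``grabbed in one step''. Since $D$ is countable and $\mathcal{N}_B$ could be uncountable, the naive idea of taking $Q := D \cup \mathcal{N}_B$ as the ground set and trying to make it a dcpo with countable weak basis cannot work directly; instead I would take $Q$ to be the downward/directed-closure inside $P$ of a carefully chosen \emph{countable} piece. Concretely, for each pair $d_1, d_2 \in D$ with $d_1 \prec d_2$, use the second half of Proposition \ref{weak basis implications} $(i)$ (or its analogue applied to the weak basis $B$ of $P$) to pick a single witness $b(d_1,d_2) \in B$; let $B_0$ be the countable set of all such witnesses together with $D$. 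The candidate is $Q := \{\, \sqcup A \mid A \subseteq B_0 \text{ directed in } P\,\}$, equipped with the order inherited from $P$. One checks $Q$ is a dcpo (directed sups computed in $P$ of subsets of $Q$ land in $Q$, because a directed family of sups of directed subsets of $B_0$ has a sup that is the sup of the union, which is again a directed subset of $B_0$), that $B_0$ is a countable weak basis of $Q$ by construction, and finally that $D \subseteq Q$ trivially and $\mathcal{N}_B \subseteq Q$: any $x \in \mathcal{N}_B$ has a non-trivial directed $A \subseteq B \setminus \{x\}$ with $\sqcup A = x$, and using Debreu density of $D$ one approximates each $a \in A$ from above inside $B_0$, reassembling a directed subset of $B_0$ whose sup is still $x$. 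The delicate point, and the one I would spend the most care on, is verifying that this reassembled set is genuinely directed and still has supremum exactly $x$ rather than something strictly below it — this is where the Debreu density (the ``$x \preceq d \preceq y$'' clause) is used, since it lets us interpolate cofinally.

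\textbf{Statement (ii): a Debreu upper separable dcpo with no countable weak basis.} Here I would look for a dcpo whose ``top layer'' is an uncountable antichain of non-trivial elements, each requiring its \emph{own} basis element, while the order is sparse enough below that Debreu upper separability (which, recall, only asks for a countable set that is Debreu dense and Debreu upper dense, and is automatically satisfied whenever incomparabilities are easy to witness) still holds. A natural candidate is something like $P := (\{0,1\}^{\omega}) \cup (\{0,1\}^{*}) \cup \{\perp\}$ but \emph{with the prefix order removed among the infinite words} — i.e., take the infinite words to be pairwise incomparable and each infinite word $w$ to sit above only $\perp$ and nothing else, \emph{except} that we force $w$ to be non-trivial by inserting for each $w$ a strictly increasing $\omega$-chain $c^w_0 \prec c^w_1 \prec \cdots$ below it with $\sqcup_n c^w_n = w$, where the $c^w_n$ are chosen to be pairwise incomparable across different $w$'s (e.g. realize $c^w_n$ as a ``tagged finite approximation'' so that distinct infinite words share no approximant). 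Then every $w$ is non-trivial and, in any weak basis $B$, for each $w$ there must be $b \in B$ with $b \preceq w$, $b \neq \perp$; since the only elements $\preceq w$ besides $\perp$ are the $c^w_n$, the basis must contain infinitely many $c^w_n$ — in particular at least one element dedicated to each $w$, and as $w$ ranges over an uncountable set $B$ is forced to be uncountable. Meanwhile Debreu upper separability holds by taking $D := \{\perp\} \cup \{c^w_n\}$ restricted to a countable subfamily, or more simply by checking directly that all the incomparability and strict-order witnesses needed can be drawn from a countable set; the main verification is that this countable $D$ is simultaneously Debreu dense (easy: $x \prec y$ in this poset only between a $c^w_m$ and $c^w_n$ with $m<n$ or up to $w$, and the chain itself is countable) and Debreu upper dense.

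\textbf{Where the difficulty lies.} I expect the real obstacle in (ii) is making the poset genuinely a dcpo while keeping it Debreu upper separable: if I add too many chains or make them interact, the Debreu upper dense witness set $D$ risks becoming uncountable; if I add too few, the elements $w$ become trivial and the argument collapses. The right balance is to keep each approximating chain ``private'' to its own infinite word and countable in itself, while ensuring there are uncountably many words — then $D$ can be assembled from countably many of these private chains (or from $\perp$ alone, depending on the exact order structure) yet every weak basis is pinned to be uncountable because it must meet each private chain. In (i), the main obstacle is, as noted, the reassembly argument: showing that after replacing each $a\in A$ by a $B_0$-approximant from above we still get a directed set with the \emph{correct} supremum, which forces a slightly careful cofinality/interpolation argument using Debreu density.
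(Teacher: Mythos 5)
Your construction for (i) is in the right ballpark, but the two steps you yourself flag are exactly where the proof is missing, and one of them is false as stated. For the dcpo property you argue that a directed family of suprema of directed subsets of $B_0$ has as its supremum the supremum of the union, ``which is again a directed subset of $B_0$'': the union of directed sets whose suprema form a directed family need not be directed, so closure is not established. For $\mathcal{N}_B \subseteq Q$ you replace each $a \in A$ (where $A \subseteq B\setminus\{x\}$ is directed with $\sqcup A = x$) by some $d_a \in D$ with $a \preceq d_a \preceq x$; the resulting set does have supremum $x$, but Debreu density gives no element of $D$ above a given pair $d_a, d_{a'}$ and below $x$, so directedness --- the one property the weak-basis definition insists on --- is not obtained, and ``interpolating cofinally'' does not produce it. What does work is to interpolate \emph{inside} $A$ rather than between $A$ and $x$: since $x \notin A$, the directed set $A$ has no maximal element, so every $a \in A$ admits $a' \in A$ with $a \prec a'$; setting $D_A := \{d \in D : a \preceq d \preceq a' \text{ for some } a,a' \in A\}$, directedness of $D_A$ follows from directedness of $A$ plus one further interpolation above a common upper bound, and $\sqcup D_A = x$. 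The same maximum/no-maximal-element dichotomy applied to a directed subset of $Q$ repairs the closure step; with this lemma your auxiliary witnesses $b(d_1,d_2)$ become unnecessary, since suprema of directed subsets of $D$ alone already suffice.

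Part (ii) fails outright: your poset is not Debreu upper separable. Each infinite word $w$ is the supremum of its private chain, and for the pairs $c^w_n \prec c^w_{n+1}$ (and $c^w_n \prec w$) every admissible witness $z$ with $c^w_n \preceq z \preceq c^w_{n+1}$ lies on that same chain; since there are uncountably many pairwise incomparable private chains, any Debreu dense set must meet each of them (indeed infinitely often), hence is uncountable, and Debreu upper density likewise fails for untouched branches. The tension is structural: the privacy of the approximating chains, which is what forces every weak basis of your example to be uncountable, is precisely what forces every Debreu dense set to be uncountable as well. Your remark that letting the $w$'s become trivial would make ``the argument collapse'' has it backwards: a trivial element must belong to every weak basis, so an uncountable set of trivial elements is the cheapest way to exclude countable weak bases, and, unlike private chains, it is compatible with Debreu upper separability as long as the jumps below these elements share their lower endpoints. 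For instance, put $\perp$ below a countable antichain $(m_n)_{n \in \mathbb{N}}$, fix an uncountable family $\mathcal{S}$ of infinite subsets of $\mathbb{N}$ that is an antichain under inclusion (e.g.\ an almost disjoint family), and add for each $S \in \mathcal{S}$ a maximal element $y_S$ with $m_n \prec y_S$ exactly when $n \in S$: this is a dcpo in which every $y_S$ is trivial, so no countable weak basis exists, while $\{\perp\} \cup \{m_n\}_{n \in \mathbb{N}}$ is simultaneously Debreu dense and Debreu upper dense (for $y_S \bowtie y_{S'}$ pick $k \in S' \setminus S$). A construction of this shape, not one based on non-trivial elements with private approximating chains, is what (ii) requires.
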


Given that, as last theorem states, Debreu upper separability is insufficient for the existence of countable weak bases, we consider stronger order density properties. We obtain, in particular, two situations where this is the case, which we state in the following proposition. Before stating it, however, we need a definition. If $x,y \in P$, then we say $y$ is an \emph{immediate successor} of $x$ if both $x \prec y$ and $(x, y) \coloneqq \{z \in P|x \prec z \prec y\} = \emptyset$ hold \cite{bridges2013representations}.

\begin{prop}
The following statements hold:
\begin{enumerate}[label=(\roman*)]
\item If $P$ is an order separable dcpo, then $P \setminus min(P)$ is a dcpo with a countable weak basis. Furthermore, if $P$ also has a countable Debreu upper dense subset, then $P$ has a countable weak basis.
\item If $P$ is a dcpo with a countable Debreu dense subset whose elements have a finite number of immediate successors, then $P \setminus min(P)$ is a dcpo with a countable weak basis. Furthermore, if $P$ also has a countable Debreu upper dense subset, then $P$ has a countable weak basis.
\end{enumerate}
\end{prop}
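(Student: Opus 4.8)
The plan is to treat both items by the same two-step scheme and to reduce each "furthermore" clause to the corresponding first claim. First I would record two structural facts. (a) $P\setminus min(P)$ is up-closed in $P$: if $x$ is non-minimal and $x\preceq y$ then some $w\prec x\preceq y$ gives $w\prec y$, so $y$ is non-minimal; consequently every directed subset of $P\setminus min(P)$ has its $P$-supremum inside $P\setminus min(P)$, and $P\setminus min(P)$ is itself a dcpo. (b) If $P$ carries a countable Debreu upper dense set $D$, then $min(P)$ is countable: when $|min(P)|\geq 2$, for each $m\in min(P)$ picking a distinct $m'\in min(P)$ yields $m\bowtie m'$, and the witness $d\in D$ with $m'\bowtie d\preceq m$ must equal $m$ by minimality, so $min(P)\subseteq D$. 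Hence, once $P\setminus min(P)$ has a countable weak basis $B_0$, the set $B_0\cup min(P)$ is a countable weak basis for $P$ (use $\{x\}$ for each minimal $x$, and note suprema agree by (a)), which disposes of both "furthermore" statements.

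For item $(i)$ the crux is: no element of $P\setminus min(P)$ is \emph{trivial}, i.e.\ every non-minimal $x$ is the supremum of a directed subset of $P\setminus min(P)$ not containing $x$. Granting this, $Z$ itself (intersected with $P\setminus min(P)$, which costs nothing since any element strictly above something is non-minimal) is a countable weak basis. I would prove the claim by contradiction via transfinite recursion: assuming $x$ trivial, build a strictly increasing chain $(z_\alpha)_{\alpha<\omega_1}$ in $Z$ with each $z_\alpha\prec x$. At stage $\alpha$ the chain built so far is directed, lies in $P\setminus min(P)$ and avoids $x$, so by triviality its supremum $s_\alpha$ satisfies $s_\alpha\prec x$; order separability then supplies $z_\alpha\in Z$ with $s_\alpha\prec z_\alpha\prec x$, necessarily new and above all earlier terms. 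Running this to $\omega_1$ embeds $\omega_1$ into the countable set $Z$, which is absurd. Finally, to exhibit each non-minimal $x$ as a directed supremum of basis elements: take any directed $A\subseteq P\setminus min(P)$ with $\sqcup A=x\notin A$ and put $B_x:=Z\cap {\downarrow}A$. It is directed — given $b_1,b_2\preceq a_3\in A$, pick $a_4\in A$ with $a_3\prec a_4$ (possible since $x\notin A$) and $z\in Z$ with $a_3\preceq z\preceq a_4$, so $z\in Z\cap{\downarrow}A$ dominates $b_1,b_2$ — and $\sqcup B_x=x$, since the same construction puts every $a\in A$ below some element of $B_x$, forcing $\sqcup B_x$ to be an upper bound of $A$.

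For item $(ii)$ the same scheme runs, the only difference being that $Z$ is merely Debreu dense, so the recursive step of the previous paragraph can fail in exactly one way: when $s_\alpha$ itself lies in $Z$ and $x$ is an immediate successor of $s_\alpha$ (no room strictly between). A short preliminary argument first secures a non-minimal starting point $z_0\in Z$ with $z_0\prec x$ (or else $x\in Z$, or $x$ is already an immediate successor of a $Z$-element). Pushing the recursion through otherwise and invoking the $\omega_1$ contradiction, I conclude that a trivial non-minimal $x$ must either belong to $Z$ or be an immediate successor of some element of $Z$; since $Z$ is countable and each of its elements has finitely many immediate successors, the set $T$ of trivial non-minimal elements is countable. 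Then $B_0:=(Z\cap(P\setminus min(P)))\cup T$ is a countable subset of $P\setminus min(P)$: every trivial $x$ is handled by $\{x\}\subseteq B_0$, and every non-trivial $x$ by $B_0\cap{\downarrow}A$ exactly as in $(i)$ — Debreu density suffices there because it is applied only to pairs $a\prec a'$ with $a$ non-minimal, so the witness automatically lies in $Z\cap(P\setminus min(P))$.

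The main obstacle, and the step I would handle most carefully, is the transfinite induction bounding the trivial elements: one must keep the approximating chain inside $P\setminus min(P)$ (this is exactly why we pass to $P\setminus min(P)$ first, why in $(i)$ we may assume the dense set sits there, and why in $(ii)$ a preliminary step is needed to get a non-minimal seed), and one must correctly isolate — in the Debreu-only setting of $(ii)$ — that the sole obstruction to continuing is the immediate-successor configuration, which is precisely where the finiteness hypothesis is consumed. By contrast, the verification that non-trivial elements are directed suprema of basis elements is routine once one observes that density lets one thicken any approximating directed set into one drawn from the countable dense set.
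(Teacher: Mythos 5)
Your proposal is correct and follows essentially the same route as the paper's own argument: you split $P\setminus\min(P)$ into trivial and non-trivial elements, handle the trivial ones by an uncountable-chain contradiction inside the countable dense set (ruling them out under order separability in $(i)$, and in $(ii)$ showing they must lie in $Z$ or be immediate successors of $Z$-elements, which is exactly where the finiteness hypothesis is spent), thicken an arbitrary approximating directed set into one drawn from the dense set for the non-trivial elements, and reduce both ``furthermore'' clauses to the countability of $\min(P)$ forced by Debreu upper density. The only cosmetic point is that $B_x\coloneqq Z\cap{\downarrow}A$ should be intersected with $P\setminus\min(P)$ so that it literally sits inside the claimed weak basis, which, as you note, costs nothing because every witness you produce lies strictly above some element.
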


As exposed in this section, the existence of countable weak bases has certain implications regarding order density and, moreover, if the order density properties are strong enough, then countable weak bases exist. However, these properties are sometimes equivalent. This is the case, for instance, for the Cantor domain $\Sigma^\infty$. (In particular, the countable set $\Sigma^* \subseteq \Sigma^\infty$ is Debreu dense, Debreu upper dense and a weak basis.) In the following section, we consider a class of dcpos which includes $\Sigma^\infty$ and achieves the equivalence. 

\subsection{Order density and bases}

The property we exploit in this section and that is shared by the Cantor domain $\Sigma^\infty$ is \emph{conditional connectedness}. We introduced this property in Section \ref{other repres}, when considering the real-valued characterizations that are usually taken into account in the axiomatic approaches to thermodynamics. Under this condition, Debreu upper separability coincides with the existence of countable weak bases and, moreover, also with the existence of countable bases.

\begin{theo}
\label{thm 2}
If $P$ is a conditionally connected dcpo, then the following statements are equivalent:
\begin{enumerate}[label=(\roman*)]
\item $P$ is Debreu upper separable.
\item $P$ is $\omega$-continuous.
\item $P$ has a countable weak basis.
\item $P$ is Debreu separable and $K(P)$ is countable.
\end{enumerate}
\end{theo}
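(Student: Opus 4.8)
The plan is to establish the cycle of implications $(ii)\Rightarrow(iii)\Rightarrow(i)\Rightarrow(iv)\Rightarrow(ii)$. The single structural fact that powers the whole argument is that conditional connectedness forces every principal down-set $\downarrow x := \{z\in P\mid z\preceq x\}$ to be totally ordered: any two elements below $x$ share $x$ as an upper bound, hence are comparable. Two consequences I would record at the outset: every minimal element is compact (if $m$ is minimal and $m\preceq\sqcup A$ for directed $A$, then each $a\in A$ is comparable to $m$ and minimality gives $m\preceq a$, so $m\ll m$), and, conversely, a compact $c$ is \emph{trivial} in the sense that it is never the supremum of a directed set not containing it (immediate from $c\ll c$). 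I would also note that for non-compact $x$ the directed set witnessing $x\not\ll x$ can, after intersecting with $\downarrow x$, be taken inside $\downarrow x\setminus\{x\}$ with supremum $x$, so $\downarrow x\setminus\{x\}$ is then a nonempty chain with no maximum and supremum $x$.

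Then $(ii)\Rightarrow(iii)$ is immediate since a basis is a weak basis. For $(iii)\Rightarrow(i)$ I would show a countable weak basis $B$ itself witnesses Debreu upper separability. Given $x\prec y$, take the directed $B_y\subseteq B$ with $\sqcup B_y=y$; then $B_y\subseteq\downarrow y$ is a chain and $x\in\downarrow y$. If no $b\in B_y$ had $x\preceq b$, every $b\in B_y$ would satisfy $b\prec x$, making $x$ an upper bound and forcing $y=\sqcup B_y\preceq x$, a contradiction; so some $b\in B$ satisfies $x\preceq b\preceq y$, i.e. $B$ is Debreu dense. Given $x\bowtie y$, the same $B_y$ works: no $b\in B_y$ can satisfy $x\preceq b$ (else $x\preceq y$), and not all $b\in B_y$ can satisfy $b\preceq x$ (else $y\preceq x$), so some $b\in B$ has $x\bowtie b\preceq y$, i.e. $B$ is Debreu upper dense. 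The passage $(i)\Rightarrow$ ``Debreu separable'' is trivial, so the content of $(i)\Rightarrow(iv)$ is the countability of $K(P)$, which I expect to be the main obstacle.

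For $K(P)$ countable, fix the countable witness $D$ of Debreu upper separability. Distinct minimal elements are incomparable, so if there are at least two, Debreu upper density applied to any pair of them forces every minimal element into $D$; hence there are only countably many. For a non-minimal compact $c$, triviality together with $\downarrow c\setminus\{c\}$ being a nonempty chain gives $s_c:=\sqcup(\downarrow c\setminus\{c\})\prec c$ with $c$ an immediate successor of $s_c$; then Debreu density between $s_c$ and $c$ forces $s_c\in D$ or $c\in D$. Finally I would show $c\mapsto s_c$ is injective on the non-minimal compacts outside $D$: if $s_{c_1}=s_{c_2}=:s$ with $c_1\neq c_2$ both outside $D$, then $c_1\bowtie c_2$ (comparability would place one below the other's $s$-value, contradicting $s\prec c_i$), and Debreu upper density on $(c_1,c_2)$ yields $d\in D$ with $c_1\bowtie d\preceq c_2$; since $d$ is comparable to $s$ in the chain $\downarrow c_2$ and $d\preceq s$ would give $d\prec c_1$, we get $s\prec d\preceq c_2$, so $d=c_2\in D$, a contradiction. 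Thus the non-minimal compacts outside $D$ inject into $D$, and $K(P)$ is countable.

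For $(iv)\Rightarrow(ii)$ I would exhibit $B:=D\cup K(P)$ as a countable basis. The technical point is identifying $\twoheaddownarrow x$: for compact $x$ one has $x\ll x$ and $\{x\}$ serves as the required directed set below $x$; for non-compact $x$ I would prove $\twoheaddownarrow x=\downarrow x\setminus\{x\}$ by splitting an arbitrary directed $A$ with $x\preceq\sqcup A$ into its part below $x$ and its part above $x$, the latter making $b\preceq a$ trivial and the former forcing $A$ to be a cofinal sub-chain of $\downarrow x$ with supremum $x$. Since a non-compact $x$ is a genuine limit (its strict down-set has no maximum), Debreu density lets one squeeze an element of $D$ between any $w\prec x$ and a point strictly above it, so $D\cap\twoheaddownarrow x$ is a directed (sub-chain) cofinal subset of $\downarrow x\setminus\{x\}$ with supremum $x$. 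Hence $B$ is a countable basis, $P$ is $\omega$-continuous, and since $(ii)\Rightarrow(iii)$ already holds the cycle closes. The hard part, as flagged, is the bookkeeping in the $K(P)$ count — separating the minimal-element contribution from the immediate-successor structure of the remaining compacts and obtaining the injectivity — with the careful case analysis in the way-below identification a secondary difficulty.
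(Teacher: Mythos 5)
Your proposal is correct, but it takes a genuinely different route from the paper. The paper does not reprove the equivalence of $(i)$, $(ii)$ and $(iv)$ here at all — it cites \cite[Theorem 2]{hack2022relation} for that — and its only new content is $(iii)\Leftrightarrow(ii)$, proved by showing that under conditional connectedness \emph{any} countable weak basis is already a basis: the key lemma is that an element which is trivial with respect to a weak basis $B$ (i.e.\ lies in every directed subset of $B$ having it as supremum) is trivial in all of $P$, combined with the fact (\cite[Proposition 11]{hack2022relation}) that in a conditionally connected dcpo $b\prec x$ implies $b\ll x$. You instead close a self-contained cycle $(ii)\Rightarrow(iii)\Rightarrow(i)\Rightarrow(iv)\Rightarrow(ii)$, resting on the same structural observation (principal down-sets are chains, and $\prec$ implies $\ll$, which you reprove rather than cite), but with different constructions: the weak basis itself witnesses Debreu (upper) density; $K(P)$ is counted by the immediate-successor map $c\mapsto s_c=\sqcup(\downarrow c\setminus\{c\})$, which you show is injective into $D$ on non-minimal compacts outside $D$; and $\omega$-continuity is obtained from the explicit countable basis $D\cup K(P)$ rather than from the given weak basis. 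I checked the delicate points — $s_c\prec c$ via triviality of compact elements, the incomparability and injectivity argument, and the cofinality of $D\cap(\downarrow x\setminus\{x\})$ in $\downarrow x\setminus\{x\}$ with supremum $x$ for non-compact $x$ — and they all go through. What each approach buys: the paper's argument yields the stronger structural statement that every countable weak basis of a conditionally connected dcpo is a countable basis (so $(iii)\Rightarrow(ii)$ holds "pointwise" in the basis), while keeping the rest of the theorem as an import; your argument is fully self-contained, makes the role of Debreu (upper) density in bounding $K(P)$ explicit, and exhibits a concrete basis, at the cost of not recovering the fact that the original weak basis itself is a basis.
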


\begin{proof}
Note that we showed the equivalence between $(i)$, $(ii)$ and $(iv)$ in \cite[Theorem 2]{hack2022relation}. To conclude the improvement on this result that we present here, it suffices to show that $(ii)$ and $(iii)$ are equivalent. In particular, given that the converse is straightforward, we ought to show that $(iii)$ implies $(ii)$, as we do in the following paragraphs.

Take some $x \in P$. Since there exists a countable weak basis $B \subseteq P$, there is some directed set $B_x \subseteq B$ such that $\sqcup B_x =x$.  If $x \notin B_x$, then we have that $b \prec x$ for all $b \in B_x$. Hence, by \cite[Proposition 11 ]{hack2022relation}, $b \ll x$ for all $b \in B_x$ and we have finished. To conclude, assume that, for all directed sets $B_x \subseteq B$ such that $\sqcup B_x=x$, we have that $x \in B_x$, that is, that $x$ is trivial in $B$. As we show in the following lemma, this implies that, for all directed sets $A_x \subseteq P$ such that $\sqcup A_x=x$, we have that $x \in A_x$, that is, that $x$ has no non-trivial directed sets in $P$.

\begin{lemma}
Let $B$ be a weak basis of a conditionally connected dcpo $P$ and $x \in P$. If $x$ is trivial in $B$, then $x$ is trivial in $P$.
\end{lemma}

\begin{proof}
Instead of directly showing the statement, we prove its contrapositive. More specifically, we prove that, if there exists some directed set $A_x \subseteq P\setminus\{x\}$ such that $\sqcup A_x=x$, then there exists some directed set $B_x \subseteq B\setminus\{x\}$ such that $\sqcup B_x=x$. Take, in particular,
\begin{equation*}
    B_x \coloneqq \bigcup_{a \in A_x} B_a,
\end{equation*}
where, for all $a \in A_x$, $B_a \subseteq B$ is a directed set such that $\sqcup B_a =a$. Note that $B_x$ is directed by conditional connectedness of $P$, given that, if $b \in B_a$ and $b' \in B_{a'}$ for a pair $a,a' \in A_x$, then we have that $b,b' \preceq x$ and, hence, $\neg(b \bowtie b')$ holds. Lastly, note that $\sqcup B_x = x$. By transitivity, it is clear that $b \preceq x$ for all $b \in B_x$. Assume, then, that there exists some $z \in P$ such that $b \preceq z$ for all $b \in B_x$. In particular, by definition of supremum of $B_a$, we have that $a \preceq z$ for all $a \in A_x$. Hence, by definition of supremum of $A_x$, we have that $x \preceq z$. As a result, $\sqcup B_x = x$. This concludes the proof.
\end{proof}

Thus, if $x \in B_x$ for all directed sets $B_x \subseteq B$ such that $\sqcup B_x=x$, we can apply again \cite[Proposition 11]{hack2022relation} and conclude that $x \ll x$. Hence, if $B \subseteq P$ is a weak basis, then it is also a basis and we have finished.
\end{proof}

The last theorem can be strengthened if we consider a stronger order denseness property, namely upper separability. We state the stronger result in Corollary \ref{upper separable}.  Notice, as in Theorem \ref{thm 2}, Corollary \ref{upper separable} improves upon \cite[Corollary 2]{hack2022relation} by adding a clause (which we showed in the last theorem) regarding countable weak bases.

\begin{coro}
\label{upper separable}
If $P$ is a conditionally connected dcpo, then the following statements are equivalent:
\begin{enumerate}[label=(\roman*)]
    \item $P$ is upper separable
    \item $P$ is $\omega$-continuous and either $K(P)=\{\perp\}$ or $K(P)=\emptyset$ holds.
    \item $P$ has a countable weak basis and either $K(P)=\{\perp\}$ or $K(P)=\emptyset$ holds.
\end{enumerate}
\end{coro}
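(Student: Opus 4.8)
The plan is to reduce the corollary to two ingredients that are already available: the equivalence $(i)\Leftrightarrow(ii)$, which is exactly \cite[Corollary 2]{hack2022relation}, and the equivalence $(ii)\Leftrightarrow(iii)$, which is essentially the content of the proof of Theorem \ref{thm 2} given just above. So first I would simply invoke \cite[Corollary 2]{hack2022relation}: it asserts that a conditionally connected dcpo $P$ is upper separable if and only if it is $\omega$-continuous and either $K(P)=\{\perp\}$ or $K(P)=\emptyset$. This delivers $(i)\Leftrightarrow(ii)$ verbatim, with nothing left to prove.

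Next I would close the cycle by establishing $(ii)\Leftrightarrow(iii)$. Since the clause ``$K(P)=\{\perp\}$ or $K(P)=\emptyset$'' is literally one of the two conjuncts in each of $(ii)$ and $(iii)$, it suffices to show that, for a conditionally connected dcpo, being $\omega$-continuous is equivalent to possessing a countable weak basis. The forward implication is immediate, as a countable basis is in particular a countable weak basis. For the converse I would reuse verbatim the argument in the proof of Theorem \ref{thm 2}: given a countable weak basis $B\subseteq P$ and $x\in P$, choose a directed $B_x\subseteq B$ with $\sqcup B_x=x$; if $x\notin B_x$ then every $b\in B_x$ satisfies $b\prec x$, hence $b\ll x$ by \cite[Proposition 11]{hack2022relation}; and if instead $x$ is trivial in $B$, then the Lemma inside that proof (triviality in $B$ propagates to triviality in $P$ via conditional connectedness) together with \cite[Proposition 11]{hack2022relation} again yields $x\ll x$. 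In either case $B$ turns out to be a countable basis, so $P$ is $\omega$-continuous. Attaching the identical $K(P)$-clause to both sides then gives $(ii)\Leftrightarrow(iii)$, and chaining with $(i)\Leftrightarrow(ii)$ finishes the argument.

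The main point that needs care, rather than a genuine obstacle, is verifying that this ``$\omega$-continuity $\Leftrightarrow$ countable weak basis'' argument is truly independent of the compactness hypothesis, so that it can be grafted onto \cite[Corollary 2]{hack2022relation} without any circularity; since that argument — the propagation Lemma plus \cite[Proposition 11]{hack2022relation} — never uses the restriction on $K(P)$, the grafting is legitimate. A secondary check is that the propagation Lemma covers \emph{every} $x$, including minimal elements and $\perp$, where the only directed sets with supremum $x$ may be the trivial ones; but this is precisely the ``$x$ trivial in $B$'' branch already handled above, so no separate case is required.
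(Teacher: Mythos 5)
Your proposal is correct and follows essentially the same route as the paper: the paper obtains $(i)\Leftrightarrow(ii)$ by citing \cite[Corollary 2]{hack2022relation} and adds clause $(iii)$ via the equivalence between $\omega$-continuity and the existence of a countable weak basis for conditionally connected dcpos established in the proof of Theorem \ref{thm 2}. Your extra observation that this weak-basis argument never uses the restriction on $K(P)$, so it can be grafted on without circularity, is exactly the (implicit) justification the paper relies on.
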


In the following section, we extend the results in this one by considering the relation between countability restrictions for bases and multi-utilities. 

\subsection{Multi-utilities and bases}

As a first result connecting countability for multi-utilities and both bases and weak bases, we have the following proposition.

\begin{prop}
\label{LSC m-u existence}
If $P$ is a dcpo, then the following statements hold:
\begin{enumerate}[label=(\roman*)]
\item $P$ has a lower semicontinuous multi-utility.
\item If $P$ has a (basis) weak basis $B \subseteq P$, then there it has a (lower semicontinuous) multi-utility $(u_b)_{b\in B}$.
\item There exist dcpos with a lower semicontinuous multi-utility $(u_i)_{i \in I}$ such that any weak basis $B$ has larger cardinality than $I$. This holds, in particular, for a countable $I$.
\end{enumerate}
\end{prop}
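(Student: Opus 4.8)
```latex
\textbf{Plan of proof.}
The three statements are of increasing strength, so the plan is to prove $(i)$ first, obtain $(ii)$ by restricting attention to the way-below (resp.\ $\preceq$) relation associated to the basis elements, and finally produce a concrete counterexample for $(iii)$.

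For $(i)$, the natural candidate is the family of \emph{Scott-open increasing sets}. Recall from the discussion of the Scott topology that $x \preceq y$ iff every $O \in \sigma(P)$ containing $x$ also contains $y$, i.e.\ the Scott-open sets separate points in exactly the sense of a multi-utility. So first I would take $\mathcal{O} \coloneqq \sigma(P)$ (or any subfamily generating it) and set $u_O \coloneqq \chi_O$ for each $O \in \mathcal{O}$. Each $\chi_O$ is a monotone because $O$ is upper closed, and it is lower semicontinuous because $O$ is Scott-open: the preimage $\chi_O^{-1}((t,\infty))$ is either $\emptyset$, $O$, or $P$, all of which are Scott-open. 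The family $(\chi_O)_{O\in\mathcal O}$ is a multi-utility precisely by \eqref{charac order by topo}. This settles $(i)$; note it uses nothing about countability, only that $P$ is a dcpo so that $\sigma(P)$ makes sense.

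For $(ii)$, if $B \subseteq P$ is a weak basis, I would take $u_b$ to be the indicator of the up-set $\{x \in P \mid b \preceq x\}$ for each $b\in B$; if $B$ is a genuine basis, I would instead use $u_b \coloneqq \chi_{\twoheaduparrow b}$, the indicator of the set way-above $b$. In the basis case $(\twoheaduparrow b)_{b\in B}$ is a topological basis for $\sigma(P)$ (as recalled in Section~\ref{uniform compu}), so every Scott-open set is a union of such sets and the separation property of $(i)$ is inherited: if $\neg(x\preceq y)$, pick a Scott-open $O$ with $x\in O$, $y\notin O$, then some $\twoheaduparrow b \subseteq O$ contains $x$, giving $u_b(x)=1>0=u_b(y)$. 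Each $\twoheaduparrow b$ is Scott-open, hence each $u_b$ is lower semicontinuous. In the weak-basis case the $\{x\mid b\preceq x\}$ need not be Scott-open, so one only gets a multi-utility (monotonicity still holds, and separation follows from the weak-basis approximation $x=\sqcup B_x$: if $\neg(x\preceq y)$ then some $b\in B_x$ satisfies $\neg(b\preceq y)$, whence $u_b(x)=1>u_b(y)$). The main subtlety here is verifying that the separation argument goes through with a \emph{weak} basis — one must use that $x$ is the supremum of a directed subset of $B$ and that failure of $x\preceq y$ propagates down to some approximant, which is where the directedness of $B_x$ and the definition of supremum are invoked.

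For $(iii)$, the expected obstacle — and the most interesting part — is exhibiting a dcpo with a small (even countable) lower semicontinuous multi-utility but no small weak basis. The strategy is to take a dcpo built on an \emph{uncountable antichain together with a top element}: let $P \coloneqq \mathbb{R} \cup \{\top\}$ where distinct reals are incomparable and $x \prec \top$ for all $x\in\mathbb R$, with $\top$ having no finite representation. This is a dcpo (the only nontrivial directed sets are those whose supremum is $\top$ or a singleton). A countable lower semicontinuous multi-utility exists: take $u_0 \coloneqq \chi_{\{\top\}}$ together with, say, $u_{q,n}\coloneqq \chi_{(q-1/n,\,\infty)\cap\mathbb R \,\cup\,\{\top\}}$ for rationals $q$ and $n\geq 1$ — these separate any two distinct reals and separate each real from $\top$, and each is Scott-open hence lower semicontinuous (one checks upper-closedness and inaccessibility by directed suprema directly). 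However, any weak basis $B$ must, for each $x\in\mathbb R$, contain a directed $B_x$ with $\sqcup B_x = x$; since the only element strictly below $x$ would have to be comparable to $x$ and there is none, necessarily $x\in B_x$, so $x\in B$. Hence $\mathbb R \subseteq B$ and $|B|\geq\mathfrak c$, strictly larger than the countable index set $I$. The routine check to carry out is that the proposed $u$'s really do form a multi-utility and really are Scott-open; the conceptual point is simply that minimal elements (here, all the reals) are forced into every weak basis, while a multi-utility can afford to ``coordinatize'' an uncountable antichain with countably many semicontinuous functions.
```
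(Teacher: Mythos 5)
Parts (i) and (ii) of your proposal are correct and use the expected constructions: the indicators $\chi_O$ of Scott-open sets give a lower semicontinuous multi-utility via \eqref{charac order by topo}, and for a weak basis (resp.\ basis) the indicators of the sets $\{x \mid b \preceq x\}$ (resp.\ $\twoheaduparrow b$) work, the key step in the weak-basis case being, as you note, that if every $b \in B_x$ lies below $y$ then $x = \sqcup B_x \preceq y$ by the definition of supremum.

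The gap is in (iii). The dcpo $P = \mathbb{R} \cup \{\top\}$ (an uncountable antichain of minimal elements under a top) is a good choice, and your argument that every weak basis must contain all of $\mathbb{R}$ — hence has cardinality at least $\mathfrak{c}$ — is correct. But the countable family you propose is \emph{not} a multi-utility. Every function in it is non-decreasing with respect to the usual order of the reals: $u_0$ is constant on $\mathbb{R}$, and each $u_{q,n}$ is the indicator of a right half-line together with $\top$. Hence for reals $x < y$ one has $u(x) \leq u(y)$ for \emph{every} member of the family, and the defining equivalence of a multi-utility would then force $x \preceq y$, contradicting the fact that distinct reals are incomparable in $P$. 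A multi-utility must separate an incomparable pair in \emph{both} directions (cf.\ Proposition~\ref{prop: sets charact}$(i)$), and your family only separates in one; the sentence ``these separate any two distinct reals'' is where the argument fails.

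The error is fixable without changing the example. In this dcpo every directed set is either a singleton or contains $\top$, so every element is compact and every upper set is Scott-open; consequently every monotone is automatically lower semicontinuous. You can therefore add the indicators of the sets $\bigl((-\infty,q)\cap\mathbb{R}\bigr)\cup\{\top\}$ for $q \in \mathbb{Q}$ (these are upper sets in $P$, even though they are ``downward'' with respect to the real order), or more economically take just the two functions $x \mapsto \arctan x$ and $x \mapsto -\arctan x$ on $\mathbb{R}$, both extended by the value $\pi/2$ at $\top$. Either choice is a countable (indeed finite) lower semicontinuous multi-utility, and combined with your observation that any weak basis contains $\mathbb{R}$, this yields (iii).
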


Note that Proposition \ref{LSC m-u existence} states in a more precise way \cite[Proposition 18]{hack2022relation}.

Although the existence of (lower semicontinuous) countable multi-utilities does not assure the presence of any of the countability restrictions that we considered in the previous section (as Proposition \ref{LSC m-u existence} and \cite[Proposition 9]{hack2022classification} show), we can strengthen the restrictions on the multi-utility to partially recover the equivalences in Theorem \ref{thm 2}. We state these equivalences in the following theorem, which is a slight improvement on our result in \cite[Theorem 4]{hack2022relation}.

\begin{theo}
\label{thm 4}
If $P$ is a dcpo with a finite lower semicontinuous strict monotone multi-utility, then the following hold:
\begin{enumerate}[label=(\roman*)]
    \item $P$ is $\omega$-continuous.
    \item $P$ has a countable weak basis.
    \item $P$ is Debreu separable and $K(P)$ is countable.
    \item $K(P)$ is countable.
\end{enumerate}
\end{theo}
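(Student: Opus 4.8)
The plan is to deduce the chain of implications by reducing everything to the equivalences already established in Theorem \ref{thm 2} for conditionally connected dcpos, once we observe that a dcpo carrying a finite lower semicontinuous strict monotone multi-utility is automatically conditionally connected and Debreu separable. First I would check conditional connectedness: suppose $x,y\in P$ admit a common upper bound $z$, i.e. $x\preceq z$ and $y\preceq z$. Let $(v_1,\dots,v_N)$ be the finite strict monotone multi-utility. Since each $v_k$ is a monotone, $v_k(x)\le v_k(z)$ and $v_k(y)\le v_k(z)$ for all $k$. If $x\bowtie y$ were to hold, then (being a multi-utility) some $v_i(x)>v_i(y)$ and some $v_j(y)>v_j(x)$; but I have not yet derived a contradiction from this alone, so the genuinely useful consequence of having a common upper bound must be extracted differently — concretely, I would instead argue that on the set $\{w : w\preceq z\}$ the order is captured by the finitely many reals $(v_1(w),\dots,v_N(w))$, and then use lower semicontinuity to control suprema; the cleanest route is to invoke directly that a finite strict monotone multi-utility forces the Debreu dimension / separability structure. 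So the first real step is: a dcpo with a finite strict monotone multi-utility is conditionally connected — this should follow because incomparable elements with a common upper bound would let one build, using the finitely many monotones and the lower semicontinuity at a directed supremum, two incompatible directed approximations, contradicting directedness of the approximating set below $z$. (I expect this to be the delicate point; see below.)

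Next I would establish Debreu separability. Here lower semicontinuity is the key extra hypothesis: a finite family of lower semicontinuous real-valued monotones $(v_1,\dots,v_N)$ that jointly separates $\preceq$ lets us build a countable Debreu dense set by taking, for each $k$ and each rational $q$, a suitable witness in the (increasing) set $\{x : v_k(x)\ge q\}$ — this is exactly the kind of construction used in Proposition \ref{prop: sets charact}(i) and \cite[Lemma 1]{hack2022representing}, and finiteness of the index set is what keeps the resulting dense set countable. Once conditional connectedness and Debreu separability are in hand, I note that a finite strict monotone multi-utility in particular gives a finite multi-utility, hence (by Corollary \ref{teo II}) a finite Dushnik–Miller dimension, and more importantly it bounds $K(P)$: a compact element $c$ satisfies $c\ll c$, and using the $N$ monotones one shows distinct compact elements are separated by the reals $(v_k(c))_k$ in a way that, combined with lower semicontinuity, confines $K(P)$ to a countable set (this is essentially \cite[Theorem 4]{hack2022relation}, which I am allowed to assume and am only mildly strengthening). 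That yields clause $(iv)$, that $K(P)$ is countable.

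With $P$ now known to be conditionally connected and Debreu separable with $K(P)$ countable, clause $(iii)$ holds by definition, and Theorem \ref{thm 2} — whose equivalences $(i)\Leftrightarrow(iii)\Leftrightarrow(iv)$ apply verbatim to conditionally connected dcpos — immediately delivers $(i)$ ($\omega$-continuity) and $(ii)$ (existence of a countable weak basis). Concretely the write-up would be: (a) finite l.s.c. strict monotone multi-utility $\Rightarrow$ conditionally connected; (b) finite l.s.c. strict monotone multi-utility $\Rightarrow$ Debreu separable and $K(P)$ countable, giving $(iv)$ and (together with (a)) statement $(iv)$ of Theorem \ref{thm 2}; (c) invoke Theorem \ref{thm 2} to get $(i)$, $(ii)$, $(iii)$. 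The main obstacle I anticipate is step (a): proving conditional connectedness from the multi-utility hypothesis. One must rule out incomparable $x,y$ lying below a common $z$, and the only leverage is that the order restricted below $z$ is representable by $N$ reals while lower semicontinuity pins down behaviour at directed suprema; turning this into a clean contradiction — rather than merely quoting \cite[Theorem 4]{hack2022relation} as a black box — will require care, and if a direct argument proves awkward I would instead factor the proof entirely through \cite[Theorem 4]{hack2022relation} plus the added observation that $(ii)$ of the present theorem follows from $(i)$ exactly as in the proof of Theorem \ref{thm 2} given above.
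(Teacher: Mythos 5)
Your reading of the statement, and hence your whole strategy, is off: despite the wording ``then the following hold'', Theorem \ref{thm 4} is (as its proof and the surrounding discussion make clear) a list of conditions that are \emph{equivalent to one another} under the standing hypothesis, not a list of consequences of it. The paper's own results show that the implications you set out to prove are false: Theorem \ref{new thm compu} exhibits dcpos with a finite lower semicontinuous strict monotone multi-utility that are \emph{not} $\omega$-continuous, and Proposition \ref{finite RP no deb sep} exhibits one in which every Debreu dense subset is uncountable and $K(P)$ is all of the (uncountable) ground set. So your steps (a) and (b) cannot be carried out. Step (a) already fails for trivial reasons: the four-element dcpo $\{\perp,a,b,\top\}$ with $\perp\prec a,b\prec\top$ and $a\bowtie b$ carries a finite (automatically lower semicontinuous) strict monotone multi-utility, yet $a,b\preceq\top$ with $a\bowtie b$, so it is not conditionally connected; hence there is no route from the multi-utility hypothesis to Theorem \ref{thm 2}. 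The doubt you record at exactly this point (``I have not yet derived a contradiction'') is well founded --- there is no contradiction to be had.

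What the paper actually proves is much more local: it quotes \cite[Theorem 4]{hack2022relation} for the equivalence of $(i)$, $(iii)$ and $(iv)$, and the only new content is $(i)\iff(ii)$. The direction $(i)\Rightarrow(ii)$ is immediate, since every basis is a weak basis; the substantive direction is $(ii)\Rightarrow(i)$, which is obtained through $(iv)$: if $B$ is a countable weak basis and $x\in K(P)$, pick a directed $D_x\subseteq B$ with $\sqcup D_x=x$; compactness ($x\ll x$) gives some $d\in D_x$ with $x\preceq d$, while $d\preceq x$ by definition of supremum, so antisymmetry yields $x=d\in B$. Thus $K(P)\subseteq B$ is countable, and the cited equivalence delivers $\omega$-continuity. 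Your fallback (``factor through \cite[Theorem 4]{hack2022relation} plus the observation that $(ii)$ follows from $(i)$'') supplies only the trivial direction, so it still leaves the theorem unproved, and it remains framed around the unprovable goal of deriving all four conditions from the hypothesis alone.
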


\begin{proof}
Since we already proved the equivalence between $(i)$, $(iii)$ and $(iv)$ in \cite[Theorem 4]{hack2022relation}, we only ought to show that $(i)$ and $(ii)$ are equivalent.

Since $(i)$ implies $(ii)$ by definition, we only ought to see the converse is true. In order to do so, given the equivalence between $(i)$ and $(iv)$, it is sufficient to note that, if there exists a countable weak basis $B \subseteq P$, then $K(P)$ is also countable. This follows from the fact $K(P) \subseteq B$. To see this, take some $x \in K(P)$. By definition, there exists some directed set $D_x \subseteq B$ such that $\sqcup D_x=x$. However, since $x \in K(P)$, there exists some $d \in D_x$ such that $x \preceq d$. Given that $d \preceq x$ holds by definition of supremum and that $\preceq$ is antisymmetric, we have $x=d \in D_x \subseteq B$. Hence, $K(P) \subseteq B$ and, in particular, it is countable.
\end{proof}

While Theorem \ref{thm 4} shows that most of the equivalences in Theorem \ref{thm 2} remain true if we require the existence of a \emph{lower semicontinuous finite strict monotone multi-utility} instead of \emph{conditional connectedness}, this is not the case for some of them. In particular, in Theorem \ref{thm 4}, we eliminated the first clause in Theorem \ref{thm 2}. We state the relation of that clause with the others in the following theorem.

\begin{theo}
\label{new thm compu}
Let $P$ be a dcpo with a finite lower semicontinuous strict monotone multi-utility. If $P$ is $\omega$-continuous, then it is Debreu upper separable. However, there exist Debreu upper separable dcpos that have a finite lower semicontinuous strict monotone multi-utility and are not $\omega$-continuous.
\end{theo}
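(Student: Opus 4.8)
The statement has two halves. For the first half, suppose $P$ is an $\omega$-continuous dcpo with a finite lower semicontinuous strict monotone multi-utility $(u_i)_{i=1}^N$. The goal is to produce a countable subset $D \subseteq P$ that is simultaneously Debreu dense and Debreu upper dense. The plan is to extract such a set from a countable basis. Since $P$ is $\omega$-continuous, fix a countable basis $B = (b_n)_{n \geq 0}$ (which we may assume contains $\perp$ if one exists). I would take $D \coloneqq B$ and verify the two density requirements directly. For Debreu density, suppose $x \prec y$; since $B$ is a basis, there is a directed $B_y \subseteq \mathord{\twoheaddownarrow} y \cap B$ with $\sqcup B_y = y$, and because $x \prec y$ the strict monotone property of $(u_i)$ gives $u_{i}(x) < u_i(y)$ for all $i$, so by lower semicontinuity and $\sqcup B_y = y$ there is some $b \in B_y$ with $u_i(x) < u_i(b)$ for all $i$ (choosing $b$ above finitely many witnessing elements, using directedness); as $(u_i)$ is a multi-utility, $b \not\preceq x$, while $b \preceq y$ holds by construction, hence $x \preceq b \preceq y$ — actually one should be slightly careful and argue $x \preceq b$ separately, which may fail, so the cleaner route is to note that a basis of an $\omega$-continuous dcpo already yields a countable Debreu dense set by a standard interpolation-type argument, and combine it with the multi-utility only where needed.

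For Debreu upper density, suppose $x \bowtie y$. Using the basis again, take a directed $B_y \subseteq \mathord{\twoheaddownarrow} y \cap B$ with supremum $y$. I want some $b \in B_y$ with $x \bowtie b \preceq y$. Since $x \not\preceq y$ there is a coordinate $i_0$ with $u_{i_0}(y) < u_{i_0}(x)$, so no element below $y$ can dominate $x$, i.e. every $b \preceq y$ satisfies $b \not\succeq x$. It remains to find $b \in B_y$ with $b \not\preceq x$; if every $b \in B_y$ satisfied $b \preceq x$, then $y = \sqcup B_y \preceq x$, contradicting $x \bowtie y$. Hence such $b$ exists and $x \bowtie b \preceq y$, giving Debreu upper density. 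Combining, $B$ witnesses Debreu upper separability.

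For the second half — exhibiting a Debreu upper separable dcpo with a finite lower semicontinuous strict monotone multi-utility that is not $\omega$-continuous — the plan is to engineer a counterexample, and this is the step I expect to be the main obstacle. The natural strategy is to start from the classification results already invoked in the chapter: Theorem \ref{thm 4} tells us that for such $P$, $\omega$-continuity is equivalent to $K(P)$ being countable and to $P$ being Debreu separable with countable $K(P)$. So I need a dcpo $P$ with a finite lower semicontinuous strict monotone multi-utility that is Debreu upper separable but has uncountably many compact elements (equivalently, is not Debreu separable in the relevant sense). A promising candidate is a "tree-like" or "fan" construction: take an antichain $A$ of size continuum of compact elements, each sitting below a common top, arranged so that the order is conditionally connected failing — or rather, take something like $[0,1]$ with a duplicated bottom layer, modelled on the partial order in Figure \ref{fig:counterex} ($X = \mathbb{R}\setminus\{0\}$ with $x \preceq y \iff |x|\leq|y|$ and $\mathrm{sgn}(x)\leq\mathrm{sgn}(y)$), which has a finite (size $2$) multi-utility; one then needs to complete it to a dcpo, check the multi-utility can be taken lower semicontinuous, check Debreu upper separability holds (a countable dense set in the $|\cdot|$ coordinate plus one extra point should suffice), and check $K(P)$ is uncountable (the negative reals, say, are all compact if they have no non-trivial directed sets below them). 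The delicate points will be (a) ensuring the dcpo axiom holds after whatever completion is performed without accidentally creating new suprema that destroy compactness, and (b) reconciling $\mathrm{sgn}$, which is not lower semicontinuous on $\mathbb{R}$ at $0$, with the requirement that the multi-utility be lower semicontinuous in the Scott topology of $P$ — this may force a reparametrisation of the example or a different base space. I would therefore allocate most of the effort to pinning down this example, likely by adapting the construction behind \cite[Theorem 4]{hack2022relation} and Figure \ref{fig:counterex} rather than building from scratch.
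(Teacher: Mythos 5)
Your argument for the forward implication is essentially workable, but you abandon the correct step and retreat to a fallback that is false. The step you doubt—deducing $x \preceq b$ from $u_i(x) < u_i(b)$ for all $i$—is immediate, because a strict monotone multi-utility characterizes the order in both directions: $x \preceq b$ holds \emph{if and only if} $u_i(x) \leq u_i(b)$ for every $i$. So from $x \prec y$ you get $u_i(x) < u_i(y)$ for all $i$; lower semicontinuity (Scott-openness of $u_i^{-1}\bigl((u_i(x),\infty)\bigr)$) applied to a directed $B_y \subseteq B$ with $\sqcup B_y = y$ gives witnesses $b_i \in B_y$ with $u_i(b_i) > u_i(x)$; finiteness of the multi-utility together with directedness gives one $b \in B_y$ above all the witnesses, monotonicity yields $u_i(x) < u_i(b)$ for all $i$, and hence $x \preceq b \preceq y$. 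This is precisely where the finiteness hypothesis enters, and it needs only a countable weak basis, not way-below. By contrast, your ``cleaner route''—that a countable basis of an $\omega$-continuous dcpo already yields a countable Debreu dense set by interpolation—is false: the interval domain $(\mathcal I, \preceq_{\mathcal I})$ is $\omega$-continuous, yet any Debreu dense subset must contain, for every irrational $a$, an interval with left endpoint exactly $a$ (consider $[a,a+2] \prec [a,a+1]$), so it is not Debreu separable; the multi-utility is doing real work. (Your Debreu upper density argument is fine and, as you in effect observe, needs neither the multi-utility nor lower semicontinuity; alternatively, the whole forward direction can be assembled from results already stated: Theorem \ref{thm 4}(iii) gives Debreu separability and Proposition \ref{weak basis implications}(i) gives a countable Debreu upper dense set, and the union of the two countable sets witnesses Debreu upper separability.)

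The genuine gap is in the second half, and both of your candidate constructions provably cannot work. By Theorem \ref{thm 4}, under the multi-utility hypothesis failure of $\omega$-continuity is equivalent to $K(P)$ being uncountable, so you must produce uncountably many compact elements coexisting with a single \emph{countable} set that is Debreu dense and Debreu upper dense. An uncountable antichain $A$ of compact elements under a common top fails at once: for $x \bowtie y$ with $x,y \in A$, the only $d \preceq y$ incomparable to $x$ is $y$ itself, so all of $A$ is forced into the Debreu upper dense set. And the partial order of Figure \ref{fig:counterex} is explicitly \emph{not} Debreu separable (for irrational $a>0$ the pair $-a \prec a$ has $\{-a,a\}$ as its only interpolants), while Proposition \ref{finite RP no deb sep} records that for the closely related example every Debreu dense or Debreu upper dense subset is uncountable; so ``a countable dense set in the $|\cdot|$ coordinate plus one extra point'' cannot suffice, and modelling the counterexample on that figure points in exactly the wrong direction. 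The actual difficulty, which your sketch never engages, is to arrange uncountably many compact elements so that none of them produces a private jump (which would force it or its immediate predecessor into the Debreu dense set) and each has below it, drawn from a fixed countable pool, witnesses for all of its incomparabilities—and then to equip the resulting dcpo with a finite lower semicontinuous strict monotone multi-utility. That construction is the substance of the second half of the theorem and is missing from your proposal.
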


As the final remark of this section, let us note that, in general, the existence of a finite strict monotone multi-utility that is lower semicontinuous in $\sigma(P)$ does not imply neither Debreu separability nor the existence of a countable Debreu upper dense subset. We include this counterexample as a complement of Theorem \ref{thm 4}.

\begin{prop}
\label{finite RP no deb sep}
There exist partial orders with finite strict monotone multi-utilities that are lower semicontinuous in the Scott topology such that every subset that is either Debreu dense or Debreu upper dense is uncountable.
\end{prop}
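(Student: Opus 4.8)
The plan is to construct an explicit partial order carrying a finite, lower-semicontinuous, strict monotone multi-utility, yet having no countable Debreu dense and no countable Debreu upper dense subset. The natural candidate is the partial order $(X,\preceq)$ from Figure \ref{fig:counterex}, i.e.\ $X \coloneqq \mathbb R \setminus \{0\}$ with $x \preceq y$ iff $|x| \leq |y|$ and $\mathrm{sgn}(x) \leq \mathrm{sgn}(y)$; this already has geometrical dimension $2$, hence a finite multi-utility, and its incomparable pairs are exactly those of the form $x \bowtie y$ with $\mathrm{sgn}(x) \neq \mathrm{sgn}(y)$ and $|x| \neq |y|$. First I would verify that the two monotones realizing its geometrical dimension two can be chosen to be \emph{strict} monotones and \emph{lower semicontinuous} in the Scott topology $\sigma(X,\preceq)$ — this is the content of \cite[Theorem 2 and Corollary 3]{hack2022geometrical}, where the finite multi-utility is built; one checks that the component functions (essentially $x \mapsto |x|$ together with a function separating the two sign-branches) are strict monotones and are lower semicontinuous for $\sigma(X,\preceq)$, possibly after a harmless monotone reparametrization. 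Alternatively, since the paper has already established (Remark \ref{useful monos}, Figure \ref{fig:counterex}) that this partial order illustrates the failure of finite strict monotone multi-utilities to coincide with finite multi-utilities, I would take the strict monotone multi-utility exhibited there and only need to supply the lower-semicontinuity check.

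Next I would prove the denseness obstruction. Suppose $D \subseteq X$ is Debreu dense or Debreu upper dense; I claim $D$ is uncountable. The key observation, exactly as in the proof of \cite[Lemma 5 (ii)]{hack2022representing} adapted to this order, is that for every $r>0$ the elements $r$ and $-r$ are incomparable, $r \bowtie -r$, and moreover any $d \in X$ with $-r \bowtie d \preceq r$ (resp.\ any $d$ witnessing Debreu density for a suitable strict pair straddling $\pm r$) must satisfy $|d| = r$, because $d \preceq r$ forces $|d| \leq r$ while $-r \bowtie d$ rules out $|d| < r$ (any element of absolute value strictly less than $r$ with negative sign is $\preceq -r$, and any with nonnegative sign is $\preceq r$, so it is comparable to $-r$). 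Hence the witness set $D$ must contain, for each $r$ in an uncountable set of radii, an element of absolute value exactly $r$; since elements of distinct absolute value are distinct, $D$ is uncountable. The same argument handles Debreu density by choosing the straddling pair $(d', d'')$ with $d' \prec d''$ appropriately around the two branches so that any interpolating $d$ is again pinned to absolute value $r$.

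I expect the main obstacle to be the lower-semicontinuity verification: one must first pin down the Scott topology $\sigma(X,\preceq)$ on this non-conditionally-connected, non-Debreu-separable order — in particular, since every point has a non-trivial directed set below it only in degenerate ways (directed sets here are essentially chains within one sign-branch), the Scott-open sets are the upper sets that are inaccessible by suprema of such chains — and then check that $\{x : u_i(x) > c\}$ is Scott-open for each component $u_i$ and each $c$. Because the order is "mostly" a disjoint union of two chains glued along the bottom layer, the directed sets are simple and this check is routine once the topology is identified, but it is the place where a careless choice of the $u_i$ could fail; choosing $u_i$ monotone and left-continuous along each branch in the usual real sense suffices. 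A minor secondary point is to confirm that $(X,\preceq)$ genuinely fails Debreu separability (so that the "every Debreu dense subset is uncountable" clause is not vacuous but also not contradicted), which follows from the pinning argument above applied to an uncountable family of radii, or directly from \cite[Theorem 2]{hack2022geometrical} together with Proposition \ref{deb no mu}.
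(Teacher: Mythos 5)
Your choice of example is the fatal problem. The partial order of Figure \ref{fig:counterex} (i.e. $X=\mathbb R\setminus\{0\}$ with $x\preceq y$ iff $|x|\leq|y|$ and $sgn(x)\leq sgn(y)$) does \emph{not} admit any finite strict monotone multi-utility: that is precisely what it was introduced to show (see the caption of Figure \ref{fig:counterex} and \cite[Corollary 3]{hack2022geometrical} --- the very result you cite in support of your first step, and your fallback "take the strict monotone multi-utility exhibited there" contradicts the fact that what is exhibited there is a finite multi-utility which is \emph{not} strict). Concretely, for $y>x>0$ the pair $-y\bowtie x$ forces some member $v$ of any strict monotone multi-utility to satisfy $v(-y)>v(x)$; with finitely many members, for each $x>0$ one fixed $v$ does so along a sequence $y_k\downarrow x$, and since $y\mapsto v(-y)$ is increasing while $v(-x)<v(x)$ (because $-x\prec x$), that function jumps at $x$. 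Finitely many increasing real functions have only countably many jump points, yet every $x>0$ must be one, so no finite strict monotone multi-utility exists and the hypothesis of the proposition already fails for your candidate. The paper instead uses the different order of \cite[Proposition 7]{hack2022geometrical}: $X=\mathbb R\setminus\{0\}$ with $x\preceq y$ iff $u_1(x)\leq u_1(y)$ and $u_2(x)\leq u_2(y)$, where $u_1(x)=x$ and $u_2(x)=1/x$. There every negative element lies below every positive one and each sign class is an antichain, $\{u_1,u_2\}$ is a finite \emph{strict} monotone multi-utility, and your anticipated "main obstacle" (lower semicontinuity) is trivial: every directed set contains its supremum, so the Scott-open sets are exactly the upper sets and monotonicity already gives lower semicontinuity.

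Your density argument also fails for your order. In it $-r\preceq r$ (equal modulus, increasing sign), so the pairs $r\bowtie -r$ on which your pinning argument rests are in fact comparable; the incomparable pairs are exactly $(-a,b)$ with $a>b>0$, not all mixed-sign pairs of distinct modulus. For such a pair any positive rational $q\leq b$ satisfies $-a\bowtie q\preceq b$, and any negative rational $d$ with $b<|d|\leq a$ satisfies $b\bowtie d\preceq -a$; hence the nonzero rationals form a countable Debreu upper dense subset of your order, so the "Debreu upper dense" half of the conclusion is simply false for it (the "Debreu dense" half does hold for it, but that is only half the statement). For the paper's example both halves are immediate: if $x\prec y$ (i.e. $x<0<y$), the only elements $d$ with $x\preceq d\preceq y$ are $x$ and $y$ themselves, and if $x\bowtie y$ (same sign, distinct), the only $d$ with $x\bowtie d\preceq y$ is $d=y$; a countable set therefore cannot be Debreu dense or Debreu upper dense.
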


\begin{proof}
We can use the partial order we defined in \cite[Proposition 7]{hack2022geometrical}. That is, take $X \coloneqq \mathbb R \setminus \{0\}$ equipped with $\preceq$, where 
\begin{equation*}
    x \preceq y \iff u_1(x) \leq u_1(y) \text{ and } u_2(x) \leq u_2(y),
\end{equation*}
with $u_1(x) \coloneqq x$ for all $x \in X$, and $u_2(x) \coloneqq 1/x$ if $x >0$ and $u_2(x) \coloneqq - 1/|x|$ if $x <0$. The statement follows from the proof of \cite[Proposition 7]{hack2022geometrical}. We simply ought to note that $\{u_1,u_2\}$ is a finite strict monotone multi-utility composed of functions that are lower semicontinuous in $\sigma(P)$, given that they are monotone and, for every directed set $D \subseteq X$, we have that $\sqcup D \in D$.
\end{proof}

It should be noted that the last proposition shows precisely why the technique used in the proof of Theorem \ref{debreu thm} (see \cite[Theorem 1.4.8]{bridges2013representations}) to show that preorders with a utility function are Debreu separable cannot be extended to non-total preorders with a finite number of strict monotones. Moreover, it clarifies why the equivalence between $(iii)$ and $(iv)$ in Theorem \ref{thm 4} is not trivial. In particular, it shows that, given a dcpo $P$, the countability of $K(P)$ is necessary in order for $P$ to be Debreu separable. Lastly, since $K((X,\preceq))=X$, Proposition \ref{finite RP no deb sep} improves on \cite[Proposition 20 $(i)$]{hack2022relation}.

In the following section, we conclude our study of the countability restrictions in the order-theoretic approaches to computation by relating order density with both continuity in the Scott topology and completeness.

\subsection{Order density and Scott-continuity}

As a final relation between the usual order denseness properties and the order-theoretic approach to computability, we include implications of order denseness for both completeness and continuity in the Scott topology. Before stating this precisely in the following theorem, we recall a few definitions from \cite{abramsky1994domain}. Take $P$ a dcpo. We call a sequence $(x_n)_{n \geq 0} \subseteq P$ \emph{increasing} if $x_n \preceq x_{n+1}$ for all $n \geq 0$. Moreover, if $Q$ is also a dcpo, we say a monotone $f:P \to Q$ \emph{preserves suprema of increasing sequences} if, given any increasing sequence $(x_n)_{n\geq0} \subseteq P$, we have $\sqcup (f(x_n))_{n\geq0} = f(\sqcup (x_n)_{n\geq0})$. Lastly, we use the term \emph{sequential continuity} (applied to the Scott topology $\sigma(P)$) in the usual topological sense \cite{kelley2017general}. 

\begin{theo}
\label{thm 3}
If $P$ is a Debreu separable dcpo, then the following statements hold:
\begin{enumerate}[label=(\roman*)]
\item The following statements are equivalent:
\begin{enumerate}[label=(\alph*)]
\item Every directed set has a supremum.
\item Every increasing sequence has a supremum.
\end{enumerate}
\item If $Q$ is a dcpo and $f:P \rightarrow Q$ is a map, then the following are equivalent:
\begin{enumerate}[label=(\alph*)]
\item $f$ is sequentially continuous.
\item $f$ is monotone and preserves suprema of increasing sequences.
\item $f$ is continuous.
\end{enumerate}
\end{enumerate}
\end{theo}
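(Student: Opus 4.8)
The plan is to isolate a single combinatorial lemma and then derive all the implications from it together with the characterization of the order by the Scott topology recorded in \eqref{charac order by topo}. The lemma I would prove first is: \emph{if $P$ is Debreu separable and $A\subseteq P$ is a nonempty directed set without a greatest element, then $A$ contains a cofinal increasing sequence $(b_n)_{n\geq 0}$}, where cofinal means that for every $a\in A$ there is some $n$ with $a\preceq b_n$. To establish it, fix a countable Debreu dense set $D$, put $D_A:=\{d\in D\mid d\preceq a\text{ for some }a\in A\}$, and for each $d\in D_A$ choose (by the axiom of choice) an element $a_d\in A$ with $d\preceq a_d$. The countable family $\{a_d\mid d\in D_A\}$ is cofinal in $A$: given $a\in A$, since $a$ is not a greatest element of $A$ there is $b\in A$ with $\neg(b\preceq a)$, and directedness yields $c\in A$ with $a\preceq c$ and $b\preceq c$, whence $a\prec c$; Debreu density then produces $d\in D$ with $a\preceq d\preceq c$, so $d\in D_A$ and $a\preceq d\preceq a_d$. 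Enumerating $\{a_d\}$ as $(c_i)_{i\geq 1}$ (this family cannot be finite, since a finite cofinal subset of a directed set has an upper bound in the set, which would then be a greatest element, contradicting the hypothesis) and recursively choosing, via directedness, $b_{n+1}\in A$ above both $b_n$ and $c_{n+1}$ with $b_1:=c_1$, gives the desired cofinal increasing sequence.

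Granting the lemma, part (i) follows at once. The direction (a)$\Rightarrow$(b) is trivial, since an increasing sequence is directed. For (b)$\Rightarrow$(a) I would take a directed set $A$: if it has a greatest element, that element is its supremum; otherwise I would take a cofinal increasing sequence $(b_n)\subseteq A$, set $s:=\sqcup(b_n)$ (which exists by (b)), and verify $s=\sqcup A$ — every upper bound of $A$ bounds $(b_n)$ and hence dominates $s$, while for each $a\in A$ cofinality gives some $n$ with $a\preceq b_n\preceq s$.

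For part (ii), the implications (c)$\Rightarrow$(b) (a directed supremum specializes to an increasing-sequence supremum) and (c)$\Rightarrow$(a) (a Scott-continuous map is topologically continuous by the identification the paper records, and topological continuity always implies sequential continuity) are formal. For (b)$\Rightarrow$(c) I would argue as in part (i): a monotone $f$ sends a directed $A$ to a directed set, so $\sqcup f(A)$ exists in the dcpo $Q$; if $A$ has a greatest element the claim is clear, and otherwise I would pick a cofinal increasing $(b_n)\subseteq A$, use (b) to get $f(\sqcup A)=f(\sqcup b_n)=\sqcup f(b_n)$, and identify $\sqcup f(b_n)$ with $\sqcup f(A)$ via monotonicity and cofinality. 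The only genuinely nontrivial implication is (a)$\Rightarrow$(b). I would first show $f$ is monotone: if $x\preceq y$, then the sequence $x,y,y,\dots$ converges to $x$ in $\sigma(P)$ because every Scott-open set containing $x$ is upper-closed and hence contains $y$; sequential continuity then gives $f(x),f(y),f(y),\dots\to f(x)$ in $\sigma(Q)$, so every Scott-open set containing $f(x)$ contains $f(y)$, which by \eqref{charac order by topo} says $f(x)\preceq f(y)$. Next, for an increasing sequence $(x_n)$ with $s:=\sqcup x_n$, Scott-openness (upper-closed and inaccessible by directed suprema) yields $x_n\to s$, hence $f(x_n)\to f(s)$; since $(f(x_n))$ is increasing, its supremum $t$ exists in $Q$ with $t\preceq f(s)$, while $f(x_n)\to f(s)$ together with $f(x_n)\preceq t$ forces (again by \eqref{charac order by topo}) $f(s)\preceq t$, so $f(\sqcup x_n)=f(s)=t=\sqcup f(x_n)$. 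Chaining (a)$\Rightarrow$(b)$\Rightarrow$(c)$\Rightarrow$(a) then closes the cycle.

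The main obstacle — and the only place Debreu separability does real work — is the cofinal-sequence lemma: sequential continuity is genuinely weaker than continuity for non-first-countable topologies, and a directed set need not possess any countable cofinal subset, so the whole theorem rests on Debreu density forcing directed subsets of $P$ to be ``$\omega$-cofinally generated''. Everything else is bookkeeping with the Scott topology and its order-characterization \eqref{charac order by topo}; the points that still need care are the degenerate cases (directed sets, or increasing sequences, already possessing a greatest element) and confirming that the cofinal family extracted in the lemma is genuinely infinite in the relevant case.
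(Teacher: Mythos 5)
Your proposal is correct, and it follows essentially the same route as the paper's proof: the whole theorem is made to rest on the observation that Debreu separability lets you extract, inside any directed set without a greatest element, a countable cofinal subset and hence a cofinal increasing sequence, after which parts (i) and (ii) reduce to routine bookkeeping with the Scott topology and the order-characterization \eqref{charac order by topo}. Your treatment of the degenerate case (a greatest element) and the verification that the cofinal family is infinite are exactly the points that need care, and you handle them correctly.
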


\section{Summary}

In this chapter, we have introduced an order-theoretic computability framework more general than the one in domain theory. Moreover, we have derived the fundamentals of our approach intuitively. As a consequence of gaining generality, however, we have encountered several undesired consequences, like the absence of a notion of both computable functions and complexity. Moreover, we have noticed an increase in the dependence of computability on the specific parameters we choose in our computational model. Furthermore, we have connected the usual countability restrictions in these frameworks (the ones that allow the notion of computability to be inherited from Turing machines) to the usual countability restrictions in order theory, which are in terms of order denseness and multi-utilities.

In the following chapter, we make some concluding remarks regarding our work here.

\newpage
\thispagestyle{empty}
\mbox{}
\newpage



\chapter{Conclusion}

In this final chapter, we conclude our work here with three closing sections. We (i) provide a brief answer to our research questions in Section \ref{answer RQ}, (ii) give a detailed account of our contributions in Section \ref{contributions} and (iii) connect our findings with the literature in Section \ref{discussion}.   

\section{Answer to our research questions}
\label{answer RQ}

In this section, we give a brief answer to the research questions we considered in the introduction. (For a more detailed account of what our contribution to the different topics was, one can read Section \ref{contributions}.) Our answers to these questions are the following:

\begin{enumerate}[label=\textbf{(Q\arabic*)}]
\item \textbf{Under what conditions do the fluctuation theorems hold for learning systems?}
\end{enumerate}

\textbf{Answer to (Q1):} In order to answer this question, we have derived the thermodynamic framework from the theory of Markov chains. As a result from this detachment concerning physics, we have reduced the conditions that were required for Jarzynski's equality to hold in the context of learning systems and, moreover, we have shown Crooks' fluctuation theorem requires an extra condition given that the definition of some of the involved quantities vary in our context with respect to physics. Hence, we have improved on the applicability of these results to learning systems and, in particular, on the previous approach to them from the learning systems community \cite{grau2018non}. 

\begin{enumerate}[label=\textbf{(Q\arabic*)}]\addtocounter{enumi}{1}
\item \textbf{Can these theorems be observed in the adaptation of biological learning systems? In particular, in human sensorimotor adaptation?}
\end{enumerate}

\textbf{Answer to (Q2):} We have provided the first piece of experimental evidence supporting the validity of the fluctuation theorems in the adaptation of biological learning systems. In particular, we have tested them in the context of a human sensorimotor task. It should be noted that, to our best knowledge, this constitutes the first piece of evidence supporting fluctuation theorem outside the realm of physics.

\begin{enumerate}[label=\textbf{(Q\arabic*)}]\addtocounter{enumi}{2}
\item \textbf{Given the set of possible transitions of a system, how well can we encode it via measurement devices (i.e. real-valued functions)?}
\end{enumerate}

\textbf{Answer to (Q3):} We have improved on the study of real-valued characterizations of preorders. More specifically, motivated by majorization and the maximum entropy principle, we have contributed to the study of complexity, optimization and their interplay in preordered spaces. In this regard, the main novelties in our work are (i) the introduction of new classes of real-valued functions of interest (injective monotones and injective monotone multi-utilities), (ii) finding new relations among the different sorts of real-valued characterizations, and (iii) the establishment of the (so far) most general classification of preorders in terms of real-valued functions. Moreover, we have related this with the study of the most suitable notions of dimension for order structures. In fact, we have contributed to it with the definition of a new notion of dimension (the Debreu dimension) and with the establishment of relations between the different definitions that have been considered. Aside from gaining insight into learning systems and the maximum entropy principle, we have also found some applications in physics to our general approach to \textbf{(Q3)}, in particular in both molecular diffusion and resource theory.

\begin{enumerate}[label=\textbf{(Q\arabic*)}]\addtocounter{enumi}{3}
\item \textbf{Can we clarify what order-theoretic structure is required in order to introduce computability on uncountable sets?}
\end{enumerate}

\textbf{Answer to (Q4):} We have introduced a more general framework that follows from an intuitive picture and allows us to define computable elements. However, we have found three criteria to argue in favor of the more specific approach which is followed in domain theory: computable functions, the dependence of computability on the choices one has in each framework, and complexity. More specifically, we support the domain-theoretic approach because (i) computable functions are defined , (ii) computability is less dependent on  the model's degrees of freedom, and (iii) complexity theory is defined.  

\begin{enumerate}[label=\textbf{(Q\arabic*)}]\addtocounter{enumi}{4}
\item \textbf{In particular, what sort of countability restrictions on these structures are required to inherit computation from Turing machines?}
\end{enumerate}

\textbf{Answer to (Q5):} We have shown that the countability restrictions that are used in order to derive computability from Turing machines in both our more general approach and in domain theory are closely related to the different countability restrictions that we have considered as part of our approach to \textbf{(Q3)}. In particular, we have established close relations between the countability constraints in these approaches of computability and both countable order density properties and multi-utilities. This includes their equivalence for several spaces of computational interest.

Given that we have roughly outlined the answers to our research questions in this section, we proceed to state the main contributions of this work in the following one.

\section{Contributions}
\label{contributions}

The main contributions of this work are the following:
\begin{enumerate}[label=(C\arabic*)]
    \item We derive Jarzynski's equality and Crooks' fluctuation theorem in the context of general Markov chains. We emphasize that there is no need for hypotheses regarding the reversed process or detailed balance to derive the former (which was assumed in previous approaches to it in the context of learning systems). Furthermore, we introduce a new notion of reversed process that is more natural in the context of Markov chains and, as a result, we point out certain differences in Crooks' fluctuation theorem. Hence, we clarify the applicability of these results in the context of learning systems.
    \item We provide experimental evidence supporting Jarzynski's equality and Crooks' fluctuation theorem in a human sensorimotor task. In fact, we perform the first test of these results in a non-physical context. Hence, we contribute to the illustration of the potential usefulness of thermodynamics in the study of learning systems.  
    \item Motivated by the study of optimization in learning system, we introduce a new sort of real-valued representation of preorders, injective monotones, that extends some of the appealing properties of Jaynes' maximum entropy principle, and connect it to previously known notions of complexity and optimization in preordered spaces. Moreover, we define a new notion of order density, Debreu upper separability, which proves to be relevant in both the study of complexity and optimization in preorders and the introduction of computability in uncountable sets via order-theoretic tools.
    \item We obtain the most complete classification of preordered spaces in terms of real-valued functions so far. In particular, we gain insight concerning both the internal relations and the interplay between complexity and optimization in preordered spaces.
    \item Following our interest in the complexity of order structures, we introduce a new notion of dimension for partial orders, the Debreu dimension, that follows from a modification of the classical notion by Dushnik and Miller and has the advantage of being better motivated from a geometrical viewpoint. Moreover, we study how the building blocks of the Debreu dimension, the Debreu linear extensions, can be constructed
    and relate our new notion of dimension to the previously known ones.
    \item As an application of our abstract approach to preorders and their representation by real-valued functions, we improve on the relation between molecular diffusion and entropy. In particular, we show that, whenever trying to reproduce the properties of the second law by following an order-theoretic approach to molecular diffusion, a (countably) infinite number of functions is needed. We contribute hence to the study of the complexity of thermodynamic proposals like that by Ruch and others (the so-called \emph{principle of increasing mixing character}).
    Moreover, we prove an analogous result in the context of systems coupled to a heat bath.
    \item We consider computation as a process of uncertainty reduction and introduce an order-theoretic framework that formalizes this and, moreover, allows the definition of computable elements in uncountable spaces. When comparing to previous approaches, the advantage of our proposal lies in its generality. However, by pointing out the limitations of our framework, we provide specific reasons supporting the stronger restrictions on the order structure that are usually required in the literature.
    \item As part of our attempt to understand the order structures that are actually needed in order to define computability, we bridge two areas that, to our best knowledge, have always been studied independently. More specifically, we refer, on the one hand, to the study of real-valued representations of preorders and, on the other hand, to the order-theoretic approaches to computability. In particular, we  link the usual countability restrictions in order theory (order density and multi-utilities) with those in the order-theoretic approaches to computability on uncountable sets. These restrictions are crucial since, in these approaches to computability, Turing machines are assumed to be a baseline for computability and, because of this, the order structures ought to be closely related to Turing machines by countability constraints. Hence, we gain insight into the sort of order structures that can be used in order for an uncountable space to inherit computability from Turing machines.
\end{enumerate}

In the following section, we connect our contributions to each other and relate them to the literature.

\section{Discussion}
\label{discussion}

In this section, we discuss several points connected to our work that we believe might be of interest to put it in context. Following the general structure, we divide the discussion into the three parts that we have considered along this work.

\subsection{Uncertainty and learning systems}

\subsubsection{Derivation and applicability of the fluctuation theorems}

We have shown both Jarzynski's equality and Crooks' fluctuation theorem in the framework of general Markov chains. In fact, we have adopted an unconventional approach where, instead of assuming the definition of the variables of interest in physics, we constructed them starting from a Markov chain, that is, we introduced them from well-defined notions for Markov chains. Hence, we have made these results available for the wide range of areas based on Markov chains,
like learning systems.

To our best knowledge, there is only one source that
used
these results in a non-physical context, namely \cite{grau2018non}. However, the assumptions there were too strong. In particular, it was assumed that the transitions matrices in Theorem \ref{Jarzynski's equality} ought to satisfy detailed balance with respect to their equilibrium distributions. As we showed in \cite{hack2022jarzyski}, this is not necessarily the case. Actually, the equality can be shown without incorporating any hypothesis related to the time reverse of the Markov chain in question. Nonetheless, the adoption of such premises in \cite{grau2018non} is understandable, given that it is under those stronger restrictions that the result is usually shown when adopting the Markov chain scenario in non-equilibrium thermodynamics \cite{crooks1998nonequilibrium,crooks1999excursions,crooks1999excursions,crooks1999entropy}. Notice, albeit following a different method, the same assumptions that we have established for the validity of Theorem \ref{Jarzynski's equality} can be found in \cite{jarzynski1997equilibrium}.

The situation regarding Crooks' fluctuation theorem is different. In this case, assumptions regarding time reversibility of the Markov chain in question are, actually, mandatory. The difference between our approach and the one in non-equilibrium thermodynamics lies in the definition of the involved quantities. More specifically, in our approach, the involved quantities are unique (up to a constant \cite{hack2022jarzyski}). On the contrary, in the usual non-equilibrium thermodynamics setup \cite{crooks1998nonequilibrium,crooks2000path,crooks1999excursions}, there is some degree of ambiguity regarding these quantities and the Markov chains. Hence, following our approach (where general Markov chains constitute the starting point), we incorporate changes in some of the definitions related to Crooks' theorem. As a result, some small differences arise in its statement. More information concerning this matter can be found in the extensive discussion we included in \cite[Section 5]{hack2022jarzyski}, where we expose the degree of arbitrariness of the definitions in several thermodynamic approaches to Crooks' theorem. 

As a final remark, it should be noted that the thermodynamics literature suffers from some ambiguity in its terminology. In particular, \emph{detailed balance} (which has a well-established mathematical definition \cite{levin2017markov}) is confused with \emph{microscopic reversibility} (see \cite{hack2022jarzyski} for its definition). Moreover, it is also mistaken for the assumption that some specific Boltzmann distributions \eqref{boltzmann} are stationary for the considered transition matrices. We have discussed this issue in \cite[Section 5]{hack2022jarzyski}. Hence, when consulting the literature, the reader should be careful in this regard.

\subsubsection{Importance of the fluctuation theorems}

Aside from the physical value of the fluctuation theorems as relations between equilibrium quantities and non-equilibrium realizations of physical processes, their importance stems from the fact they constitute testable predictions concerning whether a system follows $d$-majorization for some specific $d \in \mathbb P_\Omega$ or not (see Section \ref{sec: intro pre} for more details). The fact that these predictions ought to be probabilistic follows from the non-deterministic dynamics of the system. 
To illustrate this, it is useful to compare our situation with that of a
dynamical system,
which consists, roughly, of a set $X$ and a map that associates, to each pair consisting of an element $x \in X$ and some time $t \geq 0$, a unique point $y \in X$ where the system will be if it starts at $x$ and evolves freely until time $t$.
Predictions are simple in such a deterministic system since, given an initial state of the system, one can uniquely determine its state after certain amount of time from the theory and, hence, test it.
However, such an approach is not possible in $d$-majorization, since there is not a single but several possible evolution paths for the system.
Nonetheless, given the regularity exhibited by the possible dynamical processes in $d$-majorization, we are allowed to 
obtain another type of predictions, which is exemplified by both Crooks' fluctuation theorem and Jarzynski's equality.


\subsubsection{Fluctuation theorems in the study of learning systems}
We have contributed to the development of the parallelism between two, a priori, quite separated disciplines, namely thermodynamics and the study of learning systems, by (i) proving Jarzynski's and Crooks' fluctuation theorem for general Markov chains, and (ii) providing the first piece of experimental evidence supporting the validity of two fluctuation theorems for learning systems and, in particular, for human sensorimotor adaptation. Thus, we have further developed the \emph{bounded rationality} approach to learning systems that considers uncertainty or information processing as the key constraint that prevents adaptation to the environment (i.e. learning) from being optimal. In fact, our work belongs to a recent trend that is concerned with gaining
new insights into learning systems by using thermodynamic tools \cite{goldt2017stochastic,perunov2016statistical,england2015dissipative,still2012thermodynamics,ortega2013thermodynamics}.

\subsection{Real-valued representations of preorders}
\label{discu: real repre}

\subsubsection{Classification of preorders by real-valued monotones}

Several scientific disciplines rely on real-valued monotones in order to describe the
transitions
that their objects of study may experience. In fact, in many cases, these transitions are faithfully represented by an order structure, namely a preorder. The abandonment of the original order structure in favor of a set of real-valued functions is a common practice in some of them, given that we possess an intuitive comprehension of the real line and its powers. Hence, by studying the translation from order structures to functions in general, we hope to contribute to the development of several areas that make use of this framework and, moreover, to facilitate the translation of results between them. The most important contributions concerning these topics that we present here are the following:

\begin{enumerate}[label=(\roman*)]
\item We have studied what we call injective monotones, a special type of real-valued valued monotones on preordered spaces,
and have shown that preordered spaces that possess such a monotone constitute a distinct class. Thus, we have enriched the classification of preordered spaces via real-valued monotones. Moreover, this new class contributes to the study of optimization on general preordered spaces and, hence, to the establishment of principles similar to maximum entropy on general preorders. Importantly, these principles are considered for general preordered spaces, that is, we avoid the structural constraints required for convexity, which play a key role in the maximum entropy principle.
\item We have extended and  emphasized the usefulness of the characterization of the different classes of real-valued monotones via families of increasing sets.
\item We have profited from the characterization in (ii) and have introduced several preordered spaces that have allowed us to refine the classification of preordered spaces through real-valued monotones. In fact, we have achieved the most complete classification so far. Moreover, this advancement has helped us to improve our understanding regarding the interplay between the existence of optimization principles (i.e. single real-valued monotones) and the complexity of a preordered space (i.e. the sort and number of functions that conform a multi-utility).
\end{enumerate}

Several questions regarding the classification remain open. For example, the relation between finite strict monotone multi-utilities and finite injective monotone multi-utilities remains unclear.


\subsubsection{Majorization} 

In most of our work, majorization has been a key example through which we have motivated the objects we have studied. Along the way, we have established several results for majorization which, given its extensive use, are also worth mentioning:
\begin{enumerate}[label=(\roman*)]
\item We have clarified the relation between Shannon entropy (i.e. the maximum entropy principle) and majorization (see Section \ref{opt preorders} and \cite{hack2022representing}) in the context of machine learning and decision theory. We believe that majorization carries a fundamental notion of uncertainty, which we motivated as the preferences of a casino owner, and that it can be considered as an extension of the \emph{principle of insufficient reason} \cite{bernoulli1713ars}. However, this intuitive picture is often abandoned in favor of a much more specific one where Shannon entropy plays a leading role, namely (Jaynes') \emph{principle of maximum entropy} \cite{jaynes1957information}. This principle introduces an artificial distinction between equally unbiased distributions, that is, distributions that are deemed as equivalent according to the principle of insufficient reason. This results in a prominent role of the Boltzmann distribution \eqref{boltzmann}. Importantly, this does not only happen in cumbersome applications of the principle of maximum entropy, but in its usual applications \cite{jaynes1957information}, where the constraint is linear (cf. Figure \ref{fig:maxent}). In fact, under linear constraints, Shannon entropy can even favor the Boltzmann distribution over an uncountable set of equally unbiased distributions (cf. Proposition \ref{uncount max ent}).

\item We have shown the existence of upper semicontinuous injective monotones for $(\mathbb P_\Omega, \preceq_U)$ with its usual topology. Hence, we have established the existence of functions that retain the properties (up to order equivalence) of the maximum entropy principle and, moreover,  extend them to all compact sets.

\item  We have established the order density properties of majorization (see \cite[Lemma 5]{hack2022representing}).
\item We have improved on the characterization of majorization in terms of real-valued monotones. In particular, we have shown that, at least, $|\Omega|-1$ monotones are required to conform a multi-utility for majorization on $\mathbb P_\Omega$, that is, that its geometrical dimension is $|\Omega|-1$ \cite{hack2022geometrical}. Moreover, we have shown that, if $|\Omega| \geq 3$, then, at least, a countably infinite set of monotones are required to conform a strict monotone multi-utility for majorization on $\mathbb P_\Omega$ \cite{hack2022disorder}. 
\end{enumerate}

\subsubsection{The notion of order density for non-total preorders}

We have introduced a new notion of order density on non-total preorders, which we have called \emph{Debreu upper separability}. The relevance of this notion lies in its generality. In fact, it holds, for example, for any preorder on a countable set, unlike the definitions that were previously considered in the literature. More importantly, it plays a key role in: $(i)$ the classification of preorders by real-valued functions, $(ii)$ the study of order dimension, and $(iii)$ the use of order structures in computation. The most significant instances where this definition has proven to be useful are the following: 
\begin{enumerate}[label=(\roman*)]
\item In the study of order dimension and the classification of preordered spaces in terms of real-valued monotones. In particular, we have shown that (a) whenever this property holds, then countable multi-utilities exist and, consequently, injective optimization principles \cite{hack2022representing}, and (b) the basic components of the Debreu dimension, Debreu separable linear extensions, can be obtained as the limit of a sequential process of elimination of incomparable pairs that begins with the original partial order $(X,\preceq)$ \cite{hack2022geometrical}.
\item In the connection between the density requirements that are common in order theory \cite{bridges2013representations} and those that appear in domain theory \cite{abramsky1994domain}.
Although they exist independently in general, $\omega$-continuity (which allows computability to be derived from Turing machines in domain theory) is equivalent to Debreu upper separability for conditionally connected dcpos (cf. Theorem \ref{thm 2}). It should be noted that the \emph{conditionally connected} dcpos include one of the most prominent examples of domain theory, namely the Cantor domain.
Furthermore, we have shown that, if we consider the notion that was previously used instead of Debreu upper separability (upper separability), a similar equivalence holds for a much more restricted set of partial orders (cf. Corollary \ref{upper separable}). Lastly, we have shown that, if the complexity is small (more specifically, if finite lower semicontinuous strict monotone multi-utilities exist), then $\omega$-continuity implies Debreu upper separability (cf. Theorem \ref{new thm compu}).
\end{enumerate}

\subsubsection{The notion of dimension for partial orders}

Motivated by the replacement of the uncertainty preorder by real-valued functions (in particular, by the maximum entropy principle), we
have pondered on the question of what the most suitable notion of dimension for partial orders is. In an attempt to close the gap between the two definitions that can be found in the literature, namely what we have called the \emph{Dushnik-Miller} \cite{dushnik1941partially} and the \emph{geometrical} (which can somewhat be traced back to \cite{ok2002utility,evren2011multi}) dimension, we introduced a new notion, namely the \emph{Debreu} dimension. Following the intention of gaining insight into the notion of dimension itself in the context of partial orders, we have tried to relate the different definitions we have encountered. Our main contributions in this regard are the following:
\begin{enumerate}[label=(\roman*)]
\item We have provided mild hypotheses under which the building blocks of the Debreu dimension (i.e. Debreu separable linear extensions) exist. More specifically, we have shown they can be constructed as the limit of a sequence of partial orders where each extends the initial one (cf. Theorem \ref{count mu has D-ext}).
\item By extending the technique in (i), we have recovered, through a different method, the fact that having a finite geometrical dimension implies the Dushnik-Miller dimension is also finite (cf. Proposition \ref{before teo II} and Corollary \ref{teo II}).
\item We have related the geometrical and Debreu dimensions. In particular, we have shown the Debreu dimension is countable if and only if the same holds for the geometrical dimension (cf. Corollary \ref{why deb dim}), proving, hence, the new definition satisfies our aim of being closer to geometry than the Dushnik-Miller dimension. The relation between the Debreu and geometrical dimensions is, nonetheless, more involved. As a matter of fact, although the contrary holds, there exist partial orders with finite geometrical dimension and countably infinite Debreu dimension (cf. Theorem \ref{finite geo infinite deb}). Hence, although our new notion is closer to the geometrical dimension, the gap between these two can still be infinite. However, in some well-behaved cases, we can actually recover the equivalence (cf. Theorem \ref{equi finite geo and deb}).
\end{enumerate}

\subsubsection{Linear extensions of partial orders}

The usual approach to the dimension of partial orders relies on the Szpilrajn extension theorem (see \cite[Theorem 1]{szpilrajn1930extension} or \cite[Theorem 2.3.2]{harzheim2006ordered}), which can be stated as follows:
\begin{theo}[Szpilrajn {\cite{szpilrajn1930extension}}]
\label{szpilrajn thm}
If $(X,\preceq)$ is a partial order, then it has a linear extension $(X,\preceq')$.
\end{theo}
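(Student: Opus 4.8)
The plan is to establish Theorem \ref{szpilrajn thm} via the classical Zorn's lemma argument on the collection of partial order extensions of $\preceq$. First I would let $\mathcal{E}$ denote the set of all partial orders $\preceq'$ on $X$ such that $x \preceq y$ implies $x \preceq' y$ for all $x,y \in X$, ordered by inclusion of the associated subsets of $X \times X$. The set $\mathcal{E}$ is nonempty, since $\preceq$ itself belongs to it. Given a chain $(\preceq_j)_{j \in J} \subseteq \mathcal{E}$, its union $\preceq_\infty$, defined by $x \preceq_\infty y$ if and only if $x \preceq_j y$ for some $j \in J$, is again a member of $\mathcal{E}$: reflexivity and the extension property are inherited immediately, while transitivity and antisymmetry follow because any finite collection of instances of $\preceq_\infty$ already lies inside a single $\preceq_j$ by the chain property, where these properties hold. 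Hence every chain in $\mathcal{E}$ has an upper bound, and Zorn's lemma (equivalently, the axiom of choice) provides a maximal element $\preceq^* \in \mathcal{E}$.

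The heart of the proof is then to show that $\preceq^*$ is total, which makes it the desired linear extension. Suppose, for contradiction, that there exist $a,b \in X$ with $\neg(a \preceq^* b)$ and $\neg(b \preceq^* a)$. I would define a relation $\preceq'$ on $X$ by
\begin{equation*}
x \preceq' y \iff \big(x \preceq^* y\big) \ \text{ or } \ \big(x \preceq^* a \ \text{ and } \ b \preceq^* y\big),
\end{equation*}
for all $x,y \in X$, which is precisely the transitive closure of $\preceq^*$ together with the single new pair $(a,b)$. By construction $\preceq'$ extends $\preceq^*$ (hence $\preceq$) and is reflexive; moreover it extends $\preceq^*$ strictly, since $a \preceq^* a$ and $b \preceq^* b$ give $a \preceq' b$, whereas $\neg(a \preceq^* b)$. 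It then remains to verify that $\preceq'$ is a partial order: this would exhibit a proper extension of $\preceq^*$ inside $\mathcal{E}$, contradicting maximality, and thereby force $\preceq^*$ to be total.

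The main obstacle --- indeed the only step requiring genuine care --- is checking transitivity and antisymmetry of $\preceq'$, i.e. ruling out that the new pair $(a,b)$ creates a cycle. For transitivity, from $x \preceq' y$ and $y \preceq' z$ one argues by cases on which disjunct realizes each relation; in every case transitivity of $\preceq^*$ closes the gap (for instance, if $x \preceq^* a$, $b \preceq^* y$ and $y \preceq^* z$, then $b \preceq^* z$, so $x \preceq' z$). For antisymmetry, suppose $x \preceq' y$, $y \preceq' x$ and $x \neq y$: if both relations hold via the first disjunct, antisymmetry of $\preceq^*$ gives $x=y$, a contradiction; and if the second disjunct is used for at least one of them, one chains the resulting $\preceq^*$-inequalities to obtain $b \preceq^* a$, contradicting the choice of $a$ and $b$. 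This completes the plan. The argument is entirely elementary once the extension $\preceq'$ is written down correctly, and the only real subtlety lies in the (routine but slightly tedious) bookkeeping of cases in these last two verifications.
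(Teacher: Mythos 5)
Your proposal is correct and follows essentially the same route as the paper: Zorn's lemma applied to the collection of partial-order extensions of $\preceq$ ordered by inclusion, with the union of a chain as upper bound, and maximality forcing totality (your explicit construction of the one-pair extension and the case checks for transitivity and antisymmetry are exactly the standard argument the paper sketches, carried out in full detail). No gaps.
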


(Recall that by \emph{linear extension} we mean a total partial order that contains the relations given by another partial order, the one it \emph{extends}.)

The standard proof of Theorem \ref{szpilrajn thm} consists of two steps. In the first step, it is shown that (not necessarily linear) extensions of $(X,\preceq)$ exist. In the second step, the set
\begin{equation*}
    U \coloneqq \Big\{(X,\preceq') \Big| \preceq' \text{ is a partial order that extends } \preceq \Big\}
\end{equation*}
is considered. Taking the set inclusion as order structure, it is easy to see that $(U,\subseteq)$ satisfies the conditions in Zorn's lemma \cite{zorn1935remark}, that is, that each chain $(C_i)_{i \in I} \subseteq U$ has an upper bound, namely $\cup_{i \in I} C_i$. Hence, by Zorn's lemma, $(U,\subseteq)$ has a supremum, which ought to be linear.

As the previous paragraph details, the existence of linear extensions for arbitrary partial orders usually relies on a non-constructive method (given that Zorn's lemma follows from the axiom of choice). This is different in our work, where we avoid using the axiom of choice by constructively generating linear extensions as the result of a sequential process of extending partial orders to relate pairs of elements that where incomparable previously. In general, this procedure is clearly not feasible, since it follows from the fact the partial orders that interest us here fulfill some sort of countability restriction (like Debreu upper separability or the existence of a countable multi-utility). This distinction is particularly important in Theorem \ref{count mu has D-ext} and \cite[Proposition 5]{hack2022geometrical}, given that their generalizations where no countability restriction is assumed, Theorem \ref{szpilrajn thm} and \cite[Theorem 2.3.3]{harzheim2006ordered}, respectively, rely on the axiom of choice. Proposition \ref{before teo II} and some other results we included in \cite{hack2022geometrical} 
also exemplify this improvement.
The single other constructive approach to linear extensions that we are aware of was proposed in \cite{ok2002utility}. The issue with this approach from our perspective is that, as we discussed in \cite{hack2022geometrical}, it fails, in general, to produce Debreu separable linear extensions.

\subsubsection{Generalizations of a fundamental representation theorem by Debreu}

One of the fundamental connections between order structures and the real line is given by Theorem \ref{debreu thm}, where the existence of a utility functions is equated to that of a countable density property for total preorders. In fact, this result is key in the establishment of the (real-valued) optimization principles that underlie several ares (like learning systems). In this work, we have extended this fundamental theorem in the following ways:
\begin{enumerate}[label=(\roman*)]
\item We have defined Debreu upper separability, a suitable extension of Debreu separability for preorders where incomparabilities arise, and injective monotones, a type of monotone, that comes closer to a utility function for non-total preorders than the already known notion of strict monotone.
Regarding extensions of Theorem \ref{debreu thm}, we have shown that Debreu upper separable preorders have injective monotones (in fact, countable multi-utilities). However, we have also proven that the converse is false, that is, that there exist preorders with injective monotones that are not Debreu separable (like majorization, see \cite{hack2022representing}). In fact, an analogous relation holds between the weaker assumption that a preorder has a countable multi-utility and the existence of injective monotones (see \cite[Propositions 5 and 8]{hack2022representing}) .
\item As the previous point suggests, we can think of the equivalence in Theorem \ref{debreu thm} as being between countable multi-utilities and Debreu separability (for total preordered spaces). In this regard, we have extended Theorem \ref{debreu thm} to non-total preorders by showing that countable multi-utilities exist if and only if countable Debreu separable realizers do (cf. Corollary \ref{why deb dim}). Nonetheless, we have also proven that this equivalence fails for the finite case, that is, that there exist preorders with finite multi-utilities and only countably infinite Debreu separable realizers (cf. Theorem \ref{finite geo infinite deb}). Lastly,
we have recovered the equivalence in some well-behaved cases.
\item A fundamental fact behind Theorem \ref{debreu thm} is that, for total preorders $(X,\preceq)$, both conditions imply that the set of jumps is countable, that is, the countability of the set of pairs $x,y \in X$ such that $x \prec y$ and there is no $z \in X$ fulfilling $x \prec z \prec y$. We have shown that this fact also holds in the more general framework of conditionally connected Debreu upper separable preorders (see \cite[Proposition 13]{hack2022relation}). 
\end{enumerate}


\subsubsection{The role of order structures an their real-valued representations in physics}

Aside from the uncertainty preorder in the information-processing approach to decision-making,
several branches of physics, including thermodynamics \cite{ruch1976principle}, quantum physics \cite{muller2016generalization} and relativity \cite{bombelli1987space}, rely on order structures as their fundamental transition model. In fact, even the stronger order structure in domain theory has found applications in these areas \cite{coecke2010partial,martin2012spacetime}. Once the fundamental transitions are fixed, questions regarding how well they can be described by measurement devices arise naturally. It is here where we have tried to illustrate possible applications of the more general framework. In particular, in the study of the complexity of transition systems. An example of this is our contribution to the work of Ruch and others \cite{ruch1976principle,mead1977mixing} on thermodynamic transitions. More specifically, we have shown that, when trying to emulate the role of the second law of thermodynamics for either the principle of increasing mixing character or the principle of decreasing mixing distance, a countably infinite number of functions is needed (see Theorems \ref{dim majo} and \ref{d-majo dim}). Hence, as an application of our more abstract picture, we have encountered a considerable disadvantage of this proposal. Given their structural equivalence, the same considerations hold for another application in physics, namely the resource theory of entanglement, which we addressed briefly in Section \ref{subsec: resource th}.
Several of the applications in physics present issues to be solved. An example of this would be the geometrical dimension of entanglement catalysis, which remains to be determined.

\subsection{Uncertainty and computation}

The reader should note that we touched on several of the discussion points concerning uncertainty and computation in Section \ref{discu: real repre}. Hence, we only make a few remarks here.

\subsubsection{Formalizing computability on uncountable sets}

Computability on a set as common as the real numbers has not been formalized in a satisfactory manner yet \cite{weihrauch2012computability}. As a consequence, several proposals have been put forward, like the one based on order-theoretic tools that belongs to domain theory. Far from addressing this question in general, we have tried to clarify and simplify previous order-theoretic approaches with the aim of contributing to the debate by providing a clearer picture of one of the most prominent approaches. As we have argued in Section \ref{sub: difference comput}, it turns out that one can find several justifications for the stronger restrictions imposed by domain theory, which do not seem to naturally follow from a naive intuitive approach like the one we attempted in Sections \ref{intuition compu} and \ref{bisection ex}.

\subsubsection{Other models of computability on uncountable sets}

Aside from domain theory, several other approaches to computability on uncountable sets can be found (see, for example, \cite[Section 2.2]{stoltenberg2008computability}). Among them, Type-2 Theory of Effectivity (TTE) is arguably the most widely studied \cite{weihrauch2012computable,kreitz1985theory,weihrauch2000computability}. To form an impression of how other approaches may look like, it is worth noting its basic distinction with respect to domain theory. In particular, in the latter, computability is inherited from Turing machines while, in the former, it relies on a variation of the Turing machine: the Type-2 machine \cite{weihrauch1995simple}. As in the case of the order-theoretic approaches, TTE is based on a partial surjective map $\delta: \Sigma^\omega \rightarrow X$ that translates computability from $\Sigma^\omega$ (i.e. the set of infinite strings over some countable alphabet $\Sigma$) to an uncountable set of interest $X$, with computability on $\Sigma^\omega$ being defined through a Type-2 machine. Despite their apparent differences, domain theory and TTE are in fact closely related (see, for example, \cite{blanck2008reducibility}). To conclude, let us briefly describe what a Type-2 machine is. In its simplest form, we can think of a Type-2 machine as a Turing machine consisting of two tapes: (i) an input tape in which one can read symbols starting with the first and go read the following symbols without being able to return to the previous, and (ii) an output tape in which one can write symbols starting in the first blank space and go write symbols in the following blank positions without being able to return to the previous.

\subsubsection{Computability and order density}

One of the appealing properties of domain theory is that it derives its computability properties from the well-established model on the natural numbers (i.e. Turing machines). In order to do so, it takes advantage of some order-theoretic structure that possesses a countability restriction which links it to Turing machines. As a result, we obtain some close relations between these restrictions and the order denseness properties that we discussed when considering the representation of preorders by real-valued functions, which is another manifestation of the interdisciplinary aim of this work. (We have discussed the specific connections more in detail in Section \ref{discu: real repre}.)

\subsection{Other related topics}

\subsubsection{What is computation?}

The classical notion of computation is the one concerned with the finite description of objects, as pioneered by Turing. Thus, this approach is mainly focused on computability.
However, there is an ongoing discussion about what a \emph{computation} actually is and which systems perform them. For example, some authors \cite{mitchell2011ubiquity,denning2007computing} consider the adaptation of natural systems to the conditions imposed by their environments a \emph{(natural) computation}.
We can think of adaptation in a natural system as a series of transitions which occur between the different possible configurations of the system and are determined by both the environment and the system's nature.
Alternatively, in the case of biological systems, we can consider adaptation as the set of internal changes that aim to improve the profit made by the system from the environment. That is, we can conceptualize adaptation as the actions taken to bring the system to better states, where \emph{better} is imposed by the environment. Hence, we can think of the system's evolution as a series of transitions between states, where we assume the system always tends towards better states, although it may be unable, in general, to immediately translate to the optimal one.

Given the plethora of phenomena it may potentially cover, attempting a definition of computation that includes the two different notions in the last paragraph proves to be challenging. However, we can use computation as a link between the order structures we have studied in the context of real-valued representations and those used in computability theory.
In particular, the first one is related to the non-classical notion of computation and the second to the the classical one. Hence, instead of defining computation, we admit the existence of the two different notions in the above paragraph and devote the use the following section to connect the different applications of order theory we have considered by distinguishing \emph{abstract} from \emph{concrete} computation.

\subsubsection{Order theory in abstract and concrete computation}

The question regarding what
computation is or, more specifically, what phenomena ought to be classified as computation,
belongs
to a long debate that reaches the contemporary field called \emph{philosophy of mind} \cite{campbell2021does,piccinini2010computation,copeland1996computation}. In these studies, instead of a definition of computation, the subject is approached through the fundamental dichotomy that involves (i) an abstract computational mechanism $(C,\preceq_C)$, like a Turing machine, and (ii) a physical system $(P,\preceq_P)$ where such a mechanism can be implemented. The usual approach relies on some mapping $\phi: C \to P$ from the computational states in $C$ to the physical ones in $P$, where $\phi$ is required to relate $\preceq_C$, the computational transitions in $C$, to those imposed by $\preceq_P$, the dynamics of $P$.

Here, we consider a computational mechanism as an input-output system. That is, we think of computations as a binary relation  $\preceq_C$ on some set $C$, where $c \preceq_C c'$ represents that a computation with input $c\in C$ has output $c'\in C$. The binary bit inverter logic gate or NOT gate, for example, can be represented by $(C_0,\preceq_{C_0})$, where $C_0 \coloneqq \{0,1\}$ and the relations $0 \preceq_{C_0} 1$ and $1 \preceq_{C_0} 0$ hold. In order to implement such a computational mechanism, we intent to use a physical system, which we can also consider as a binary relation $(P,\preceq_P)$, where $p \preceq_P p'$
represents that, if we initialize the physical system in state $p\in P$, it will evolve until state $p'\in P$.
For such a physical system to be computationally reliable, the existence of certain time $T_P>0$  is fundamental. In particular, we ought to have that, whatever initial state $p_0 \in P$ we consider, the final state $p_T \in P$ with $T\geq T_P$ is well-defined, that is, the evolution reaches some final state after time $T_P$. Hence, we say $p \preceq_P p'$ for a pair $p,p' \in P$ if, when initiated at $p$, the evolution of the physical system $P$ reaches the state $p'$ before time $T_P$ and remains there for all $T' \geq T_P$. As an example,
we can consider a gas in a box $(P_0,\preceq_{P_0})$. In particular, we can take $P_0 \coloneqq \{h,f\}$, where $h$ represents the state where the gas only occupies the left half of the box (prepared using an extra wall) and $f$ represents the state where the gas occupies the full box. Given the tendency of any gas to 
occupy the full volume at its disposal, the transitions in this example are given by $h \preceq_{P_0} f$ and $f \preceq_{P_0} f$.

 We think of
$(C,\preceq_C)$, for example, as the transitions from, say, a polynomial (the input) to one of its roots (the output) that can be achieved using
the bisection method. Hence, we think of the structure of $(C,\preceq_C)$ as part of our fundamental study of the order-theoretic approach to computability. Moreover, we can consider the characterization of preorders by real-valued functions as an attempt to understanding the transitions in several physical systems, given that it is actually with that goal in mind that they are studied in several areas of physics.
Since knowing its fundamental transitions is indispensable in order to use a physical system to implement a computation, we can consider that part of our work as a contribution to concrete computation. In fact, aside from the fundamental nature of the question, a great deal of interest in the non-classical notion of computation comes precisely from the desire to take advantage from the adaptation of biological systems in order to implement computational mechanisms more effectively.

Now that we have clarified what $(C,\preceq_C)$ and $(P,\preceq_P)$ are, we can address $\phi$. Assume we intend to use some $(P,\preceq_P)$ to implement $(C,\preceq_C)$. In order to do so, we need to identify the states in $C$ with those in $P$ through a bijection $\phi:C \to P$. Note it is fundamental for this map to be fixed, since the intention is that, to implement $\preceq_C$ on some $c \in C$, we prepare the physical system in the state $\phi(c)$, wait $T_P$ units of time and measure the state of the system $p_{T_P} \in P$. Lastly, we take $\phi^{-1}(p_{T_P})$ as the result of running the computation with $c$ as input. In this scenario, the automation of both the preparation and measurement process are fundamental. Moreover, it is not enough for $\phi$ to be a bijection.
$\phi$ should relate the transitions in $C$ with the dynamics of $P$ closely, with the physical system being able to implement the computations in $C$ and not allowing transitions outside of $(C,\preceq_C)$.
In short, in order for $(P,\preceq_P)$ to be able to implement $(C,\preceq_C)$, we ought to have
an \emph{order isomorphism} between them.


It should be noted that, although the digital computer has established itself as the predominant physical device where computations are carried out, several other systems have been proposed in the area known as \emph{unconventional computing} (like biological \cite{adamatzky2018towards,benenson2012biomolecular} or quantum \cite{de2007fundamentals} systems, to name a couple). A simple illustration of this would be the use of a thermodynamic system, like a gas in a box, as a device that allows us to sample from
Boltzmann distributions (taking the potential $f$ as the input and the Boltzmann distribution associated to $f$ as the output). In fact, this idea of extracting randomness from a physical process is the basis of the \emph{hardware random number generators}. This may be of interest, in particular, as an extension of the concept of finite instructions that occupied Turing. In fact, in the classical picture of computability, it is not possible to sample from a distribution. In particular, the usual way in which this is done is through \emph{pseudo-random number generators}. Extending the definition to include thermodynamic systems, however, we can reliably sample from these distributions.

\subsubsection{Pancomputationalism} 

As a last remark, it should be noted that, although the picture we presented in the previous paragraph regarding the implementation of computational mechanisms seems appropriate, it has been somewhat criticized in the past. In fact, several points of view have been raised regarding the nature of the mapping between $(C,\preceq_C)$ and $(P,\preceq_P)$, that is, concerning the class of computations that can be associated to the dynamics of certain physical system. Among them, a radical perspective on this matter is \emph{pancomputationalism}, which has even suggested that the notion of computation is trivial, that is, that any physical system with sufficiently complex dynamics implements any computational mechanism \cite{putnam1988representation,hemmo2022multiple}. This is also known as the \emph{multiple computation theorem} \cite{hemmo2019physics,hemmo2022multiple}. Although a philosophical debate regarding pancomputationalism lies outside the scope of this work, let us simply point out that, in general, a clear shortcoming of this theory is the absence of both preparation and measurement, which are key in any study of computation motivated by practice, given that they allow the automation of computation (i.e. we can reliably initialize a physical process that implements our abstract computation and reliably measure its outcome).

\section{Conclusion}

This work belongs to the study of uncertainty, which is key in several scientific disciplines. In particular, we have taken a coarse-grained approach to uncertainty reduction, equating it to decision-making. This has led us to a somewhat unified treatment where we addresses through the lens of order theory several phenomena like learning, optimization, thermodynamics and computation.

As a starting point, we considered the importance of uncertainty in the contemporary study of learning systems via bounded rationality with information-processing constraints. In this regard, we developed the uncertainty framework that deals with learning systems further. More specifically, we went beyond the usual study of optimization using uncertainty tools and improved on the relation between non-optimal learning and optimization by clarifying the applicability of fluctuation theorems on learning systems, which we then tested in the context of human sensorimotor learning. This part of our work is motivated by
the long term goal of understanding how biological systems adapt efficiently in order to implement similar mechanisms in artificial intelligence.

The following issue we addressed was that of uncertainty measurement. In particular, we noticed that the intuitive uncertainty picture given by the order structure known as majorization is usually abandoned in favor of some function like Shannon entropy, as one can see in both thermodynamics and the study of learning systems. This led us to question how transitions from an order structure to real-valued functions take place. Our original interest was focused on the existence of optimization principles that somewhat generalize the properties of the maximum entropy principle. To this end, we introduced injective monotones and related them to other properties of order structures. After that, we considered the substitution of order structures by real-valued functions more in general. In particular, we developed interest in measuring the complexity of order structures and its relation to optimization on them. This led us to establishing the classification of preorders in terms of real-valued monotones and to the definition of the Debreu dimension as a bridge between the different notions of complexity (or \emph{order dimension}) we encountered in the literature. This abstract approach to uncertainty in terms of order monotones proved to be useful in concrete scenarios where uncertainty is key. In particular, it enabled us to gain insight into both the maximum entropy principle and the complexity of the approach to molecular diffusion known as the \emph{principle of increasing mixing character}.
Furthermore, our approach is also applicable to the uncertainty-based quantum resource theories, as we briefly pointed out. This part of our work was concerned with the general issue of how the set of transitions of a system can be captured by functions, which is a fundamental research question in several areas of knowledge, like thermodynamics, general relativity, quantum physics, utility theory and others. More specifically, this part of our work is also inscribed in the study of appropriate notions of information, where the substitution of an (intuitive) order-theoretic model by functions is commonplace.  

Once we concluded our research regarding uncertainty measurement and its abstractions, we returned to the fundamental idea of uncertainty as an order structure. In this case, however, we were interested in the uncertainty-based approaches to computation, given that we can think of any computation process as a sequence of steps that gather information about the solution to a problem. That is, we considered computation as a decision-making process where some sort of \emph{finiteness} restrictions are also taken into account. As it turns out, like in the case of learning systems, previous studies with such a perspective on computation relied on an order structure to represent uncertainty and, moreover, to introduce a notion of computability, that is, to distinguish between elements having a finite description from those that do not. In fact, such frameworks have been important in the study of the (still open) problem concerning what a proper notion of computability on the real numbers is. We concluded our work by asking 
what properties such an order-theoretic approach to computation should have. In particular, we introduced a more general framework from an intuitive picture and we highlighted several appealing properties that are lost compared to the stronger framework. A key feature in these approaches comes from the inheritance of the computability properties from Turing machines, which is achieved via countability restrictions on the order structure. The last part of our work was concerned with these restrictions and, more precisely, with their relation to the more usual countability restrictions one encounters in the literature on
order structures more in general. In the big picture, the third part of our work is contained in the attempts to studying information and computation together as two attributes of the same entity.

Our work was partly motivated by the two notions of computation that we have encountered: (i) the one related to adaptation in natural system and (ii) the one concerned with the finite description of objects. Although we did not find a proper definition that encompasses both, we have highlighted the fact that order theory offers a framework in which both of them can be dealt with.
A unified approach to them remains, hence, a fundamental challenge for the future.

\newpage
\thispagestyle{empty}
\mbox{}
\newpage

\cleardoublepage
\phantomsection
\addcontentsline{toc}{chapter}{Bibliography}

\bibliographystyle{plain}
\bibliography{main.bib}

\newpage
\thispagestyle{empty}
\mbox{}
\newpage

\begin{appendices}

\newpage
\thispagestyle{empty}
\mbox{}
\newpage

\chapter{Manuscripts published at the time of submission}

\section{Representing preorders with injective monotones}

{\setlength{\parindent}{0cm}
Pedro Hack, Daniel A. Braun and Sebastian Gottwald. Representing preorders with injective monotones. Theory and Decision 93, pages 663–690, 2022.\newline

\textbf{DOI:} 10.1007/s11238-021-09861-w.\newline

\textbf{Status:} Published.\newline

\textbf{Journal:} Theory and Decision.\newline

\textbf{License:} Creative Commons Attribution 4.0 International License

(http://creativecommons.org/licenses/by/4.0/).\newline

The final publication is available at \href{https://link.springer.com/article/10.1007/s11238-021-09861-w}{https://link.springer.com/article/10.1007/s11238-021-09861-w}.
}

\includepdf[pages=-]{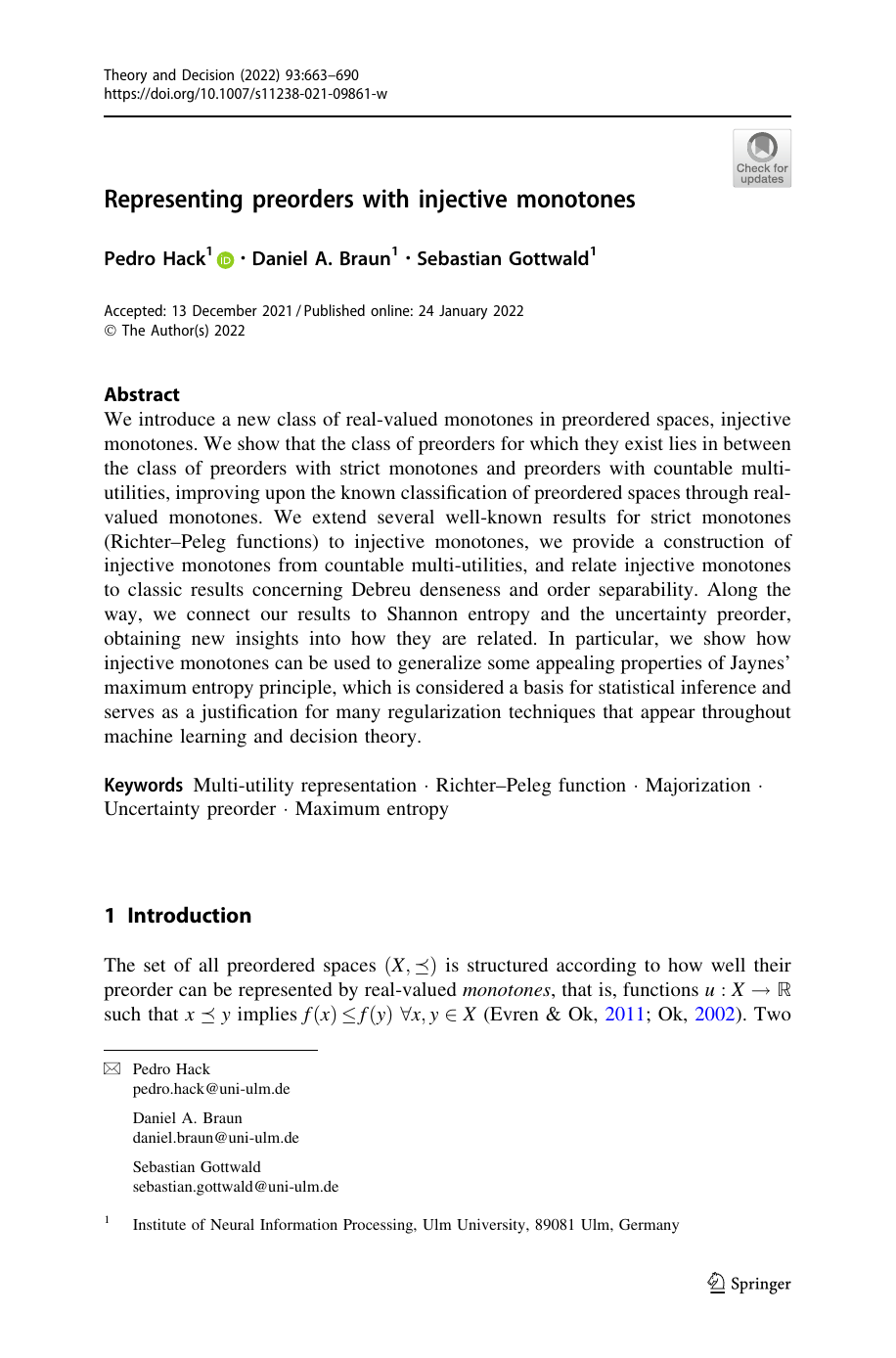}

\section{The classification of preordered spaces in terms of monotones:\\ Complexity and optimization}

{\setlength{\parindent}{0cm}
Pedro Hack, Daniel A. Braun and Sebastian Gottwald.
The classification of preordered spaces in terms of monotones: complexity and optimization. Theory and Decision 94, pages 693–720, 2023.\newline

\textbf{DOI:} 10.1007/s11238-022-09904-w.\newline

\textbf{Status:} Published.\newline

\textbf{Journal:} Theory and Decision.\newline

\textbf{License:} Creative Commons Attribution 4.0 International License

(http://creativecommons.org/licenses/by/4.0/).\newline

The final publication is available at \href{https://link.springer.com/article/10.1007/s11238-022-09904-w}{https://link.springer.com/article/10.1007/s11238-022-09904-w}.
}

\includepdf[pages=-]{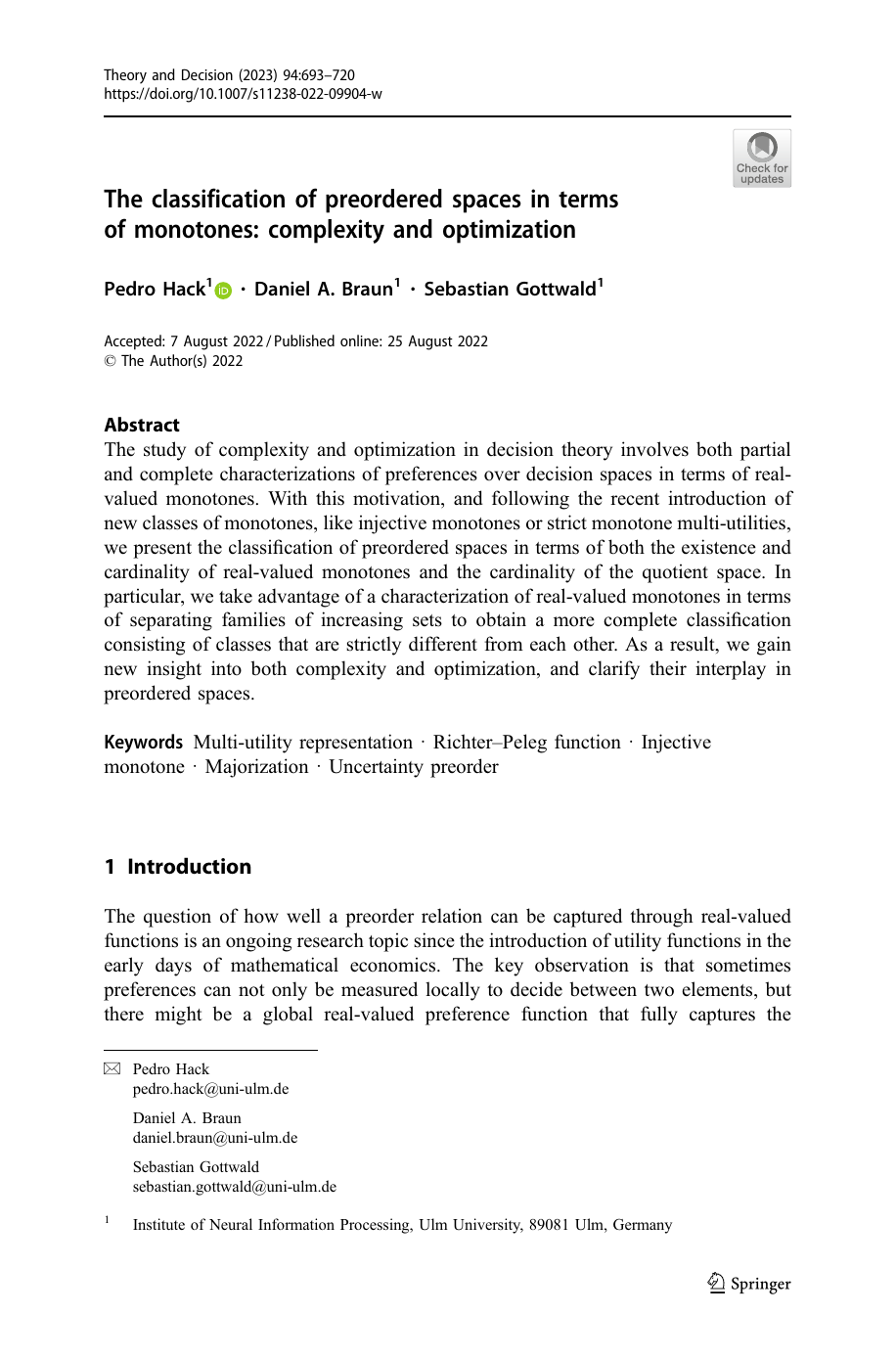}

\section{Thermodynamic fluctuation theorems govern human sensorimotor learning}

{\setlength{\parindent}{0cm}
Pedro Hack, Cecilia Lindig-Leon, Sebastian Gottwald and Daniel A. Braun.
Thermodynamic fluctuation theorems govern human sensorimotor learning.
Scientific Reports, 13(1):869, 2023.\newline

\textbf{DOI:} 10.1038/s41598-023-27736-8\newline

\textbf{Status:} Published.\newline

\textbf{Journal:} Scientific reports.\newline

\textbf{License:} Creative Commons Attribution 4.0 International License

(http://creativecommons.org/licenses/by/4.0/).\newline

The final publication is available at \href{https://www.nature.com/articles/s41598-023-27736-8}{https://www.nature.com/articles/s41598-023-27736-8}.
}

\includepdf[pages=-]{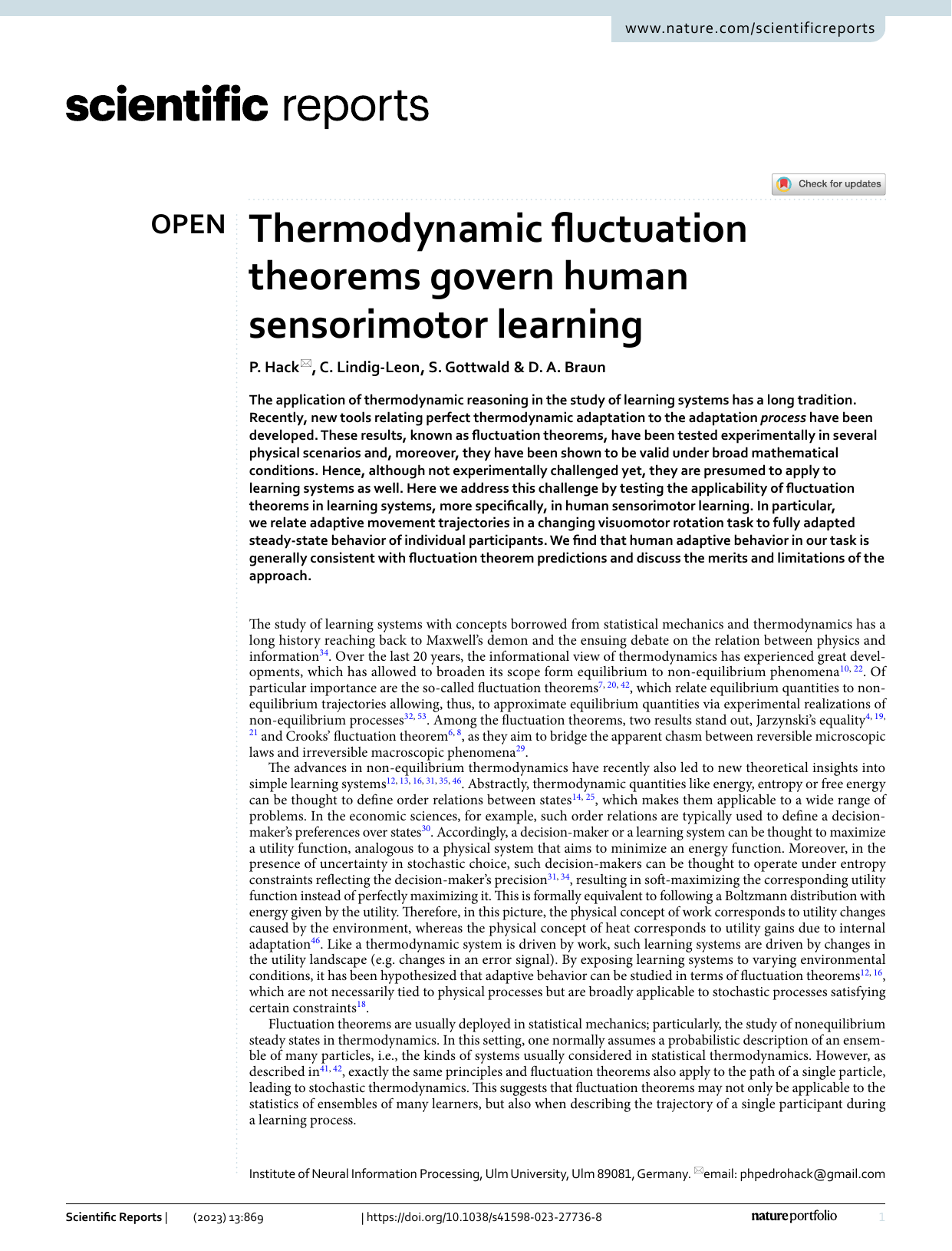}

\section{Jarzyski's equality and Crooks' fluctuation theorem for general Markov chains with application to decision-making systems}

{\setlength{\parindent}{0cm}
Pedro Hack, Sebastian Gottwald and Daniel A. Braun. Jarzyski’s equality
and Crooks’ fluctuation theorem for general Markov chains with application to
decision-making systems. Entropy, 24(12), 1731, 2022.\newline

\textbf{DOI:} 10.3390/e24121731.\newline

\textbf{Status:} Published.\newline

\textbf{Journal:} Entropy.\newline

\textbf{License:} Creative Commons Attribution 4.0 International License

(http://creativecommons.org/licenses/by/4.0/).\newline

The final publication is available at \href{https://www.mdpi.com/1099-4300/24/12/1731}{https://www.mdpi.com/1099-4300/24/12/1731}.
}

\includepdf[pages=-]{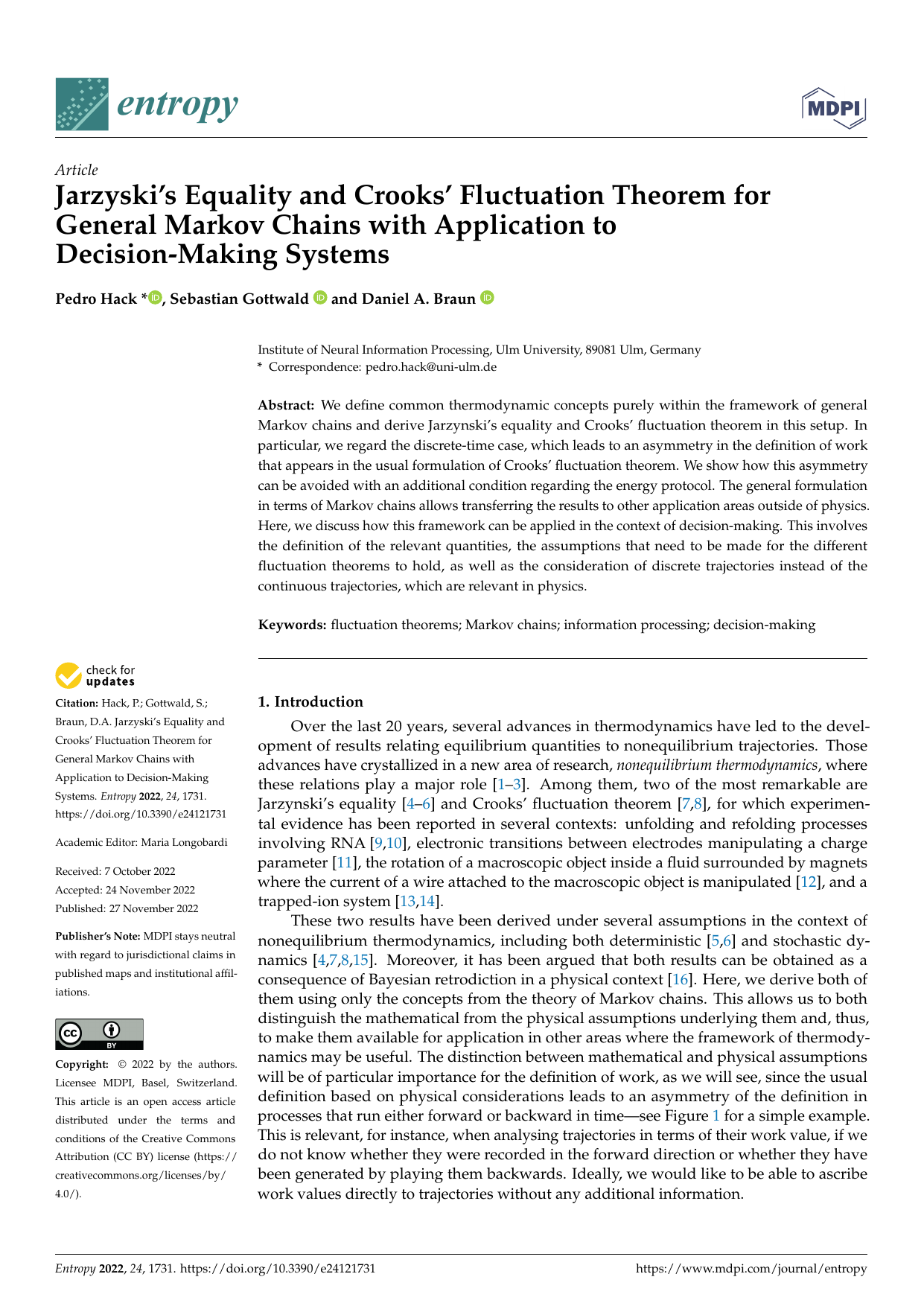}

\newpage
\thispagestyle{empty}
\mbox{}
\newpage

\chapter{Manuscripts under review at the time of submission}

\section{On a geometrical notion of dimension for
partially ordered sets}

{\setlength{\parindent}{0cm}
Pedro Hack, Daniel A. Braun, and Sebastian Gottwald. On a geometrical
notion of dimension for partially ordered sets. arXiv preprint arXiv:2203.16272,
2022.\newline

\textbf{DOI:} 10.48550/arXiv.2203.16272.\newline

\textbf{License:} arXiv.org perpetual, non-exclusive license

(https://arxiv.org/licenses/nonexclusive-distrib/1.0/license.html).
}

\includepdf[pages=-]{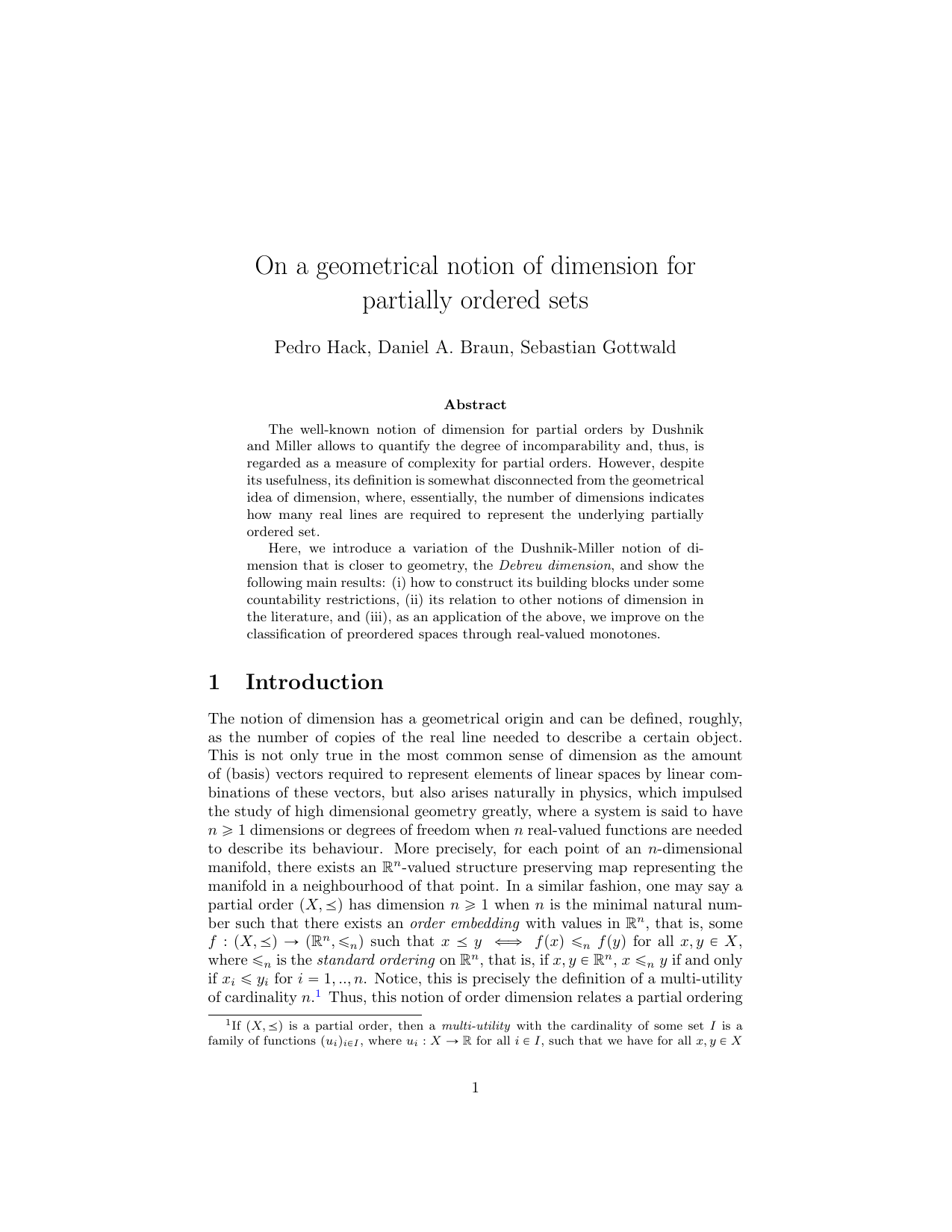}

\section{Majorization requires infinitely many second laws}

{\setlength{\parindent}{0cm}
Pedro Hack, Daniel A. Braun and Sebastian Gottwald. Majorization requires
infinitely many second laws. arXiv preprint arXiv:2207.11059, 2022.\newline

\textbf{DOI:} 10.48550/arXiv.2207.11059.\newline

\textbf{License:} arXiv.org perpetual, non-exclusive license

(https://arxiv.org/licenses/nonexclusive-distrib/1.0/license.html).
}

\includepdf[pages=-]{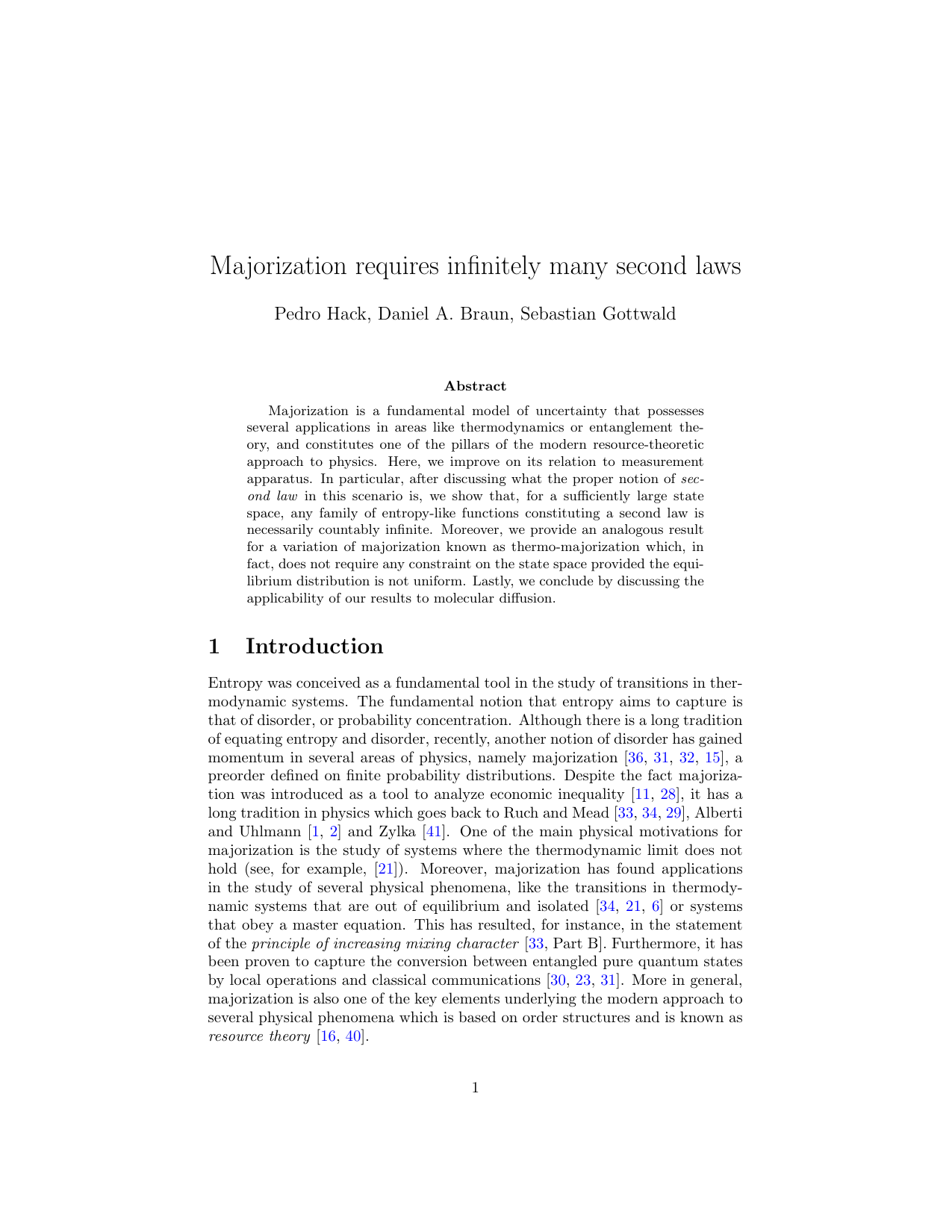}

\section{Computation as uncertainty reduction:\\
A simplified order-theoretic framework
}

{\setlength{\parindent}{0cm}
Pedro Hack, Daniel A. Braun and Sebastian Gottwald. Computation as
uncertainty reduction: a simplified order-theoretic framework. arXiv preprint
arXiv:2206.13885, 2022.\newline

\textbf{DOI:} 10.48550/arXiv.2206.13885.\newline

\textbf{License:} arXiv.org perpetual, non-exclusive license

(https://arxiv.org/licenses/nonexclusive-distrib/1.0/license.html).
}

\includepdf[pages=-]{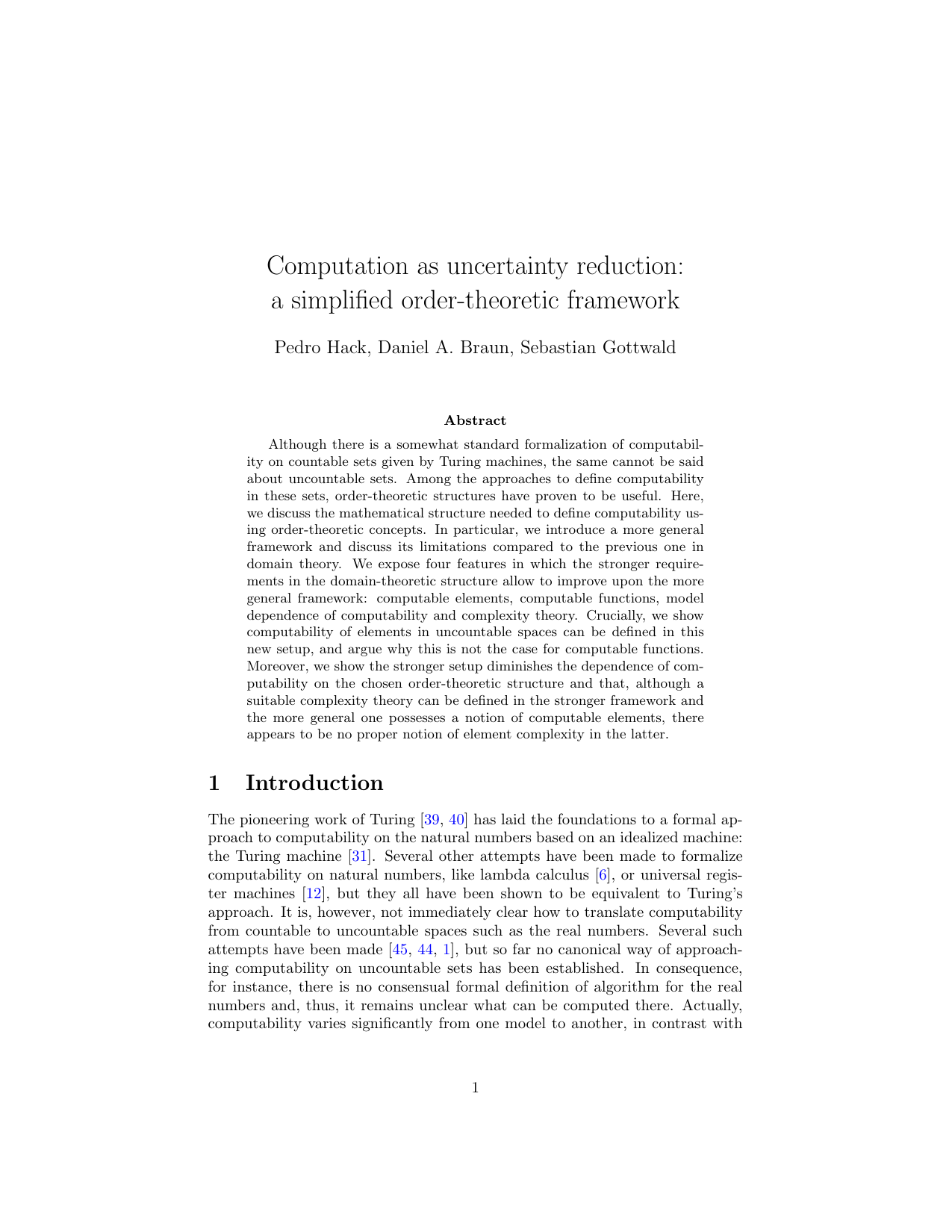}

\section{On the relation of order theory and computation in terms of denumerability}

{\setlength{\parindent}{0cm}
Pedro Hack, Daniel A. Braun and Sebastian Gottwald. On the relation of
order theory and computation in terms of denumerability. arXiv preprint
arXiv:2206.14484, 2022.\newline

\textbf{DOI:} 10.48550/arXiv.2206.14484.\newline

\textbf{License:} arXiv.org perpetual, non-exclusive license

(https://arxiv.org/licenses/nonexclusive-distrib/1.0/license.html).
}

\includepdf[pages=-]{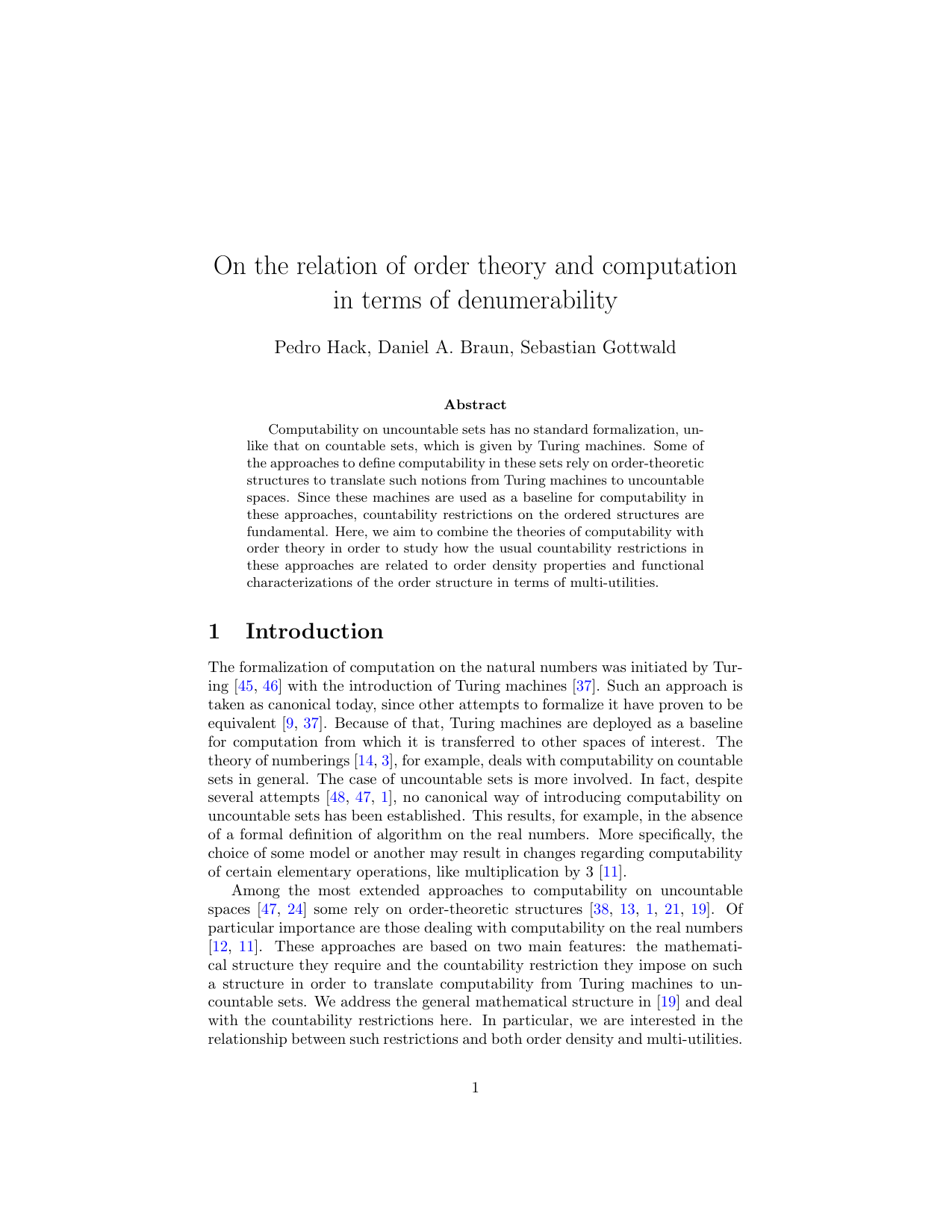}

\end{appendices}

\end{document}